\documentclass[12pt,dvipsnames]{article}

\usepackage[normalem]{ulem}

\usepackage{tools/arxiv}
\usepackage[colorinlistoftodos]{todonotes}
\usepackage{booktabs} 
\usepackage{nicefrac}  
\usepackage{microtype}
\usepackage[utf8]{inputenc}

\usepackage{enumerate}

\usepackage{xcolor}
\usepackage{multirow}

\usepackage{amssymb}
\usepackage{amsthm}
\usepackage{bbm}
\usepackage{array}
\usepackage[colorlinks=true,urlcolor=purple,
linkcolor=purple,citecolor=purple]{hyperref}
\usepackage{amsmath}
\usepackage{mathtools}
\usepackage{float}

\usepackage[nopostdot,style=super,sort=use]{glossaries}

\usepackage{mathrsfs}
\usepackage{graphicx}
\usepackage{color}

\usepackage{algorithm}
\usepackage[noend]{algpseudocode}
\usepackage{thmtools,thm-restate} %

\makeatletter
\def\paragraph{\@startsection{paragraph}{4}%
  \z@\z@{-\fontdimen2\font}%
  {\normalfont\bfseries}}
\makeatother

\usepackage{tikz}
\usetikzlibrary{decorations.markings, backgrounds, calc, positioning, shapes, shadows, arrows, fit, automata}
\usepackage{tikz-cd}
\usetikzlibrary{arrows.meta}
\tikzset{>={Latex}}

\makeglossaries

\newcommand{\N}{\mathbb{N}}
\newcommand{\Z}{\mathbb{Z}}
\newcommand{\Q}{\mathbb{Q}}
\newcommand{\R}{\mathbb{R}}
\newcommand{\C}{\mathbb{C}}

\newcommand{\mc}{\mathcal}
\newcommand{\mb}{\mathbb}
\newcommand{\mf}{\mathfrak}

\renewcommand{\a}{\alpha}
\renewcommand{\b}{\beta}
\newcommand{\g}{\gamma}
\renewcommand{\d}{\delta}
\newcommand{\e}{\varepsilon}
\newcommand{\w}{\omega}
\newcommand{\s}{\sigma}
\newcommand{\ph}{\varphi}
\newcommand{\p}{\partial}
\renewcommand{\t}{\tau}

\newcommand{\Om}{\Omega}
\newcommand{\Ga}{\Gamma}
\newcommand{\Si}{\Sigma}

\newcounter{desccount}

\newcommand{\descref}[1]{\hyperref[#1]{#1}}

\newcommand{\fst}{f^{*}}
\newcommand{\gst}{g^{*}}

\renewcommand{\P}{\mathbb{P}}

\newcommand{\lp}{\left(}
\newcommand{\rp}{\right)}

\newcommand\mattwo[4]{\left(\begin{smallmatrix}
			{#1} & {#2}\\
     			{#3} & {#4}
                     \end{smallmatrix}\right)}
\newcommand\mattres[9]{\left(\begin{smallmatrix}
			{#1} & {#2}&{#3}\\
     			{#4} & {#5}&{#6}\\
     				{#7} &{#8} &{#9}
                     \end{smallmatrix}\right)}

\newcommand{\borel}{\operatorname{Borel}}	

\newcommand{\set}[1]{\left\{#1\right\}}

\renewcommand{\r}{\rightarrow}
\newcommand{\xr}{\xrightarrow}

\newcommand{\norm}[1]{\|#1\|} 

\renewcommand{\emptyset}{\varnothing}

\newcommand{\supp}{\operatorname{supp}}

\newcommand{\diam}{\operatorname{diam}}

\newcommand{\ds}{\displaystyle}
\newcommand{\inv}{^{-1}}

\newcommand{\id}{\operatorname{id}}

\newcommand{\floor}[1]{\left \lfloor{#1}\right \rfloor }

\newcommand{\Ncal}{\mathcal{F}\mathcal{N}}		%

\newcommand{\Mgen}{\mathcal{M}}

\newcommand{\Ngen}{\mathcal{N}}
\newcommand{\Ncom}{\mathcal{CN}}
\newcommand{\Ndis}{\mathcal{FN}^{\operatorname{dis}}}
\newcommand{\Ncomdis}{\mathcal{CN}^{\operatorname{dis}}}
\newcommand{\Ngendis}{\mathcal{N}^{\operatorname{dis}}}

\newcommand{\Ngenult}{\mathcal{N}^{\operatorname{ult}}}
\newcommand{\Ncomult}{\mathcal{CN}^{\operatorname{ult}}}
\newcommand{\Nult}{\mathcal{FN}^{\operatorname{ult}}}

\newcommand{\Ncomsim}{\mathcal{CN}/{\cong^w}}
\newcommand{\Ncalsim}{\mathcal{FN}/{\cong^w}}

\newcommand{\dis}{\operatorname{dis}}		%
\newcommand{\cost}{\operatorname{cost}}

\newcommand{\dhdf}{d_{\operatorname{H}}}
\newcommand{\dgh}{d_{\operatorname{GH}}}		%
\newcommand{\gh}{\operatorname{GH}}
\newcommand{\dn}{d_{\mathcal{N}}}	%

\newcommand{\dbox}{d_{\Box}} %
\newcommand{\dnh}{\widehat{d}_{\mathcal{N}}}

\renewcommand{\H}{\mc{H}}		%
\newcommand{\M}{\operatorname{M}}		%
\newcommand{\us}{{\mathbb{S}}}
\newcommand{\congs}{\cong^s}
\newcommand{\congwo}{\cong^w_{\tiny \operatorname{I}}}
\newcommand{\congwt}{\cong^w_{\tiny \operatorname{II}}}

\newcommand{\Aut}{\operatorname{Aut}}
\newcommand{\cin}{\operatorname{in}}
\newcommand{\cout}{\operatorname{out}}

\newcommand{\mincout}{\operatorname{m^{\operatorname{out}}}}
\newcommand{\mincin}{\operatorname{m^{\operatorname{in}}}}

\newcommand{\spec}{\operatorname{spec}}
\newcommand{\skel}{\operatorname{sk}}
\newcommand{\sk}{\operatorname{sk}}

\newcommand{\im}{\operatorname{im}}

\newcommand{\Ropt}{R^{\operatorname{opt}}}
\newcommand{\opt}{\operatorname{opt}}
\newcommand{\matele}[1]{({\!}({#1})\!)}  
\newcommand{\Rsc}{\mathscr{R}}		%

\newcommand{\pow}{\operatorname{pow}}

\newcommand{\db}{d_{\operatorname{B}}}
\newcommand{\dgm}{\operatorname{Dgm}}
\newcommand{\vr}{\operatorname{\textbf{VR}}}
\newcommand{\cech}{\check{\operatorname{\textbf{C}}}}

\newcommand{\pvec}{{\operatorname{\mathbf{PVec}}}}

\newcommand{\di}{d_{\operatorname{I}}}

\newcommand{\so}{\operatorname{so}}
\newcommand{\si}{\operatorname{si}}

\newcommand{\hnr}{\mathcal{H}^{\scriptscriptstyle\operatorname{NR}}}
\newcommand{\unr}{u^{\scriptscriptstyle\operatorname{NR}}}
\newcommand{\hr}{\mathcal{H}^{\scriptscriptstyle\operatorname{R}}}
\newcommand{\ur}{u^{\scriptscriptstyle\operatorname{R}}}

\newcommand{\tr}{\operatorname{tr}}

\DeclareMathOperator*{\argmin}{arg\,min}

\newcommand{\wus}{\w_{\vec{\us}^1}}

\newcommand{\rus}{\w_{\mb{S}^1 \!\!,\rho}}
\newcommand{\sr}{\mb{S}^1 \!\!,\rho}
\newcommand{\rusn}{\w_{n,\rho}}

\newcommand{\dow}{\operatorname{\textbf{Dow}}}
\renewcommand{\path}{\operatorname{\textbf{Path}}}
\newcommand{\ot}{\operatorname{\textbf{OT}}}

\newtheorem{theorem}{Theorem}
\numberwithin{theorem}{subsection} %

\newtheorem{proposition}[theorem]{Proposition}
\newtheorem{lemma}[theorem]{Lemma}
\newtheorem{corollary}[theorem]{Corollary}

\theoremstyle{definition}
\newtheorem{example}[theorem]{Example}
\newtheorem{remark}[theorem]{Remark}

\newtheorem{definition}[theorem]{Definition}
\newtheorem{claim}{Claim}

\newenvironment{subproof}[1][\proofname]{%
  \begin{proof}[#1]%
}{%
  \end{proof}%
}

\makeatletter
\newcommand{\pushright}[1]{\ifmeasuring@#1\else\omit\hfill$\displaystyle#1$\fi\ignorespaces}
\newcommand{\pushleft}[1]{\ifmeasuring@#1\else\omit$\displaystyle#1$\hfill\fi\ignorespaces}
\makeatother

\newglossaryentry{Ncal}
{
    name=$\Ncal$,
    description={Finite networks}
}

\newglossaryentry{Ncom}
{
    name=$\Ncom$,
    description={Compact networks}
}

\newglossaryentry{Ngen}
{
    name=$\Ngen$,
    description={General networks}
}

\newglossaryentry{Ngendis}
{
    name=$\Ngendis$,
    description={General dissimilarity networks}
}

\newglossaryentry{dhdf}
{
    name=$\dhdf$,
    description={Hausdorff distance}
}

\newglossaryentry{dgh}
{
    name=$\dgh$,
    description={Gromov-Hausdorff distance}
}

\newglossaryentry{dn}
{
    name=$\dn$,
    description={Network distance}
}

\newglossaryentry{Rsc}
{
    name=$\Rsc$,
    description={Correspondences}
}

\newglossaryentry{dis}
{
    name=$\dis$,
    description={Distortion of a correspondence}
}

\newglossaryentry{Ropt}
{
    name=$\Rsc^{\opt}$,
    description={Optimal correspondences}
}

\newglossaryentry{congs}
{
    name=$\congs$,
    description={Strong isomorphism}
}

\newglossaryentry{congwo}
{
    name=$\congwo$,
    description={Type I weak isomorphism with strict tripods}
}

\newglossaryentry{congwt}
{
    name=$\congwt$,
    description={Type II weak isomorphism with $\e$-tripods}
}

\newglossaryentry{congw}
{
    name=$\cong^w$,
    description={Weak isomorphism in $\Ncom$}
}

\newglossaryentry{Aut}
{
    name=$\Aut$,
    description={Automorphisms of a network}
}

\newglossaryentry{posetx}
{
    name=$\mf{p}(X)$,
    description={Poset of weak isomorphism of $X$}
}

\newglossaryentry{blowupx}
{
    name=$X[\mathbf{k}]$,
    description={Blow-up of $X$}
}

\newglossaryentry{deltax}
{
    name=$\d_X$,
    description={Max-symmetrized Kuratowski metric}
}

\newglossaryentry{deltaxa}
{
    name=$\d_X^A$,
    description={Max-symmetrized Kuratowski metric relative to a subset $A$}
}

\newglossaryentry{skx}
{
    name=$\sk(X)$,
    description={Skeleton of a compact network}
}

\newglossaryentry{dnh}
{
    name=$\dnh$,
    description={The second network distance}
}

\newglossaryentry{rx}
{
    name=$\rho_X$,
    description={Finite reversibility parameter}
}

\newglossaryentry{us}
{
    name=$\us^1$,
    description={Unit circle in $\C$}
}

\newglossaryentry{S}
{
    name=$S^1$,
    description={Unit circle parametrized by the unit interval $[0,1)$}
}

\newglossaryentry{M}
{
    name=$\M_n$,
    description={Motifs of size $n$}
}

\newglossaryentry{wus}
{
    name=$\w_{\us^1}$,
    description={Counterclockwise arc length on $\us^1$}
}

\newglossaryentry{rus}
{
    name=$\rus$,
    description={Counterclockwise arc length with finite reversibility}
}

\newglossaryentry{pow}
{
    name=$\pow$,
    description={Nonempty elements of power set}
}

\newglossaryentry{db}
{
    name=$\db$,
    description={Bottleneck distance between persistence diagrams}
}

\newglossaryentry{di}
{
    name=$\di$,
    description={Interleaving distance between persistent vector spaces}
}

\newglossaryentry{pvec}
{
    name=$\pvec$,
    description={Persistent vector space}
}

\newglossaryentry{vr}
{
    name=$\vr$,
    description={Vietoris-Rips complex}
}

\newglossaryentry{cech}
{
    name=$\cech$,
    description={\v{C}ech complex}
}

\newglossaryentry{dow}
{
    name=$\dow$,
    description={Dowker complex}
}

\newglossaryentry{ot}
{
    name=$\ot$,
    description={Ordered tuple complex}
}

\newglossaryentry{path}
{
    name=$\path$,
    description={Persistent path homology}
}

\newglossaryentry{hnr}
{
    name=$\hnr$,
    description={Hierarchical nonreciprocal clustering}
}

\newglossaryentry{hr}
{
    name=$\hr$,
    description={Hierarchical reciprocal clustering}
}

\newglossaryentry{F}
{
    name=$\mb{F}$,
    description={Ground field $\mb{F}$ fixed throughout}
}

\newglossaryentry{rplus}
{
    name=$\R_+$,
    description={Nonnegative elements of $\R$}
}

\newglossaryentry{zplus}
{
    name=$\Z_+$,
    description={Nonnegative elements of $\Z$}
}

\newglossaryentry{dbox}
{
    name=$d_{\Box}$,
    description={Cut distance between graphs on different node sets}
}

\newglossaryentry{deltabox}
{
    name=$\d_{\Box}$,
    description={Cut distance between graphs on the same node set}
}

\newglossaryentry{supp}
{
    name=$\supp$,
    description={Support of a measure}
}

\newglossaryentry{mfm}
{
    name=$\mf{m}$,
    description={Minimal mass of a positive-measure set in an $\e$-system}
}

\newglossaryentry{mfM}
{
    name=$\mf{M}$,
    description={Minimal mass function optimized over all $\e$-systems}
}

\title{Distances and Isomorphism between Networks:  Stability and Convergence of Network Invariants}

\author{
Samir Chowdhury\\
  Department of Psychiatry and Behavioral Sciences\\
  Stanford University\\
  Stanford, CA 94305 \\
  \texttt{samirc@stanford.edu} \\
\And
Facundo M\'{e}moli\\
  Departments of Mathematics and Computer Science and Engineering\\
  The Ohio State University\\
  Columbus, OH 43210\\
  \texttt{memoli@math.osu.edu} \\
}

\begin{document}
\maketitle

\begin{abstract} 

We develop the theoretical foundations of a generalized Gromov-Hausdorff distance between functions on networks that has recently been applied to various subfields of topological data analysis and optimal transport. 
These functional representations of networks, or networks for short, specialize in the finite setting to (possibly asymmetric) adjacency matrices and derived representations such as distance or kernel matrices.
Existing literature utilizing these constructions cannot, however, benefit from continuous formulations because the continuum limits of finite networks under this distance are not well-understood. For example, while there are currently numerous persistent homology methods on finite networks, it is unclear if these methods produce well-defined persistence diagrams in the infinite setting. We resolve this situation by introducing the collection of compact networks that arises by taking continuum limits of finite networks and developing sampling results showing that this collection admits well-defined persistence diagrams.  
The difference between the network setting and metric setting arises as follows.
For metric spaces, the isomorphism class of the Gromov-Hausdorff distance consists of isometric spaces and is thus very simple. For networks, the isomorphism class is rather complex, and contains representatives having different cardinalities and different topologies. 
We provide an exact characterization of a suitable notion of isomorphism for compact networks as well as alternative, stronger characterizations under additional topological regularity assumptions. 
Toward data applications, we describe a unified framework for developing quantitatively stable network invariants, provide basic examples, and cast existing results on the stability of persistent homology methods in this extended framework. To illustrate our theoretical results, we introduce a model of directed circles with finite reversibility and characterize their Dowker persistence diagrams.

\end{abstract}

\tableofcontents
\setlength{\glsdescwidth}{0.69\textwidth}
\printglossary[title={List of Symbols}]

\section{Introduction}
\label{sec:intro}

\subsection{Background}

Networks which show the relationships within and between complex systems are key tools in a variety of current research areas. Network analysis techniques such as modularity, core-periphery structure, clustering coefficients, and centrality \cite{newman2010networks} have enjoyed enormous success in being used by researchers to derive insights from complex datasets, and this early success has ushered in a variety of approaches from different disciplines toward network data analysis. One such perspective that has appeared recently is to view a network as a \emph{generalized metric space (gms)}. In this viewpoint, a network is a set $X$ along with a real-valued \emph{edge weight function} $\w_X:X\times X \r \R$. When $X$ is finite, the pair $(X,\w_X)$ can be represented as a real-valued square matrix. 
Such a matrix may encode the adjacency matrix of a (possibly directed) combinatorial graph or any real-valued function on pairs of graph vertices derived from the adjacency matrix. Examples of the latter include shortest path length matrices or positive semidefinite matrices such as the heat kernel obtained by exponentiating the graph Laplacian. 

The gms perspective enables the import of numerous tools from the setting of metric spaces to the setting of networks. This program was initiated by the authors of \cite{carlsson2013axiomatic,clust-net}, who showed that single linkage hierarchical clustering could be defined for \emph{directed} networks, and that a certain reformulation of the well-established \emph{Gromov-Hausdorff (GH) distance ($\dgh$)} \cite{gromov1981structures, gromov-book}-- a \emph{network distance}, denoted $\dn$--could be used to show that such a clustering method would be stable with respect to perturbations of the input data \cite{clust-um}. This network distance was defined between pairs of finite networks $(X,\w_X), (Y,\w_Y)$ via a combinatorial optimization problem that would search over correspondences between the points of $X$ and $Y$ as follows (see Definition \ref{defn:dno} for a formal definition as used in this work):
\[\dn(X,Y):= \tfrac{1}{2}\min\bigg\{ \max_{(x,y),(x',y') \in R} |\w_X(x,x') - \w_Y(y,y')| : R \in \{0,1\}^{|X|\times |Y|} \text{with all row, column sums nonzero}\bigg\}.\] 

Beyond its origins in metric geometry \cite{burago}, the Gromov-Hausdorff distance between metric spaces 
has found applications in the context of shape and data analysis \cite{dgh-sgp,dghlp-focm,dgh-props,clust-um,dgh-pers}. 
Thus one expects that an appropriate modification of the Gromov-Hausdorff distance would be a valuable tool in network analysis. Following this thread, the gms viewpoint enabled the development of \emph{persistent homology} methods for directed networks \cite{dowker-asilo, dowker-jact, turner2019rips, pph, dey2020efficient} that came with stability guarantees obtained via $\dn$. However, the theoretical guarantees of these methods were limited by the theoretical study of $\dn$, which has to date been limited to the setting of finite networks. In this work, we remove this bottleneck and provide a comprehensive study of $\dn$ and the theoretical framework surrounding its applications to hierarchical clustering, persistent homology, and machine learning using (not necessarily metric) dissimilarities.   

\subsection{Setup and use cases}

Henceforth we will gradually stop referring to generalized metric spaces in favor of the ``network" terminology, but we stress that this gms perspective will always be implicit.
For the present discussion, the collection of networks is defined as (we will shortly add extra conditions; see Definition \ref{defn:nets} for the formal definition):
\[ \Ngen := \{(X,\w_X) : X \text{ a set, } \w_X : X\times X \r \R \text{ any function} \}. \]
A (possibly directed, weighted) graph $G=(V,E)$ admits a variety of principled embeddings into $\Ngen$, including:
\begin{itemize}
    \item \textbf{Adjacency:} $\w_V$ is defined to be the adjacency function of the graph.
    \item \textbf{Distance:} (for connected graphs/strongly connected digraphs) $\w_V$ is defined to be the geodesic distance function on the graph.
    \item \textbf{Kernel:} $\w_V$ is defined to be any of a number of graph kernels, many of which encode multiscale structure.
\end{itemize}
In this work, we study the metric structure of $(\Ngen,\dn)$ with the goal of producing a common framework for different embeddings into $\Ngen$. 
This naturally opens up a direction of future work on developing new embedding techniques--possibly parametrized and supervised, following \cite{litman2013learning}--that benefit from sharing a common ambient space.

Toward obtaining such characterizations, we first observe that the $\dn$ function defined above can be immediately extended to infinite networks $(X,\w_X),(Y,\w_Y)$ by replacing the $\min,\max$ with $\inf,\sup$. We also observe that $\dn$ only sees $\w_X$, $\w_Y$, and thus it is sensitive to structures on $X,Y$ only if these structures affect $\w_X,\w_Y$. One such structure, and the minimum needed to at least be able to talk about limits and continuity, is a topology. Because topologies on $X,Y$ seem to be a very basic requirement, we redefine $\Ngen$ as follows (cf. Definition \ref{defn:nets}):
\[ \Ngen:= \{(X,\w_X): X \text{ a first countable topological space, } \w_X \text{ continuous w.r.t. product topology}\}.\]

Here, first countability is a mild assumption needed to prove some of our technical results. An interesting point to note is that given $(X,\w_X),(Y,\w_Y)$, there may be numerous first countable topologies on $X,Y$ against which $\w_X,\w_Y$ remain continuous. Referring to these as admissible topologies, we note that $\dn(X,Y)$ is non sensitive to such admissible topological changes, as they do not perturb $\w_X,\w_Y$. Later in this work, we study the effects of constraining these admissible topologies. For now we remark on particular subcollections of $\Ngen$. The following definition of \emph{compact networks} is crucial for subsequent constructions:
\[ \Ncom:= \{(X,\w_X) \in \Ngen : X \text{ a compact, first countable topological space, } \w_X \text{ continuous w.r.t. product topology}\}.\] 
As an example of the difference between $\Ngen$ and $\Ncom$, we show that compact networks admit the following characterization via \emph{tripods}: if $X, Y \in \Ncom$, then $\dn(X,Y) = 0$ if and only if there exists a set $Z$ with surjective maps $\ph_X:Z \r X, \, \ph_Y:Z \r Y$ such that $\w_X(\ph_X(z), \ph_X(z')) = \w_Y(\ph_Y(z),\ph_Y(z'))$ for all $z,z' \in Z$
(see \cite{sturm2012space, memoli2017distance,blumberg2017universality,bauer-tripods-reeb} for related applications of this technique). In particular, this result is not true without some form of the compactness assumption.

From the practical perspective, a first step is to clarify the types of networks that motivate our constructions, and to which our theory can be applied. One primary motivation has come from the domain of neuroimaging \cite{sporns2011networks, sporns2012discovering}, where the technique of building a network out of thresholded correlations between neural systems (cf. Figure \ref{fig:atlas-FC}) has led to great successes in identifying the neural correlates of human behavior. 
The underlying correlation matrices are called \emph{functional connectivity} matrices, and are square, non-metric matrices with values in the interval $[-1,1]$ \cite{friston1994functional} (cf. Figure \ref{fig:applications}). 
It has been shown that embedding such matrices into spaces with non-Euclidean geometry is beneficial for downstream analysis \cite{venkatesh2020comparing}, suggesting that building new spaces with different geometries may yield further insights from the data. 
For search systems in computer vision, utilizing non-metric dissimilarity matrices has been shown to be effective \cite{jegou2008accurate,horster2007image}, and our work characterizes such matrices as members of a ``nice" space $\Ncom$. From these perspectives, our study of $\Ncom$ provides a new geometry in which to embed data and also provides posthoc theoretical justification for empirical methods involving non-metric dissimilarity measures.

\begin{figure}
    \centering
    \begin{tikzpicture}
\begin{scope}[xshift=-4cm,yshift=0.5cm]
\draw[fill] (0,0) circle(0.5ex);
\draw[->,black]  (0:0) -- (0:2cm) node[right]{$a$};
\draw[->,black]  (0:0) -- (90:2cm) node[right]{$c$};
\draw[->,black]  (0:0) -- (45:2cm) node[right]{$b$};
\draw[->,black]  (0:0) -- (225:2cm) node[right]{$d$};
\end{scope}
\begin{scope}
\draw[fill] (0,0) circle(0.5ex) node[below] {$b$};
\draw[fill] (0.75,1) circle(0.5ex) node[above] {$a$};
\draw[fill] (1.25,-0.25) circle(0.5ex) node[below] {$c$};
\draw[fill] (3,0.25) circle(0.5ex) node[right] {$d$};
\draw[-,black,thick] (0,0) -- (0.75,1) node[left, pos = 0.7] {$ \tfrac{1}{\sqrt{2}}$};
\draw[-,black,thick] (0,0) -- (1.25,-0.25) node[below, pos = 0.5] {$ \tfrac{1}{\sqrt{2}}$};
\draw[-,teal,thick] (0.75,1) -- (1.25,-0.25) node[right, pos = 0.5] {$0$};
\draw[-,purple,dashed] (0,0) -- (3,0.25) node[above, pos = 0.6] {$-1$};
\draw[-,purple,thick] (1.25,-0.25) -- (3,0.25) node[below, pos = 0.7] {$- \tfrac{1}{\sqrt{2}}$};
\draw[-,purple,thick] (0.75,1) -- (3,0.25) node[above, pos = 0.7] {$- \tfrac{1}{\sqrt{2}}$};
\end{scope}
\begin{scope}[xshift = 6cm,yshift = 0.5cm]
\matrix [inner sep=0pt, nodes={inner sep=.2em},matrix of math nodes,left delimiter=(,right delimiter=),row sep=0.01cm,column sep=0.01cm] (m) {
1 & \tfrac{1}{\sqrt{2}} & 0 & - \tfrac{1}{\sqrt{2}}\\
 \tfrac{1}{\sqrt{2}} & 1 &  \tfrac{1}{\sqrt{2}} & -1\\
0 &  \tfrac{1}{\sqrt{2}} & 1 & - \tfrac{1}{\sqrt{2}}\\
- \tfrac{1}{\sqrt{2}} & -1 & - \tfrac{1}{\sqrt{2}} & 1\\
};
\end{scope}
\end{tikzpicture}
    \caption{Four unit vectors and their inner products, i.e. their correlations, as well as their representations in the form of a matrix and a directed, weighted graph (self-loops are omitted from the figure). Networks with positive and negative values and zeros in arbitrary positions are the object of study in this work.}
    \label{fig:atlas-FC}
\end{figure}

\begin{figure}
    \centering
    \includegraphics[width=0.8\textwidth]{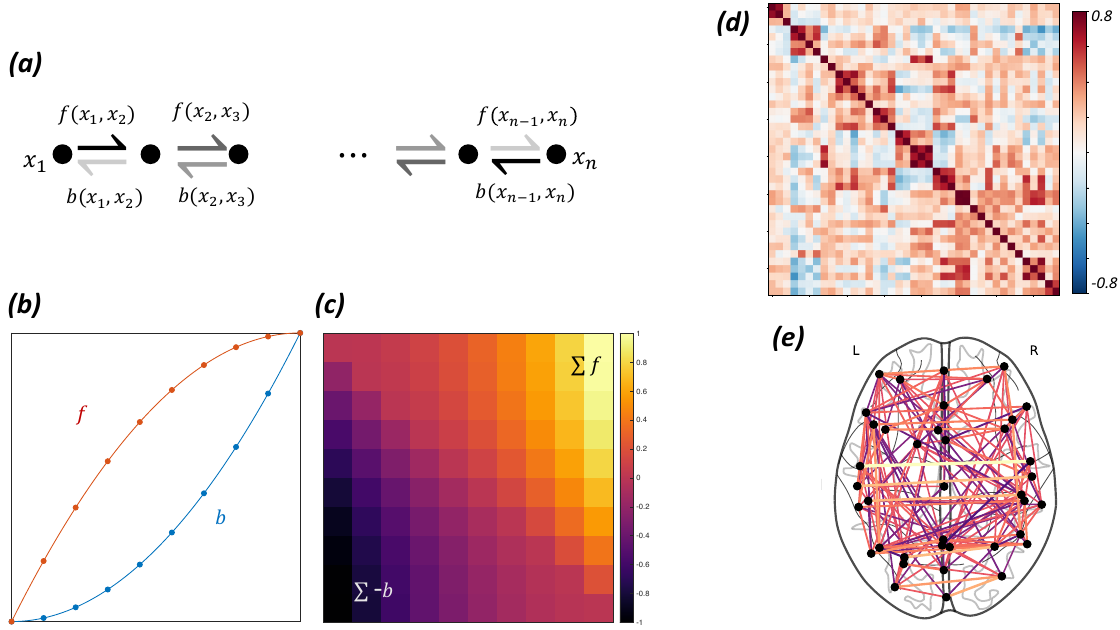}
    \caption{Examples of networks derived from natural systems that violate metric axioms. \textbf{(a):} A simple (unnormalized) Markov chain $X=\{x_1,\ldots, x_n\}$ with asymmetric forward and backward transition probabilities, as used in modeling biological/physical systems at broken detailed balance \cite{hill2005free, perez2020quantifying}. \textbf{(b):} Plot of $\sum_{i=1}^k f(x_i,x_{i+1})$ and $\sum_{i=1}^k b(x_i,x_{i+1})$ for $1\leq k \leq n$. \textbf{(c):} Weight function $\w_X$ defined for $i\leq j$ as $\w_X(x_i,x_j):= \sum_{k=i}^j f(x_k, x_{k+1})$ and $\w_X(x_j,x_i):= \sum_{k=i}^j - b(x_k,x_{k+1})$. 
    \textbf{(d),(e):} Functional connectivity (i.e. correlation) matrix and graph with edges corresponding to positive correlation values in the 80th percentile. Thresholding to obtain a graph is done in practice so that one may apply graph-connectivity tools for posthoc analysis, but the caveat is that there is no consensus on how to threshold weights, or how to deal with negative correlations. Our network framework provides posthoc analysis tools that accept arbitrary matrices without thresholding, thus avoiding potential information loss.}
    \label{fig:applications}
\end{figure}

Raw data in a variety of scientific applications may violate symmetry and/or triangle inequality, and this can be observed in natural phenomena such as human perception \cite{tversky1982similarity}, broken detailed balance \cite{hill2005free}, and discrepancies between time and distance measures \cite{matsumoto1989slope}. 
Exploratory data analysis techniques such as hierarchical clustering and persistent homology can be extended to accept nonmetric data as input. This was observed and successfully explored in \cite{carlsson2013axiomatic,clust-net,carlsson2018hierarchical, dowker-jact, pph, turner2019rips}. These works focused on the setting of finite datasets due to their emphasis on applications, but an underlying question---that was made explicit in \cite{turner2019rips}---was that of characterizing the convergence of these techniques as the sample size grows to infinity. Already in these works and surrounding literature \cite{clust-um, chazal2014persistence}, it was implicit that one way to prove such a result would be to use a stability property of an appropriately reformulated GH distance along with a notion of dense sampling analogous to taking $\e$-nets in a metric space. 
Looking toward the needs of a framework which can work with matrices that are possibly asymmetric and can have both positive and negative entries, we study 
the structure of $\Ncom$ and propose it as a convenient ambient space for the collection of finite networks. 
Studying $\Ncom$ necessitates the departure from spectral techniques and well-grounded structures such as the cone of symmetric positive definite matrices \cite{pennec2006riemannian}, but generates new possibilities via the lenses of applied topology and related machine learning approaches.

In this work we develop foundational tools for the theoretical integration of (possibly directed) networks into applied topology and broader machine learning frameworks. We study the notion of isomorphism under $\dn$ and produce multiple characterizations of isomorphism that hold under a hierarchy of topological regularity assumptions. We explore the limits obtained via $\dn$-convergent sequences, and use these insights to give conditions on networks admitting well-defined \emph{persistence diagrams}, i.e. networks on which persistent homology can be meaningfully applied. Toward grounding hierarchical clustering methods on directed networks, we develop the notion of sampling from networks equipped with probability measures and characterize the continuum limits of these methods. Finally, we derive network invariants that provide lower bounds on $\dn$, and provide algorithms and implementations for comparing networks via these lower bounds. 

\subsection{Main contributions and results}

The main contributions of this work are as follows:
\begin{enumerate}[I]

\item We introduce $(\Ncom,\dn)$ and characterize its metric structure, including multiple perspectives on its isomorphism structure. Among these, we highlight the results that require the minimal amount of setup:
\begin{enumerate}
    \item $(\Ncom,\dn)$ is a complete, geodesic, pseudometric space that includes all square matrices, notably matrices violating metric axioms (Theorems \ref{thm:dN1}, \ref{thm:complete},\ref{thm:geod-inf}).
    \item (Weak) Isomorphism in $(\Ncom,\dn)$ is exactly characterized via tripods (Theorem \ref{thm:weak-isom-cpt}).
\end{enumerate}

\item We extend known persistent homology methods on finite, possibly asymmetric networks to the infinite networks in $\Ncom$ (Corollary \ref{cor:stab}). We also study convergence of hierarchical clustering methods for finite, possibly asymmetric networks on certain classes of infinite\footnote{compact Polish spaces equipped with a bounded, measurable weight function $\w$, cf. Definition \ref{defn:measure-net}} networks equipped with Borel probability measures (Theorems \ref{thm:hnr-consistent}, \ref{thm:hr-consistent}).

\item We define a directed network model---the family of compact directed circles with finite reversibility---and characterize its Dowker persistent homology by linking to existing results \cite{adamaszek2016nerve,adamaszek2017vietoris} (cf. Figure \ref{fig:dowker-circle} and Theorem \ref{thm:dowker-circle}).

\end{enumerate}

\begin{figure}
    \centering
    \includegraphics[width=0.6\textwidth]{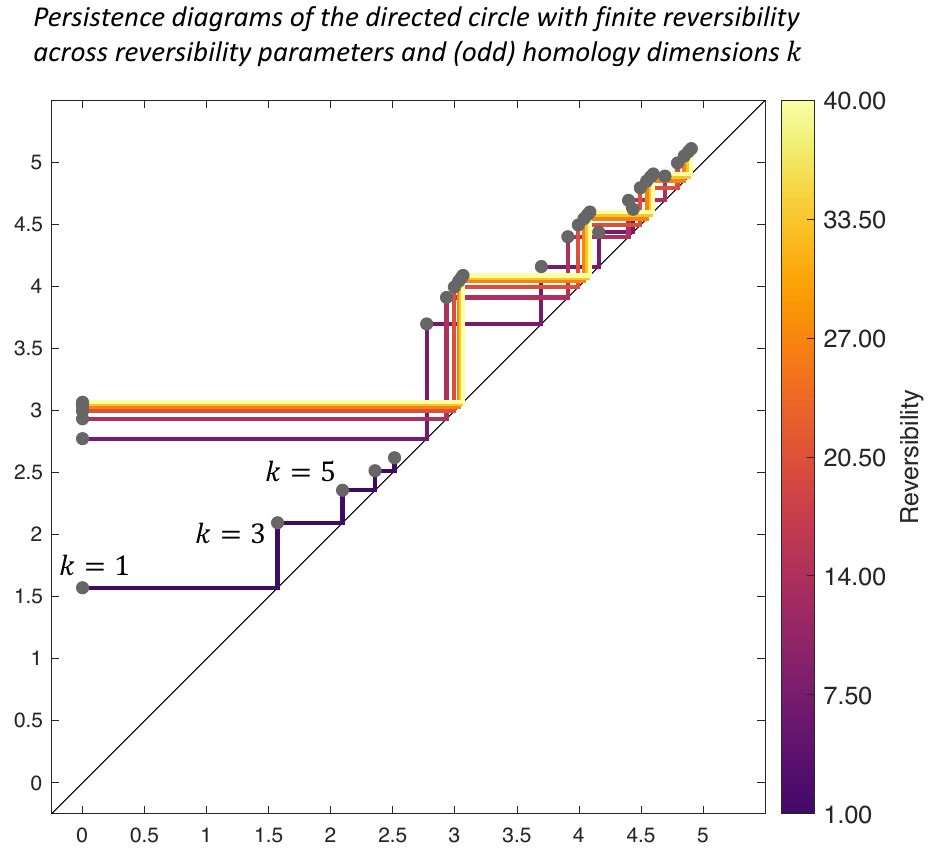}
    \caption{Dowker persistence diagrams of $(\us^1,\rus)$ across different reversibility parameters. Each broken line connects the persistence diagrams in homology dimensions $k=1,3,5,7,9$ for a single reversibility parameter. This illustrates the exact characterization given by Theorem \ref{thm:dowker-circle}.} 
    \label{fig:dowker-circle}
\end{figure}

We now mention some additional results contained in this work.
We show that the collection $(\Ncom,\dn)$ is complete and geodesic. Crucially, networks in $\Ncom$ also admit $\e$-approximation by finite networks in the $\dn$ sense. This result is used to show that compact networks have well-defined persistence diagrams, and to prove convergence results for both hierarchical clustering and persistent homology methods on networks.
One may take a network $(X,\w_X)$ and perturb the topology without incurring changes in $\dn(X,\cdot)$, as long as $\w_X$ remains unperturbed while still satisfying continuity. Combining this observation with the results stated above suggests that we would like descriptions of topologies on a pair $(X,\w_X)$ that are controlled by $\w_X$. We provide such descriptions, which we call the \emph{weak and strong coherent} topologies, by means of a Kuratowski-like embedding. Here we consider $\w_X(x,\cdot),\w_X(\cdot,x)$ in the $l^\infty$ function space, construct a metric in this space, and then study the topologies on $X$ that are compatible with the metric topology in $l^\infty$.
An interesting remark for practitioners is that an $\e$-ball in a coherent network may include points from disparate regions in the network. This is commonly seen in neuroimaging applications, where it is standard practice to study statistical correlations between disparate regions of the brain that coordinate to produce function.    

When restricted to compact networks equipped with coherent topologies, we recover a result showing that a network can be recovered from the knowledge of its $n\times n$ subnetworks, for $n \in \N$. This generalizes a result about reconstructibility of compact metric spaces from curvature sets \cite{gromov-book}. Notably in the metric setting, this result on curvature sets is the crucial ingredient in the work of \cite{filt-func} on producing families of polynomial-time lower bounds to the GH distance.

\subsection{Related literature}

A related thread which has received much attention from the machine learning community recently is that of the \emph{Gromov-Wasserstein (GW) distance} between metric measure spaces\footnote{metric spaces equipped with a Borel probability measure}, which was introduced in \cite{dgh-sm, dghlp-focm} to simplify the difficult combinatorial optimization problem posed above by taking a relaxation based on \emph{optimal transport (OT)}. This was later used in \cite{hendrikson} for network classification by taking graph geodesic distances to define the underlying metric.
This perspective was further developed in \cite{pcs16}, which discussed GW-averaging of distance and kernel matrix representations equipped with measures, and then in \cite{gwnets}, where the authors showed that all restrictions on the representations could be removed, and one would still obtain a pseudometric structure on such ``measure networks" with well-defined notions of isomorphism. This theoretical framework was leveraged in \cite{xu2019gromov,xu2019scalable} and \cite{chowdhury2020generalized}, which used adjacency and heat kernel representations on graphs, respectively, for a variety of learning tasks. Further applications of GW distances were demonstrated in \cite{gwa}, where the authors developed the notion of GW-averaging and principal component analysis from a Riemannian perspective. 
Yet another direction in this line of work appeared in the formulation of Gromov-Wasserstein Factorization \cite{xu2020gromov}, which unrolled the loopy computations of GW distances into a deep learning framework for graph dictionary learning. The use of measures in the network-GW setting has important theoretical and practical implications, and our work in this paper contributes to the understanding of the ``measure network" setting by studying the ``network" setting in isolation and explaining which properties of the former are inherited from the latter.

For the generalized GH distance $\dn$, the critical difference between the metric and network settings is in the zero set $\{ \dn(X,\cdot) = 0\}$, which can informally be thought of as the collection of symmetries. The collection of compact metric spaces equipped with $\dgh$ is a metric space up to isometry. So two compact metric spaces $(X,d_X)$, $(Y,d_Y)$ satisfying $\dgh(X,Y) =0$ only differ by a bijection. In the setting of networks $(X,\w_X)$, $(Y,\w_Y)$ with $\dn$, however, one may have $\dn(X,Y) = 0$ even when $X$ and $Y$ have different cardinalities. One far-reaching consequence of this phenomenon is the following seminal observation by Sturm \cite{sturm2012space} in the related setting of GW distance: a certain subfamily of networks equipped with a GW distance admits Riemannian-like structure such as tangent spaces and exponential maps, but the symmetries force this collection to have the structure of a Riemannian \emph{orbifold} rather than a manifold. The close relationship between GH and GW distances then makes it relevant and interesting to study the notion of \emph{isomorphism} that is compatible with $\dn$. This is one of the main directions of this work. Mildly relaxing the notion of isomorphism suggests considering limits and $\e$-approximation under $\dn$, and these constructions are also developed in this work.

Our approaches for studying $(\Ngen,\dn)$ are inspired by techniques used to study the structure $(\Mgen,\dgh)$ \cite{burago,petersen2006riemannian,ivanov2016gromov,dgh-era} as well as methods used by Sturm \cite{sturm2012space} to study the structure of metric measure \emph{(mm)} spaces equipped with the GW distance. For example, the completion of the space of mm spaces turns out to be the space of relaxed mm spaces that satisfy the triangle inequality almost everywhere. This in turn is embedded in an ambient space which is isometric to the quotient space of symmetric $L^2$ functions on the unit square modulo measure-preserving transformations of the unit interval. In our setting, the idea of progressively relaxing the ambient space proceeds as follows: the collection of finite networks $\Ncal$ is nested inside a collection of compact networks $\Ncom$ that is complete. This space admits a natural notion of equivalence class. Working in a space where the elements need not satisfy metric requirements allows one to perform linear operations such as ``addition" of networks as well as operations where the size of a network is changed while remaining in the equivalence class. These latter properties turn out to be fundamental in a parallel theory of Riemannian statistics on networks equipped with measures \cite{gwa}. 

For the related setting of quasimetric spaces, i.e. metric spaces where the symmetry condition is relaxed, there has been significant development of the topological consequences of quasimetry. Such spaces admit \emph{forward-open} and \emph{backward-open} $\e$-balls, and thus they may admit multiple topologies concurrently \cite{kelly1963bitopological}. These in turn lead to ``forward" and ``backward" notions of Cauchy sequences, completeness, and boundedness, which have been used in \cite{shen2010gromov} to extend results such as Gromov's precompactness theorem from Riemannian to Finsler manifolds. 
Such results require some form of topological regularity. In \cite{shen2010gromov} the regularity is provided by a condition called \emph{finite reversibility} that places bounds on the asymmetry of the space. We study this condition in the context of a model of \emph{directed circles with finite reversibility}, and in particular, we characterize the \emph{Dowker persistent homology} (cf. Section \ref{sec:ph-method-details}) of such circles across all reversibility parameters to explain the difference caused by asymmetry. Furthermore, we study a weaker topological regularity condition called \emph{coherence} that is implied by finite reversibility. This allows us to obtain a reconstruction result which, in the metric setting, states that compact metric spaces can be recovered from their curvature sets \cite[Section 3.27 $\frac{1}{2}$]{gromov-book}. 
This in turn leads to efficient computational implementations for estimating $\dgh$-type distances \cite{oles2019efficient}. While our focus is on characterizing settings where asymmetric networks can be studied via theoretical and computational tools developed for metric spaces, we remark that numerous works dating back to Busemann have studied specialized versions of length structures, geodesics, and curvature for quasimetric spaces \cite{busemann1950geometry,zaustinsky1959spaces, cobzas2012functional, mennucci2013asymmetric}. 
We also remark on an interesting connection to the work of \cite{perrault2011directed}, where directed graphs are modeled as samples from a manifold equipped with an asymmetric kernel: the directionality of the kernel is derived from a vector field on the manifold, which in turn provides the topological regularity suggested above. 

The ``generalized metric" approach to studying networks utilizes different techniques than combinatorial and spectral approaches \cite{newman2010networks}. A possible bridge between metric and combinatorial approaches is given by observing the structural similarities between GH-type distances and the \emph{cut metric} \cite{lovasz2012large, borgs2008convergent} that has been deeply influential in statistical physics, theoretical computer science, and extremal graph theory. This connection is not fully fleshed out at present, but we report on a potential approach and leave an explicit connection open for future work. 
A different approach is provided by the notion of \emph{structure space} \cite{jain2009structure}, wherein networks are modeled to have fixed size by appending null nodes \cite{calissano2020populations}. This latter approach also gives rise to well-defined notions of averaging and principal component analysis, and suggests that future work could further elucidate these connections.

\subsection{Organization}

The presentation is divided into three parts. Section \ref{sec:networks} introduces networks, notions of isomorphism, the network distance, topological considerations in defining families of networks, and a particular family of directed networks called the directed circles with finite reversibility. Section \ref{sec:metrics} develops metric properties of the space of networks equipped with the network distance and characterizes the isomorphism structure in this space. In particular, it sets up results on finite sampling and convergence of compact networks that we later use for proving convergence of network invariants. In Section \ref{sec:inv-conv} we set up the general framework of network invariants, which includes hierarchical clustering and persistent homology for infinite networks. We then establish the stability and convergence of these invariants. 

Certain subsections fall into natural groups that may be read reasonably independently of other parts. The directed circles with finite reversibility are introduced in Section \ref{sec:dir-s1}, and the full characterization of their (Dowker) persistent homology is presented in Section \ref{sec:dir-s1-pers}. The crucial structural aspects of compact networks, namely the sampling and isomorphism properties, are described in Sections \ref{sec:cpt-finite-samp} and \ref{sec:cpt-weak-isom}, and their application toward obtaining well-defined persistence diagrams is provided in Section \ref{sec:ph-method-details}. 

Computational experiments using these invariants are provided in the Appendix. To clarify presentation, longer proofs are relegated to the end of the section in which they appear.

\section{Networks, isomorphism, and network distances}
\label{sec:networks}

In this section, we formulate definitions for networks, provide examples, construct some \emph{model} networks, define distances between networks, and relate different notions of isomorphism. The notion of a unique limit for a $\dn$-convergent sequence of finite networks is somewhat complex, and the results in this section show why it is necessary to work in an appropriate subspace of $\Ngen$.

\subsection{Definitions}

For real-world applications, the object of interest is often the collection of all finite networks, which we denote by \gls{Ncal}. Formally, one writes:
\[\Ncal:=\set{(X,\w_X) : X \text{ a finite set, } \w_X:X\times X \r \R \text{ any map}}.\]

\begin{example}[Finite graphs and metric spaces] This definition of a finite network is an immediate relaxation of the definition of a finite metric space. The name ``network" is justified because graphs can be viewed as networks in the sense defined above. Given a connected, undirected graph $G=(V,E)$ having adjacency matrix $A$, degree matrix $D$, and geodesic distance matrix $d_G$, the pairs $(V, d_G)$ and $(V,A)$, are both examples of networks. Consider also the graph Laplacian $L:= D-A$, or its normalized form $\mc{L} := I - D^{-1}AD$. The pairs $(V,L), (V,\mc{L})$ are also networks. In the case of (strongly connected), directed graphs, one could consider the \emph{directed Laplacian} studied by Chung \cite{chung2005laplacians} that is defined by deriving a Markov transition matrix and considering its associated Perron vector. In summary, there are a variety of quantities with different properties that can be derived from the combinatorial structure of a graph, and the definition of a finite network is general enough to comprise all of these. Each choice of a weight function that one may derive from a graph is a ``lens" through which to study the data---such perspectives have been successfully harnessed in data analysis contexts \cite{mapper}.
\end{example}

In order to build a satisfactory theoretical foundation, one also needs to develop a formalism for infinite networks. Thus we proceed with the following definition. 
\begin{definition}[Networks \gls{Ngen}]\label{defn:nets}
Let $X$ be a first countable topological space, and let $\w_X$ be a continuous function from $X \times X$ (endowed with the product topology) to $\R$. By a \emph{network}, we will mean a pair $(X,\w_X)$. We will denote the collection of all networks by $\Ngen$. 
\end{definition}

Notice in particular that $\Ngen$ includes metric spaces (they are first countable, and the distance function is continuous) as well as spaces that are quasi-metric or directed (no symmetry), pseudometric (no nondegeneracy), semimetric (no triangle inequality), or all of the above. Recall that a space is first countable if each point in the space has a countable local basis (see \cite[p. 7]{counterexamples} for more details). First countability is a technical condition guaranteeing that when the underlying topological space of a network is compact, it is also sequentially compact.

Given a network $(X,\w_X)$, we will refer to the points of $X$ as \emph{nodes} and $\omega_X$ as the \emph{weight function} of $X$. Pairs of nodes will be referred to as \emph{edges}. Given a nonempty subset $A\subset X$, we will refer to $(A,\omega_X|_{A\times A})$ as the \emph{sub-network} of $X$ induced by $A$. For notational convenience, we will often write $X\in \Ngen$ to mean $(X,\w_X)\in\Ngen$.

Recall that any finite set $X$ can be equipped with the discrete topology, and any map $\w_X:X\times X \r \R$ is continuous with respect to the discrete topology. Thus the elements of $\Ncal$ trivially fit into the framework of $\Ngen$. Throughout the paper, we will always understand finite networks to be equipped with the discrete topology. 

While we are interested in $\Ncal$ for practical applications, a key ingredient of our theoretical framework is the collection of \emph{compact networks}. We define these to be the networks $(X,\w_X)$ satisfying the additional constraint that $X$ is compact. The collection of compact networks is denoted \gls{Ncom}. Specifically, we write:
\[\Ncom:=\set{(X,\w_X) : X \text{ compact, first countable topological space, }\w_X: X\times X \r \R \text{ continuous}}.\]

Compact networks are of special practical interest because they can be finitely approximated in a manner that we will make precise in Section \ref{sec:cpt-finite-samp}.  Real world networks that are amenable to computational tasks are necessarily finite and may be viewed as samples drawn from an underlying compact network, so whenever possible, we will state our results for compact networks. Occasionally we will provide examples of noncompact networks to illustrate interesting theoretical points.

The aforementioned finite reversibility property is defined for the special subclass of \emph{dissimilarity networks} that is described as follows.

\begin{definition}[Dissimilarity networks \gls{Ngendis}]
\label{def:nets-dissim}
A \emph{dissimilarity network} is a network $(X,A_X)$ where $A_X:X\times X \to \R_+$ and $A_X(x,x') = 0$ if and only if $x= x'$. Neither symmetry nor triangle inequality is assumed. The collection of all such networks is denoted $\Ngendis$. The finite and compact settings are denoted as $\Ndis$ and $\Ncomdis$, respectively.
\end{definition}

\begin{definition}[The reversibility parameter \gls{rx}]
\label{def:reversibility}
The \emph{reversibility} $\rho_X$ of a dissimilarity network $(X,A_X)$ is defined to be 
\[\rho_X:= \sup_{x\neq x' \in X}\frac{A_X(x,x')}{A_X(x',x)}.\]
$(X,A_X)$ is said to have \emph{finite reversibility} if $\rho_X < \infty$. Notice that $\rho_X\geq 1$ always, with equality iff $\w_X$ is symmetric. Finitely reversible networks will feature heavily in Sections \ref{sec:dir-s1} and \ref{sec:dir-s1-pers}. See also Figures \ref{fig:dowker-circle} and \ref{fig:dir-s1}.
\end{definition}

Dissimilarity networks satisfying the triangle inequality, but not symmetry, include the special class of objects called \emph{directed metric spaces}, which we define below.

\begin{definition} Let $(X,A_X)$ be a dissimilarity network. Given any $x\in X$ and $r \in \R_+$, the \emph{forward-open ball} of radius $r$ centered at $x$ is 
\[B^+(x,r):=\set{x' \in X: A_X(x,x') < r}.\] 
The \emph{forward-open topology induced by $A_X$} is the topology on $X$ generated by the collection $\set{B^+(x,r) : x\in X,\; r > 0}$. The idea of forward open balls is prevalent in the study of Finsler geometry; see \cite[p. 149]{bao2012introduction} for details. 
\end{definition}

\begin{definition}[Directed metric spaces] A \emph{directed metric space} or \emph{quasi-metric space} is a dissimilarity network $(X,\nu_X)$ such that $X$ is equipped with the forward-open topology induced by $\nu_X$ and $\nu_X: X\times X \r \R_+$ satisfies: 
\[\nu_X(x,x'') \leq \nu_X(x,x') + \nu_X(x',x'') \text{ for all } x,x',x'' \in X. \]
The function $\nu_X$ is called a directed metric or quasi-metric on $X$. 
Notice that compact directed metric spaces constitute a subfamily of $\Ncomdis$.
\end{definition}

\begin{definition}[Ultrametric/strong triangle inequality]
\label{def:ultrametric}
A network $(X,\w_X)$ is said to satisfy the \emph{ultrametric} or \emph{strong triangle inequality} if 
\[\w_X(x,x') \leq \max\set{\w_X(x,x''),\w_X(x'',x')} \text{ for all } x,x',x''\in X.\] 
Metric spaces satisfying this property are simply known as \emph{ultrametric spaces}. The collections of networks satisfying this inequality are denoted $\Ngenult, \Ncomult$, and $\Nult$ in the general, compact, and finite cases, respectively.
\end{definition}

\begin{example}
Finite metric spaces and finite ultrametric spaces constitute basic examples of dissimilarity networks. Also note that finite dissimilarity networks are basic examples of networks that satisfy finite reversibility (cf. Definition \ref{def:reversibility}).
\end{example}

Dissimilarity networks satisfying the symmetry condition, but not the triangle inequality, have a long history dating back to Fr\'{e}chet \cite{frechet1906quelques} and continuing with work by Pitcher and Chittenden \cite{pitcher1918foundations}, Niemytzki \cite{niemytzki1927third}, Galvin and Shore \cite{galvin1984completeness, galvin1991distance}, and many others, as summarized in \cite{gruenhage1984generalized}. One of the interesting directions in this line of work was the development of a ``local triangle inequality" and related metrization theorems \cite{niemytzki1927third}, which has been continued more recently in \cite{waszkiewicz2013local}.

Directed metric spaces with finite reversibility were studied in \cite{shen2010gromov}, and constitute important examples of networks that are strictly non-metric. More specifically, the authors of \cite{shen2010gromov} extended notions of Hausdorff distance and Gromov-Hausdorff distance to the setting of directed metric spaces with finite reversibility, and our network distance $\dn$ subsumes this theory while extending it to even more general settings. 

\begin{remark}[Finsler metrics] An interesting class of directed metric spaces arises from studying Finsler manifolds. A \emph{Finsler manifold} $(M,F)$ is a smooth, connected manifold $M$ equipped with an asymmetric norm $F$ (called a \emph{Finsler function}) defined on each tangent space of $M$ \cite{bao2012introduction}. A Finsler function induces a directed metric $d_F: M \times M \r \R_+$ as follows: for each $x,x'\in M$, 
\[d_F(x,x'):=\inf\set{\int_a^b \!F(\g(t),\dot{\g}(t)) \,dt : \g:[a,b] \r M \text{ a smooth curve joining $x$ and $x'$}}.\]  

Finsler metric spaces have received interest in the applied literature. In \cite{sabau2013metric}, the authors prove that Finsler metric spaces with \emph{reversible geodesics} (i.e. the reverse curve $\g'(t):=\g(1-t)$ of any geodesic $\g:[0,1]\r M$ is also a geodesic) is a \emph{weighted quasi-metric} \cite[p. 2]{sabau2013metric}. Such objects have been shown to be essential in biological sequence comparison \cite{stojmirovic2009geometric}. In Section \ref{sec:dir-s1} we will study directed circles with finite reversibility, which have the property that $\e$-balls in such circles are not spherically symmetric. This is motivated by the more general setting of Randers manifolds \cite{bao2004zermelo}, which are a special subfamily of Finsler manifolds. 
\end{remark}

Returning to the setting of general networks, a natural question in understanding the structure of $\Ngen$ would be: which elements of $\Ngen$ are equivalent? A suitable answer to this question requires us to develop notions of \emph{isomorphism} that show various degrees of restrictiveness. These notions of isomorphism form a recurrent theme throughout this paper. 

We first develop the notion of \emph{strong isomorphism} of networks. The definition follows below.

\begin{definition}[Weight preserving maps] Let $(X,\w_X), (Y,\w_Y) \in \Ngen$. A map $\ph: X \r Y$ is \emph{weight preserving} if:
\[\w_X(x,x') = \w_Y(\ph(x),\ph(x')) \text{ for all } x, x' \in X.\]
\end{definition}

\begin{definition}[Strong isomorphism] Let $(X,\w_X), (Y,\w_Y) \in \Ngen$. To say $(X,\w_X)$ and $(Y,\w_Y)$ are \emph{strongly isomorphic} means that there exists a weight preserving bijection $\ph: X \r Y$. We will denote a strong isomorphism between networks by $X \text{\gls{congs}}   Y$. Note that this notion is exactly the usual notion of isomorphism between weighted graphs.
\end{definition}

Strongly isomorphic networks formalize the idea that the information contained in a network should be preserved when we relabel the nodes in a compatible way.

\begin{example}\label{ex:simple-networks} Networks with one or two nodes (cf. Fig. \ref{fig:simplenets}) will be very instructive in providing examples and counterexamples, so we introduce them now with some special terminology.
\begin{itemize}
\item A network with one node $p$ can be specified by $\alpha \in \R$, and we denote this by $N_1(\alpha)$. We have $N_1(\a)\cong^s N_1(\a')$ if and only if $\a=\a'$.
\begin{figure}
\begin{center}
\begin{tikzpicture}[every node/.style={font=\footnotesize}]

\node[circle,draw,fill=purple!20](1) at (0,0){$p$};
\path[->] (1) edge [loop above, min distance=10mm,in=0,out=60] node[right]{$\a$}(1);

\node[circle,draw,above,fill=orange!20](2) at (3,0){$p$};
\node[circle,draw,fill=violet!30](3) at (5,0){$q$};
\path[->] (2) edge [loop left, min distance = 10mm] node[above]{$\a$}(2);
\path[->] (3) edge [loop right, min distance = 10mm] node[below]{$\beta$}(3);
\path[->] (2) edge [bend left] node[above]{$\d$} (3);
\path[->] (3) edge [bend left] node[below]{$\g$} (2);

\node at (0,-1){$N_1(\a)$};
\node at (4,-1){$N_2(\Om)$};

\end{tikzpicture}
\caption{Networks over one and two nodes with their weight functions.}
\label{fig:simplenets}
\end{center}
\end{figure}

\item A network with two nodes will be denoted by $N_2(\Omega)$, where $\Omega = \mattwo{\alpha}{\delta}{\gamma}{\beta}\in \R^{2\times 2}$. Given $\Omega,\Omega'\in \R^{2\times 2}$, $N_2(\Omega)\cong^s N_2(\Omega')$ if and only if there exists a permutation matrix $P$ of size $2\times 2$ such that $\Omega' = P\,\Omega\, P^T$.

\item Any $k$-by-$k$ matrix $\Sigma\in \R^{k\times k}$ induces a network on $k$ nodes, which we refer to as $N_k(\Sigma)$. Notice that $N_k(\Sigma)\cong^s N_\ell(\Sigma')$ if and only if $k=\ell$ and there exists a permutation matrix $P$ of size $k$ such that $\Sigma'=P\,\Sigma\,P^T.$
\end{itemize}
\end{example}

Having defined a notion of isomorphism between networks, the next goal is to present the network distance $\dn$ that is the central focus of this paper, and verify that $\dn$ is compatible with strong isomorphism. We remind the reader that restricted formulations of this network distance have appeared in earlier applications of \emph{hierarchical clustering} \cite{clust-net, carlsson2013axiomatic} and \emph{persistent homology} \cite{dowker-asilo,dowker-jact,pph} methods to network data, and our overarching goal in this paper is to provide a theoretical foundation for this useful notion of network distance. In our presentation, we use a formulation of $\dn$ that is more general than any other version available in the existing literature. As such, we proceed pedagogically and motivate the definition of $\dn$ by tracing its roots in the metric space literature. Before ending this section, however, we will provide a final definition that is related to the matrices in Example \ref{ex:simple-networks}. 

For a sequence $(x_i)_{i=1}^n$ of nodes in a network $X$, we will denote the associated weight matrix by $\matele{\w_X(x_i,x_j)}_{i,j=1}^n$. Entry $(i,j)$ of this matrix is simply $\w_X(x_i,x_j)$.

\begin{definition}[Motif sets \gls{M}]\label{defn:motif}

For each $n\in \mathbb{N}$ and each $X \in \Ncom$, define $\Psi^n_X: X^n \r \R^{n \times n}$ to be the map $(x_1, \cdots, x_n) \mapsto \matele{\w_X(x_i,x_j)}_{i,j=1}^n$. Note that $\Psi^n_X$ is simply a map that sends each sequence of length $n$ to its corresponding weight matrix. Let $\mc{C}(\R^{n\times n})$ denote the closed subsets of $\R^{n\times n}$. Then let $\M_n:\Ncom\r \mc{C}(\R^{n\times n})$ denote the map defined by 
\[(X,\w_X)\mapsto \set{\Psi^n_X(x_1,\ldots, x_n) : x_1,\ldots,x_n\in X}.\]
We refer to $\M_n(X)$ as the $n$-motif set of $X$. Notice that the image of $\M_n$ is closed in $\R^{n\times n}$ because each coordinate is the continuous image of the compact set $X\times X$ under $\w_X$, hence the image of $\M_n$ is compact in $\R^{n\times n}$ and hence closed. 
\end{definition}

Notice that for $X\in \Ncal$ and for a fixed $n\in \N$, the set $\M_n(X)$ is a \emph{finite} subset of $\R^{n\times n}$. The interpretation is that $\M_n(X)$ is a bag containing all the motifs of $X$ that one can form by looking at all subnetworks of size $n$ (with repetitions).

\begin{example}\label{ex:conf-sets}
For the networks from Example \ref{ex:simple-networks}, we have $\M_1(N_2(\Omega)) = \{\alpha,\beta\}$ and 
\[\M_2(N_2(\Omega))=\big\{\mattwo{\alpha}{\alpha}{\alpha}{\alpha},\mattwo{\beta}{\beta}{\beta}{\beta},\mattwo{\alpha}{\delta}{\gamma}{\beta},\mattwo{\beta}{\gamma}{\delta}{\alpha}\big\}, 
\qquad \M_2(N_1(\a)) = \set{\mattwo{\a}{\a}{\a}{\a}}.
\]
\end{example}

\begin{remark}
Our definition of motif sets is inspired by a definition made by Gromov, termed ``curvature classes,'' in the context of compact metric spaces \cite[\S 3.27]{gromov-book}.
\end{remark}

\subsection{The network distance}

One strategy for defining a notion of distance between networks would be to take a well-understood notion of distance between metric spaces and extend it to all networks. The network distance $\dn$ arises by following this strategy and extending the well-known Gromov-Hausdorff distance $\dgh$ between compact metric spaces \cite{gromov1981structures,burago, petersen2006riemannian}. The definition of $\dgh$ is rooted in the \emph{Hausdorff distance \gls{dhdf}}  between closed subsets of a metric space. Given a metric space $(Z,d_Z)$ and closed subsets $A,B \subseteq Z$, one defines:
\[\dhdf^Z(A,B):= \max( \sup_{a\in A} \inf_{b\in B} d_Z(a,b),
\sup_{b\in B} \inf_{a\in A} d_Z(a,b) ).\]

\begin{definition}\label{defn:dgh}
Given metric spaces $(X,d_X)$ and $(Y,d_Y)$, the \emph{Gromov-Hausdorff distance \gls{dgh}} between them is defined as:
\begin{align*}
\dgh((X,d_X),(Y,d_Y)) :=\inf \big\{\dhdf^Z(\ph(X),\psi(Y)) &: \text{$Z$ a metric space,}\\
&\ph:X \r Z,\; \psi:Y\r Z \text{ isometric embeddings} \big\}.
\end{align*}
\end{definition}

The Gromov-Hausdorff distance dates back to at least the early 1980s \cite{gromov1981structures}, and it satisfies numerous desirable properties. It is a valid metric on the collection of isometry classes of compact metric spaces, is complete, admits many precompact families, and has well-understood notions of convergence \cite[Chapter 7]{burago}. Moreover, it has found real-world applications in the \emph{shape matching} \cite{dgh-sgp,dgh-focm} and persistent homology literature \cite{dgh-pers}, and its computational aspects have been studied as well \cite{dgh-props}. As such, it is a strong candidate for use in defining a network distance.

Unfortunately, the formulation of $\dgh$ above is heavily dependent on a metric space structure, and the notion of Hausdorff distance may not make sense in the setting of networks. So $\dgh$ as defined above cannot be directly extended to a network distance. However, it turns out that there is a reformulation of $\dgh$ that utilizes the language of \emph{correspondences} \cite{kalton1999distances, burago}. We present this construction next, and note that the resulting network distance $\dn$ will agree with $\dgh$ when restricted to metric spaces.

\begin{definition}[Correspondence] Let $(X,\w_X), (Y,\w_Y) \in \Ngen$. A \emph{correspondence between $X$ and $Y$} is a relation $R \subseteq X \times Y$ such that $\pi_X(R) = X$ and $\pi_Y(R) = Y$, where $\pi_X$ and $\pi_Y$ are the canonical projections of $X\times Y$ onto $X$ and $Y$, respectively. The collection of all correspondences between $X$ and $Y$ will be denoted $\Rsc(X,Y)$, abbreviated to \gls{Rsc} when the context is clear.
\end{definition}

\begin{example}[1-point correspondence]\label{ex:corresp} Let $X$ be a set, and let $\set{p}$ be the set with one point. Then there is a unique correspondence $R = \set{(x,p) : x \in X}$ between $X$ and $\set{p}$.
\end{example} 

\begin{example}[Diagonal correspondence]\label{ex:diag-corresp}
Let $X = \set{x_1,\ldots, x_n}$ and $Y=\set{y_1,\ldots, y_n}$ be two enumerated sets with the same cardinality. A useful correspondence is the \emph{diagonal correspondence}, defined as $\Delta:=\set{(x_i,y_i) : 1\leq i \leq n}.$ When $X$ and $Y$ are infinite sets with the same cardinality, and $\ph:X \r Y$ is a given bijection, then we write the diagonal correspondence as $\Delta:=\set{(x,\ph(x)) : x \in X}.$
\end{example}

\begin{definition}[Distortion of a correspondence \gls{dis}]
Let $(X,\w_X),(Y,\w_Y) \in \Ngen$ and let $R\in\Rsc(X,Y)$. The \emph{distortion} of $R$ is given by:
\[\mathrm{dis}(R):=\sup_{(x,y),(x',y')\in R}|\w_X(x,x')-\w_Y(y,y')|.\] 
\end{definition}

\begin{remark}[Composition of correspondences]\label{rem:chained-corr} 
Let $(X,\w_X), \ (Y,\w_Y), \ (Z,\w_Z) \in \Ngen$, and let $R \in \Rsc(X,Y),\ S \in \Rsc(Y,Z)$. Then we define:
\[R\circ S := \{(x,z) \in X \times Z \mid \exists y, (x,y) \in R, (y,z) \in S\}.\]
In the proof of Theorem \ref{thm:dN1}, we verify that $R\circ S \in \Rsc(X,Z)$, and that $\dis(R\circ S) \leq \dis(R) + \dis(S)$. %
\end{remark}

\begin{definition}[The first network distance \gls{dn}]\label{defn:dno}
Let $(X,\w_X),(Y,\w_Y) \in \Ngen$. We define the \emph{network distance} between $X$ and $Y$ as follows:
\[\dn((X,\w_X),(Y,\w_Y)):=\tfrac{1}{2}\inf_{R\in\Rsc}\mathrm{dis}(R).\]

When the context is clear, we will often write $\dn(X,Y)$ to denote $\dn((X,\w_X),(Y,\w_Y))$. We define the collection of \emph{optimal correspondences} \gls{Ropt} between $X$ and $Y$ to be the collection $\set{R \in \Rsc(X,Y) : \dis(R) = 2\dn(X,Y)}.$ This set is always nonempty when $X,Y \in \Ncal$, but may be empty in general (see Example \ref{ex:non-opt-corr}).
\end{definition}

\begin{remark} The intuition behind the preceding definition of network distance may be better understood by examining the case of a finite network. Given a finite set $X$ and two edge weight functions $\w_X,\w'_X$ defined on it, we can use the $\ell^\infty$ distance as a measure of network similarity between $(X,\w_X)$ and $(X,\w'_X)$:
\[\norm{\w_X-\w'_X}_{\ell^\infty(X\times X)}:=\max_{x,x'\in X}|\w_X(x,x')-\w'_X(x,x')|.\]

A generalization of the $\ell^\infty$ distance is required when dealing with networks having different sizes: Given two sets $X$ and $Y$, we need to decide how to match up points of $X$ with points of $Y$. Any such matching will yield a subset $R\subseteq X\times Y$ such that $\pi_X(R)=X$ and $\pi_Y(R)=Y$, where $\pi_X$ and $\pi_Y$ are the projection maps from $X\times Y$ to $X$ and $Y$, respectively. This is precisely a correspondence, as defined above. A valid notion of network similarity may then be obtained as the distortion incurred by choosing an optimal correspondence---this is precisely the idea behind the definition of the network distance above.
\end{remark}

\begin{remark}\label{rem:dno-finite}
Some simple but important remarks are the following: 
\begin{enumerate}
\item When restricted to metric spaces, $\dn$ agrees with $\dgh$. This can be seen from the reformulation of $\dgh$ in terms of correspondences \cite[Theorem 7.3.25]{burago}, \cite{kalton1999distances}. Whereas $\dgh$ vanishes only on pairs of isometric compact metric spaces,
$\dn$ vanishes on a broader family of networks that we will describe more fully in Section \ref{sec:networks-wisom}.
\item Given $X,Y\in \Ncal$, the network distance reduces to the following:
\[\dn(X,Y)=\frac{1}{2}\min_{R\in \Rsc}\max_{(x,y),(x',y')\in R}|\w_X(x,x')-\w_Y(y,y')|.\]
Moreover, there is always at least one optimal correspondence $\Ropt$ for which $\dn(X,Y)$ is achieved; this is a consequence of considering finite networks.

\item For any $X,Y\in\Ncom$, we have $\Rsc(X,Y)\neq \emptyset$, and $\dn(X,Y)$ is always bounded. Indeed, $X\times Y$ is always a valid correspondence between $X$ and $Y$. So we have:
\[\dn(X,Y)\leq \frac{1}{2}\mathrm{dis}(X\times Y)
\leq\frac{1}{2}\left(\sup_{x,x'}\big|\omega_X(x,x')\big|+\sup_{y,y'\in Y}\big|\omega_Y(y,y')\big|\right)<\infty.\]
\end{enumerate}
\end{remark}

\begin{example}\label{ex:comparison} Now we give some examples to illustrate the preceding definitions.
\begin{itemize}

\item For $\alpha,\alpha'\in\R$ consider two networks with one node each: $N_1(\alpha) = (\{p\},\alpha)$ and $N_1(\alpha')=(\{p'\},\alpha')$. By Example \ref{ex:corresp} there is a unique correspondence $R=\{(p,p')\}$ between these two networks, so that $\dis(R)=|\alpha-\alpha'|$ and as a result $\dn(N_1(\alpha),N_1(\alpha'))=\frac{1}{2}|\alpha-\alpha'|.$

\item Let $(X,\omega_X)\in\Ncal$ be any network and for $\alpha\in \R$  let $N_1(\alpha)=(\{p\},\alpha)$. Then $R=\{(x,p),\,x\in X\}$ is the unique correspondence between $X$ and $\set{p}$, so that 
\[\dn(X,N_1(\alpha))=\frac{1}{2}\max_{x,x'\in X}\big|\omega_X(x,x')-\alpha\big|.\]

\end{itemize}
\end{example}

We now test whether $\dn$ is compatible with strong isomorphism. Given two strongly isomorphic networks, i.e. networks $(X,\w_X), (Y,\w_Y)$ and a weight preserving bijection $\ph:X \r Y$, it is easy to use the diagonal correspondence (Example \ref{ex:diag-corresp}) to verify that $\dn(X,Y) = 0$. However, it is easy to see that the reverse implication is not true in general. Using the one-point correspondence (Example \ref{ex:corresp}), one can see that $\dn(N_1(1), N_2(\mathbbm{1}_{2 \times 2})) =0$. Here $\mathbbm{1}_{n \times n}$ denotes the all-ones matrix of size $n \times n$ for any $n \in \N$. However, these two networks are not strongly isomorphic, because they do not even have the same cardinality. Thus we need to search for a different, perhaps weaker notion of isomorphism.

\subsection{Weak isomorphism}
\label{sec:networks-wisom}

To proceed in this direction, first notice that a strong isomorphism between two networks $(X,\w_X)$ and $(Y,\w_Y)$, given by a bijection $f:X\r Y$, is equivalent to the following ``tripod" condition (cf. Figure \ref{fig:tripod}): there exists a set $Z$ and bijective maps $\ph_X:Z \r X, \ph_Y:Z \r Y$ such that $\w_X(\ph_X(z),\ph_X(z'))=\w_Y(\ph_Y(z),\ph_Y(z'))$ for each $z,z'\in Z$. To see this, simply let $Z=\set{(x,f(x)) : x\in X}$ and let $\ph_X, \ph_Y$ be the projection maps on the first and second coordinates, respectively. Based on this observation, we make the next definition.

\begin{definition}\label{defn:typeI-wisom} Let $(X,\w_X)$ and $(Y,\w_Y) \in \Ngen$. We define $X$ and $Y$ to be \emph{Type I weakly isomorphic}, denoted $X \text{\gls{congwo}} Y$, if there exists a set $Z$ and surjective maps $\ph_X : Z \r X$ and $\ph_Y: Z \r Y$ such that 
\begin{equation}\w_X(\ph_X(z),\ph_X(z')) = \w_Y(\ph_Y(z),\ph_Y(z')) \text{ for each $z,z' \in Z$.}\end{equation} 
\end{definition}

\begin{figure}
\begin{center}
\begin{tikzpicture}
\node (1) at (0,0){$X$};
\node (2) at (1,1.5){$Z$};
\node (3) at (2,0){$Y$};
\node at (1,0){$\cong^s$};
\node at (-3,0.75)[text width = 4 cm]{Strong isomorphism: $\phi_X, \phi_Y$ injective and surjective};

\path[->] (2) edge node[left]{$\phi_X$}(1);
\path[->] (2) edge node[right]{$\phi_Y$}(3);

\begin{scope}[xshift=-1in]
\node (4) at (6,0){$X$};
\node (5) at (7,1.5){$Z$};
\node (6) at (8,0){$Y$};
\node at (7,0){$\congwo$};
\node at (11,0.75)[text width = 4 cm]{Type I weak isomorphism: $\phi_X, \phi_Y$ only surjective};

\path[->] (5) edge node[left]{$\phi_X$}(4);
\path[->] (5) edge node[right]{$\phi_Y$}(6);
\end{scope}
\end{tikzpicture}
\caption{Relaxing the requirements on the maps of this ``tripod structure" is a natural way to weaken the notion of strong isomorphism.}
\label{fig:tripod}
\end{center}
\end{figure}

Notice that Type I weak isomorphism is in fact a \emph{relaxation} of the notion of strong isomorphism. Indeed, if in addition to being surjective, we require the maps $\phi_X$ and $\phi_Y$ to be injective, then the strong notion of isomorphism is recovered. In this case, the map $\phi_Y\circ\phi_X^{-1}:X\rightarrow Y$ would be a weight preserving bijection between the networks $X$ and $Y$. The relaxation of strong isomorphism to a Type I weak isomorphism is illustrated in Figure \ref{fig:tripod}. Also observe that the relaxation is \emph{strict}. For example, the networks $X = N_1(1)$ and $Y = N_2(\mathbbm{1}_{2\times 2})$, are weakly but not strongly isomorphic via the map that sends both nodes of $Y$ to the single node of $X$. 

\begin{remark}[Surjective maps induce Type I isomorphism]\label{rem:surj} Let $(X,\omega_X),(Y,\omega_Y) \in \Ncom$ and suppose $\varphi:X\rightarrow Y$ is a surjective map such that $\omega_X(x,x')=\omega_Y(\varphi(x'),\varphi(x'))$ for all $x,x'\in X$. Then $X$ and $Y$ are Type I weakly isomorphic. This result follows from Definition \ref{defn:typeI-wisom} by: (1) choosing $Z = X$, (2) letting $\phi_X$ be the identity map, and (3) letting $\phi_Y=\varphi$. The converse implication, i.e. that Type I weak isomorphism implies the existence of a surjective map as above, is not true: an example is shown in Figure \ref{fig:dn0nets}.
\end{remark}

When dealing with infinite networks, it will turn out that an even weaker notion of isomorphism is required. We define this weakening next.

\begin{definition}\label{defn:typeII-wisom} Let $(X,\w_X)$ and $(Y,\w_Y)\in \Ngen$. We define $X$ and $Y$ to be \emph{Type II weakly isomorphic}, denoted $X\text{\gls{congwt}} Y$, if for each $\e >0$, there exists a set $Z_\e$ and surjective maps $\phi^\e_X:Z_\e\r X$ and $\phi^\e_Y:Z_\e\r Y$ such that 
\begin{equation}\label{eq:weak-isom-condition}
|\w_X(\phi_X^\e(z),\phi^\e_X(z')) - \w_Y(\phi^\e_Y(z),\phi^\e_Y(z'))| < \e \,\,\mbox{for all $z,z'\in Z_\e$}.
\end{equation}
\end{definition}

\begin{figure}[b]
\begin{center}
\begin{tikzpicture}[every node/.style={font=\footnotesize}]
\node (A) at (-1,1){$A$};
\node (B) at (4,1){$B$};
\node (C) at (9,1){$C$};
\node[draw,ellipse, fill=orange!20, minimum width= 7mm,minimum height= 11 mm] (4) at (-1,-0.25) {};
\node[inner sep = 0pt, minimum size = 3mm] (1) at (-1,0){$x$};
\node[inner sep = 0pt, minimum size = 3mm] (2) at (-1,-0.5){$y$};
\node[draw,circle,fill=purple!20] (3) at (0.5,-0.25){$z$};

\node[draw,circle,fill=orange!20] (5) at (4,-0.25){$u$};
\node[draw,fill=purple!20, ellipse, minimum width= 7mm,minimum height= 11 mm] (8) at (5.5,-0.25) {};
\node[inner sep = 0pt, minimum size = 3mm,] (6) at (5.5,0){$v$};
\node[inner sep = 0pt, minimum size = 3mm] (7) at (5.5,-0.5){$w$};

\node[draw,fill=orange!20,ellipse, minimum width= 7mm,minimum height= 11 mm] (9) at (9,-0.25){};
\node[draw,fill=purple!20,ellipse, minimum width= 7mm,minimum height= 11 mm] (10) at (10.5,-0.25){};
\node[inner sep = 0pt, minimum size = 3mm] (11) at (9,0){$p$};
\node[inner sep = 0pt, minimum size = 3mm] (12) at (9,-0.5){$q$};
\node[inner sep = 0pt, minimum size = 3mm] (13) at (10.5,0){$r$};
\node[inner sep = 0pt, minimum size = 3mm] (14) at (10.5,-0.5){$s$};

\path[->] (4) edge [loop left, min distance = 5mm] node[above left]{$2$}(4);
\path[->] (3) edge [loop right, min distance = 5mm] node[below right]{$3$}(3);
\path[->] (4) edge [bend left] node[above]{$1$} (3);
\path[->] (3) edge [bend left] node[below]{$1$} (4);

\path[->] (5) edge [loop left, min distance = 5mm] node[above left]{$2$}(5);
\path[->] (8) edge [loop right, min distance = 5mm] node[below right]{$3$}(8);
\path[->] (5) edge [bend left] node[above]{$1$} (8);
\path[->] (8) edge [bend left] node[below]{$1$} (5);

\path[->] (9) edge [loop left, min distance = 5mm] node[above left]{$2$}(9);
\path[->] (10) edge [loop right, min distance = 5mm] node[below right]{$3$}(10);
\path[->] (9) edge [bend left] node[above]{$1$} (10);
\path[->] (10) edge [bend left] node[below]{$1$} (9);

\node at (-1,-1.5) {$\Psi^3_A(x,y,z) = \mattres{2}{2}{1}{2}{2}{1}{1}{1}{3}$};

\node at (4,-1.5) {$\Psi^3_B(u,v,w) = \mattres{2}{1}{1}{1}{3}{3}{1}{3}{3}$};

\node at (9,-1.5) {$\Psi^4_C(p,q,r,s) = 
\left(
\begin{smallmatrix}
2 & 2 & 1 & 1\\
2 & 2 & 1 & 1\\
1 & 1 & 3 & 3\\
1 & 1 & 3 & 3\\
\end{smallmatrix}
\right)
$};

\end{tikzpicture}
\caption{Note that Remark \ref{rem:surj} \emph{does not} fully characterize weak isomorphism, even for finite networks: All three networks above, with the given weight matrices, are Type I weakly isomorphic since $C$ maps surjectively onto $A$ and $B$. But there are no surjective, weight preserving maps $A \r B$ or $B \r A$.}
\label{fig:dn0nets}
\end{center}
\vspace{-.2in}
\end{figure}

\begin{example}[Infinite networks without optimal correspondences]
\label{ex:non-opt-corr}
The following example illustrates the reason we had to develop multiple notions of weak isomorphism. The key idea is that the infimum in Definition \ref{defn:dno} is not necessarily obtained when $X$ and $Y$ are infinite networks. To see this, let $(X,\w_X)$ denote $[0,1]$ equipped with the Euclidean distance, and let $(Y,\w_Y)$ denote $\Q \cap [0,1]$ with the restriction of the  Euclidean distance. Since the closure of $Y$ in $[0,1]$ is just $X$, the Hausdorff distance between $X$ and $Y$ is zero (recall that given $A, B \subseteq \R$, we have $\dhdf^{\R}(A,B) = 0$ if and only if $\overline{A} = \overline{B}$ \cite[Proposition 7.3.3]{burago}). It follows from the definition of $\dgh$ (Definition \ref{defn:dgh}) and the equivalence of $\dn$ and $\dgh$ on metric spaces (Remark \ref{rem:dno-finite}) that $\dn(X,Y) = 0$.

However, one cannot define an optimal correspondence between $X$ and $Y$. To see this, assume towards a contradiction that $\Ropt$ is such an optimal correspondence, i.e. $\dis(\Ropt) = 0$. For each $x\in X$, there exists $y_x\in Y$ such that $(x,y_x)\in \Ropt$. By making a choice of $y_x\in Y$ for each $x\in X$, define a map $f: X \r Y$ given by $x\mapsto y_x$. Then $d_X(x,x') = d_Y(f(x),f(x'))$ for each $x,x'\in X$. Thus $f$ is an isometric embedding from $X$ into itself (note that $Y\subseteq X$). But $X=[0,1]$ is compact, and an isometric embedding from a compact metric space into itself must be surjective \cite[Theorem 1.6.14]{burago}. This is a contradiction, because $f(X) \subseteq Y \neq X$. 

We observe that $\dn(X,Y)=0$ and so $X$ and $Y$ are weakly isomorphic of Type II, but not of Type I. To see this, assume towards a contradiction that $X$ and $Y$ are Type I weakly isomorphic. Let $Z$ be a set with surjective maps $\ph_X: Z \r X$ and $\ph_Y:Z \r Y$ satisfying $\w_X\circ(\ph_X,\ph_X) = \w_Y\circ(\ph_Y,\ph_Y)$. Then $\set{(\ph_X(z),\ph_Y(z)) : z\in Z}$ is an optimal correspondence. This is a contradiction by the previous reasoning. 

\end{example}

Recall that our motivation for introducing notions of isomorphism on $\Ngen$ was to determine which networks deserve to be considered equivalent. It is easy to see that strong isomorphism induces an equivalence class on $\Ngen$. The same is true for both types of weak isomorphism, and we record this result in the following proposition.

\begin{restatable}{proposition}{weakisomequiv}
\label{prop:weak-isom-equiv}
Weak isomorphism of Types I and II both induce equivalence relations on $\Ngen$.
\end{restatable}

In the setting of $\Ncal$, it is not difficult to show that the two types of weak isomorphism coincide. This is the content of the next proposition. By virtue of this result, there is no ambiguity in dropping the ``Type I/II" modifier when saying that two finite networks are weakly isomorphic. 

\begin{restatable}{proposition}{weakisomfinite}
\label{prop:weak-isom-finite} Let $X,Y \in \Ncal$ be finite networks. Then $X$ and $Y$ are Type I weakly isomorphic if and only if they are Type II weakly isomorphic.
\end{restatable}

Type I weak isomorphisms will play a vital role in the content of this paper, but for now, we focus on Type II weak isomorphism. The next theorem justifies calling $\dn$ a network \emph{distance}, and shows that $\dn$ is compatible with Type II weak isomorphism. 

\begin{restatable}{theorem}{dN}
\label{thm:dN1}
$\dn$ is a metric on $\Ngen$ modulo Type II weak isomorphism.
\end{restatable}

Proofs of the preceding results are provided in Section \ref{sec:pf-dN1}. For finite networks, we immediately obtain:
\medskip
\begin{center}
\emph{The restriction of $\dn$ to $\Ncal$ yields a metric modulo Type I weak isomorphism.}
\end{center}
\medskip

The proof of Proposition \ref{prop:weak-isom-finite} will follow from the proof of Theorem \ref{thm:dN1}. In fact, an even stronger result is true: weak isomorphism of Types I and II coincide for compact networks as well. We present this statement in Section \ref{sec:cpt-weak-isom}. By virtue of this latter result, we will use the notation \gls{congw} without I/II qualifiers to mean Type I weak isomorphism between compact networks.

\begin{remark}[Making $\dn$ more sensitive to topology] 
\label{rem:dn-top-blind}
As will be clear from the proof of Theorem \ref{thm:dN1}, $\dn$ is actually a metric modulo Type II weak isomorphism on the collection $\{(X,\w_X) : X \text{ a set, } \w_X : X\times X \r \R \text{ any function}\}$. In other words, when applied in this general form, $\dn$ is insensitive to topology. However, restricting to subcollections of set-function pairs may improve this sensitivity. When one starts with a network in $\Ngen$ (comprising a continuous weight function over a first countable topological space) it is possible to refine the topology while maintaining first countability, even taking the discrete topology at the extreme, and remain in the same Type II weak isomorphism class.  Alternatively, one may start with a network in $\Ncom$ and refine the underlying topology by adding a finite number of open sets---the resulting network will still be in $\Ncom$, and will even be in the same Type I weak isomorphism class. In each of these cases, we progressively narrowed down the types of topological changes (possibly infinite refinements, and then finite refinements) against which $\dn$ would be guaranteed to be insensitive by studying nested subcollections of set-function pairs. Even the insensitivity to refinements need not be problematic, as it will turn out that each Type I weak isomorphism class has representatives with ``nice" topology. These ideas will be introduced in the next section.
\end{remark}

We end the current subsection with the following definition, which is reminiscent of the notion of $\e$-nets in metric spaces.
\begin{definition}[$\e$-approximations] 
\label{defn:e-approx}
Let $\e > 0$. A network $(X,\w_X)\in \Ngen$ is said to be \emph{$\e$-approximable} by $(Y,\w_Y)\in \Ngen$ if $\dn(X,Y) < \e$. In this case, $Y$ is said to be an \emph{$\e$-approximation} of $X$. Typically, we will be interested in the case where $X$ is infinite and $Y$ is finite, i.e. in $\e$-approximating infinite networks by finite networks.  
\end{definition}

\subsection{Interpolating between strong and weak isomorphism}

As we saw in the simple examples discussed above, strong isomorphism implies weak isomorphism, and weak isomorphism does not in general imply strong isomorphism.  
However, we will show that weakly isomorphic networks live over a ``base space" of strongly isomorphic networks.
The following definitions enable us to formulate the appropriate statement.

\begin{definition}[Automorphisms]\label{defn:automorph} Let $(X,\w_X) \in \Ncom$. We define the \emph{automorphisms (\gls{Aut}) of $X$ } to be the collection
\[\Aut(X):=\set{\ph:X \r X : \ph \text{ a weight preserving bijection}}.\]
\end{definition}

\begin{definition}[Poset of weak isomorphism]
\label{defn:poset} Let $(X,\w_X) \in \Ncom$. Define a set $\mf{p}(X)$ as follows:
\[\mf{p}(X):=\set{(Y,\w_Y)\in \Ncom : \text{ there exists a surjective, weight preserving map } \ph:X \r Y}.\]
Next we define a partial order $\preceq$ on $\mf{p}(X)$ as follows: for any $(Y,\w_Y), (Z,\w_Z) \in \mf{p}(X)$,
\[(Y,\w_Y) \preceq (Z,\w_Y) \iff \text{ there exists a surjective, weight preserving map } \ph:Z \r Y.\]
Then the set \gls{posetx} equipped with $\preceq$ is called the \emph{poset of weak isomorphism of $X$.}
\end{definition}

\begin{definition}[Terminal networks in $\Ncom$]\label{defn:terminal} 
Let $(X,\w_X) \in \Ncom$. A compact network $Z \in \mf{p}(X)$ is \emph{terminal} if:
\begin{enumerate}
\item For each $Y\in \mf{p}(X)$, there exists a weight preserving surjection $\ph: Y \r Z$.
\item Let $Y\in \mf{p}(X)$. If $f: Y \r Z$ and $g:Y \r Z$ are weight preserving surjections, then there exists $\ph \in \Aut(Z)$ such that $g=\ph \circ f$ (Fig. \ref{fig:terminal-diag}).  
\end{enumerate}
\end{definition}

\begin{figure}
\centering
\begin{tikzpicture}

\begin{scope}

\matrix[column sep={8em,between origins},
        row sep={6em}] at (0,0)
{ \node(X) {$X$};\\
  \node(Z) {$Z$};\\};
\path[->>] (X) edge [bend right] node [left] {$f$}(Z);
\path[->>] (X) edge [bend left] node [right] {$g$}(Z);
\path[->] (Z) edge [dashed, loop below, min distance = 10 mm] node [right] {$\ph$}(Z);
\end{scope}

\begin{scope}[xshift=2.5in]
[every node/.style={midway}]
\matrix[column sep={4em,between origins},
        row sep={6em}] at (0,0)
{ \node(X)   {$X$}  ;
 & \node(V) {$V$};
 & \node(Y) {$Y$};
 & \node(d) {$\cdots$};\\
 \node() {}; 
 & \node() {};
 & \node() {};
 & \node(Z) {$Z$};\\};
\path[->>] (X) edge [] node [] {}(V);
\path[->>] (V) edge [] node [] {}(Y);
\path[->>] (Y) edge [] node [] {}(d);
\path[->>] (d) edge [] node [] {}(Z);
\path[->>] (Y) edge [] node [] {}(Z);
\path[->>] (V) edge [] node [] {}(Z);
\path[->>] (X) edge [] node [] {}(Z);
\end{scope}
\end{tikzpicture}

\caption{\textbf{Left:} $Z$ represents a terminal object in $\mf{p}(X)$, and $f,g$ are weight preserving surjections $X \r Z$. Here $\ph \in \Aut(Z)$ is such that $g=\ph\circ f$. \textbf{Right:} Here we show more of the poset structure of $\mf{p}(X)$. In this case we have $X\succeq V \succeq Y \ldots \succeq Z$.}
\label{fig:terminal-diag}
\end{figure}

In Section \ref{sec:cpt-skel} we define a construction called the \emph{skeleton} of a network and show that it is terminal. One of our main results (Theorem \ref{thm:cpt-sisom}) shows that, under some mild topological regularity conditions, 
\begin{center}
    \emph{weakly isomorphic networks have strongly isomorphic skeleta.}
\end{center}

A terminal network captures the idea of a minimal substructure of a network. One may ask if anything interesting can be said about superstructures of a network. This motivates the following construction of a ``blow-up" network. We provide an illustration in Figure \ref{fig:blowup}.

\begin{definition}Let $(X,\omega_X)$ be any network. Let $\mathbf{k}=(k_x)_{x\in X}$ be a choice of an index set $k_x$ for each node $x\in X$. 
Consider the network $X[\mathbf{k}]$ with node set $\bigcup_{x\in X} \{(x,i) : i\in {k_x}\}$ and weights $\omega$ given as follows:  for $x,x'\in X$ and for $i \in {k_x}$, $i'\in {k_{x'}}$,
\[\omega\big((x,i),(x',i')\big):= \w_X(x,x').\]
The topology on $X[\mathbf{k}]$ is given as follows: the open sets are of the form $\bigcup_{x\in U}\set{(x,i) : i \in k_x}$, where $U$ is open in $X$. By construction, $X[\mathbf{k}]$ is first countable with respect to this topology. 
We will call any such \gls{blowupx} a \emph{blow-up network} of $X$.

\end{definition}

In a blow-up network of $X$, each node $x\in X$ is replaced by another network, indexed by ${k_x}$. All internal weights of this network are constant and all outgoing weights are preserved from the original network. If $X$ is compact, then so is $X[\mathbf{k}]$. 

We also observe that $X$ is weakly isomorphic to any of its blow-ups $Y = X[\mathbf{k}]$. To see this, let $Z = X[\mathbf{k}]$, let $\phi_Y:Z\rightarrow Y$ be the map sending each $(x,i)$ to $(x,i)$, and let $\phi_X:Z\rightarrow X$ be the map sending each $(x,i)$ to $x$. Then $\phi_X, \phi_Y$ are surjective, weight preserving maps from $Z$ onto $X$ and $Y$ respectively. %
By Remark \ref{rem:surj}, we obtain $X \congwo Y$. 

Later in Proposition \ref{prop:bijection}, we will show that Type I weakly isomorphic networks may be blown-up to strongly isomorphic networks.

\begin{figure}[t]
\begin{center}
\begin{tikzpicture}[every node/.style={font=\footnotesize}]

\node[circle,draw, fill=orange!20](2) at (-3,0){q};
\node[circle,draw,fill=violet!20](3) at (-1,0){r};

\path[->] (2) edge [loop left, min distance = 10mm] node{$1$}(2);
\path[->] (3) edge [loop right, min distance = 10mm] node{$4$}(3);
\path[->] (2) edge [bend left] node[above]{$2$} (3);
\path[->] (3) edge [bend left] node[below]{$3$} (2);

\node[above] (dummy1) at (1.5,2.5){};
\node[above] (dummy2) at (1.5,-2.5){};
\node[above] (dummy3) at (-1,1){};
\node[above] (dummy4) at (-1,-1){};

\path[->]
    (dummy3) edge[bend left,thick] node [above left] {blow-up} (dummy1)
    (dummy2) edge[bend left, thick] node [below left] {skeletonize} (dummy4);
    
\node[circle,draw,fill=orange!20](4) at (4,3){$(q,1)$};
\node[circle,draw,fill=orange!20](5) at (4,-3){$(q,2)$};

\node[circle,draw,fill = violet!20](6) at (10,3){$(r,1)$};
\node[circle,draw,fill = violet!20](7) at (10,-3){$(r,2)$};

\node (f1) at (2.5,4){};
\node (f2) at (5,-4){};

\begin{scope}[on background layer]
\node[draw,dashed,rounded corners, fill=orange!20, fill opacity = 0.5, fit = (4) (5) (f1) (f2) ]{};

\node (f3) at (9,4){};
\node (f4) at (11.5,-4){};

\node[draw, dashed,rounded corners, fill=purple!20, fill opacity = 0.5, fit = (6) (7) (f3) (f4)]{};
\end{scope}

\path[->] (4) edge [min distance = 10mm, in = 180,out=120] node[above]{$1$}(4);
\path[->] (5) edge [min distance = 10mm,in=240,out=180] node[below]{$1$}(5);
\path[->] (4) edge [bend left] node[left]{$1$} (5);
\path[->] (5) edge [bend left] node[left]{$1$} (4);

\path[->] (6) edge [min distance = 10mm, in =0, out=60] node[above]{$4$}(6);
\path[->] (7) edge [min distance = 10mm, out=0,in=300] node[below]{$4$}(7);
\path[->] (6) edge [bend left] node[right]{$4$} (7);
\path[->] (7) edge [bend left] node[right]{$4$} (6);

\path[->] (4) edge [bend left] node[above,pos=0.5]{$2$} (6);
\path[->] (5) edge [bend left] node[below,pos=0.5]{$2$} (7);
\path[->] (6) edge [bend left] node[above,pos=0.5]{$3$} (4);
\path[->] (7) edge [bend left] node[above,pos=0.5]{$3$} (5);

\path[->] (4) edge [bend left] node[pos=0.5,above]{$2$} (7);
\path[ ->] (5) edge [bend left] node[pos=0.5,above]{$2$} (6);
\path[->] (7) edge [bend left]  node[pos=0.5,above]{$3$} (4);
\path[->] (6) edge [bend left] node[pos=0.5,above]{$3$} (5);

\node at (-2,-6){$N_2(\mattwo{1}{2}{3}{4})$};
\node at (7,-6){$N_4
\left(\left(\begin{smallmatrix}
1&1&2&2\\
1&1&2&2\\
3&3&4&4\\
3&3&4&4
\end{smallmatrix}\right)\right)$};

\end{tikzpicture}
\caption{Interpolating between the skeleton and blow-up constructions.}
\label{fig:blowup}
\end{center} 
\end{figure}

\subsection{Coherence and Kuratowski-like embeddings}

Thus far we have maintained one-way control over the topology of a network, i.e. the topology of $X$ restricts the collection of weight functions $\w_X$ that are continuous and hence admissible. In certain settings it will be useful to impose control in the other direction, i.e. to restrict the collection of admissible topologies via $\w_X$. We produce this type of topological control via the embeddings $x\mapsto \w_X(x,\cdot), \, x\mapsto \w_X(\cdot,x)$. These are variants of the well-known \emph{Kuratowski embedding} \cite[Chapter 12]{heinonen2012lectures} that is defined for compact metric spaces $(X,d_X)$ as the map $x \mapsto d_X(x,\cdot)$. Such a map gives an isometric embedding of $(X,d_X)$ into the Banach space $l^\infty(X)$.

Let $(X,\w_X) \in \Ngen$. For any $x,x' \in X$, define:
\begin{align}
\delta_X^+(x,x') := & \| \w_X(x,\cdot) - \w_X(x',\cdot)\|_\infty, \qquad 
\delta_X^-(x,x') := \| \w_X(\cdot,x) - \w_X(\cdot,x')\|_\infty \nonumber \\ 
&\delta_X(x,x') := \max(\d_X^+(x,x'),\d_X^-(x,x')).  \label{eq:delta}
\end{align}
Both $\delta_X^+$ and $\delta_X^-$ are pseudometrics generated by the asymmetric Kuratowski-like embeddings, and \gls{deltax}, being the max of two pseudometrics, is also a pseudometric. A related generalization is later provided in Definition \ref{defn:delta-witness}, and a figure demonstrating its use is provided in Figure \ref{fig:circle-delta}.

\begin{definition}
\label{defn:coherence-weak}
Let $(X,\w_X) \in \Ngen$. We say that $X$ has a \emph{weak coherent topology} if for any $x \in X$ and any sequence $(x_n)_n$ in $X$, one has
\begin{equation}x_n \r x \quad \text{ iff } \quad \delta_X(x,x_n) \to 0.
\label{eq:coherence}
\end{equation} 
\end{definition}
\begin{definition}
\label{defn:coherence-strong}
Let $(X,\w_X) \in \Ngen$. We say that $X$ has a \emph{strong coherent topology} if for any $x \in X$, any sequence $(x_n)_n$ in $X$, and any subspace $(S,\w_X|_{S\times S})$ containing $x$ and $\{x_n\}_n$, the following are equivalent:
\begin{itemize}
\item $x_n \to x$
\item $\delta_X(x,x_n) \to 0$ 
\item $\delta_S(x,x_n) \to 0$ 
\end{itemize}
\end{definition}

Metric spaces satisfy strong coherence due to the triangle inequality---in particular, strong coherence encodes a generalized consequence of the linear constraints defining metric spaces. For clarification, let us verify that the statements in Definition \ref{defn:coherence-strong} hold for metric spaces.
In a metric space $(X,d_X)$, to say that a sequence $(x_n)_n$ converges to a point $x \in X$ means that $d_X(x_n,x) \to 0$, i.e. it suffices to check the value of $d_X$ on the points $S:= \{x\} \cup \{x_n\}_n$. It is well-known that the Kuratowski embedding is an isometric embedding for metric spaces, and hence $d_X(x,x_n) \to 0$ is equivalent to $\delta_X(x,x_n) \to 0$. Next, considering the space $(S,d_S)$ where $d_S:= d_X|_{S\times S}$, one applies the triangle inequality to obtain 
\[ \|d_S(x_n,\cdot)| - d_S(x,\cdot)| \|_\infty \to 0 \text{ and } \|d_S(\cdot,x_n)| - d_S(\cdot,x)| \|_\infty \to 0.\]
Hence $\delta_S(x,x_n) \to 0$.
Finally consider $(S,d_S)$ by itself and suppose $\delta_S(x,x_n) \to 0$ for some $x, \{x_n\}_n$ in $S$. Consider the isometric embedding given by the inclusion $(S,d_S) \hookrightarrow (X,d_X)$. As a function, $d_X:X\times X \r \R_+$ extends $d_S$, and the linear constraints coming from the triangle inequality force the following:
\[ \|d_X(x_n,\cdot) - d_X(x,\cdot) \|_\infty \to 0 \text{ and } \|d_X(\cdot,x_n)| - d_X(\cdot,x)| \|_\infty \to 0. \]
The notion of strong coherence captures this property of isometric embedding without specifying, a priori, the form of these linear constraints.

We refer to such topologies as the \emph{weak} or \emph{strong coherent topology generated by $\w_X$}, and refer to networks satisfying the property of Equation (\ref{eq:coherence}) as \emph{weak} or \emph{strong} \emph{coherent networks}. 

\begin{remark}
Directed metric spaces with finite reversibility comprise a collection of objects that are strictly more general than metric spaces and still satisfy strong coherence.
\end{remark}

\begin{definition} 
\label{def:kur-metric}
We refer to $\delta_X$ above as the \emph{canonical pseudometric} of the network $(X,\w_X)$.
\end{definition}

\begin{remark} Kelly \cite{kelly1963bitopological} has studied separation axioms in spaces with multiple topologies, as is the case here with the topologies generated by $\d_X^+, \d_X^-$. Also note that in a first countable space, the topology is determined precisely by the convergent sequences. Generalizing this property leads to \emph{sequential spaces}, which have been studied by Franklin \cite{franklin1965spaces}.
\end{remark}

\begin{remark}[Weak coherence is strictly weaker than strong coherence] 
The construction of $\delta_X$ above is \emph{extrinsic} in the sense that to assign $\delta_X(x,x')$, we used knowledge from \emph{all} of $X$. While $\delta_X$ formally gives us a metric, its construction suffers from consistency issues when embedding into a larger space. For example, suppose we have an injective, weight-preserving map $f:X \to Y$. A priori, we may have 
\[\delta_X(x,x') \neq \max \left( \| \w_Y(f(x),\cdot) - \w_Y(f(x'),\cdot)\|_\infty,
\| \w_Y(\cdot,f(x)) - \w_Y(\cdot,f(x'))\|_\infty \right) = \delta_Y(f(x),f(x'))\]
for $x,x' \in X$, which is an issue that one does not face with isometric embeddings of bona fide metric spaces. This is illustrated in Figure \ref{fig:net-embed}. Consequently, given $x_n \to x$, one may have $\d_X(x,x_n) \to 0$ and $\d_Y(f(x),f(x_n)) \not\to 0$. An alternative viewpoint is the following. When \emph{extending} a metric space (cf. \cite{melleray2007geometry}), say by a single point, one needs to choose distances from the new point to all other points that satisfy the linear constraints posed by the triangle inequality. When extending a network, however, there are no constraints in choosing weights. This flexibility comes at the cost of the inconsistent embedding described above. For strong coherence, however, one has that if $\d_X(x,x_n) = \d_{f(X)}(f(x),f(x_n)) \to 0$, then $\d_Y(f(x),f(x_n)) \to 0$ as well. This property is used below to generalize the familiar result that isometric embeddings of metric spaces are continuous.
\end{remark}

\begin{figure}
\centering
\begin{tikzpicture}
\begin{scope}
\draw[fill] (0,0) circle(0.5ex) node[below] {$b$};
\draw[fill] (0.75,1) circle(0.5ex) node[above] {$a$};
\draw[fill] (1.25,-0.25) circle(0.5ex) node[below] {$c$};
\node at (0.6,0.25) {$Z$};
\draw[-,black,thick] (0,0) -- (0.75,1) node[above left, pos = 0.5] {$1$};
\draw[-,black,thick] (0,0) -- (1.25,-0.25) node[below left, pos = 0.5] {$1$};
\draw[-,black,thick] (0.75,1) -- (1.25,-0.25) node[right, pos = 0.5] {$1$};
\end{scope}
\begin{scope}[xshift = 3 cm, yshift = 0.5cm]
\matrix [inner sep=0pt, nodes={inner sep=.2em},matrix of math nodes,left delimiter=(,right delimiter=),row sep=0.01cm,column sep=0.01cm] (m) {
0 & 1 & 1\\
1 & 0 & 1\\
1 & 1 & 0\\
};
\end{scope}
\begin{scope}[xshift = -3cm,yshift = -3cm]
\draw[fill] (0,0) circle(0.5ex) node[below] {$b$};
\draw[fill] (0.75,1) circle(0.5ex) node[above] {$a$};
\draw[fill] (1.25,-0.25) circle(0.5ex) node[below] {$c$};
\draw[fill] (3,0.25) circle(0.5ex) node[right] {$d$};
\node at (0.6,0.25) {$X$};
\draw[-,black,thick] (0,0) -- (0.75,1) node[above left, pos = 0.5] {$1$};
\draw[-,black,thick] (0,0) -- (1.25,-0.25) node[below left, pos = 0.5] {$1$};
\draw[-,black,thick] (0.75,1) -- (1.25,-0.25) node[above right, pos = 0.5] {$1$};
\draw[-,teal,dashed] (0,0) -- (3,0.25) node[above, pos = 0.6] {$2$};
\draw[-,teal,thick] (1.25,-0.25) -- (3,0.25) node[below, pos = 0.7] {$2$};
\draw[-,teal,thick] (0.75,1) -- (3,0.25) node[above, pos = 0.7] {$2$};
\end{scope}
\begin{scope}[xshift = -5 cm, yshift = -2cm]
\draw[fill=blue!20] (-0.7,0.8) rectangle (0.7,0);
\draw[fill=blue!20] (-0.7,0.8) rectangle (0,-0.9);
\matrix [inner sep=0pt, nodes={inner sep=.2em},matrix of math nodes,left delimiter=(,right delimiter=),row sep=0.01cm,column sep=0.01cm] (m) {
0 & 1 & 1 & 2\\
1 & 0 & 1 & 2\\
1 & 1 & 0 & 2\\
2 & 2 & 2 & 0\\
};
\end{scope}
\begin{scope}[xshift = 3cm,yshift = -3cm]
\draw[fill] (0,0) circle(0.5ex) node[below] {$b$};
\draw[fill] (0.75,1) circle(0.5ex) node[above] {$a$};
\draw[fill] (1.25,-0.25) circle(0.5ex) node[below] {$c$};
\draw[fill] (3,0.25) circle(0.5ex) node[right] {$d$};
\node at (0.6,0.25) {$Y$};
\draw[-,black,thick] (0,0) -- (0.75,1) node[above left, pos = 0.5] {$1$};
\draw[-,black,thick] (0,0) -- (1.25,-0.25) node[below left, pos = 0.5] {$1$};
\draw[-,black,thick] (0.75,1) -- (1.25,-0.25) node[above right, pos = 0.5] {$1$};
\draw[-,purple,dashed] (0,0) -- (3,0.25) node[above, pos = 0.6] {$-3$};
\draw[-,purple,thick] (1.25,-0.25) -- (3,0.25) node[below, pos = 0.7] {$-2$};
\draw[-,purple,thick] (0.75,1) -- (3,0.25) node[above, pos = 0.7] {$-1$};
\end{scope}
\begin{scope}[xshift = 8 cm, yshift = -2cm]
\draw[fill=red!20] (-1.1,0.8) rectangle (1.2,0);
\draw[fill=red!20] (-1.1,0.8) rectangle (0,-0.9);
\matrix [inner sep=0pt, nodes={inner sep=.2em},matrix of math nodes,left delimiter=(,right delimiter=),row sep=0.01cm,column sep=0.01cm] (m) {
0 & 1 & 1 & -1\\
1 & 0 & 1 & -3\\
1 & 1 & 0 & -2\\
-1 & -3 & -2 & 0\\
};
\end{scope}
\draw[->,thick,black] (-0.25,-0.75) -- (-1.5,-1.75);
\draw[->,thick,black] (1.5,-0.75) -- (2.5,-1.75);
\end{tikzpicture}
\caption{``Isometrically" embedding into a larger space is tricky for networks. Let $\delta$ denote the canonical pseudometric obtained via the Kuratowski embedding as in Definition \ref{def:kur-metric}. Consider the embedding of $Z$ into the metric space $X$. We have $\delta_X(a,b) = d_X(a,b) = 1 = \delta_Z(a,b)$. However, $\delta_Y(a,b) = 2 \neq 1 = \delta_Z(a,b)$. In Lemma \ref{lem:intrinsic} we show that this discrepancy can be resolved if the domain and codomain have the same motif sets.}
\label{fig:net-embed}
\end{figure}

\begin{proposition}\label{prop:strong-cohrt-cts} Let $(X,\w_X), (Y,\w_Y)$ be strongly coherent networks, and let $f:X \r Y$ be a weight-preserving map. Then $f$ is continuous. 
\end{proposition}

\begin{proof} Let $V\subseteq Y$ be open. We need to show that $U:= f\inv(V)$ is open. By first-countability, it suffices to show that any sequence converging to a point of $U$ is eventually inside $U$. Let $(x_n)_n$ be a sequence in $X$ converging to $x \in U$. Then we have
\begin{align*}
\norm{\w_Y(f(x_n),\cdot)|_{f(X)} - \w_Y(f(x),\cdot)|_{f(X)} }_\infty = 
\norm{\w_X(x_n,\cdot) - \w_X(x,\cdot)}_\infty \to 0.
\end{align*}
Here the first equality holds because $f$ is weight-preserving, and the limit holds by coherence in $X$. Similarly we also have
\[\norm{\w_Y(\cdot,f(x_n))|_{f(X)} - \w_Y(\cdot,f(x))|_{f(X)} }_\infty \to 0, \text{ and thus } \d_{f(X)}(f(x),f(x_n)) \to 0.\]
By strong coherence, this property transfers to all of $Y$, i.e. $\d_Y(f(x),f(x_n)) \to 0$.
Thus we have $f(x_n) \to f(x)$. But then there must exist $N \in \N$ such that $f(x_n) \in V$ for all $n\geq N$. Then $x_n \in U$ for all $n \geq N$. Since $(x_n)_n$ was arbitrary, it follows that $U$ is open. This concludes the proof. \end{proof}

\begin{corollary}[Uniqueness] 
\label{cor:str-cohrt-unique}
Let $(X,\w_X)$ be a strongly coherent network with a topology $\t_X$. Let $\t'$ be another strongly coherent topology on $X$. Then $\t' = \t_X$. 
\end{corollary}
\begin{proof}
The identity map is weight-preserving, and so we apply Proposition \ref{prop:strong-cohrt-cts} to show that each topology is finer than the other.
\end{proof}

Our ultimate application of coherence (Theorem \ref{thm:cpt-sisom}) will be to improve a statement about weak isomorphism to a statement about strong isomorphism. In such a setting, we have equality of motif sets. It turns out that equality of motif sets allows us to prove a form of the preceding result using \emph{only weak coherence}. We will develop this in Section \ref{sec:cpt-weak-isom}.

\subsection{Skeletons}
\label{sec:cpt-skel}

The preceding construction naturally suggests the equivalence relation $\sim$ defined on $X$ as follows:
\[x \sim x' \text{ iff } \w_X(x,z) = \w_X(x',z) \text{ and } \w_X(z,x) =\w_X(z,x') \text{ for all } z\in X.\]
We refer to the process of taking equivalence classes as \emph{passing to the skeleton}. Note that for a pseudometric $d_X$, one has $d_X(x,x') = 0$ iff $[x]=[x']$, i.e. $d_{X/\sim}$ is a bona fide metric. Next define $\s: X \r X/\sim$ to be the canonical map sending any $x\in X$ to its equivalence class $[x] \in X/\sim$. Also define $\w_{X/\sim}([x],[x']):=\w_X(x,x')$ for $[x],[x'] \in X/\sim$. To check that this map is well-defined, let $a,a' \in X$ be such that $a\sim x$ and $a'\sim x'$. Then,
\[\w_X(a,a') = \w_X(x,a') = \w_X(x,x'),\]
where the first equality holds because $a\sim x$, and the second equality holds because $a'\sim x'$. 
We equip $X/\sim$ with the quotient topology, i.e. a set is open in $X/\sim$ if and only if its preimage under $\s$ is open in $X$. Then $\s$ is a surjective, continuous map.

Observe that when $X$ is compact, $X/\sim$ is the continuous image of a compact space and so is compact. In general, first countability of a topological space is \emph{not} preserved under a surjective continuous map, but it is preserved when the surjective, continuous map is also open \cite[p. 27]{counterexamples}. The following proposition gives a sufficient condition on $X$ which will ensure that $X/\sim$ is first countable.

\begin{proposition}\label{prop:quotient-open} 
Suppose $(X,\w_X) \in \Ngen$ has a weakly coherent topology. Then the map $\s: X \r X/\sim$ is an open map, i.e. it maps open sets to open sets.
\end{proposition}

\begin{proof}[Proof of Proposition \ref{prop:quotient-open}] 
Let $U\subseteq X$ be open. Because $X/\sim$ has the quotient topology, we need to show $\s\inv(\s(U))$ is open. For convenience, define $V:= \s\inv(\s(U))$. By first countability, it suffices to show that any sequence converging to a point in $V$ is eventually inside $V$. Let $(v_n)_{n\in \N}$ be any sequence in $X$ converging to a point $v \in V$. Note that we have $\s(v)= [v] = [x]$ for some $x \in U$. Because $x \sim v$, we have by weak coherence that $v_n \r x$ as well. But because $x \in U$ and $U$ is open, there exists $N \in \N$ such that $v_n \in U \subseteq V$ for all $n \geq N$. Thus $V$ is open. This concludes the proof. \end{proof}

\begin{definition}[The skeleton (\gls{skx}) of a compact network ] Suppose $(X,\w_X) \in \Ncom$ has a weakly coherent topology. The \emph{skeleton of $X$} is defined to be $(\skel(X),\w_{\sk(X)}) \in \Ncom$, where $\skel(X):= X/\sim,$ and 
\begin{align*}
\w_{\sk(X)}([x],[x'])&:=\w_X(x,x') \text{ for all } [x],[x']\in \skel(X).
\end{align*}

Observe that $\skel(X)$ is compact because $X$ is compact, and first countable by Proposition \ref{prop:quotient-open} and the fact that the image of first countable space under an open, surjective, and continuous map is also first countable \cite[p. 27]{counterexamples}. Furthermore, $\w_{\sk(X)}$ is well defined by the definition of $\sim$. 
\end{definition}

\begin{restatable}[Skeletons are terminal]{theorem}{skelterminal}
\label{thm:skel-terminal} 
Let $(X,\w_X) \in \Ncom$ be a network with a weakly coherent topology. Then $(\skel(X),\w_{\sk(X)}) \in \Ncom$ is terminal in $\mf{p}(X)$.
\end{restatable} 

The proof of this and related results are provided in Section \ref{sec:pf-networks-skel-terminal}.

\subsection{The second network distance}
Even though the definition of $\dn$ is very general, in some restricted settings it may be convenient to consider a network distance that is easier to formulate. For example, in computational purposes it suffices to assume that we are computing distances between finite networks. Also, a potential reduction in computational cost is obtained if we restrict ourselves to computing distortions of bijections instead of general correspondences. The next definition (compare with Definition \ref{defn:dno}) arises from such considerations.

\begin{definition}[The second network distance \gls{dnh}]
\label{defn:dnh}
Let $(X,\w_X), (Y,\w_Y) \in \Ngen$ be such that $|X| = |Y|$. Then define: 
\[\dnh(X,Y) := \frac{1}{2}\inf_{\ph}\sup_{x,x'\in X} \big|\w_X(x,x')-\w_Y\left(\ph(x),\ph(x')\right)\big|,\] 
where $\ph:X\r Y$ ranges over all bijections from $X$ to $Y$ (at least one bijection exists because $|X|=|Y|$). 
\end{definition}
In analogy with the case for $\dn$, we note that $X\cong^s Y$ implies $\dnh(X,Y)=0$, and also that $\dnh$ satisfies symmetry and triangle inequality. In the setting of finite networks, and in contrast with $\dn$, we have that $\dnh(X,Y)=0$ also immediately implies $X\cong^s Y$. It turns out via Example \ref{prop:perm} that $\dn$ and $\dnh$ agree on networks over two nodes. However, the two notions do not agree in general. In particular, a minimal example where $\dn\neq \dnh$ occurs for three node networks, as we show in Remark \ref{remark:bijection}.

\begin{restatable}[Networks with two nodes]{example}{perm}
\label{prop:perm} Let $(X,\w_X), (Y,\w_Y) \in \Ncal$ where $X = \set{x_1,x_2}$ and $Y= \set{y_1,y_2}$. Then we claim $\dn(X,Y) = \dnh(X,Y).$
Furthermore, if $X = N_2\left(\mattwo{\a}{\d}{\b}{\g}\right)$ and $Y= N_2\left(\mattwo{\a'}{\d'}{\b'}{\g'}\right)$, then we have the explicit formula:
\begin{align*}\label{eq:dn1-2}
&\dn(X,Y) = \frac{1}{2}\min\left(\Ga_1, \Ga_2\right), \text{ where}\\
\Ga_1 &= \max\left(|\a-\a'|,|\b-\b'|,|\d-\d'|,|\g-\g'|\right),\\
\Ga_2 &= \max\left(|\a-\g'|,|\g-\a'|,|\d-\b'|,|\b-\d'|\right).
\end{align*}
\end{restatable}
Details for this calculation are in Section \ref{sec:pf-perm}.

\begin{remark}[A three-node example where $\dn\neq \dnh$]\label{remark:bijection}
Assume $(X,\omega_X)$ and $(Y,\omega_Y)$ are two networks with the same cardinality. Then 
\[\dn(X,Y) \leq \dnh(X,Y).\]

The inequality holds because each bijection induces a correspondence, and we are minimizing over all correspondences to obtain $\dn$. However, the inequality may be strict, as demonstrated by the following example.
Let $X = \set{x_1, \ldots, x_3}$ and let $Y=\set{y_1, \ldots, y_3}$. Define $\w_X(x_1,x_1) = \w_X(x_3,x_3) = \w_X(x_1,x_3) = 1, \w_X = 0$ elsewhere, and define $\w_Y(y_3,y_3) = 1, \w_Y= 0$ elsewhere. In terms of matrices, $X=N_3(\Sigma_X)$ and $Y=N_3(\Sigma_Y)$, where
\[\Sigma_X = \left(\begin{smallmatrix}
1&0&1\\
0&0&0\\
0&0&1
\end{smallmatrix}\right)\,\,\mbox{and}\,\,\Sigma_Y = \left(\begin{smallmatrix}
0&0&0\\
0&0&0\\
0&0&1
\end{smallmatrix}\right).\]

Define $\Ga(x,x',y,y') = |\w_X(x,x')-\w_Y(y,y')|$ for $x,x' \in X$, $y,y' \in Y$. Let $\ph$ be any bijection. Then we have: 
\begin{align*}
\max_{x,x' \in X}\Ga(x,x',\ph(x),\ph(x'))
&=\max\{\Ga(x_1,x_3,\ph(x_1),\ph(x_3)), \Ga(x_1,x_1,\ph(x_1),\ph(x_1)),\\
&\qquad \Ga(x_3,x_3,\ph(x_3),\ph(x_3)),
\Ga(\ph^{-1}(y_3),\ph^{-1}(y_3),y_3,y_3)\}\\
&= 1.
\end{align*}
So $\dnh(X,Y) = \frac{1}{2}.$ On the other hand, consider the correspondence \[R = \{(x_1,y_3),(x_2,y_2),(x_3,y_3),(x_2,y_1)\}.\]
Then $\max_{(x,y),(x'y') \in R}|\w_X(x,x')-\w_Y(y,y')| = 0$. Thus $\dn(X,Y) = 0 < \dnh(X,Y)$.
\end{remark}

\begin{example}[Networks with three nodes] Let $(X,\w_X), (Y,\w_Y) \in \Ncal$, where we write $X=\set{x_1,x_2,x_3}$ and $Y=\set{y_1,y_2,y_3}$. Because we do not necessarily have $\dn= \dnh$ on three node networks by Remark \ref{remark:bijection}, the computation of $\dn$ becomes more difficult than in the two node case presented in Example \ref{prop:perm}. A certain reduction is still possible, which we present next. Consider the following list $\mc{L}$ of matrices representing correspondences, where a $1$ in position $(i,j)$ means that $(x_i,y_j)$ belongs to the correspondence. 
\begin{center}
\begin{tabular}{ ccccc } 
 $\mattres{1}{}{}{}{1}{}{}{}{1}$ & 
 $\mattres{1}{}{}{}{}{1}{}{1}{}$ &
 $\mattres{1}{}{}{1}{}{}{}{1}{1}$ &
 $\mattres{1}{1}{}{}{}{1}{}{}{1}$ &
 $\mattres{1}{}{}{}{1}{1}{1}{}{}$\\[2ex] 
 
 \hline\\
 $\mattres{}{1}{}{1}{}{}{}{}{1}$ & 
 $\mattres{}{1}{}{}{}{1}{1}{}{}$ &
 $\mattres{}{1}{}{}{1}{}{1}{}{1}$ &
 $\mattres{}{1}{1}{1}{}{}{1}{}{}$ &
 $\mattres{}{1}{}{1}{}{1}{}{1}{}$\\[2ex] 
 
 \hline\\
 $\mattres{}{}{1}{}{1}{}{1}{}{}$ & 
 $\mattres{}{}{1}{1}{}{}{}{1}{}$ &
 $\mattres{}{}{1}{}{}{1}{1}{1}{}$ &
 $\mattres{1}{}{1}{}{1}{}{}{1}{}$ &
 $\mattres{}{}{1}{1}{1}{}{}{}{1}$\\
\end{tabular}
\end{center}

Now let $R\in \Rsc(X,Y)$ be any correspondence. Then $R$ contains a correspondence $S\in \Rsc(X,Y)$ such that the matrix form of $S$ is listed in $\mc{L}$. Thus $\dis(R) \geq \dis(S)$, since we are maximizing over a larger set. It follows that $\dn(X,Y)$ is obtained by taking $\argmin \tfrac{1}{2}\dis(S)$ over all correspondences $S\in \Rsc(X,Y)$ with matrix forms listed in $\mc{L}$. 

For an example of this calculation, let $S$ denote the correspondence $\set{(x_1,y_1),(x_2,y_2),(x_3,y_3)}$ represented by the matrix $\mattres{1}{}{}{}{1}{}{}{}{1}$. Then $\dis(S)$ is the maximum among the following:

\begin{center}
\begin{tabular}{ccc}
$|\w_X(x_1,x_1) - \w_Y(y_1,y_1)|$ & 
$|\w_X(x_1,x_2) - \w_Y(y_1,y_2)|$ & 
$|\w_X(x_1,x_3) - \w_Y(y_1,y_3)|$\\

$|\w_X(x_2,x_1) - \w_Y(y_2,y_1)|$ & 
$|\w_X(x_2,x_2) - \w_Y(y_2,y_2)|$ & 
$|\w_X(x_2,x_3) - \w_Y(y_2,y_3)|$\\

$|\w_X(x_3,x_1) - \w_Y(y_3,y_1)|$ & 
$|\w_X(x_3,x_2) - \w_Y(y_3,y_2)|$ & 
$|\w_X(x_3,x_3) - \w_Y(y_3,y_3)|$.
\end{tabular}
\end{center}
\end{example}

The following proposition provides an explicit connection between $\dn$ and $\dnh$. The proof is provided in Section \ref{sec:pf-networks-bijection}, and an illustration is provided in Figure \ref{fig:bijection}.

\begin{restatable}{proposition}{bijection}
\label{prop:bijection} Let $(X,\w_X),(Y,\w_Y) \in\Ngen$. Then, 
\[\dn(X,Y) = \inf\set{\dnh(X',Y') : X', Y' \in \Ngen, \; X' \congwt  X,\ Y' \congwt  Y, \text{ and } |X'| = |Y'|}.\]
\end{restatable}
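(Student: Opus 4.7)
The plan is to prove the equality by two inequalities, using the tripod construction of Type~I weak isomorphism together with the fact that $\dno$ vanishes on weakly isomorphic networks (Theorem~\ref{thm:dN1}) and is dominated by $\dnoh$ via the bijection-induces-correspondence observation (Remark~\ref{remark:bijection}).

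For the easy direction $\leq$, suppose $X' \congwo X$ and $Y' \congwo Y$ with $\card(X') = \card(Y')$. By Theorem~\ref{thm:dN1} the distance $\dno$ is well-defined on weak isomorphism classes, so $\dno(X,X') = 0$ and $\dno(Y,Y') = 0$. The triangle inequality for $\dno$ then yields $\dno(X,Y) \leq \dno(X',Y')$, and since a bijection between $X'$ and $Y'$ induces a correspondence with the same distortion, $\dno(X',Y') \leq \dnoh(X',Y')$. Taking the infimum over all such pairs gives the first inequality.

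For the hard direction $\geq$, I would exploit the existence of an optimal correspondence when $X,Y \in \Ncal$ (Remark~\ref{rem:dno-finite}(2)). Let $R = R^{\opt} \in \Rsc(X,Y)$ with $\dis(R) = 2\,\dno(X,Y)$, and consider the finite set $R$ itself, equipped with two weight functions pulled back through the two canonical projections $\pi_X, \pi_Y$:
\begin{align*}
\w_{X'}\bigl((x,y),(x',y')\bigr) &:= \w_X(x,x'), \\
\w_{Y'}\bigl((x,y),(x',y')\bigr) &:= \w_Y(y,y').
\end{align*}
This yields networks $X' = (R,\w_{X'})$ and $Y' = (R,\w_{Y'})$ with $\card(X') = \card(Y') = \card(R)$. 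The tripod $(R,\id_R,\pi_X)$ witnesses $X' \congwo X$ (the identity is a surjection onto $R=X'$, $\pi_X$ is surjective onto $X$ because $R$ is a correspondence, and the weight-preservation equation $\w_{X'}(r,r') = \w_X(\pi_X(r),\pi_X(r'))$ holds by construction). The analogous tripod $(R,\id_R,\pi_Y)$ gives $Y' \congwo Y$.

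Finally, using the identity bijection $\ph = \id_R: X' \to Y'$ in the definition of $\dnoh$,
\[
\dnoh(X',Y') \;\leq\; \tfrac{1}{2}\sup_{r,r' \in R}\bigl|\w_{X'}(r,r') - \w_{Y'}(r,r')\bigr| \;=\; \tfrac{1}{2}\sup_{(x,y),(x',y') \in R}\bigl|\w_X(x,x') - \w_Y(y,y')\bigr| \;=\; \tfrac{1}{2}\dis(R) \;=\; \dno(X,Y),
\]
which provides the reverse inequality. The only subtle point is verifying that the pulled-back networks really are legitimate elements of $\Ncal$ and really are Type~I weakly isomorphic to $X$ and $Y$ respectively; both checks are immediate from the definitions, so I do not expect any serious obstacle. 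The construction essentially says that every correspondence can be ``straightened'' into a pair of blow-ups of the same size linked by a bijection.
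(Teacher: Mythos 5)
Your proposal is correct and follows essentially the same route as the paper: both directions use the triangle inequality plus the bijection-induces-correspondence observation for $\leq$, and the pullback of the weight functions to the node set $R$ of a correspondence, linked by the identity bijection, for $\geq$. The only (harmless) difference is that you invoke the existence of an optimal correspondence for finite networks directly, whereas the paper works with an arbitrary $R$ satisfying $\dis(R) < 2\eta$ for $\eta > \dno(X,Y)$ and then lets $\eta$ tend to $\dno(X,Y)$.
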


The proof of this result is provided in Section \ref{sec:pf-networks-bijection}. In words, the result states that the $\dn$ distance between two networks $X,Y$ may be achieved by blowing up to networks $X',Y'$ of equal cardinality and then optimizing over bijections. Note that when restricted to $X,Y \in \Ncal$, the Type II weak isomorphism can be interchanged with Type I weak isomorphism. 

\begin{figure}[b]
\center
\begin{tikzpicture}
\begin{scope}[every node/.style={circle,thick,draw}]
\footnotesize
    \node (y1) at (5,2) {$y_1$};
    \node (y2) at (4,0) {$y_2$};
    \node (y3) at (6,0) {$y_3$};
\end{scope}

\begin{scope}
\node (Y) at (5,-1) {$Y$};
\end{scope}

\begin{scope}[every node/.style={circle,thick,draw},yshift=1cm]
\footnotesize
\node (x1) at (0,0) {$x_1$};
\node (x2) at (2,0) {$x_2$};
\end{scope}

\begin{scope}[yshift=1cm]
\node (X) at (1,-1) {$X$};
\end{scope}

\begin{scope}[every node/.style={circle,thick,draw},xshift=9cm]
\footnotesize
    \node (y11) at (5,2) {$y_1$};
    \node (y21) at (4,0) {$y_2$};
    \node (y31) at (6,0) {$y_3$};
\end{scope}

\begin{scope}[xshift=9cm]
\node (Yk) at (5,-1) {$Y$};
\end{scope}

\begin{scope}[every node/.style={circle,thick,draw},yshift=2cm,xshift=10cm,
rotate=270]
\footnotesize
\node (x11) at (0,0) {$x_1$};
\node (x21) at (2,-1) {$x_{21}$};
\node (x22) at (2,1) {$x_{22}$};
\end{scope}

\begin{scope}[xshift=10cm]
\node (Xk) at (0,-1) {$Z$};
\end{scope}

\begin{scope}[              
              every edge/.style={draw=black,very thick}]
    \path [-] (x1) edge node[above]{$3$} (x2);
    \path [-] (y1) edge node[left] {$5$} (y2);
    \path [-] (y1) edge node[right] {$5$} (y3);
    \path [-] (y2) edge node[above] {$1$} (y3);
    \path [-] (x11) edge node[left] {$3$} (x21);
    \path [-] (x11) edge node[right] {$3$} (x22);
    \path [-,dotted] (x21) edge node[above] {$0$} (x22);
    \path [-] (y11) edge node[left] {$5$} (y21);
    \path [-] (y11) edge node[right] {$5$} (y31); 
    \path [-] (y21) edge node[above] {$1$} (y31); 
\end{scope}

\begin{scope}[
              every node/.style={fill=white,circle},
              every edge/.style={draw=orange,very thick}]
    \path [->] (x11) edge[bend left] (y11);
    \path [->] (x21) edge[bend right] (y31);
    \path [->] (x22) edge[bend left] (y21);
    
\end{scope}

\end{tikzpicture}
\caption{The two networks on the left have different cardinalities, but computing correspondences shows that $\dn(X,Y)=1$. Similarly one computes $\dn(X,Z) = 0$, and thus $\dn(Y,Z) = 0$ by triangle inequality. On the other hand, the bijection given by the red arrows shows $\dnh(Y,Z)=1$. Applying Proposition \ref{prop:bijection} then recovers $\dn(X,Y) = 1$.}
\label{fig:bijection}
\end{figure}

\begin{remark}[Computational aspects of $\dn$ and $\dnh$]\label{rem:compute-hard}
The relation between $\dn$ and $\dnh$ in the setting of finite networks has been leveraged in \cite{gwa} to compute network Fr\'{e}chet means and geodesic principal components. From a computational complexity perspective, even though $\dnh$ has a simpler formulation than $\dn$, computing $\dnh$ still turns out to be a hard problem (cf. Section \ref{sec:np-hard}).

Instead of trying to compute $\dn$, we will focus on finding network invariants that can be computed easily. This is the content of Section \ref{sec:inv-conv}. For each of these invariants, we will prove a stability result to demonstrate its validity as a proxy for $\dn$.

\end{remark}

\subsection{Special families: dissimilarity networks and directed metric spaces} 
\label{sec:dissim-rev}

The second network distance $\dnh$ that we introduced in the previous section turned out to be compatible with strong isomorphism. Interestingly, by narrowing down the domain of $\dn$ to the setting of compact dissimilarity networks (cf. Definition \ref{def:nets-dissim}), we obtain a subfamily of $\Ngen$ where $\dn$ is compatible with strong isomorphism.

\begin{theorem}[\cite{carlsson2013axiomatic}] \label{thm:dis-dN1}
The restriction of $\dn$ to $\Ncal^\mathrm{dis}$ is a metric modulo strong isomorphism.
\end{theorem}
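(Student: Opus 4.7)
The proof reduces, via Theorem \ref{thm:dN1} and Proposition \ref{prop:weak-isom-finite}, to showing that on the restricted domain $\Ncal^{\mathrm{dis}}$, Type I weak isomorphism coincides with strong isomorphism. Since strong isomorphism trivially implies Type I weak isomorphism (take $Z = X$, $\ph_X = \id_X$, and $\ph_Y$ the given weight-preserving bijection), the only nontrivial content is the reverse implication. Symmetry and the triangle inequality for $\dno$ on $\Ncal^{\mathrm{dis}}$ are inherited directly from the metric axioms already established on $\Ncal$.

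To show $X \congwo Y \implies X \cong^s Y$ for $X,Y \in \Ncal^{\mathrm{dis}}$, suppose $Z$ is a set together with surjections $\ph_X : Z \r X$ and $\ph_Y : Z \r Y$ satisfying $A_X(\ph_X(z),\ph_X(z')) = A_Y(\ph_Y(z),\ph_Y(z'))$ for all $z,z' \in Z$. The key observation is that the dissimilarity condition $A_X(x,x') = 0 \iff x = x'$ (and likewise for $A_Y$) forces the fibers of $\ph_X$ and $\ph_Y$ to coincide. Indeed, if $\ph_X(z) = \ph_X(z')$, then $A_X(\ph_X(z),\ph_X(z')) = 0$, hence $A_Y(\ph_Y(z),\ph_Y(z')) = 0$, hence $\ph_Y(z) = \ph_Y(z')$; the reverse implication follows by the symmetric argument.

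This equivalence of fibers allows one to define $\psi : X \r Y$ unambiguously by $\psi(\ph_X(z)) := \ph_Y(z)$, which is well-defined by the above, total on $X$ by surjectivity of $\ph_X$, and has a two-sided inverse defined analogously from $\ph_Y$ to $\ph_X$, so $\psi$ is a bijection. It is weight-preserving: for any $x = \ph_X(z)$ and $x' = \ph_X(z')$,
\[
A_Y(\psi(x),\psi(x')) = A_Y(\ph_Y(z),\ph_Y(z')) = A_X(\ph_X(z),\ph_X(z')) = A_X(x,x').
\]
Hence $X \cong^s Y$, completing the equivalence.

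I expect the only subtle point to be the step where one deduces equality of fibers; this is where the dissimilarity hypothesis is essential, and without it (as illustrated in Figure \ref{fig:dn0nets}) the conclusion fails. Everything else is bookkeeping that piggybacks on Theorem \ref{thm:dN1} and Proposition \ref{prop:weak-isom-finite}.
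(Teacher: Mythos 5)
Your proof is correct. Note that the paper does not actually prove Theorem \ref{thm:dis-dN1} anywhere in the text --- it is imported as a citation from \cite{carlsson2013axiomatic} --- so the fair comparison is with the paper's proof of its generalization, Theorem \ref{thm:dis-cpt-sisom}. There the authors also start from a tripod $\ph_X:V\r X$, $\ph_Y:V\r Y$ witnessing Type I weak isomorphism, but they proceed differently: they extract weight-preserving maps $f:X\r Y$ and $g:Y\r X$ by \emph{choosing} a preimage in $V$ for each point, observe that the dissimilarity axiom forces $f$ and $g$ to be injective, and then argue (via compactness, forward-open topology, and $\Psi$-controllability in the general case; in the finite case this collapses to a cardinality count $\card(X)\le\card(Y)\le\card(X)$) that $g\circ f$ is surjective. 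Your fiber argument is cleaner for the finite statement: by showing that $\ph_X(z)=\ph_X(z')$ iff $\ph_Y(z)=\ph_Y(z')$, you obtain a \emph{canonical} bijection $\psi$ with no choices and no counting, and the argument works verbatim for dissimilarity networks of arbitrary cardinality (though for infinite networks the real difficulty, which your reduction inherits from Proposition \ref{prop:weak-isom-finite}, is passing from $\dno=0$ to the existence of an exact tripod in the first place). Your identification of the fiber-equality step as the one place the dissimilarity hypothesis is used is exactly right, and the reduction through Theorem \ref{thm:dN1} and Proposition \ref{prop:weak-isom-finite} is sound.
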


The expression for $\dn$ was used in the context of $\mathcal{FN}^{\mathrm{dis}}$ in \cite{carlsson2013axiomatic,clust-net} to study the stability properties of hierarchical clustering methods on metric spaces and directed dissimilarity networks. That setting is considerably simpler than the situation in this paper, because in general we allow $\dn(X,Y) = 0$ for $X,Y \in \Ngen$ even when $X$ and $Y$ are not strongly isomorphic. In Theorem \ref{thm:dis-cpt-sisom} below, we provide an extension of Theorem \ref{thm:dis-dN1} to a class of compact dissimilarity networks that contains all finite dissimilarity networks.

The following definition gives a continuous relaxation of the triangle inequality that is motivated by applications to computer vision \cite{fagin1998relaxing}.

\begin{definition}[$\Psi$-relaxed triangle inequality] 
\label{def:psi-relaxed}

Let $\Psi: \R_+ \times \R_+ \r \R_+$ be a continuous function such that $\Psi(0,0) = 0$. 
A dissimilarity network $(X,A_X)$ is said to satisfy a \emph{$\Psi$-relaxed} triangle inequality if we have
\[A_X(x,x') \leq \Psi \big(A_X(x,x''),A_X(x',x'') \big) \text{ for all } x,x',x'' \in X.\] 
This condition automatically encodes a notion of reversibility:
\begin{alignat*}{2}
A_X(x,x') &\leq \Psi\big( A_X(x,x),A_X(x',x) \big) &&= \Psi\big(0,A_X(x',x)\big),\\
A_X(x',x) &\leq \Psi\big( A_X(x',x'),A_X(x,x') \big) &&= \Psi\big(0,A_X(x,x')\big).
\end{alignat*}
In the sequel, whenever we write ``$(X,A_X) \in \Ngendis$ has a $\Psi$-relaxed triangle inequality" without explicit reference to a map $\Psi$, we mean that there exists a function $\Psi:\R_+ \times \R_+ \r \R_+$ satisfying the conditions above.

\end{definition}

\begin{remark} Any finite dissimilarity network is finitely reversible and has a $\Psi$-relaxed triangle inequality. For example, $\Psi$ can be taken to be a bump function that vanishes outside $\R_+^2 \setminus U$---where $U$ is some open set containing $\im(A_X)$ and excluding $(0,0)$---and constant at $\max_{x,x' \in X}A_X(x,x')$ on $U$. 
\end{remark}

We end this section with a strengthening of Theorem \ref{thm:dis-dN1} to the setting of compact networks. Recall that the interesting part of Theorem \ref{thm:dis-dN1} was to show that $\dn(X,Y) = 0 \implies X\cong^s Y$; we generalize this result to a certain class of \emph{compact} dissimilarity networks. The proof is provided in Section \ref{sec:pf-dis-cpt-sisom}.

\begin{restatable}{theorem}{discptsisom}
\label{thm:dis-cpt-sisom}
Let $(X,A_X), (Y,A_Y) \in \Ncomdis$ be equipped with the forward-open topologies induced by $A_X$ and $A_Y$, respectively. Suppose also that at least one of the two networks has a $\Psi$-relaxed triangle inequality. Then $\dn(X,Y)=0 \implies X \cong^s Y$.
\end{restatable}

\begin{remark}[Generalizations of Theorem \ref{thm:dis-dN1}] For any finite dissimilarity network $(X,A_X)$, the discrete topology is precisely the topology induced by $A_X$. We have already stated before that finite dissimilarity networks trivially satisfy finite reversibility and $\Psi$-relaxed triangle inequality. It folows that Theorem \ref{thm:dis-cpt-sisom} is a bona fide generalization of Theorem \ref{thm:dis-dN1}. 
 \end{remark}

\subsection{Network models: the directed circles}
\label{sec:dir-s1}

\begin{figure}
\centering
\begin{tikzpicture}[every node/.style={font=\footnotesize, minimum size=1cm}]
\tikzset{>={Latex[length=3mm,width=2mm,orange]}}
\begin{scope}
\draw[
        decoration={markings, 
        mark=at position 0.325 with {\arrow{>}},
        mark=at position 0.825 with {\arrow{>}} },
        postaction={decorate}
        ]
        (0,0) circle (1.8);
\node at (0,0){$(\us^1,\wus)$};     
\draw [|->,black,thick,domain=30:75] plot ({2*cos(\x)}, {2*sin(\x)});
\end{scope}  
\begin{scope}[xshift = 2in]
\draw[
        decoration={markings, 
        mark=at position 0.325 with {\arrow{>}},
        mark=at position 0.825 with {\arrow{>}} },
        postaction={decorate}
        ]
        (0,0) circle (1.8);
\node at (0,0){$(\us^1,\rus)$};     
\draw[fill] (30:2cm) circle (0.5ex);
\draw [->,black,thick,domain=30:75] plot ({2*cos(\x)}, {2*sin(\x)});
\draw [->,black,thick,domain=30:10] plot ({2*cos(\x)}, {2*sin(\x)});
\end{scope}
\end{tikzpicture}
\caption{The directed circle $(\us^1,\wus)$ and the directed circle $(\us^1,\rus)$ with reversibility $\rho$, for some $\rho \in [1,\infty)$ (cf. Definition \ref{def:reversibility}). The arrows show that traveling in a clockwise direction is possibly only in the directed circle with reversibility $\rho$. However, this incurs a penalty modulated by $\rho$, hence the shorter arrow in the clockwise direction.
}
\label{fig:dir-s1}
\end{figure}

The collections $\Ngen$, $\Ncom$, and $\Ncal$ contain the collections of all metric spaces, compact metric spaces, and finite metric spaces, respectively. It is interesting to identify networks in these families that are not just metric spaces. Here we provide models of \emph{directed circles}, which are infinite, asymmetric networks. 
See Figure \ref{fig:dir-s1} for an illustration, and also Section \ref{sec:dir-s1-pers} for applications of hierarchical clustering and persistent homology methods to these asymmetric network models. 

Define $\us^1:= \set{ e^{i\theta}\in \C : \theta\in [0,2\pi)},$ i.e. the standard unit circle in the complex plane with the standard topology. For any $x,y \in \us^1$, define $\w_{\us^1}(x,y)$ to be the counterclockwise geodesic distance (i.e. arc length) from $x$ to $y$. Then $(\text{\gls{us},\gls{wus}})$ becomes a model of a directed circle.

An issue that arises now is that $\w_{\us^1}$ is not continuous with respect to the standard topology on $\us^1$. Thus $(\us^1,\w_{\us^1})$ is a set-function pair that can be accepted as input into $\dn$-computations, but does not enjoy the nicer properties of $\Ngen$. One recourse is to define $(\us^1,\w_{\us^1})$ to be equipped with the discrete topology. This topology is first countable and makes $\w_{\us^1}$ continuous, and so $(\us^1,\w_{\us^1}) \in \Ngen$. 

However, the resulting network is not compact. Moreover, a coarser topology does not work to make $(\us^1,\w_{\us^1})$ fit the framework of $\Ncom$. To see why, let $x \in \us^1$. Suppose $\w_{\us^1}$ is continuous with respect to some topology on $\us^1$. Fix $0 < \e \ll 2\pi$, and define $V:= \w_{\us^1}\inv[(-\e,\e)]$. Then $V$ is open in the product topology, and in particular contains $(x,x)$. Since $V$ is a union of open rectangles, there exists an open set $U \subseteq \us^1$ such that $(x,x) \in U\times U \subseteq V$. Suppose towards a contradiction that $U \neq \{x\}$. Then there exists $y \in U$, for some $y \neq x$. Because $(x,y) \in U\times U$, one has $\w_{\us^1}(x,y) \in (0,\e)$. But then $\w_{\us^1}(y,x) \in (2\pi-\e,2\pi)$, which is a contradiction because $(y,x) \in U\times U \subseteq \w_{\us^1}\inv[(-\e,\e)]$. Thus $U=\{x\}$, and hence the topology on $(\us^1,\w_{\us^1})$ necessarily contains all singletons as open sets. Such a topology cannot be compact.

We would still like a model of a directed circle with sufficient topological regularity to belong to $\Ncom$. We therefore produce a family of \emph{directed circles with finite reversibility} as follows. Fix a reversibility parameter $\rho \geq 1$ (cf. Definition \ref{def:reversibility}). Then for each $x,y \in \us^1$, define \gls{rus} as:
\[\rus(x,y):= \min\lp \w_{\us^1}(x,y), \rho\w_{\us^1}(y,x)\rp .\]
In particular, $\rus$ has reversibility $\rho$. 
Finally, we equip $\us^1$ with the standard subspace topology generated by the open balls in $\C$. In this case, $\us^1$ is compact and first countable. It remains to check that $\rus$ is continuous. Before proceeding to the next result, we set the notation $d_{\us^1}$ to denote the standard (not necessarily counterclockwise) geodesic distance on $\us^1.$ Note that $d_{\us^1}(x,y) = \min ( \w_{\us^1}(x,y),\w_{\us^1}(y,x))$, and so we always have $\rus(x,y) \leq \rho d_{\us^1}(x,y).$

\begin{proposition}
\label{prop:dir-S1-cts} $\rus: \us^1 \times \us^1 \r \R$ is continuous.
\end{proposition}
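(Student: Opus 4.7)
The plan is to factor $\rus$ through the multiplication map on $\us^1$ and thereby reduce continuity to that of a single real-valued function on $\us^1$. Consider $\Phi: \us^1 \times \us^1 \to \us^1$ defined by $\Phi(e^{i\theta_1}, e^{i\theta_2}) = e^{i(\theta_2 - \theta_1)}$. Since $\us^1$ is a topological group under multiplication in $\C$, the map $\Phi$ is continuous. Next define $f: \us^1 \to \R$ by setting $f(1) := 0$ and
\[
f(e^{i\phi}) := \min\!\left(\phi,\ \rho(2\pi - \phi)\right) \quad \text{for } \phi \in (0, 2\pi).
\]
I will verify that $\rus = f \circ \Phi$ and that $f$ is continuous, which together imply that $\rus$ is continuous.

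For the identity $\rus = f \circ \Phi$, given $z_j = e^{i\theta_j}$ with $\theta_j \in [0, 2\pi)$, set $\sigma := (\theta_2 - \theta_1) \bmod 2\pi$, so that $\Phi(z_1, z_2) = e^{i\sigma}$ and $\dar(\theta_1, \theta_2) = \sigma$, while $\dar(\theta_2, \theta_1) = 2\pi - \sigma$ when $\sigma > 0$ (both terms vanish when $\sigma = 0$). Substituting these into the definitions of $\rus$ and $f$ gives the equality on the nose.

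It remains to check continuity of $f$. For any $z_0 = e^{i\phi_0}$ with $\phi_0 \in (0, 2\pi)$, a continuous branch of $\arg$ defined in a small neighborhood of $z_0$ (and taking values in an open subinterval of $(0, 2\pi)$) exhibits $f$ as a composition of continuous functions there. The delicate case is $z_0 = 1$, where $f(1) = 0$. Any sequence $z_n \to 1$ in $\us^1$ corresponds to angles $\phi_n \in [0, 2\pi)$ that accumulate either at $0^+$ or at $2\pi^-$: in the first case $f(e^{i\phi_n}) \leq \phi_n \to 0$, and in the second $f(e^{i\phi_n}) \leq \rho(2\pi - \phi_n) \to 0$. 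Either way $f(z_n) \to 0 = f(1)$, so $f$ is continuous at $1$.

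The main obstacle is the jump discontinuity of $\dar$ across the branch cut at $1 \in \us^1$: the two quantities inside the minimum defining $\rus$ are individually discontinuous on the diagonal $\{z_1 = z_2\}$, but the minimum absorbs the jumps because whenever one of $\dar(\theta_1, \theta_2), \dar(\theta_2, \theta_1)$ is close to $2\pi$, the other is close to $0$. Factoring through $\Phi$ concentrates the entire phenomenon at the single point $1 \in \us^1$, where the continuity check reduces to the one-variable squeeze sketched above.
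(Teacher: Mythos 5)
Your proof is correct, but it takes a genuinely different route from the paper's. You exploit the translation-invariance of $\rus$ — the fact that it depends only on $\theta_2-\theta_1$ — by factoring $\rus = f\circ\Phi$ through the continuous group operation $\Phi(z_1,z_2)=z_2\overline{z_1}$ on $\us^1$, thereby reducing a two-variable continuity question to the continuity of the single function $f(e^{i\phi})=\min\lp\phi,\rho(2\pi-\phi)\rp$ on the circle; the only delicate point is then concentrated at $1\in\us^1$, where the squeeze $0\leq f(e^{i\phi})\leq \max(1,\rho)\min(\phi,2\pi-\phi)$ finishes the job. The paper instead verifies directly that preimages $\rus\inv[(a,b)]$ of basic open intervals are open, splitting into the three cases $\a<\b$, $\b<\a$, $\a=\b$ and producing explicit product neighborhoods in each; its diagonal case $\a=\b$ is exactly the analogue of your point $1\in\us^1$. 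Your factorization is shorter and more conceptual, makes transparent why the individual discontinuities of $\dar$ across the branch cut cancel in the minimum, and would generalize immediately to any translation-invariant weight on a topological group; the paper's argument is more elementary and self-contained but requires more case bookkeeping. One small wording caveat: a sequence $z_n\r 1$ need not eventually lie on one side of the branch cut, so rather than saying the angles accumulate ``either at $0^+$ or at $2\pi^-$,'' it is cleaner to invoke the uniform bound $f(e^{i\phi})\leq\max(1,\rho)\min(\phi,2\pi-\phi)$, which tends to $0$ regardless of which side each term approaches from; this does not affect the validity of the argument.
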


\begin{proof}[Proof of Proposition \ref{prop:dir-S1-cts}] It suffices to show that the preimages of basic open sets under $\rus$ are open. Let $(a,b)$ be an open interval in $\R$, and let $(x,y)\in \rus\inv[(a,b)]$. Set 
$\e:= \frac{1}{2\rho}\min\{ \rus(x,y) - a, b - \rus(x,y) \}.$ Let $x' \in B(x,\e)$ and $y' \in B(y,\e)$, i.e. $d_{\us^1}(x,x')< \e$ and $d_{\us^1}(y,y')< \e$. It suffices to show $\rus(x',y') \in \rus\inv[(a,b)]$, for which, in turn, it suffices to show:
\[| \rus(x,y) - \rus(x',y') | < 2\rho\e =  \min\{ \rus(x,y) - a, b - \rus(x,y) \}.\]

\begin{claim} Let $x,y,z \in \us^1$. Then
$\rus(x,z) \leq \rus(x,y) + \rus(y,z).$
\end{claim}

Assuming the claim, we have $\rus(x,y) \leq \rus(x,x') + \rus(x',y') + \rus(y',y)$ and so 
\[\rus(x,y) - \rus(x',y') \leq \rus(x,x') + \rus(y',y) \leq \rho d_{\us^1}(x,x') + \rho d_{\us^1}(y',y) = \rho(d_{\us^1}(x,x') + d_{\us^1}(y,y')).\] 
Similarly we have $\rus(x',y') - \rus(x,y) \leq \rho d_{\us^1}(x',x) + \rho d_{\us^1}(y,y') = \rho(d_{\us^1}(x,x') + d_{\us^1}(y,y')).$ We thus have $|\rus(x,y) - \rus(x',y')| < 2\rho\e.$ Since $x'\in B(x,\e), y' \in B(y,\e)$ were arbitrary, it follows that $\rus\inv[(a,b)]$ is open. 

To conclude the proof, we need to verify the triangle inequality in the claim. First note that for $x,y,z \in \us^1$, we have the simpler triangle inequality $\w_{\us^1}(x,z) \leq \w_{\us^1}(x,y) + \w_{\us^1}(y,z)$. To see this, note that the counterclockwise arc from $x$ to $z$ is either equal to, or strictly included in, the union of the counterclockwise arcs from $x$ to $y$ and from $y$ to $z$. Next we treat four cases corresponding to the ways in which $\rus(x,y), \rus(y,z)$ may be realized. 

First suppose $\rus(x,y)=\w_{\us^1}(x,y)$ and $\rus(y,z)=\w_{\us^1}(y,z)$. Then $\rus(x,z) \leq \w_{\us^1}(x,z) \leq \w_{\us^1}(x,y)+ \w_{\us^1}(y,z) = \rus(x,y) + \rus(y,z)$. Here the first inequality follows by definition and the second follows by the triangle inequality for $\w_{\us^1}$. Similarly, for the case $\rus(x,y)=\rho\w_{\us^1}(y,x)$ and $\rus(y,z)=\rho \w_{\us^1}(z,y)$, we have $\rus(x,z)\leq \rho \w_{\us^1}(z,x) \leq \rho( \w_{\us^1}(z,y) + \w_{\us^1}(y,x) ) =  \rus(y,z) + \rus(x,y).$

Next suppose $\rus(x,y)=\w_{\us^1}(x,y)$ and $\rus(y,z)=\rho\w_{\us^1}(z,y).$ Note that either $z$ belongs to the counterclockwise arc from $x$ to $y$, or $x$ belongs to the counterclockwise arc from $z$ to $y$. In the first subcase, we have $\rus(x,z) = \w_{\us^1}(x,z) \leq \w_{\us^1}(x,y) \leq \rus(x,y) + \rus(y,z)$. Here we use the assumption $\rus(x,y)=\w_{\us^1}(x,y)$. In the second subcase, we have $\rus(x,z) \leq \rho \w_{\us^1}(z,x) \leq \rho \w_{\us^1}(z,y) \leq  \rus(y,z) + \rus (x,y).$ Here we use the assumption $\rus(y,z)=\rho\w_{\us^1}(z,y).$

Finally we have the case $\rus(x,y)=\rho \w_{\us^1}(y,x)$ and $\rus(y,z)=\w_{\us^1}(y,z).$ After splitting into subcases, this case proceeds analogously to the previous case. We thus prove the claim, and conclude the proof. \end{proof}

\begin{definition} 
Let $\rho \in [1,\infty)$. We define the \emph{directed unit circle with reversibility $\rho$} to be $(\us^1,\rus).$ This is a compact, asymmetric network in $\Ncom$, specifically in $\Ncomdis$. Figure \ref{fig:dir-s1} provides an illustration of such a network alongside $(\us^1,\w_{\us^1})$ for comparison.
\end{definition}

\begin{remark}[Directed circle with finite reversibility---forward-open topology version]
Instead of using the subspace topology generated by the standard topology on $\C$, we can also endow $(\us^1,\rus)$ with the forward-open topology generated by $\rus$. The open balls in this topology are precisely the open balls in the subspace topology induced by the standard topology, the only adjustment being the ``center" of each ball. The directed metric space $(\us^1,\rus)$ equipped with the forward-open topology is another example of a compact, asymmetric network in $\Ncomdis$.
\end{remark}

\begin{figure}
\centering
\includegraphics[width=0.4\textwidth]{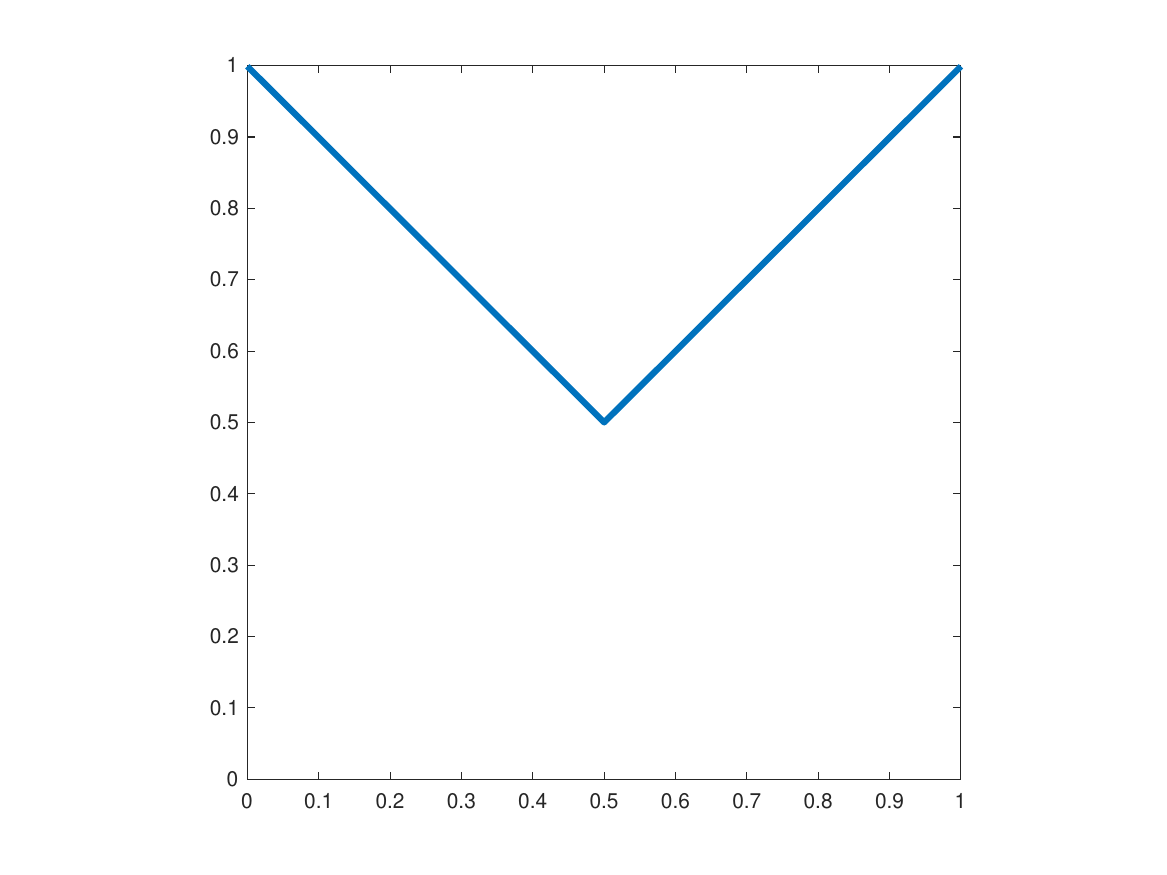}
\includegraphics[width=0.4\textwidth]{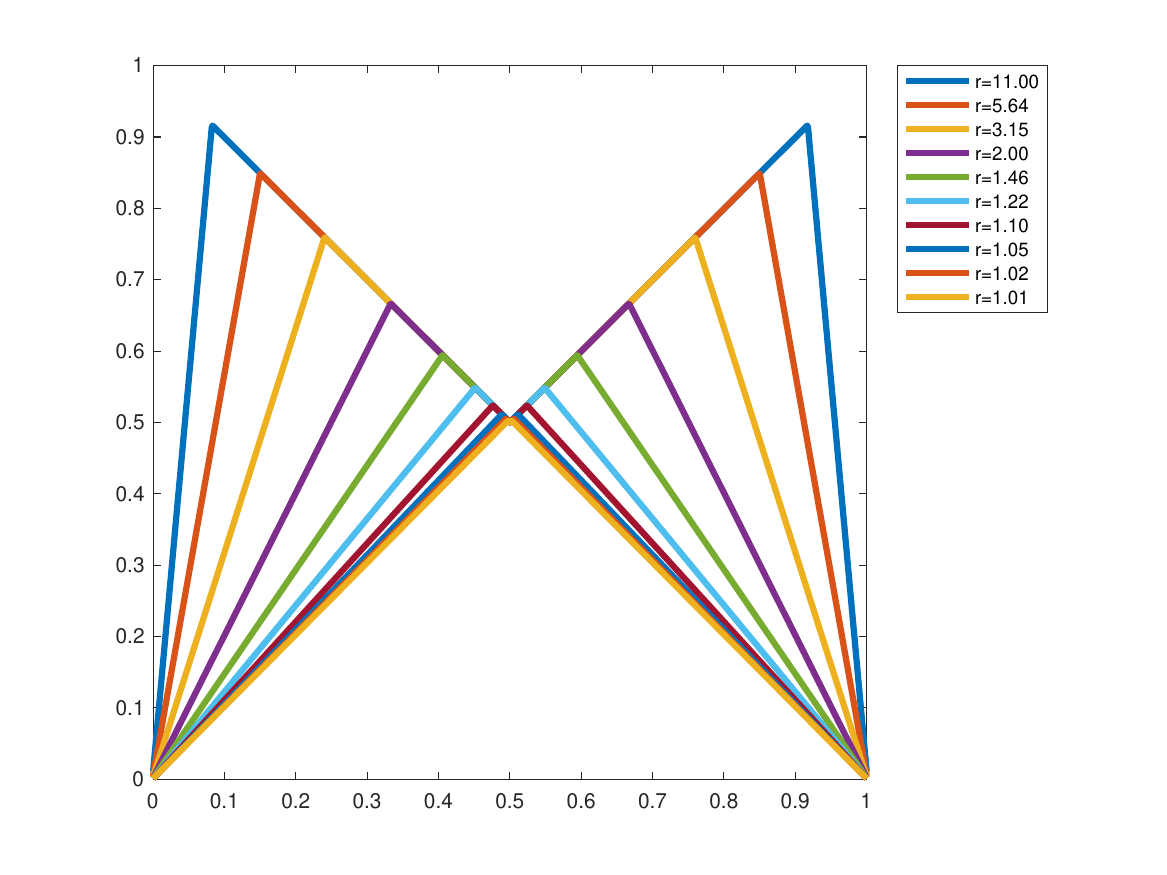}
\caption{To provide intuition for the Kuratowski-like embedding metric $\delta$ (cf. Definition \ref{def:kur-metric}), we plot $\d(e^{0},e^{i\theta})$ using $\w_{\us^1}$ (\textbf{left}) and $\rus$ (\textbf{right}) as $\theta$ varies along $\us^1$. We have rescaled $\us^1$ to be parametrized by $[0,1)$ for convenience. The figure on the left shows that for the strictly directed circle, $\d(e^0,\cdot)$ is minimized at the antipode $e^{i\pi}$, and that for all $\e$ values less than the diameter, the $\d$-ball of radius $\e$ centered at $e^0$ contains only $e^0$. The same holds for any other point $e^{i\theta}$ by rotational symmetry. Thus the metric topology induced by the Kuratowski-like embedding is actually discrete. Compare this with the figure on the right. The different colored lines correspond to different choices of $\rho$, taken on a logarithmic scale. We observe that $\d$ takes on an M-shape that converges to the shape of the geodesic distance as $\rho \downarrow 1$. The shape of the plot also tells us that for any finite reversibility parameter, the $\d$-ball of radius $\e$ centered at $e^0$ (and thus any other point by rotational symmetry) is centrally symmetric, as is the case for a ball of radius $\e$ using the standard geodesic metric.}
\label{fig:circle-delta}
\end{figure}

\subsection{The complexity of computing $\dn$}
\label{sec:np-hard}

Computing $\dn$ is at least as hard as computing the Gromov-Hausdorff distance between metric spaces, which has been shown by Schmiedl to not be computable in polynomial time unless P=NP \cite{schmiedl2017computational,schmiedl}. Moreover, Schmiedl showed that even obtaining a 3-approximation of the Gromov-Hausdorff distance is not possible in polynomial time, unless P=NP. This result was obtained by starting with an instance of the 3-partition problem and constructing an instance of the Gromov-Hausdorff distance such that a 3-approximation of its optimal solution would solve the starting 3-partition problem.

\begin{remark}\label{rem:dn-NP} For quite some time prior to Schmiedl's proof, it was remarked \cite{dgh-props} that computing the Gromov-Hausdorff distance is reminiscent of the \emph{quadratic bottleneck assignment problem} (QBAP) \cite{qap-book1}, which is NP-hard \cite{burkard2009assignment}. We reproduce this argument below in the case of $\dn$.

By Remark \ref{remark:bijection} and Proposition \ref{prop:bijection} we know that it is possible to obtain an upper bound on $\dn$, in the case $|X| = |Y|$, by using $\dnh$. This problem turns out to be a case of the \emph{quadratic bottleneck assignment problem} (QBAP) \cite{qap-book1}. Let $X=\set{x_1,\ldots, x_n}$ and let $Y=\set{y_1,\ldots,y_n}$. Let $\Pi$ denote the set of all $n\times n$ permutation matrices. Note that any $\mathbf{\pi} \in \Pi$ can be written as $\mathbf{\pi}=\matele{\pi_{ij}}_{i,j=1}^n$, where each $\pi_{ij}\in \set{0,1}$. Then $\sum_j\pi_{ij}=1$ for any $i$, and $\sum_i\pi_{ij}=1$ for any $j$. Computing $\dnh$ now becomes:
\[\dnh(X,Y) = \frac{1}{2}\min_{\mathbf{\pi}\in \Pi}\max_{1\leq i,k,j,l, \leq n}\Gamma_{ijkl}\pi_{ij}\pi_{kl}, \text{ where }\Gamma_{ikjl} = |\w_X(x_i,x_k)-\w_Y(y_j,y_l)|. \]
This is just the QBAP, which is known to be NP-hard  \cite{burkard2009assignment}. 
\end{remark}

\subsection{Comparison of $\dn$ with the cut metric on graphs}\label{sec:cutmetric}

In our work throughout this paper, we have developed the theoretical framework of a certain notion of network distance that has proven to be useful for applying methods from the topological data analysis literature to network data.  
In each of these applications, networks were modeled as generalizations of metric spaces, and so the appropriate notion of network distance turned out to be a generalization of the Gromov-Hausdorff distance between compact metric spaces. However, an alternative viewpoint would be to model networks as weighted, directed graphs. From this perspective, a well-known metric on the space of all graphs is the \emph{cut metric} \cite{lovasz2012large, borgs2008convergent}. In particular, it is known that the completion of the space of all graphs with respect to the cut metric is compact \cite[p. 149]{lovasz2012large}. An analogous result is known for the network distance, and we will establish it in a forthcoming publication. It turns out that there are other structural similarities between the cut metric and the network distance. In this section, we will develop an interpretation of an $\ell^\infty$ version of the cut metric in the setting of compact metric spaces, and show that it agrees with the Gromov-Hausdorff distance in this setting.

\subsubsection{The cut distance between finite graphs}
Let $G = (V,E)$ denote a vertex-weighted and edge-weighted graph on a vertex set $V = \set{1,2,\ldots, n}$. Let $\a_i$ denote the weight of node $i$, with the assumption that each $\a_i \geq 0$, and $\sum_{i}\a_i = 1$. Let $\b_{ij}\in \R$ denote the weight of edge $ij$. 
For any $S, T \subseteq V$, define: 
\[e_G(S,T) := \sum_{s\in S, t \in T}\a_s\a_t\b_{st}.\] 
Note for future reference that one may regard $e_G$ as a function from $\pow(V)\times \pow(V)$ into $\R$.

Let $A$ be an $n \times n$ matrix of real numbers. Some classical norms include the $\ell^1$ norm $\|A\|_1 = n^{-2}\sum_{i,j = 1}^n|A_{ij}|$, the $\ell_2$ norm $\|A\|_2 = (n^{-2}\sum_{i,j = 1}^n|A_{ij}|^2)^{1/2}$, and the $\ell^\infty$ norm $\|A\|_\infty = \max_{i,j}|A_{ij}|$. Note that the $n^{-2}$ term is included for normalization.

The \emph{cut norm of $A$} is defined as 
\[\|A\|_{\Box} := \frac{1}{n^2}\ds\max_{S,T \subseteq \set{1,\ldots ,n}}\Big\vert\sum_{i\in S, j \in T} A_{ij} \Big\vert.\]

The \emph{cut metric} or \emph{cut distance} \gls{deltabox} between weighted graphs on the same node set $V = \set{1,2, \ldots, n}$ as
\[\d_{\Box}(G,G') := \max_{S, T \subseteq V}|e_G(S,T) - e_{G'}(S,T)|.\]

Next we consider weighted graphs with different numbers of nodes. Let $G, G'$ be graphs on $n$ and $m$ nodes respectively, with node weights $(\a_i)_{i=1}^n, (\a'_k)_{k=1}^m $ and edge weights $(\b_{ij})_{i,j=1}^n, (\b'_{kl})_{k,l=1}^m $, respectively. A \emph{fractional overlay} is a non-negative $n\times m$ matrix $W$ such that

\[\sum_{i=1}^n W_{ik} = \a'_k \text{ for } 1\leq k \leq m, \text{ and } \sum_{k=1}^m W_{ik} = \a_i \text{ for } 1\leq i \leq n.\]

Define $\mathcal{W}(G,G')$ to be the set of all fractional overlays between $G$ and $G'$. Let $W \in \mathcal{W}(G,G')$. Consider the graphs $G(W), G'(W)$ on the node set $\set{(i,k) : i \leq n, k \leq m, i,k\in \N}$ defined in the following way: node $(i,k)$ carries weight $W_{ik}$ in both $G(W), G'(W)$, edge $((i,k),(j,l))$ carries weight $\b_{ij}$ in $G(W)$ and $\b'_{kl}$ in $G'(W)$. Then the cut distance \gls{dbox} between graphs of different sizes becomes

\begin{equation}\label{eq:cut}
\dbox(G,G') := \min_{W \in \mathcal{W}(G,G')}\delta_{\Box}(G(X),G'(X)).\tag{$\star$}
\end{equation}

\subsubsection{The cut distance and the Gromov-Hausdorff distance}

In our interpretation, a fractional overlay is analogous to a \emph{correspondence}, as defined in \S\ref{sec:networks}. We define correspondences between networks, but a similar definition can be made for metric spaces, and in the case of finite metric spaces, a correspondence can be regarded as a binary matrix. Since correspondences are used to define the Gromov-Hausdorff distance between compact metric spaces, and our definition of network distance is motivated by GH distance, we would like to reinterpret the cut distance in the setting of compact metric spaces. Our goal is to show that in this setting, a certain analogue of the cut distance agrees with the GH distance.

For any compact metric space $(X,d_X)$, let $\pow(X)$ denote the nonempty elements of the power set of $X$, and let $e_X$ be any $\R_+$-valued function defined on $\pow(X) \times \pow(X)$. In analogy with the definition of $e_G$ for graphs, %
one would like $e_X$ to absorb information about the metric $d_X$ on $X$.

Given a subset $R \subseteq X\times Y$, let $\pi_1$ and $\pi_2$ denote the canonical projections to the $X$ and $Y$ coordinates, respectively. Let $\Xi$ denote the map that takes a compact metric space $(X,d_X)$ and returns a function $e_X:\pow(X)\times \pow(X)\rightarrow\R$.  We impose the following two conditions on the assignment $d_X\mapsto e_X$ induced by the map $\Xi$: 
\begin{enumerate}
\item For all $x,x' \in X$,  
\begin{equation}\label{cond-point}
\Xi(d_X)(\set{x},\set{x'}) = d_X(x,x'). \tag{${\#}_1$}\end{equation}
Thus $e_X=\Xi(d_X)$ and $d_X$ agree on singleton sets.
\item For all $T,S \subseteq X \times Y$,
\begin{equation}\label{cond-sup}
\vert \Xi(d_X)(\pi_1(T),\pi_1(S)) - \Xi(d_Y)(\pi_2(T),\pi_2(S)) \vert \leq 
\max_{t\in T, s \in S}\vert d_X(\pi_1(t),\pi_1(s)) - d_Y(\pi_2(t),\pi_2(s))\vert.  \tag{${\#}_2$}
\end{equation} 
\end{enumerate}
The latter of the two conditions above can be viewed as a continuity condition.

\begin{example}
Some natural candidates for the assignment $d_X\mapsto e_X$ which satisfy the two conditions above are:
\begin{itemize}
\item $\Xi^{\textrm{H}}$ such that $e_X(A,B) = \dhdf^X(A,B),$ \text{the Hausdorff distance},\\
\item $\Xi^{\max}$  such that $e_X(A,B) = \sup_{a\in A, b\in B}d_X(a,b)$,\\
\item $\Xi^{\min}$ such that $e_X(A,B) = \inf_{a\in A, b\in B}d_X(a,b)$.\\

\end{itemize}

It is clear that all these satisfy $({\#}_1)$. In Proposition \ref{prop:dh-cond-2} we prove that they satisfy condition $({\#}_2)$.
\end{example}

An analogue of the cut distance (\ref{eq:cut}) in the setting of compact metric spaces is the following.

\begin{definition}[An analogue of the cut distance for compact metric spaces]
Let $\Xi$ be any map satisfying (${\#}_1$) and (${\#}_2$). Then for compact metric spaces $X$ and $Y$, define
\[\dbox^\Xi(X,Y) := \frac{1}{2}\inf_{R}\sup_{S,T \subseteq R} \vert \Xi(d_X)(\pi_1(T),\pi_1(S)) - \Xi(d_Y)(\pi_2(T),\pi_2(S))\vert,\]
where $R$ ranges over correspondences between $X$ and $Y$, and $\pi_1, \pi_2$ are the canonical projections from $X \times Y$ onto $X$ and $Y$ respectively. We include the coefficient $2^{-1}$ to make comparison with $\dgh$ simpler. We claim that this interpretation of $\dbox^\Xi$ is identical to $\dgh$. 
\end{definition}

Note that one definition of the Gromov-Hausdorff distance for compact metric spaces \cite[Theorem 7.3.25]{burago} is the following:

\[\dgh(X,Y) = \frac{1}{2}\inf_{R \in \Rsc}\sup_{(x,y),(x',y') \in R} \vert d_X(x,x') - d_Y(y,y') \vert.\]

We also make the following definitions for \emph{distortion}:
\begin{align*}
\dis_{\textrm{GH}}(R) &= \sup_{(x,y),(x',y') \in R} \vert d_X(x,x') - d_Y(y,y') \vert, \\
\dis_{\Box}^\Xi(R) &= \sup_{S,T \subseteq R} \vert \Xi(d_X)(\pi_1(T),\pi_1(S)) - \Xi(d_Y)(\pi_2(T),\pi_2(S))\vert.
\end{align*}

\begin{proposition}\label{prop:dgh-compatible} For all compact metric spaces $X,Y$ and any assignment $\Xi$ satisfying (${\#}_1$) and (${\# }_2$) above, one has  $\dgh(X,Y) = \dbox^\Xi(X,Y).$
\end{proposition}

\begin{proof}[Proof of Proposition \ref{prop:dgh-compatible}]
Write $e_X=\Xi(d_X)$ and $e_Y=\Xi(d_Y)$. Let $R \in \Rsc(X,Y)$. In computing $\dis^{\Xi}_{\gh}(R)$, we take the supremum over all subsets of $R$, including singletons. Since $e_X$ (resp. $e_Y$) agrees with $d_X$ (resp. $d_Y$) on singletons, it follows that $\dis_{\textrm{GH}}(R) \leq \dis_{\Box}^\Xi(R)$. Thus $\dgh \leq \dbox^\Xi$. 

We now need to show $\dbox^\Xi \leq \dgh$.

Let $\eta > 0$ such that $\dgh(X,Y) < \eta$. Then, by one of the characterizations of $\dgh$ \cite[Chapter 7]{burago}, there exists a joint-metric $\d$ defined on $X \sqcup Y$ and a correspondence $R$ such that $\d(x,y) < \eta $ for all $(x,y) \in R$. In particular, $\d$ agrees with $d_X$ and $d_Y$ when restricted to the appropriate spaces. Now we have
\begin{align*}
\dis_{\Box}^\Xi(R) &= \sup_{T, S \subseteq R} \vert e_X(\pi_1(T),\pi_1(S)) - e_Y(\pi_2(T),\pi_2(S))\vert\\
&\leq \sup_{T,S \subseteq R} \sup_{t\in T, s\in S} \vert d_X(\pi_1(t),\pi_1(s)) - d_Y(\pi_2(t),\pi_2(s))\vert\\
&= \sup_{T,S \subseteq R} \sup_{t\in T, s\in S} \vert \d(\pi_1(t),\pi_1(s)) - \d(\pi_2(t),\pi_2(s))\vert
\intertext{By the triangle inequality, we have the following for any $t\in T$ and $s\in S$:}
\vert \d(\pi_1(t),\pi_1(s)) - \d(\pi_2(t),\pi_2(s))\vert &\leq 
\vert \d(\pi_1(t),\pi_1(s)) - \d(\pi_1(s),\pi_2(t))\vert \\ 
& \qquad\qquad\qquad + \vert \d(\pi_1(s),\pi_2(t)) - \d(\pi_2(t),\pi_2(s))\vert \\
&\leq \vert \d(\pi_1(t),\pi_2(t))\vert + \vert \d(\pi_1(s),\pi_2(s))\vert\\
&< \eta + \eta = 2\eta. \text{ Thus we conclude}\\
\dis_{\Box}^\Xi(R) &\leq 2\eta.
\end{align*}
This shows $\dbox^\Xi(X,Y) \leq \eta$, and so $\dbox^\Xi(X,Y) \leq \dgh(X,Y)$. So we conclude $\dgh(X,Y) = \dbox^\Xi(X,Y)$, under the assumptions made in this discussion. 
\end{proof}

\begin{proposition}\label{prop:dh-cond-2}
Each of the maps $\Xi^{\min}$, $\Xi^{\max}$, and $\Xi^{\textrm{H}}$ satisfies condition (\ref{cond-sup}).
\end{proposition}

\begin{proof}
We only give details for $\Xi^{\textrm{H}}$. The argument for the other cases is similar. Let $X,Y$ be compact metric spaces, and let $T,S \subseteq X \times Y$. First we recall the Hausdorff distance between two closed subsets $E,F \subseteq X$:
\[\dhdf^X(E,F) = \max\set{\sup_{e\in E}\inf_{f \in F}d_X(e,f), \sup_{f \in F}\inf_{e \in E}d_X(e,f)}.\]

So $\dhdf^X$ between two sets in $X$ is written as a max of two numbers $a$ and $b$, and we have the general result 
\[|\max(a,b) - \max(a',b')| \leq \max(|a-a'|,|b-b'|).\]
Another general result about ``calculating" with $\sup$ is that $|\sup f-\sup g| \leq \sup |f-g|$ for real valued functions $f$ and $g$. Both these properties are consequences of the triangle inequality, and we use them here:
\begin{align*}
\vert \dhdf^X(\pi_1(T),\pi_1(S)) - \dhdf^Y(\pi_2(T),\pi_2(S))\vert &= |\max(a,b) - \max(a',b')|\\
&\leq \max(|a-a'|,|b-b'|), \text{ where }\\
a &= \sup_{t \in T}\inf_{s\in S} d_X(\pi_1(t),\pi_1(s))\\
a' &= \sup_{t\in T}\inf_{s \in S}d_Y(\pi_2(t),\pi_2(s))\\
b &= \sup_{s\in S}\inf_{t\in T}d_X(\pi_1(t),\pi_1(s))\\
b' &= \sup_{s\in S}\inf_{t\in T}d_Y(\pi_2(t),\pi_2(s))
\end{align*}
We consider only one of the terms $|a-a'|$; the other term can be treated similarly.
\begin{align*}
|a-a'| &= \vert \sup_{t \in T}\inf_{s\in S} d_X(\pi_1(t),\pi_1(s)) - \sup_{t\in T}\inf_{s \in S}d_Y(\pi_2(t),\pi_2(s))\vert\\
&\leq \sup_{t\in T} \vert \inf_{s\in S}d_X(\pi_1(t),\pi_1(s)) - \inf_{s\in S}d_Y(\pi_2(t),\pi_2(s))\vert\\
&\leq \sup_{t\in T, s\in S}|d_X(\pi_1(t),\pi_1(s)) - d_Y(\pi_2(t),\pi_2(s))|.
\end{align*} 
The same bound holds for $|b-b'|$. Thus $\dhdf$ satisfies condition (\ref{cond-sup}), as claimed. 
\end{proof}

\subsection{Proofs from Section \ref{sec:networks}}
\label{sec:pf-networks}

\subsubsection{Proofs related to Theorem \ref{thm:dN1}}
\label{sec:pf-dN1}

\weakisomequiv*

\begin{proof}[Proof of Proposition \ref{prop:weak-isom-equiv}]
The case for Type I weak isomorphism is similar to that of Type II, so we omit it. 
For Type II weak isomorphism, the reflexive and symmetric properties are easy to see, so we only provide details for verifying transitivity. Let $A,B,C\in \Ngen$ be such that $A\congwt B$ and $B\congwt C$. Let $\e > 0$, and let $P,S$ be sets with surjective maps $\ph_A:P \r A$, $\ph_B:P \r B$, $\psi_B: S \r B$, $\psi_C: S \r C$ such that:
\begin{align*}
|\w_A(\ph_A(p),\ph_A(p')) - \w_B(\ph_B(p),\ph_B(p'))| &< \e/2 \quad \text{for each } p,p'\in P, \text{ and }\\
|\w_B(\psi_B(s),\psi_B(s')) - \w_C(\psi_C(s),\psi_C(s'))| &< \e/2 \quad \text{for each } s,s'\in S.
\end{align*}
Next define $T:=\{(p,s)\in P\times S: \ph_B(p)=\psi_B(s)\}.$
\begin{claim} The projection maps $\pi_P: T \r P$ and $\pi_S:T \r S$ are surjective. 
\end{claim}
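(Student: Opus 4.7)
The plan is to verify each projection is surjective by a direct one-line argument, invoking the surjectivity of $\varphi_B$ and $\psi_B$ respectively. This claim is essentially a pullback/fiber-product construction, and surjectivity of the two projection maps is the standard consequence when both maps into the common target $B$ are surjective.

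First I would handle $\pi_P$. Given an arbitrary $p \in P$, I need to exhibit some $s \in S$ so that $(p,s) \in T$, i.e.\ so that $\varphi_B(p) = \psi_B(s)$. Since $\psi_B : S \to B$ is surjective, such an $s$ exists (take any preimage of the element $\varphi_B(p) \in B$ under $\psi_B$). Then $(p,s) \in T$ and $\pi_P(p,s) = p$, showing $\pi_P$ is surjective.

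Next I would run the symmetric argument for $\pi_S$. Given $s \in S$, use surjectivity of $\varphi_B : P \to B$ to pick $p \in P$ with $\varphi_B(p) = \psi_B(s)$, so that $(p,s) \in T$ and $\pi_S(p,s) = s$.

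There is no real obstacle here; the claim is purely set-theoretic and the hypotheses of surjectivity for both $\varphi_B$ and $\psi_B$ are being used in exactly the symmetric way one expects. The reason the claim is worth isolating is that $T$ is the object on which one will build the desired surjections to $A$ and $C$ (via $\varphi_A \circ \pi_P$ and $\psi_C \circ \pi_S$) to establish the Type II weak isomorphism $A \congwt C$, and one needs $\pi_P, \pi_S$ to be surjective in order to conclude that the resulting composites are also surjective onto $A$ and $C$.
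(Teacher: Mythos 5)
Your proof is correct and matches the paper's argument exactly: for each $p\in P$ one uses surjectivity of $\psi_B$ to find $s$ with $\psi_B(s)=\varphi_B(p)$, so $(p,s)\in T$ projects to $p$, and symmetrically for $\pi_S$. No differences worth noting.
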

\begin{subproof} Let $p\in P$. Then $\ph_B(p)\in B$, and since $\psi_B:S \r B$ is surjective, there exists $s\in S$ such that $\psi_B(s)=\ph_B(p)$. Thus $(p,s)\in T$, and $\pi_P(p,s)=p$. This suffices to show that $\pi_P:T \r P$ is a surjection. The case for $\pi_S:T \r S$ is similar.
\end{subproof}

It follows from the preceding claim that $\ph_A\circ \pi_P:T \r A$ and $\psi_C\circ \pi_S: T \r C$ are surjective. Next let $(p,s),(p',s')\in T$. Then,
\begin{align*}
&|\w_A(\ph_A(\pi_P(p,s)),\ph_A(\pi_P(p',s')))-\w_C(\psi_C(\pi_S(p,s)),\psi_C(\pi_S(p',s')))| \\ 
&=|\w_A(\ph_A(p),\ph_A(p'))-\w_C(\psi_C(s),\psi_C(s'))| \\
&=|\w_A(\ph_A(p),\ph_A(p'))-\w_B(\ph_B(p),\ph_B(p')) + \w_B(\ph_B(p),\ph_B(p')) - \w_C(\psi_C(s),\psi_C(s'))| \\
&=|\w_A(\ph_A(p),\ph_A(p'))-\w_B(\ph_B(p),\ph_B(p')) + \w_B(\psi_B(s),\psi_B(s')) - \w_C(\psi_C(s),\psi_C(s'))| \\
& < \e/2 + \e/2 = \e.
\end{align*}
Since $\e > 0$ was arbitrary, it follows that $A \congwt C$. \end{proof}

\dN*

\begin{proof}[Proof of Theorem \ref{thm:dN1}] 

It is clear that $\dn(X,Y) \geq 0$. To show $\dn(X,X) = 0$, consider the correspondence $R = \set{(x,x): x \in X}$. Then for any $(x,x),(x',x') \in R$, we have $|\w_X(x,x') - \w_X(x,x')| = 0$. Thus $\dis(R) = 0$ and $\dn(X,X) = 0$. 

Next we show symmetry, i.e. $\dn(X,Y) \leq \dn(Y,X)$ and $\dn(Y,X) \leq \dn(X,Y)$. The two cases are similar, so we just show the second inequality. 
Let $\eta > \dn(X,Y)$. 
Let $R \in \Rsc(X,Y)$ be such that $\dis(R) < 2\eta$. Then define $\tilde{R} = \set{(y,x) : (x,y) \in R}$. Note that $\tilde{R} \in \Rsc(Y,X)$. We have:
\begin{align*}
\dis(\tilde{R}) &= \sup_{(y,x),(y',x') \in \tilde{R}}|\w_Y(y,y') - \w_X(x,x')|\\
 &= \sup_{(x,y),(x',y') \in R}|\w_Y(y,y') - \w_X(x,x')|\\
 &= \sup_{(x,y),(x',y') \in R}|\w_X(x,x') - \w_Y(y,y')| = \dis(R).
\end{align*} 
So $\dis(R) = \dis(\tilde{R})$. Then $
\dn(Y,X) = \tfrac{1}{2}\inf_{S \in \Rsc(Y,X)}\dis(S) \leq \tfrac{1}{2}\dis(\tilde{R}) < \eta.$
This shows $\dn(Y,X) \leq \dn(X,Y)$. The reverse inequality follows by a similar argument. 

Next we prove the triangle inequality. Let $R \in \Rsc(X,Y), S \in \Rsc(Y,Z)$, and let \[R\circ S = \set{(x,z) \in X \times Z \mid \exists y, (x,y) \in R, (y,z) \in S}\]
First we claim that $R \circ S \in \Rsc(X,Z)$. This is equivalent to checking that for each $x \in X$, there exists $z$ such that $(x,z) \in R\circ S$, and for each $z \in Z$, there exists $x$ such that $(x,z) \in R \circ S$. The proofs of these two conditions are similar, so we just prove the former. Let $x \in X$. Let $y \in Y$ be such that $(x,y) \in R$. Then there exists $z\in Z$ such that $(y,z) \in S$. Then $(x,z) \in R \circ S$.

Next we claim that $\dis(R\circ S) \leq \dis(R) + \dis(S)$. Let $(x,z), (x',z') \in R \circ S$. Let $y\in Y$ be such that $(x,y) \in R$ and $(y,z) \in S$. Let $y' \in Y$ be such that $(x',y') \in R, (y',z')\in S$. Then we have:
\begin{align*}
|\w_X(x,x') - \w_Z(z,z')| &= |\w_X(x,x') - \w_Y(y,y') + \w_Y(y,y') - \w_Z(z,z')|\\
&\leq |\w_X(x,x') - \w_Y(y,y')| + |\w_Y(y,y') - \w_Z(z,z')|\\
&\leq \dis(R) + \dis(S).
\end{align*} This holds for any $(x,z), (x',z') \in R\circ S$, and proves the claim.

Now let $\eta_1 > \dn(X,Y)$, let $\eta_2 > \dn(Y,Z$, and let $R \in \Rsc(X,Y)$, $S \in \Rsc(Y,Z)$ be such that $\dis(R) < 2\eta_1$ and $\dis(S) < 2\eta_2$. Then we have:
\[\dn(X,Z) \leq \tfrac{1}{2}\dis(R\circ S)
\leq \tfrac{1}{2}\dis(R) + \tfrac{1}{2}\dis(S) < 2\eta_1 + 2\eta_2.\]
This shows that $\dn(X,Z) \leq \dn(X,Y) + \dn(Y,Z)$, and proves the triangle inequality.

Finally, we claim that $X \congwt Y$ if and only if $\dn(X,Y) = 0$. Suppose $\dn(X,Y) = 0$. Let $\e > 0$, and let $R(\e) \in \Rsc(X,Y)$ be such that $\dis(R(\e)) < \e$. Then for any $z = (x,y), z' =(x',y') \in R(\e)$, we have $|\w_X(x,x') - \w_Y(y,y')| < \e$. But this is equivalent to writing $|\w_X(\pi_X(z),\pi_X(z')) - \w_Y(\pi_Y(z),\pi_Y(z'))| < \e$, where $\pi_X:R(\e) \r X$ and $\pi_Y:R(\e) \r Y$ are the canonical projection maps. This holds for each $\e > 0$. Thus $X \congwt Y$. 

Conversely, suppose $X \congwt Y$, and for each $\e > 0$ let $Z(\e)$ be a set with surjective maps $\phi^\e_X:Z(\e) \r X$, $\phi^\e_Y : Z \r Y$ such that $|\w_X(\phi_X(z),\phi_X(z')) - \w_Y(\phi_Y(z),\phi_Y(z'))| < \e$ for all $z,z'\in Z(\e)$. For each $\e > 0$, let $R(\e) = \set{(\phi^\e_X(z),\phi^\e_Y(z)): z \in Z(\e)}$. Then $R(\e) \in \Rsc(X,Y)$ for each $\e > 0$, and $\dis(R(\e))= \sup_{z,z' \in Z}|\w_X(\phi_X(z),\phi_X(z')) - \w_Y(\phi_Y(z),\phi_Y(z'))| < \e$.

We conclude that $\dn(X,Y) = 0$. Thus $\dn$ is a metric modulo Type II weak isomorphism.
\end{proof}

\subsubsection{Proofs related to Theorem \ref{thm:skel-terminal}}
\label{sec:pf-networks-skel-terminal}

The following lemma summarizes useful facts about weight preserving maps and the relation $\sim$. 

\begin{lemma}\label{lem:wp-maps} Let $(X,\w_X), (Y,\w_Y) \in \Ngen$, and let $f: X \r Y$ be a weight preserving surjection. Then,
\begin{enumerate}
\item $f$ preserves equivalence classes of $\sim$, i.e. $x\sim x'$ for $x,x'\in X$ iff $f(x) \sim f(x')$. 
\item $f$ preserves weights between equivalence classes, i.e. $\w_{X/\sim}([x],[x']) = \w_{Y/\sim}([f(x)],[f(x')])$ for any $[x],[x'] \in X/\sim$. 
\end{enumerate}
\end{lemma}

\begin{proof}[Proof of Lemma \ref{lem:wp-maps}]

For the first assertion, let $x\sim x'$ for some $x,x'\in X$. We wish to show $f(x) \sim f(x')$. Let $y\in Y$, and write $y=f(z)$ for some $z\in X$. Then,
\[\w_Y(f(x),y) = \w_Y(f(x),f(z)) = \w_X(x,z) = \w_X(x',z) = \w_Y(f(x'),f(z)) = \w_Y(f(x'),y).\]
Similarly we have $\w_Y(y,f(x)) = \w_Y(y,f(x'))$ for any $y\in Y$. Thus $f(x) \sim f(x')$.

Conversely suppose $f(x)\sim f(x')$. Let $z\in X$. Then,
\[\w_X(x,z) = \w_Y(f(x),f(z)) = \w_Y(f(x'),f(z)) = \w_X(x',z),\]
and similarly we get $\w_X(z,x) = \w_X(z,x')$. Thus $x\sim x'$. This proves the first assertion.

The second assertion holds by definition:
\[\w_{Y/\sim}([f(x)],[f(x')]) = \w_Y(f(x),f(x')) = \w_X(x,x') = \w_{X/\sim}([x],[x']).\qedhere\]
\end{proof}

The following proposition shows that skeletons inherit the property of weak coherence (cf. Definition \ref{defn:coherence-weak}).

\begin{proposition}\label{prop:skel-cohrt} Let $(X,\w_X)$ be a compact network with a weakly coherent topology. The quotient topology on $(\sk(X),\w_{\sk(X)})$ is also weakly coherent. 
\end{proposition}

\begin{proof}[Proof of Proposition \ref{prop:skel-cohrt}]

Let $([x_n])_n$ be a sequence in $\sk(X)$ converging to some $[x] \in \sk(X)$, and let $U$ be an open set in $X$ containing $x$. By Proposition \ref{prop:quotient-open} we know $V:= \sigma(U)$ is open. Thus all but finitely many of the terms of $([x_n])_n$ are inside $V$, and so all but finitely many of the terms of $(x_n)_n$ are inside $U$. Thus $x_n \to x$. By weak coherence, this means $\d_X(x_n,x) \to 0$. Because $\sigma$ preserves weights by construction, we then have $\d_{\sk(X)}([x_n],[x]) \to 0$. 

This completes one direction of the proof. For the other direction, suppose we have $\d_{\sk(X)}([x_n],[x]) \to 0$. We need to show $[x_n] \to [x]$ in $\sk(X)$. 

Let $V$ be an open set in $\sk(X)$ containing $[x]$, and denote $U:= \sigma\inv(V)$. Because $\sigma$ preserves weights, we know  $\d_X(x_n,x) \to 0$.
Then by coherence of $X$, we have $x_n \to x$. Thus $(x_n)_n$ is eventually inside $U$, and so $([x_n])_n$ is eventually inside $V$. Thus $[x_n] \to [x]$. This concludes the proof. \end{proof}

In addition to weak coherence, the skeleton has the following useful property.

\begin{proposition}\label{prop:skel-hdrf} Let $(X,\w_X)$ be a compact network with a weakly coherent topology. Then $(\sk(X),\w_{\sk(X)})$ is Hausdorff.
\end{proposition}

\begin{proof}[Proof of Proposition \ref{prop:skel-hdrf}] 
Let $[x]\neq [x'] \in \sk(X)$. By first countability, we take a countable open neighborhood base $\set{U_n : n \in \N}$ of $[x]$ such that $U_1 \supseteq U_2 \supseteq U_3 \ldots $ (if necessary, we replace $U_n$ by $\cap_{i=1}^n U_i$). Similarly, we take a countable open neighborhood base $\set{V_n : n \in \N}$ of $[x']$ such that $V_1 \supseteq V_2 \supseteq V_3 \ldots $. To show that $\sk(X)$ is Hausdorff, it suffices to show that there exists $n \in \N$ such that $U_n \cap V_n = \emptyset$. 

Towards a contradiction, suppose $U_n \cap V_n \neq \emptyset$ for each $n \in \N$. For each $n\in \N$, let $[y_n] \in U_n \cap V_n$. Any open set containing $[x]$ contains $U_N$ for some $N \in \N$, and thus contains $[y_n]$ for all $n \geq N$. Thus $[y_n] \r [x]$. Similarly, $[y_n] \r [x']$. Because $\sk(X)$ has a weakly coherent topology (Proposition \ref{prop:skel-cohrt}), we then have:
\begin{align*}
\| \w_{\sk(X)}([x],\cdot) -  \w_{\sk(X)}([x'],\cdot) \|_\infty \leq 
& \|  \w_{\sk(X)}([x],\cdot) -  \w_{\sk(X)}([y_n],\cdot) \|_\infty \\
& \quad + \|  \w_{\sk(X)}([y_n],\cdot) -  \w_{\sk(X)}([x'],\cdot) \|_\infty \to 0.
\end{align*}
Thus $ \w_{\sk(X)}([x],\cdot) =  \w_{\sk(X)}([x'],\cdot)$ and similarly $ \w_{\sk(X)}(\cdot,[x]) =  \w_{\sk(X)}(\cdot,[x'])$. 
But then $x \sim x'$ and so $[x]=[x']$, a contradiction. \end{proof}

We are now ready to prove that skeletons are terminal, in the sense of Definition \ref{defn:terminal} (also recall Definitions \ref{defn:automorph} and \ref{defn:poset}).

\skelterminal*

\begin{proof}[Proof of Theorem \ref{thm:skel-terminal}]
Let $Y\in \mf{p}(X)$. Let $f:X \r Y$ be a weight preserving surjection. We first prove that there exists a weight preserving surjection $g:Y \r \skel(X)$. 

Since $f$ is surjective, for each $y\in Y$ we can write $y=f(x_y)$ for some $x_y\in X$. Then define $g:Y \r \skel(X)$ by $g(y):=[x_y]$. 

To see that $g$ is surjective, let $[x]\in \skel(X)$. Write $y = f(x)$. Then there exists $x_y \in X$ such that $f(x_y) = y$ and $g(y) = [x_y]$. Since $f$ preserves equivalence classes (Lemma \ref{lem:wp-maps}) and $f(x_y) = f(x)$, we have $x\sim x_y$. Thus $[x_y] = [x]$, and so $g(y) = [x]$. 

To see that $g$ preserves weights, let $y,y'\in Y$. Then,
\[\w_Y(y,y') = \w_Y(f(x_y),f(x_{y'})) = \w_X(x_y,x_{y'})  = \w_{\sk(X)}([x_y],[x_{y'}]) = \w_{\sk(X)}(g(y),g(y')).\]
This proves that the skeleton satisfies the first condition for being terminal. 

Next suppose $g: Y \r \skel(X)$ and $h:Y \r \skel(X)$ are two weight preserving surjections. We wish to show $h=\psi\circ g$ for some $\psi \in \Aut(\skel(X))$. 

For each $[x]\in \skel(X)$, we use the surjectivity of $g$ to pick $y_x\in Y$ such that $g(y_x) = [x]$. Then we define $\psi: \skel(X) \r \skel(X)$ by $\psi([x]) = \psi(g(y_x)):=h(y_x)$.

To see that $\psi$ is surjective, let $[x] \in \skel(X)$. Since $h$ is surjective, there exists $y'_x \in Y$ such that $h(y'_x) = [x]$. Write $[u]=g(y_x')$. We have already chosen $y_u$ such that $g(y_u) = [u]$. Since $g$ preserves equivalence classes (Lemma \ref{lem:wp-maps}), it follows that $y'_x \sim y_u$. Then,
\[\psi([u]) = \psi(g(y_u)) = h(y_u) = h(y'_x) = [x],\]
where the second-to-last equality holds because $h$ preserves equivalence classes (Lemma \ref{lem:wp-maps}).

To see that $\psi$ is injective, let $[x],[x'] \in \skel(X)$ be such that $\psi([x]) =h(y_x) = h(y_{x'})= \psi([x'])$. Since $h$ preserves equivalence classes (Lemma \ref{lem:wp-maps}), we have $y_x \sim y_{x'}$. Next, $g(y_x) = [x]$ and $g(y_{x'}) = [x']$ by the choices we made earlier. Since $y_x \sim y_{x'}$ and $g$ preserves clusters, we have $g(y_x) \sim g(y_x')$. Thus $[x] = [x']$. 

Next we wish to show that $\psi$ preserves weights. Let $[x],[x'] \in \skel(X)$. Then,
\begin{align*}
\w_{\sk(X)}(\psi([x]),\psi([x'])) = \w_{\sk(X)}(h(y_x),h(y_{x'})) = \w_Y(y_x,y_{x'}) &= \w_{\sk(X)}(g(y_x),g(y_{x'}))\\
& = \w_{\sk(X)}([x],[x']).
\end{align*}

Thus $\psi$ is a bijective, weight preserving automorphism of $\skel(X)$. Finally we wish to show that $h=\psi\circ g$. Let $y\in Y$, and write $g(y) = [x]$ for some $x \in X$. Since $g$ preserves equivalence classes (Lemma \ref{lem:wp-maps}), we have $y\sim y_x$, where $g(y_x) = [x]$. Then,
\[\psi(g(y)) = \psi([x]) = \psi(g(y_x)) = h(y_x) = h(y),\]
where the last equality holds because $h$ preserves equivalence classes (Lemma \ref{lem:wp-maps}). Thus for each $y\in Y$, we have $h(y) = \psi(g(y))$. This shows that the skeleton satisfies the second condition for being terminal. We conclude the proof. \end{proof}

\subsubsection{Proof of Example \ref{prop:perm}}
\label{sec:pf-perm}
\perm*

\begin{proof}[Proof of Example \ref{prop:perm}]
We start with some notation: for $x,x' \in X$, $y,y' \in Y$, let \[\Ga(x,x',y,y') = |\w_X(x,x')-\w_Y(y,y')|.\]  

Let $\ph: X \r Y$ be a bijection. Note that $R_\ph := \set{(x,\ph(x)) : x \in X}$ is a correspondence, and this holds for any bijection (actually any surjection) $\ph$. Since we minimize over all correspondences for $\dn$, we conclude $\dn(X,Y) \leq \dnh(X,Y)$. 

For the reverse inequality, we represent all the elements of $\Rsc(X,Y)$ as 2-by-2 binary matrices $R$, where a 1 in position $ij$ means $(x_i,y_j) \in R$. Denote the matrix representation of each $R\in \Rsc(X,Y)$ by $\text{mat}(R)$, and the collection of such matrices as $\text{mat}(\Rsc)$. Then we have:
\[\text{mat}(\Rsc) = \set{\mattwo{1}{a}{b}{1} : a, b \in \set{0,1}} \cup 
\set{\mattwo{a}{1}{1}{b} : a,b \in \set{0,1}} \]

Let $A = \set{(x_1,y_1), (x_2,y_2)}$ (in matrix notation, this is $\mattwo{1}{0}{0}{1}$) and let $B = \set{(x_1,y_2),(x_2,y_1)}$ (in matrix notation, this is $\mattwo{0}{1}{1}{0}$). Let $R \in \Rsc(X,Y)$. Note that either $A \subseteq R$ or $B \subseteq R$. Suppose that $A \subseteq R$. 
Then we have: 
\[\max_{(x,y),(x',y')\in A}\Ga(x,x',y,y') \leq \max_{(x,y),(x',y')\in R}\Ga(x,x',y,y')\]
Let $\Om(A)$ denote the quantity on the left hand side. 
A similar result holds in the case $B \subseteq R$:
\[\max_{(x,y),(x',y')\in B}\Ga(x,x',y,y') \leq \max_{(x,y),(x',y')\in R}\Ga(x,x',y,y')\]
Let $\Om(B)$ denote the quantity on the left hand side. Since either $A \subseteq R$ or $B \subseteq R$, we have 
\[\min\set{\Om(A),\Om(B)} \leq \min_{R \in \Rsc}\max_{(x,y),(x',y')\in R}\Ga(x,x',y,y')\]
We may identify $A$ with the bijection given by $x_1 \mapsto y_1$ and $x_2 \mapsto y_2$. Similarly we may identify $B$ with the bijection sending $x_1 \mapsto y_2$, $x_2 \mapsto y_1$. Thus we have
\[
\min_{\ph}\max_{x,x'\in X}\Ga(x,x',\ph(x),\ph(x')) 
 \leq \min_{R \in \Rsc}\max_{(x,y),(x',y')\in R}\Ga(x,x',y,y').
 \]
So we have $\dnh(X,Y) \leq \dn(X,Y)$. Thus $\dnh = \dn$. 

Next, let $\set{p,q}$ and $\set{p',q'}$ denote the vertex sets of $X$ and $Y$. Consider the bijection $\ph$ given by $p \mapsto p'$, $q \mapsto q'$ and the bijection $\psi$ given by $p \mapsto q'$, $q \mapsto p'$. Note that the weight matrix is determined by setting $\w_X(p,p) = \a,\ \w_X(p,q) = \d,\ \w_X(q,p) = \b$, and $\w_X(q,q) = \g$, and similarly for $Y$. Then we get $\dis(\ph) = \max\left(|\a-\a'|,|\b-\b'|,|\g-\g'|,|\d-\d'|\right)$ and $\dis(\psi) = \max(\left(|\a-\g'|,|\g-\a'|,|\d-\b'|,|\b-\d'|\right)$. The formula follows immediately. \end{proof}

\subsubsection{Proof of Proposition \ref{prop:bijection}}
\label{sec:pf-networks-bijection}
\bijection*

\begin{proof}[Proof of Proposition \ref{prop:bijection}]
We begin with an observation. Given $X,Y \in \Ngen$, let $X', Y' \in \Ngen$ be such that $X \congwt X'$, $Y\congwt Y'$, and $|X'| = |Y'|$. Because $\dn$ is a metric on $\Ngen$ modulo Type II weak isomorphism (Theorem \ref{thm:dN1}), we have:
\[\dn(X,Y) \leq \dn(X,X') + \dn(X',Y') + \dn(Y',Y) = \dn(X',Y') \leq \dnh(X',Y'),\]
where the last inequality follows from Remark \ref{remark:bijection}.

Next let $\eta > \dn(X,Y)$, and let $R \in \Rsc(X,Y)$ be such that $\dis(R) < 2\eta$. We wish to find networks $X'$ and $Y'$ such that $\dnh(X',Y') < \eta$. Write $Z = X \times  Y$, and write $f: Z \r X$ and $g : Z \r Y$ to denote the (surjective) projection maps $(x,y) \mapsto x$ and $(x,y) \mapsto y$. Notice that we may write $R = \set{(f(z),g(z)) : z \in R \subseteq Z}.$ In particular, by the definition of a correspondence, the restrictions of $f, g$ to $R$ are still surjective. Also notice that $Z$ is a finite product of first countable spaces and is hence first countable. We equip $R$ with the subspace topology, which is also first countable.

Define two weight functions $\fst\w, \gst\w : Z\times Z \r \R$ by $\fst\w(z,z') = \w_X(f(z),f(z'))$ and $\gst\w(z,z') = \w_Y(g(z),g(z'))$. Note that $\fst\w$ is the composition of $\w_X$ with the continuous projection $(x,y,x',y')\mapsto(x,x')$, and is thus continuous. The restriction of a continuous function to a subspace is continuous, and so $\fst\w$ restricted to $R\times R$ is continuous. We note also that the subspace topology on $R\times R$ induced by $Z\times Z$ is just the product topology on $R\times R$. Similarly, $\gst\w$ restricted to $R\times R$ is continuous. We now abuse notation slightly to write $\fst\w,\gst\w$ to mean their restrictions to $R\times R$. Let $(U,\w_U) = (R, \fst\w)$ and let $(V,\w_V) = (R,\gst\w)$. Then $U,V \in \Ngen$.
Note that $\dn(X,U) = 0$ by Remark \ref{rem:surj}, because $|U| \geq |X|$ and for all $z,z' \in U$, we have $\w_U(z,z') = \fst\w(z,z') = \w_X(f(z),f(z'))$ for the surjective map $f$. Similarly $\dn(Y,V) = 0$. 

Next let $\ph:U \r V$ be the bijection $z \mapsto z$. Then we have:
\begin{align*}
\sup_{z,z' \in U}|\w_U(z,z') - \w_V(\ph(z),\ph(z'))| &= 
\sup_{z,z' \in U}|\w_U(z,z') - \w_V(z,z')|\\
&= \sup_{z,z' \in R}|\w_X(f(z),f(z')) - \w_Y(g(z),g(z'))|\\
&= \sup_{(x,y), (x',y') \in R}|\w_X(x,x') - \w_Y(y,y')|\\
&= \dis(R)\text{. In particular,}\\
\inf_{\ph: U \r V \;\text{bijection}}\dis(\ph) &\leq \dis(R).\\
\end{align*}

So there exist networks $U, V$ with the same node set (and thus the same cardinality) such that $\dnh(U,V) \leq \frac{1}{2}\dis(R) < \eta$. We have already shown that $\dn(X,Y) \leq \dnh(U,V)$. Since $\eta > \dn(X,Y)$ was arbitrary, it follows that we have:
\[\dn(X,Y) = \inf\set{\dnh(X',Y') : X' \congwt  X, Y' \congwt  Y, \text{ and }|X'| = |Y'|}.\qedhere\]
\end{proof}

\subsubsection{Proof of Theorem \ref{thm:dis-cpt-sisom}}
\label{sec:pf-dis-cpt-sisom}
\discptsisom*

\begin{proof}[Proof of Theorem \ref{thm:dis-cpt-sisom}] By Theorem \ref{thm:weak-isom-cpt}, we know that $X$ and $Y$ are Type I weakly isomorphic. So there exists a set $V$ with surjections $\ph_X: V \r X$, $\ph_Y:V \r Y$ such that $A_X(\ph_X(v),\ph_X(v')) = A_Y(\ph_Y(v),\ph_Y(v'))$ for all $v,v' \in V$. Thus we obtain (not necessarily unique) maps $f: X \r Y$ and $g: Y \r X$ that are weight-preserving. Hence the composition $g\circ f : X \r X$ is a weight-preserving map. Without loss of generality, assume that $X$ has a $\Psi$-relaxed triangle inequality. Recall that this means that there exists a continuous function $\Psi: \R_+ \times \R_+ \r \R_+$ such that $\Psi(0,0)=0$ and $A_X(x,x') \leq \Psi(A_X(x,x''),A_X(x',x''))$ for all $x,x',x'' \in X$.

It is known that an isometric embedding from a compact metric space into itself must be bijective \cite[Theorem 1.6.14]{burago}. We now prove a similar result using the assumptions of our theorem. Let $h: X \r X$ be a weight-preserving map. By the assumption of a dissimilarity network, we know that $f,g$ and $h$ are injective. 

We check that $h$ is continuous, using the assumptions about the topology on $X$. Let $V\subseteq h(X)$ be open. Define $U:=h\inv[V]$. We claim that $U$ is open. Let $x \in U$, and consider $h(x) \in V$. Since $V$ is open and the forward balls form a base for the topology, we pick $\e > 0$ such that $B^+(h(x),\e) \subseteq V$. Now let $x' \in B^+(x,\e)$. Then $A_X(h(x),h(x')) = A_X(x,x') < \e$, so $h(x') \in B^+(h(x),\e) \subseteq V$. Hence $x' \in U$. It follows that $B^+(x,\e) \subseteq U$. Hence $U$ is open, and $h$ is continuous.

Next we check that $X$ is Hausdorff, using the $\Psi_X$-relaxed triangle inequality assumption. Let $x,x' \in X$, where $x\neq x'$. Using continuity of $\Psi_X$, let $\e>0$ be such that $\Psi([0,\e),[0,\e)) \subseteq [0,A_X(x,x'))$. We wish to show that $B^+(x,\e) \cap B^+(x',\e) = \emptyset.$ Towards a contradiction, suppose this is not the case and let $z \in B^+(x,\e) \cap B^+(x',\e)$. But then $A_X(x,x') \leq \Psi\big( A_X(x,z), A_X(x',z) \big) < A_X(x,x'),$ a contradiction. It follows that $X$ is Hausdorff.

Now $h(X)$ is compact, being the continuous image of a compact space, and it is closed in $X$ because it is a compact subset of a Hausdorff space. 

Finally we show that $h$ is surjective. Towards a contradiction, suppose that the open set $X\setminus h(X)$ is nonempty, and let $x \in X\setminus h(X)$. Using the topology assumption on $X$, pick $\e > 0 $ such that $B^+(x,\e) \subseteq X\setminus h(X)$. Define $x_0:= x$, and $x_n := h(x_{n-1})$ for each $n \in \N$. Then for each $n\in \N$, we have $A_X(x_0,x_n) \geq \e$. Since $h$ is weight-preserving, we also have $A_X(x_k,x_{k+n}) \geq \e$ for all $k,n  \in \N$. Since $X$ is sequentially compact, the sequence $(x_k)_{k \geq 0}$ has a convergent subsequence $(x_{k_j})_{j\in \N}$ that limits to some $z\in X$. 
Thus $B^+(z,r)$ contains all but finitely many terms of this sequence, for any $r > 0$. Now for any $m,n \in \N$ we observe:
\begin{align*}
A_X(x_m,x_{n}) \leq \Psi\big( A_X(x_m,z),A_X(x_{n},z) \big), \text{ where } 
 A_X(x_m,z) &\leq \Psi\big(A_X(x_m,x_m),A_X(z,x_m) \big)\\
 &= \Psi\big(0,A_X(z,x_m)\big),\\
\text{ and similarly } 
 A_X(x_{n},z) &\leq \Psi\big(0,A_X(z,x_{n}) \big).
\end{align*}
Since $\Psi$ is continuous and vanishes at $(0,0)$, we choose $\d>0$ such that $\Psi([0,\d),[0,\d)) \subseteq [0,\e)$. We also choose $\eta>0$ such that $\Psi(0,[0,\eta)) \subseteq [0,\d)$. Since $B^+(z,\eta)$ contains all but finitely many terms of the sequence $(x_{k_j})_{j \geq \N}$, we pick $N \in \N$ so that $x_{k_m} \in B^+(z,\eta)$, for all $m \geq N$. Let $m,n \geq N$. Then $A_X(z,x_{k_n}) < \eta$ and $A_X(z,x_{k_m}) < \eta$. Thus $A_X(x_{k_n},z) \leq \Psi(0,A_X(z,x_{k_n})) < \d$ and $A_X(x_{k_m},z) \leq \Psi(0,A_X(z,x_{k_m})) < \d$. It follows that $A_X(x_{k_m},x_{k_n}) < \e$. 

But this is a contradiction to what we have shown before. Thus $h$ is surjective, hence bijective. Since $h$ was an arbitrary weight-preserving map from $X$ into itself, the same result holds for $g\circ f: X \r X$. This shows that $g$ is surjective. It follows that $X \cong^s Y$.\end{proof}

\section{Metric structure of networks}
\label{sec:metrics}

In this section we first characterize the weak isomorphism class of $\Ncom$ via a notion of tripods. Viewing $(\Ncom/\sim,\dn)$ as a bona fide metric space, we then recover results on completeness, geodesics, and compact families as can be described for the collection of compact metric spaces equipped with $\dgh$.

With a slight abuse of notation, $\dn:\Ngen/\congwt \times \Ngen/\congwt \r \R_+$ is defined as follows:
\[\dn([X],[Y]) := \dn(X,Y),\quad\text{for each }[X],[Y]\in \Ngen/\congwt.\]

To check that $\dn$ is well-defined on $[X],[Y] \in \Ngen/\congwt$, let $X' \in [X], Y' \in [Y]$. Then:
\[\dn([X'],[Y'])=\dn(X',Y')= \dn(X,Y)=\dn([X],[Y]),\]
where the second-to-last equality follows from the triangle inequality and the observation that $\dn(X,X') = \dn(Y,Y') = 0$.

Our main result is that the two types of weak isomorphism coincide in the setting of compact networks. As a stepping stone towards proving this result, we explore the notion of ``sampling" finite networks from compact networks.

\subsection{Compact networks and finite sampling}
\label{sec:cpt-finite-samp}
In this section, we prove that any compact network admits an approximation by a finite network up to arbitrary precision, in the sense of $\dn$.

\begin{example}[Some compact and noncompact networks] The nonreversible and finitely reversible directed circles in Section \ref{sec:dir-s1} are examples of noncompact and compact asymmetric networks, respectively. The finitely reversible circle $(\us^1,\rus)$ can be ``approximated" up to arbitrary precision by picking $n$ equidistant points on $\us^1$ and equipping this collection with the restriction of $\rus$. We view this process as ``sampling" finite networks from a compact network. In the next result, we present this sampling process as a theorem that applies to \emph{any} compact network.
\end{example}

\begin{definition}[$\e$-systems]\label{def:systems} 
Let $\e > 0$. For any network $(X,\w_X)$, an \emph{$\e$-system on $X$} is a finite open cover $\mc{U} = \set{U_1,\ldots, U_n}, n \in \N$, of $X$ such that for any $1\leq i,j \leq n$, we have $\w_X(U_i,U_j) \subseteq B(r_{ij},\e)$ for some $r_{ij} \in \R$.

In some cases, we will be interested in the situation where $X$ is a finite union of connected components $\set{X_1,\ldots, X_n}, n\in \N$. By a \emph{refined $\e$-system}, we will mean an $\e$-system such that each element of the $\e$-system is contained in precisely one connected component of $X$.
\end{definition}

The next result shows that by sampling points from all the elements of an $\e$-system, one obtains a finite, quantitatively good approximation to the underlying network.

\begin{restatable}[Sampling from a compact network]{theorem}{sampling}
\label{thm:sampling}
Let $(X,\w_X)$ be a compact network. Then for any $\e>0$,
\begin{itemize}
\item there exists a refined $\e/4$-system $\mathscr{G}$ on $X$
\item for $X'$ any finite subset of $X$ having nonempty intersection with each element in a refined $\e/4$ system $\mathscr{G}$, one has 
\[\dn((X,\w_X),(X',\w_X|_{X'\times X'})) < \e.\]
\end{itemize}
In particular, one can always choose a finite subset $X' \subseteq X$ such that $\dn(X,X') < \e$.
\end{restatable}

The proof of this result is in Section \ref{sec:pf-metrics-sampling}.

\begin{remark}[Compact metric spaces admit $\e$-systems]\label{rem:e-net} When considering a compact metric space $(X,d_X)$, the preceding theorem relates to the well-known notion of taking finite $\e$-nets in a metric space. Recall that for $\e >0$, a subset $S \subseteq X$ is an $\e$-net if for any point $x\in X$, we have $B(x,\e)\cap S \neq \emptyset$. Such an $\e$-net satisfies the nice property that $\dgh(X,S) < \e$ \cite[7.3.11]{burago}. In particular, one can find a finite $\e$-net of $(X,d_X)$ for any $\e > 0$ by compactness.

Moreover, one may always find an $\e$-system for a compact metric space, for any $\e>0$. To see this, let $(X,d_X)$ be a compact metric space and let $\e>0$. Consider the open cover $\{B(x,\e/4) : x \in X\}$. By compactness, we can take a finite subcover $\mc{U}= \{B(x_i,\e/4) : 1\leq i \leq n\}$ for some $n \in \N$. Now fix $1\leq i,j \leq n$. Let $x \in B(x_i,\e/4), x' \in B(x_j,\e/4)$. Then by triangle inequality, one has:
\begin{align*}
    |d_X(x,x') - d_X(x_i,x_j)| \leq d_X(x,x_i) + d_X(x_j,x') < \e/2.
\end{align*}
Consequently $d_X(x,x') \in B(d_X(x_i,x_j),\e)$. Since $x,x'$ were arbitrary, we have $d_X(B(x_i,\e/4), B(x_j,\e/4)) \subseteq B(d_X(x_i,x_j),\e)$. Setting $r_{ij} = d_X(x_i,x_j)$ now recovers the property of being an $\e$-system. Theorem \ref{thm:sampling} can be viewed as showing that this property of compact metric spaces can be obtained (albeit with more difficulty) as a consequence of compactness and continuity, rather than metric axioms.

Observe that we do not make quantitative estimates on the cardinality of the $\e$-approximation produced in Theorem \ref{thm:sampling}. In the setting of compact metric spaces, the size of an $\e$-net relates to the rich theory of metric entropy developed by Kolmogorov and Tihomirov \cite[Chapter 17]{edgar1993classics}.
\end{remark}

By virtue of Theorem \ref{thm:sampling}, one can always approximate a compact network up to any given precision. The next result implies that a sampled network limits to the underlying compact network as the sample gets more and more dense.

\begin{corollary}[Limit of dense sampling]\label{cor:dense-sampling} Let $(X,\w_X)$ be a compact network, and let $S = \set{s_1,s_2,\ldots}$ be a countable dense subset of $X$ with a fixed enumeration. For each $n\in \N$, let $X_n$ be the finite network with node set $\set{s_1,\ldots, s_n}$ and weight function $\w_X|_{X_n \times X_n}$. Then we have:
\[\dn(X,X_n) \downarrow 0 \text{ as } n \r \infty.\]
\end{corollary}

\begin{proof}[Proof of Corollary \ref{cor:dense-sampling}] Using the first statement of Theorem \ref{thm:sampling}, let $\mathscr{G}=\{G_1,\ldots, G_q\}$ be a refined $\e/4$-system on $X$. By density of $S$, choose $p(i)\in \N$ for $1\leq i \leq q$ such that $s_{p(i)} \in G_i$ for each $i$. Then define 
\[n:=\max\set{p(1),p(2),\ldots, p(q)}.\]
Now define $X_n$ to be the network with node set $\set{s_1,s_2,\ldots, s_n}$ and weight function given by the appropriate restriction of $\w_X$. Then by the second assertion of Theorem \ref{thm:sampling}, one has $\dn(X,X_n) \leq \e$. It follows from the construction that $\dn(X,X_n) \downarrow 0$ as $n\to \infty$. \end{proof}

\subsection{Weak isomorphism in compact networks and characterization via motifs}
\label{sec:cpt-weak-isom}

By Theorem \ref{thm:dN1}, $\dn$ is a proper metric on $\Ngen$ modulo Type II weak isomorphism, which is equivalent to Type I weak isomorphism when restricted to $\Ncal$. The comparison between $\Q \cap [0,1]$ and $[0,1]$ in Example \ref{ex:non-opt-corr} shows that in general, these two notions of weak isomorphism are not equivalent. This leads to the following natural question: when restricted to $\Ncom$, are we still able to recover equivalence between Type I and Type II weak isomorphism?

In the following theorem, we provide a positive answer to this question.

\begin{restatable}[Weak isomorphism in $\Ncom$]{theorem}{weakisomcpt}
\label{thm:weak-isom-cpt} 
Let $X,Y \in \Ncom$. Then $X$ and $Y$ are Type II weakly isomorphic if and only if $X$ and $Y$ are Type I weakly isomorphic, i.e. there exists a set $V$ and surjections $\ph_X:V \r X,\; \ph_Y:V \r Y$ such that:
\[\w_X(\ph_X(v),\ph_X(v')) = \w_Y(\ph_Y(v),\ph_Y(v')) \quad
\text{for all } v,v'\in V.\]
\end{restatable}

The proof of this result is in Section \ref{sec:pf-metrics-weak-isom-cpt}.
As a consequence of Theorem \ref{thm:weak-isom-cpt}, we see that weak isomorphisms of Types I and II coincide in the setting of $\Ncom$. Thus we recover a desirable notion of equivalence in the setting of compact networks.
By imposing slightly more control over topology using the notion of weak coherence, we obtain the next result, which roughly states that weakly isomorphic compact networks have strongly isomorphic skeleta. 

\begin{restatable}{theorem}{cptsisom}
\label{thm:cpt-sisom} Suppose $(X,\w_X), (Y,\w_Y)$  are separable, compact networks with weakly coherent topologies. Then the following are equivalent:
\begin{enumerate}
\item $X \cong^w Y$. \label{item:cpt-wisom}
\item $\M_n(X) = \M_n(Y)$ for all $n \in \N$.\label{item:cpt-motifs}
\item $\sk(X) \cong^s \sk(Y)$.\label{item:cpt-skel}
\end{enumerate}
\end{restatable}

The proof of this result is in Section \ref{sec:pf-motifs}.
Assertion (\ref{item:cpt-skel}) implies (\ref{item:cpt-wisom}) immediately, but motif sets form the crucial tool required to pass from (\ref{item:cpt-wisom}) to (\ref{item:cpt-skel}). When proving this result, we will use a property of motif sets called \emph{stability} to verify that (\ref{item:cpt-wisom}) implies (\ref{item:cpt-motifs}). This property of stability is later contextualized and proved in Theorem \ref{thm:motif-stab}.

To contrast this result with the metric setting, note that compact metric spaces are automatically separable and weakly coherent, and that weak and strong notions of isomorphism coincide for compact metric spaces. The directed circles with finite reversibility (Section \ref{sec:dir-s1}) form a family of non-metric spaces that satisfy the assumptions of Theorem \ref{thm:cpt-sisom}.

\subsection{The completeness of $\Ncomsim$}
\label{sec:completeness}
A very natural question regarding $\Ncomsim$ is if it is complete. This indeed turns out to be the case. The proof of the following result is in Section \ref{sec:pf-metrics-complete}.

\begin{restatable}[]{theorem}{complete}
\label{thm:complete} 
$(\Ncom/{\cong^w},\dn)$ is a complete metric space containing $(\Ncal/{\cong^w},\dn)$.
\end{restatable}

The result of Theorem \ref{thm:complete} can be summarized as follows:
\begin{center}
\textit{A convergent sequence of finite networks limits to a compact topological space with a continuous weight function.}
\end{center}

\begin{remark} The technique of composed correspondences used in the proof of Theorem \ref{thm:complete} can also be used to show that the collection of isometry classes of compact metric spaces endowed with the Gromov-Hausdorff distance is a complete metric space. Standard proofs of this fact \cite[\S 10]{petersen2006riemannian} do not use correspondences, relying instead on a method of endowing metrics on disjoint unions of spaces and then computing Hausdorff distances.
\end{remark}

Completeness of $\Ncomsim$ gives us a first useful criterion for convergence of networks. Ideally, we would also want a criterion for convergence along the lines of sequential compactness. In the setting of compact metric spaces, Gromov's Precompactness Theorem implies that the topology induced by the Gromov-Hausdorff distance admits many \emph{precompact families} of compact metric spaces (i.e. collections whose closure is compact) \cite{gromov1981structures, burago, petersen2006riemannian}. Any sequence in such a precompact family has a subsequence converging to some limit point of the family. In the next section, we extend these results to the setting of networks. Namely, we show that that there are many families of compact networks that are precompact under the metric topology induced by $\dn$.

\subsection{Precompact families in $\Ncomsim$}
We begin this section with some definitions.

\begin{definition}[Diameter for networks, \cite{dn-part1}] For any network $(X,\w_X)$, define $\diam(X):= \sup_{x,x'\in X}|\w_X(x,x')|$. For compact networks, the $\sup$ is replaced by $\max$. 
\end{definition}

\begin{definition} A family $\mathcal{F}$ of weak isomorphism classes of compact networks is \emph{uniformly approximable} if: (1) there exists $D \geq 0$ such that for every $[X] \in \mathcal{F}$, we have $\diam(X) \leq D$, and (2) for every $\e > 0$, there exists $N(\e) \in \N$ such that for each $[X] \in \mathcal{F}$, there exists a finite network $Y$ satisfying $|Y| \leq N(\e)$ and $\dn(X,Y) < \e$. 
\end{definition}

\begin{remark} The preceding definition is an analogue of the definition of \emph{uniformly totally bounded} families of compact metric spaces \cite[Definition 7.4.13]{burago}, which is used in formulating the precompactness result in the metric space setting. A family of compact metric spaces is said to be uniformly totally bounded if there exists $D \in \R_+$ such that each space has diameter bounded above by $D$, and for any $\e >0$ there exists $N_\e \in \N$ such that each space in the family has an $\e$-net with cardinality bounded above by $N_\e$. Recall that given a metric space $(X,d_X)$ and $\e >0$, a subset $S \subseteq X$ is an $\e$-net if for any point $x\in X$, we have $B(x,\e)\cap S \neq \emptyset$. Such an $\e$-net satisfies the nice property that $\dgh(X,S) < \e$ \cite[7.3.11]{burago}. Thus an $\e$-net is an $\e$-approximation of the underlying metric space in the Gromov-Hausdorff distance.

\end{remark}

\begin{theorem}\label{thm:precompactness}
Let $\mathfrak{F}$ be a uniformly approximable family in $\Ncomsim$. Then $\mathfrak{F}$ is precompact, i.e. any sequence in $\mathfrak{F}$ contains a subsequence that converges in $\Ncomsim$.
\end{theorem}

Our proof is modeled on the proof of an analogous result for compact metric spaces proposed by Gromov \cite{gromov1981structures}. We use one external fact \cite[stability of $\diam$]{dn-part1}: for compact networks $X,Y$ such that $\dn(X,Y) < \e$, we have $\diam(X) \leq \diam(Y) + 2\e$. 

\begin{proof}[Proof of Theorem \ref{thm:precompactness}]
Let $D \geq 0$ be such that $\diam(X) \leq D$ for each $[X]\in \mathfrak{F}$. It suffices to prove that $\mathfrak{F}$ is totally bounded, because Theorem \ref{thm:complete} gives completeness, and these two properties together imply precompactness. Let $\e > 0$. We need to find a finite family $\mathcal{G}\subseteq \Ncomsim$ such that for every $[F] \in \mathfrak{F}$, there exists $[G] \in \mathcal{G}$ with $\dn(F,G) < \e$. Define:
\[\mc{A}:= \set{A \in \Ncal : |A| \leq N(\e/2),\; 
\dn(A,F) < \e/2 \text{ for some } [F]\in \mathfrak{F}}.\]

Each element of $\mathcal{A}$ is an $n\times n $ matrix, where $1\leq n \leq N(\e/2)$. For each $A\in \mathcal{A}$, there exists $[F]\in \mathfrak{F}$ with $\dn(A,F) < \e/2$, and by the fact stated above, we have $\diam(A) \leq \diam(F) + 2(\e/2) \leq D+\e$. Thus the matrices in $\mathcal{A}$ have entries in $[-D-\e,D+\e]$. Let $N \gg 1$ be such that: 
\[\frac{2D+2\e}{N} < \frac{\e}{4},\]
and write the refinement of $[-D-\e,D+\e]$ into $N$ pieces as:  
\[W := \set{-D-\e + k\left(\tfrac{2D+2\e}{N}\right) : 0\leq k \leq N}.\]

Write $\mathcal{A} = \bigsqcup_{i=1}^{N(\e/2)}\mathcal{A}_i$, where each $\mathcal{A}_i$ consists of the $i\times i$ matrices of $\mathcal{A}$. For each $i$ define:
\[\mathcal{G}_i := \set{(G_{pq})_{1\leq p,q \leq i} : G_{pq} \in W}, \text{ the $i\times i$ matrices with entries in $W$.}\]

Let $\mathcal{G} = \bigsqcup_{i=1}^{N(\e/2)}\mathcal{G}_i$ and note that this is a finite collection. Furthermore, for each $A_i \in \mathcal{A}_i$, there exists $G_i\in \mathcal{G}_i$ such that 
\[\norm{A_i - G_i}_\infty < \frac{\e}{4}.\]

Taking the diagonal correspondence between $A_i$ and $G_i$, it follows that $\dn(A_i,G_i) < \e/2$. Hence for any $[F] \in \mathfrak{F}$, there exists $A \in \mathcal{A}$ and $G \in \mathcal{G}$ such that 
\[\dn(F,G) \leq \dn(F,A) + \dn(A,G) < \e/2 + \e/2 = \e.\]
This shows that $\mathfrak{F}$ is totally bounded, and concludes the proof. \end{proof}

\subsection{Geodesic structure on $\Ncomsim$}\label{sec:geodesic}

Thus far, we have motivated our discussion of compact networks by viewing them as limiting objects of finite networks. 
By the results of the preceding section, we know that $(\Ncomsim,\dn)$ is complete and obeys a well-behaved compactness criterion. In this section, we prove that this metric space is also \emph{geodesic}, i.e. any two compact networks can be joined by a rectifiable curve with length equal to the distance between the two networks. 

Geodesic spaces can have a variety of practical implications. For example, geodesic spaces that are also complete and locally compact are \emph{proper} (i.e. any closed, bounded subset is compact), by virtue of the Hopf-Rinow theorem \cite[\S 2.5.3]{burago}. Any probability measure with finite second moment supported on such a space has a \emph{barycenter} \cite[Lemma 3.2]{ohta2012barycenters}, i.e. a ``center of mass". Conceivably, such a result can be applied to a compact, geodesically convex region of $(\Ncomsim,\dn)$ to compute an ``average" network from a collection of networks. Such a result is of interest in statistical inference, e.g. when one wishes to represent a noisy collection of networks by a single network. Similar results on barycenters of geodesic spaces can be found in \cite{gouic2017existence, leustean2016barycenters}. We leave a treatment of this topic from a probabilistic framework as future work, and only use this vignette to motivate the results in this section. 

We begin with some definitions.

\begin{definition}[Curves and geodesics]
A \emph{curve} on $\Ngen$ joining $(X,\omega_X)$ to $(Y,\omega_Y)$ is any continuous map $\gamma:[0,1]\rightarrow \Ngen$ such that $\gamma(0)=(X,\w_X)$ and $\gamma(1)=(Y,\w_Y)$. We will write \emph{a curve on $\Ncal$} (resp. \emph{a curve on $\Ncom$}) to mean that the image of $\g$ is contained in $\Ncal$ (resp. $\Ncom$). 
Such a curve is called a \emph{geodesic} \cite[\S I.1]{bridson2011metric} between $X$ and $Y$ if for all $s,t\in[0,1]$ one has: 
\[\dn(\gamma(t),\gamma(s))=|t-s|\cdot\dn(X,Y).\]
A metric space is called a \emph{geodesic space} if any two points can be connected by a geodesic.
\end{definition}

The following theorem is a useful result about geodesics:

\begin{theorem}[\cite{burago}, Theorem 2.4.16]\label{thm:mid} Let $(X,d_X)$ be a complete metric space. If for any $x,x'\in X$ there exists a midpoint $z$ such that $d_X(x,z)=d_X(z,y)=\tfrac{1}{2}d_X(x,y)$, then $X$ is geodesic. 
\end{theorem}

The proofs of the next results are in Section \ref{sec:pf-metrics-geod-inf}. As a first step towards showing that $\Ncomsim$ is geodesic, we show that the collection of finite networks forms a geodesic space. 

\begin{restatable}[]{theorem}{geodesic}
\label{thm:geodesic} 
The metric space $(\Ncalsim, \dn)$ is a geodesic space. More specifically, let $[X],[Y] \in (\Ncalsim, \dn)$. Then, for any $R\in \Rsc^{\opt}(X,Y)$, we can construct a geodesic $\g_R:[0,1] \r \Ncalsim$ between $[X]$ and $[Y]$ as follows: 
\begin{align*}
&\g_R(0):=[(X,\w_X)],\, \g_R(1):=[(Y,d_Y)], \text{ and }\g_R(t):=[(R,\w_{\g_R(t)})] \text{ for } t\in (0,1),
\intertext{ where for each $(x,y),(x',y') \in R$ and $t\in(0,1),$} 
&\hspace{0.1\textwidth} \w_{\g_R(t)}\big((x,y),(x',y')\big):=(1-t)\cdot \w_X(x,x')+t\cdot \w_Y(y,y').
\end{align*}
\end{restatable}

A key step in the proof of the preceding theorem is to choose an optimal correspondence between two finite networks. This may not be possible, in general, for compact networks. However, using the additional results on precompactness and completeness of $\Ncomsim$, we are able to obtain the desired geodesic structure in Theorem \ref{thm:geod-inf}. The proof is similar to the one used by the authors of \cite{ivanov2016gromov} to prove that the metric space of isometry classes of compact metric spaces endowed with the Gromov-Hausdorff distance is geodesic.

\begin{restatable}[]{theorem}{geodinf}
\label{thm:geod-inf} 
The complete metric space $(\Ncomsim,\dn)$ is geodesic.
\end{restatable}

\begin{remark}
Consider the collection of compact metric spaces endowed with the Gromov-Hausdorff distance. This collection can be viewed as a subspace of $(\Ncomsim,\dn)$. It is known (via a proof relying on Theorem \ref{thm:mid}) that this restricted metric space is geodesic \cite{ivanov2016gromov}. 
Furthermore, it was proved in \cite{dgh-era} that an optimal correspondence always exists in this setting, and that such a correspondence can be used to construct explicit geodesics instead of resorting to Theorem \ref{thm:mid}. 
The key technique used in \cite{dgh-era} was to take a convergent sequence of increasingly-optimal correspondences, use a result about compact metric spaces called Blaschke's theorem \cite[Theorem 7.3.8]{burago} to show that the limiting object is closed, and then use metric properties such as the Hausdorff distance to guarantee that this limiting object is indeed a correspondence. A priori, such techniques cannot be readily adapted to the network setting, and while one can obtain a convergent sequence of increasingly-optimal correspondences, the obstruction lies in showing that the limiting object is indeed a correspondence. This is why we use the indirect proof via Theorem \ref{thm:mid}. 
\end{remark}

\begin{remark}[Branching and deviant geodesics] 
It is important to note that there exist geodesics in $\Ncomsim$ that deviate from the straight-line form given by Theorem \ref{thm:geodesic}. Even in the setting of compact metric spaces, there exist infinite families of branching and deviant geodesics \cite{dgh-era}.
\end{remark}

\subsection{Proofs from Section \ref{sec:metrics}}
\label{sec:pf-metrics}

\subsubsection{Proof of Theorem \ref{thm:sampling}}
\label{sec:pf-metrics-sampling}

\sampling*
\begin{proof}[Proof of Theorem \ref{thm:sampling}] The idea is to find a cover of $X$ by open sets $G_1,\ldots, G_q$ and representatives $x_i\in G_i$ for each $1\leq i\leq q$ such that whenever we have $(x,x')\in G_i \times G_j$, we know by continuity of $\w_X$ that $|\w_X(x,x') - \w_X(x_i,x_j)|< \e/4$. Then we define a correspondence that associates each $x\in G_i$ to $x_i$, for $1\leq i \leq q$. Such a correspondence has distortion bounded above by $\e/2$.

Let $\e > 0$. Let $\mathscr{B}$ be a base for the topology on $X$. 

Let $\set{B(r,\e/4) : r\in \R}$ be an open cover for $\R$. Then by continuity of $\w_X$, we get that 
\[\set{\w_X^{-1}[B(r,\e/4)] : r\in \R}\] 
is an open cover for $X\times X$. Each open set in this cover can be written as a union of open rectangles $U\times V$, for $U,V\in \mathscr{B}$. Thus the following set is an open cover of $X\times X$:
\[\mathscr{U}:=\set{U\times V : U, V \in \mathscr{B},\; U\times V \subseteq \w_X^{-1}[B(r,\e/4)],\; r\in \R}.\]

\begin{claim}\label{cl:sampling-1} There exists a finite open cover $\mathscr{G}=\set{G_1,\ldots, G_q}$ of $X$ such that for any $1\leq i,j\leq q$, we have $G_i \times G_j \subseteq U\times V$ for some $U\times V \in \mathscr{U}$.
\end{claim}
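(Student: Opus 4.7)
The plan is to combine compactness of $X\times X$ with a combinatorial ``signature'' refinement. Since $X$ is compact so is $X\times X$, and hence the open cover $\mathscr{U}$ admits a finite subcover $\set{U_1\times V_1,\ldots,U_N\times V_N}\subseteq \mathscr{U}$. This finite subcover is not itself a product cover of $X\times X$, so I still have to manufacture an open cover of $X$ whose pairwise products refine it.

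To do so, to each $x\in X$ I assign its signature $\pi(x)\in\set{0,1}^{2N}$, defined by $\pi(x)_i = 1$ iff $x\in U_i$ and $\pi(x)_{N+i} = 1$ iff $x\in V_i$, for $1\le i\le N$. The image $E:=\pi(X)$ is a finite set (a subset of $\set{0,1}^{2N}$). For each $e\in E$ set
\[G_e \,:=\, \bigcap_{i:\,e_i=1}U_i \;\cap\; \bigcap_{i:\,e_{N+i}=1}V_i,\]
with the convention that an empty intersection equals $X$. Each $G_e$ is open as a finite intersection of open sets, and $\pi^{-1}(e)\subseteq G_e$ by construction, so $\set{G_e}_{e\in E}$ is a finite open cover of $X$.

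For the refinement property, fix $e,e'\in E$ and choose witnesses $x\in\pi^{-1}(e)$ and $y\in\pi^{-1}(e')$. Since $\set{U_i\times V_i}_{i=1}^N$ covers $X\times X$, there is some index $i$ with $(x,y)\in U_i\times V_i$, which by the definition of $\pi$ forces $e_i=1$ and $e'_{N+i}=1$. Consequently $U_i$ is one of the sets intersected in $G_e$, so $G_e\subseteq U_i$; symmetrically $G_{e'}\subseteq V_i$. Hence $G_e\times G_{e'}\subseteq U_i\times V_i\in\mathscr{U}$, which is exactly the property the claim demands. After enumerating $\set{G_e}_{e\in E}$ as $\set{G_1,\ldots,G_q}$, the proof is complete.

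The main conceptual obstacle is recognizing that an iterative, tube-lemma style refinement of the cover can be sidestepped by this one-shot signature construction; once that is spotted, the remaining verification is just bookkeeping with elementary compactness and Boolean combinations of open sets. No first countability or Hausdorffness of $X$ is used here, only compactness of $X\times X$.
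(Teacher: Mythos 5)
Your proof is correct, and it takes a genuinely different route from the paper's. The paper proves this claim by a two-pass tube-lemma argument: for each $x$ it intersects the first factors of the finitely many rectangles whose first factor contains $x$ to build a tube $A_x\times X$, extracts a finite subcover $\set{A_{s_1},\ldots,A_{s_p}}$, then repeats the game in the second coordinate to build sets $D_z$ and finally $G_z:=D_z\cap A_{s_{i(z)}}$, before taking one last finite subcover. Your signature construction collapses all of this into a single step: after extracting a finite subcover $\set{U_i\times V_i}_{i=1}^N$ of $X\times X$, you partition $X$ by the Boolean pattern of membership in the $2N$ factor sets and take the resulting finite intersections $G_e$. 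The key verification --- that for any pair of signatures $e,e'$ a single witness pair $(x,y)$ produces an index $i$ with $e_i=1$ and $e'_{N+i}=1$, whence $G_e\subseteq U_i$ and $G_{e'}\subseteq V_i$ independently of the witnesses chosen --- is sound, and the all-zero signature never occurs since $(x,x)$ must be covered. What your approach buys is brevity and the elimination of the tube lemma entirely; what it costs is only a potentially larger number of pieces (up to $2^{2N}$ signatures), which is irrelevant here since neither the claim nor its use in Theorem \ref{thm:sampling} makes any cardinality estimate on the cover. Both arguments use only compactness of $X\times X$, so nothing is lost in generality.
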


\begin{subproof}[Proof of Claim \ref{cl:sampling-1}] The proof of the claim includes elements that are often used to provide a proof of the Tube Lemma \cite[Lemma 26.8]{munkres-top}. Since $X\times X$ is compact, we take a finite subcover:
\[\mathscr{U}^f:=\{U_1\times V_1,\ldots, U_n\times V_n\}, \text{ for some } n\in \N.\]

Let $x\in X$. Then we define:
\[\mathscr{U}^f_x:=\{U\times V \in \mathscr{U}^f : x\in U\},\] 
and write 
\begin{align}
\mathscr{U}^f_x=\set{U^x_{i_1}\times V^x_{i_1},\ldots, U^x_{i_{m(x)}}\times V^x_{i_{m(x)}}}.\label{eq:sampling-U-decomp}
\end{align}
Here $m(x)$ is an integer depending on $x$, and $\set{i_1,\ldots, i_{m(x)}}$ is a subset of $\set{1,\ldots, n}$.

Since $\mathscr{U}^f$ is an open cover of $X\times X$, we know that $\mathscr{U}^f_x$ is an open cover of $\set{x}\times X$. 
Next define: 
\begin{align}A_x:=\ds\bigcap_{k=1}^{m(x)}U^x_{i_k}.\label{eq:sampling-A-in-U}
\end{align}
Then $A_x$ is open and contains $x$. In the literature \cite[p. 167]{munkres-top}, the set $A_x \times X$ is called a \emph{tube} around $\set{x}\times X$. Notice that $A_x \times X \subseteq \cup\mathscr{U}^f_x$. Since $x$ was arbitrary in the preceding construction, we define $\mathscr{U}^f_x$ and $A_x$ for each $x\in X$. Then note that $\set{A_x : x\in X}$ is an open cover of $X$. Using compactness of $X$, we choose $\set{s_1,\ldots, s_p} \subseteq X,\ p\in \N$, such that $\set{A_{s_1},\ldots, A_{s_p}}$ is a finite subcover of $X$. 

Once again let $x\in X$, and let $\mathscr{U}^f_x$ and $A_x$ be defined as above. Define the following:
\begin{align}
B_x:=\set{A_x \times V^x_{i_k} : 1\leq k \leq m(x)}. \label{eq:sampling-B-cover}
\end{align} 
Since $x\in A_x$ and $X \subseteq \cup_{k=1}^{m(x)}V^x_{i_k}$, it follows that $B_x$ is a cover of $\set{x}\times X$. Furthermore, since $\set{A_{s_1},\ldots, A_{s_p}}$ is a cover of $X$, it follows that the finite collection $\set{B_{s_1},\ldots, B_{s_p}}$ is a cover of $X\times X$. 

Let $z\in X$. Since $X \subseteq \ds\cup_{k=1}^{m(x)}V^x_{i_k}$, we pick $V^x_{i_k}$ for $1\leq k\leq m(x)$ such that $z\in V^x_{i_k}$. Since $x$ was arbitrary, such a choice exists for each $x\in X$. Therefore, we define:
\[C_{z}:=\set{V\in \mathscr{B} : z\in V, \; A_{s_i}\times V \in B_{s_i} \text{ for some } 1\leq i\leq p}.\]

Since each $B_{s_i}$ is finite and there are finitely many $B_{s_i}$, we know that $C_{z}$ is a finite collection. Next define:
\[D_{z}:=\bigcap_{V\in C_{z}}V.\]
Then $D_{z}$ is open and contains $z$. Notice that $X\times D_z$ is a tube around $X \times \set{z}$. Next, using the fact that $\set{A_{s_i} : 1\leq i \leq p}$ is an open cover of $X$, pick $A_{s_{i(z)}}$ such that $z\in A_{s_{i(z)}}$. Here $1\leq i(z) \leq p$ is some integer depending on $z$. Then define 
\begin{align}
G_{z}:=D_{z}\cap A_{s_{i(z)}}. \label{eq:sampling-G-is-D-and-A}
\end{align}
Then $G_{z}$ is open and contains $z$. Since $z$ was arbitrary, we define $G_{z}$ for each $z\in X$. Then $\set{G_{z} : z\in X}$ is an open cover of $X$, and we take a finite subcover:
\[\mathscr{G}:=\{G_1,\ldots, G_q\}, \; q\in \N.\]

Finally, we need to show that for any choice of $1\leq i,j\leq q$, we have $G_i\times G_j \subseteq U\times V $ for some $U\times V\in \mathscr{U}$. Let $1\leq i,j\leq q$. Note that we can write $G_i=G_w$ and $G_j = G_y$ for some $w,y\in X$. By the definition of $G_w$ (Equation \ref{eq:sampling-G-is-D-and-A}), we then have $G_w \subseteq A_{s_{i(w)}}$ for some index $i(w)$ depending on $w$. Referring to Equations (\ref{eq:sampling-U-decomp}) and (\ref{eq:sampling-A-in-U}) and using the property that $\mathscr{U}^f_{s_{i(w)}}$ is a cover of $\set{s_{i(w)}}\times X$, we choose $U^{s_{i(w)}}\times V^{s_{i(w)}} \in \mathscr{U}^f_{s_{i(w)}}$ such that $V^{s_{i(w)}}$ contains $y$. Thus we obtain:
\[G_w \subseteq A_{s_{i(w)}} \subseteq U^{s_{i(w)}} \text{ for }
U^{s_{i(w)}}\times V^{s_{i(w)}} \in \mathscr{U}^f_{s_{i(w)}}, \, 1\leq i(w) \leq p,\] 
From Equation (\ref{eq:sampling-B-cover}) we have that $A_{s_{i(w)}}\times V^{s_{i(w)}} \in B_{s_{i(w)}}$. Then $V^{s_{i(w)}} \in C_y$, and so we have:
\[G_y \subseteq D_y \subseteq V^{s_{i(w)}}.\]
It follows that $G_i\times G_j = G_w \times G_y \subseteq {\color{blue} U^{s_{i(w)}} \times V^{s_{i(w)}} }\in \mathscr{U}$.\end{subproof}

From the definition of $\mathscr{U}$, it follows that $\mathscr{G}$ is an $\e/4$-system on $X$. By further partitioning $\mathscr{G}$ against the connected components of $X$, we obtain a refined $\e/4$-system. This concludes the first part of the proof.

For the next part, let $\mathscr{G}$ be a refined $\e/4$-system on $X$, and suppose $X'$ is a finite set that intersects each element of $\mathscr{G}$. We first perform a disjointification step. Define:
\[\widetilde{G}_1:=G_1, \;
\widetilde{G}_2:= G_2 \setminus \widetilde{G}_1,\;
\widetilde{G}_3:= G_3 \setminus (\widetilde{G}_1 \cup \widetilde{G}_2),\; \ldots,\; 
\widetilde{G}_q:=G_q \setminus \left(\cup_{k=1}^{q-1}\widetilde{G}_k \right).
\] 

Next define a correspondence between $X$ and $X'$ as follows:
\[R:=\set{(x,x_i) : x\in \widetilde{G}_i, 1\leq i\leq q}.\]
Let $(x,x_i),(x',x_j) \in R$. Then we have $(x,x'),(x_i,x_j)\in \widetilde{G}_i\times \widetilde{G}_j \subseteq G_i \times G_j$. By assumption we know that $G_i\times G_j \subseteq U \times V$, for some $U\times V \in \mathscr{U}$. Therefore $\w_X(x,x'),\w_X(x_i,x_j) \in B(r,\e/4)$ for some $r\in \R$. It follows that:
\[|\w_X(x,x')-\w_X(x_i,x_j)| < \e/2.\] 
Since $(x,x_i),(x',x_j) \in R$ were arbitrary, we have $\dis(R)\leq  \e/2$. Hence $\dn(X,X') \leq \e$. \end{proof}

\subsubsection{Proof of Theorem \ref{thm:weak-isom-cpt}}
\label{sec:pf-metrics-weak-isom-cpt}
\weakisomcpt*
\begin{proof}[Proof of Theorem \ref{thm:weak-isom-cpt}]

By the definition of $\congwo$, it is clear that if $X\congwo Y$, then $\dn(X,Y) = 0$, i.e. $X \congwt Y$ (cf. Theorem \ref{thm:dN1}).

Conversely, suppose $\dn(X,Y) =0$. Our strategy is to obtain a set $Z\subseteq X\times Y$ with canonical projection maps $\pi_X: Z \r X, \pi_Y:Z \r Y$ and surjections $\psi_X:X \r \pi_X(Z), \psi_Y: Y \r \pi_Y(Z)$ as in the following diagram:

\begin{center}
\begin{tikzpicture}
\begin{scope}
\node (1) at (4,1.5){$X$};
\node (2) at (12,1.5){$Y$};
\node (l) at (2,0){$X$};
\node (r) at (14,0){$Y$};
\node (4) at (6,0){$\pi_X(Z)$};
\node (5) at (8,1.5){$Z$};
\node (6) at (10,0){$\pi_Y(Z)$};
\node at (4,0){$\congwo$};
\node at (8,0){$\congwo$};
\node at (12,0){$\congwo$};

\path[->>] (1) edge node[above right]{$\psi_X$} (4);
\path[->>] (1) edge node[above left]{$\id_X$} (l);
\path[->>] (2) edge node[above right]{$\id_Y$} (r);
\path[->>] (2) edge node[above left]{$\psi_Y$}(6);
\path[->>] (5) edge node[above left]{$\pi_X$}(4);
\path[->>] (5) edge node[above right]{$\pi_Y$}(6);
\end{scope}
\end{tikzpicture}
\end{center}

Furthermore, we will require:
\begin{align}
\w_X(\pi_X(z),\pi_X(z'))&=\w_Y(\pi_Y(z),\pi_Y(z')) \quad \text{for all } z,z'\in Z,\label{eq:weak-isom-1} \\
\w_X(x,x')&=\w_X(\psi_X(x),\psi_X(x')) \quad \text{for all } x,x'\in X,\label{eq:weak-isom-2}\\
\w_Y(y,y')&=\w_Y(\psi_Y(y),\psi_Y(y')) \quad \text{for all } y,y'\in Y.\label{eq:weak-isom-3}
\end{align}

As a consequence, we will obtain a chain of Type I weak isomorphisms 
\[X \congwo \pi_X(Z) \congwo \pi_Y(Z) \congwo Y.\] 
Since Type I weak isomorphism is an equivalence relation (Proposition \ref{prop:weak-isom-equiv}), it will follow that $X$ and $Y$ are Type I weakly isomorphic. 

By applying Theorem \ref{thm:sampling}, we choose sequences of finite subnetworks 
$\set{X_n \subseteq X :n\in \N}$ and $\set{Y_n\subseteq Y: n\in \N}$ such that $\dn(X_n, X) < 1/n$ and $\dn(Y_n,Y) < 1/n$ for each $n\in \N$. By the triangle inequality, $\dn(X_n,Y_n) < 2/n$ for each $n$. 

For each $n\in \N$, let $T_n \in \Rsc(X_n,X), P_n\in \Rsc(Y,Y_n)$ be such that $\dis(T_n) < 2/n$ and $\dis(P_n) < 2/n$. Define $\a_n:=4/n - \dis(T_n) - \dis(P_n)$, and notice that $\a_n \r 0$ as $n \r \infty$. Since $\dn(X,Y)=0$ by assumption, for each $n \in \N$ we let $S_n \in \Rsc(X,Y)$ be such that $\dis(S_n) < \a_n$. Then,
\[\dis(T_n \circ S_n \circ P_n) \leq \dis(T_n) + \dis(S_n) + \dis(P_n) < 4/n. \qquad\text{ (cf. Remark \ref{rem:chained-corr})}\]

Then for each $n\in \N$, we define $R_n:=T_n \circ S_n \circ P_n \in \Rsc(X_n,Y_n)$. By Remark \ref{rem:chained-corr}, we know that $R_n$ has the following expression:
\begin{align*}
R_n = \{(x_n,y_n)\in X_n\times Y_n : \text{ there exist }   \tilde{x}\in X,\; \tilde{y}\in Y &\text{ such that } 
(x_n,\tilde{x})\in T_n,\\ 
&(\tilde{x},\tilde{y})\in S_n,\; 
(\tilde{y},y_n)\in P_n\}.
\end{align*}
Next define:
\[\mc{S}:=\set{(\tilde{x}_n,\tilde{y}_n)_{n\in \N} \in (X\times Y)^\N : (\tilde{x}_n,\tilde{y}_n)\in S_n \text{ for each } n\in \N}.\]

Since $X,Y$ are first countable and compact, the product $X\times Y$ is also first countable and compact, hence sequentially compact. Any sequence in a sequentially compact space has a convergent subsequence, so for convenience, we replace each sequence in $\mc{S}$ by a convergent subsequence. Next define:
\[Z:= \set{(x,y)\in X\times Y : (x,y)\text{ a limit point of some } (\tilde{x}_n,\tilde{y}_n)_{n\in \N} \in \mc{S}}.\]

\begin{claim}\label{cl:wisom-cpt} $Z$ is a closed subspace of $X\times Y$. Hence it is compact and sequentially compact.
\end{claim}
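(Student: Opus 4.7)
The aim is to show $Z$ is closed in $X\times Y$; once this is done, compactness follows because $X\times Y$ is itself compact (finite product of compact spaces), and sequential compactness follows because $X\times Y$ is first countable (finite product of first countable spaces), so $Z$ inherits first countability and a first countable compact space is sequentially compact.

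Since $X\times Y$ is first countable, closed sets coincide with sequentially closed sets, so it suffices to show that $Z$ is sequentially closed. Suppose $(x^{(k)},y^{(k)})_{k\in\N}\subseteq Z$ with $(x^{(k)},y^{(k)})\to (x,y)$; I need to produce a subsequence of some element of $\mc{S}$ that converges to $(x,y)$. Fix a nested countable neighborhood base $U_1 \supseteq U_2 \supseteq \cdots$ at $(x,y)$ and run a diagonal extraction: for each $m\in\N$, first choose $k_m$ so that $(x^{(k_m)},y^{(k_m)})\in U_m$ (possible by convergence). Because $(x^{(k_m)},y^{(k_m)})\in Z$, by definition there exists a strictly increasing sequence of indices $(n_j^{(m)})_j$ and points $(\hat{x}_j^{(m)},\hat{y}_j^{(m)})\in S_{n_j^{(m)}}$ converging to $(x^{(k_m)},y^{(k_m)})$. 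Since this convergence brings the sequence eventually into $U_m$, and since $n_j^{(m)}\to\infty$ as $j\to\infty$, one can select $j_m$ such that both $(\hat{x}_{j_m}^{(m)},\hat{y}_{j_m}^{(m)})\in U_m$ and $n_{j_m}^{(m)} > n_{j_{m-1}}^{(m-1)}$.

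Setting $N_m := n_{j_m}^{(m)}$ and $(\tilde{x}_m,\tilde{y}_m) := (\hat{x}_{j_m}^{(m)},\hat{y}_{j_m}^{(m)})\in S_{N_m}$ produces a strictly increasing index sequence $(N_m)_m$ together with points $(\tilde{x}_m,\tilde{y}_m)\in S_{N_m}\cap U_m$. Because $\{U_m\}$ is a neighborhood base at $(x,y)$, this sequence converges to $(x,y)$. Filling in the gaps---for $n \notin \{N_m : m\in\N\}$, pick any $(\tilde{x}_n,\tilde{y}_n)\in S_n$, which is nonempty since each $S_n$ is a correspondence---yields a bona fide element of $\mc{S}$, and the subsequence indexed by $(N_m)_m$ converges to $(x,y)$. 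Hence $(x,y)\in Z$, which closes the argument that $Z$ is sequentially closed, and therefore closed.

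The main obstacle is the bookkeeping in the diagonal extraction: one must simultaneously (i) force the chosen points into smaller and smaller neighborhoods of $(x,y)$, (ii) respect the constraint that the $m$-th chosen point belongs to $S_{N_m}$ for a strictly increasing $N_m$, and (iii) avoid double-counting indices. First countability of $X\times Y$ at $(x,y)$ is indispensable; without it, the sequential characterization of closedness would fail and nets would be required, at which point the definition of $Z$ in terms of sequence limit points would need to be reformulated.
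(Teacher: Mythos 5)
Your proof is correct and follows essentially the same route as the paper's: both exploit first countability of $X\times Y$ to reduce closedness to a sequential statement and then run a diagonal extraction over a nested countable neighborhood base to manufacture an element of $\mc{S}$ with a subsequence converging to the putative limit point. If anything, your bookkeeping of the indices $N_m$ (ensuring strict monotonicity and membership in $S_{N_m}$, then filling in the remaining indices) is more explicit than the paper's version, which simply asserts that a point of $S_n$ itself can be found in the neighborhood $V_n$.
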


The second statement in the claim follows from the first: assuming that $Z$ is a closed subspace of the compact space $X\times Y$, we obtain that $Z$ is compact. Any subspace of a first countable space is first countable, so $Z$ is also first countable. Next, observe that $\pi_X(Z)$ equipped with the subspace topology is compact, because it is a continuous image of a compact space. It is also first countable because it is a subspace of the first countable space $X$. Furthermore, the restriction of $\w_X$ to $\pi_X(Z)$ is continuous. Thus $\pi_X(Z)$ equipped with the restriction of $\w_X$ is a compact network, and by similar reasoning, we get that $\pi_Y(Z)$ equipped with the restriction of $\w_Y$ is also a compact network.

\begin{subproof}[Proof of Claim \ref{cl:wisom-cpt}]
We will show that $Z \subseteq X\times Y$ contains all its limit points. Let $(x,y)\in X\times Y$ be a limit point of $Z$. Let $\set{U_n \subseteq X: n\in \N, (x,y)\in U_n}$ be a countable neighborhood base of $(x,y)$. For each $n \in \N$, the finite intersection $V_n:=\cap_{i=1}^nU_i$ is an open neighborhood of $(x,y)$, and thus contains a point $(x_n,y_n) \in Z$ that is distinct from $(x,y)$ (by the definition of a limit point). Pick such an $(x_n,y_n)$ for each $n \in \N$. Then $(x_n,y_n)_{n\in \N}$ is a sequence in $Z$ converging to $(x,y)$ such that $(x_n,y_n)\in V_n$ for each $n\in \N$. 

For each $n\in \N$, note that because $(x_n,y_n) \in Z$ and $V_n$ is an open neighborhood of $(x_n,y_n)$, there exists a sequence in $\mc{S}$ converging to $(x_n,y_n)$ for which all but finitely many terms are contained in $V_n$. So for each $n\in \N$, let $(\tilde{x}_{n},\tilde{y}_{n})\in S_n$ be such that $(\tilde{x}_{n},\tilde{y}_{n}) \in V_n$. Then the sequence $(\tilde{x}_{n},\tilde{y}_{n})_{n\in \N} \in \mc{S}$ converges to $(x,y)$. 
Thus $(x,y)\in Z$. Since $(x,y)$ was an arbitrary limit point of $Z$, it follows that $Z$ is closed. \qedhere
\end{subproof}

\medskip
\noindent
\textbf{Proof of Equation \ref{eq:weak-isom-1}.}
We now prove Equation \ref{eq:weak-isom-1}. Let $z=(x,y), \ z'=(x',y') \in Z$, and let $(\tilde{x}_n,\tilde{y}_n)_{n\in \N}, (\tilde{x}'_n,\tilde{y}'_n)_{n\in \N}$ be elements of $\mc{S}$ that converge to $(x,y),(x',y')$ respectively. We wish to show $|\w_X(x,x')-\w_Y(y,y')| = 0$. Let $\e > 0$, and observe that:
\begin{align*}
&\hspace{0.3\textwidth} |\w_X(x,x') - \w_Y(y,y')|\\
&=|\w_X(x,x') - \w_X(\tilde{x}_n,\tilde{x}_n') + \w_X(\tilde{x}_n,\tilde{x}_n') -\w_Y(\tilde{y}_n,\tilde{y}_n')+ \w_Y(\tilde{y}_n,\tilde{y}_n') - \w_Y(y,y')| \\
&\leq |\w_X(x,x') - \w_X(\tilde{x}_n,\tilde{x}_n')| + |\w_X(\tilde{x}_n,\tilde{x}_n') -\w_Y(\tilde{y}_n,\tilde{y}_n')|+ |\w_Y(\tilde{y}_n,\tilde{y}_n') - \w_Y(y,y')|.
\end{align*}

\begin{claim}\label{cl:wisom-eq2} Suppose we are given sequences $(\tilde{x}_n,\tilde{y}_n)_{n\in \N}, (\tilde{x}'_n,\tilde{y}'_n)_{n\in \N}$ in $Z$ converging to $(x,y)$ and $(x',y')$ in $Z$, respectively. Then there exists $N \in \N$ such that for all $n \geq N$, we have:
\[|\w_X(x,x') - \w_X(\tilde{x}_n,\tilde{x}_n')| < \e/4, \qquad
|\w_Y(\tilde{y}_n,\tilde{y}_n') - \w_Y(y,y')| < \e/4.\]
\end{claim}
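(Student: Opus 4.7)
The plan is to invoke only the continuity of $\w_X$ and $\w_Y$ together with the fact that convergence in a product topology is componentwise. First I would note that the hypothesis $(\tilde{x}_n,\tilde{y}_n) \to (x,y)$ in $Z \subseteq X \times Y$ (where $X \times Y$ carries the product topology) implies $\tilde{x}_n \to x$ in $X$ and $\tilde{y}_n \to y$ in $Y$, because the canonical projections $\pi_X, \pi_Y$ are continuous. The same reasoning applied to the primed sequence gives $\tilde{x}'_n \to x'$ in $X$ and $\tilde{y}'_n \to y'$ in $Y$.

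Next I would lift these to convergence on the product: $(\tilde{x}_n,\tilde{x}'_n) \to (x,x')$ in $X \times X$, and $(\tilde{y}_n,\tilde{y}'_n) \to (y,y')$ in $Y \times Y$. Applying the continuity of $\w_X : X \times X \to \R$ and $\w_Y : Y \times Y \to \R$ (which is built into Definition \ref{defn:nets}), the sequences of real numbers $\w_X(\tilde{x}_n,\tilde{x}'_n)$ and $\w_Y(\tilde{y}_n,\tilde{y}'_n)$ converge to $\w_X(x,x')$ and $\w_Y(y,y')$, respectively.

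Given $\e > 0$, the definition of convergence in $\R$ then yields indices $N_1, N_2 \in \N$ such that
\[
|\w_X(x,x') - \w_X(\tilde{x}_n,\tilde{x}'_n)| < \e/4 \quad \text{for all } n \geq N_1,
\]
\[
|\w_Y(y,y') - \w_Y(\tilde{y}_n,\tilde{y}'_n)| < \e/4 \quad \text{for all } n \geq N_2.
\]
Taking $N := \max(N_1, N_2)$ gives the claim simultaneously for $n \geq N$.

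There is no real obstacle here: the argument is a direct application of continuity of $\w_X$ and $\w_Y$ together with the elementary fact that convergence in a product of first-countable spaces is componentwise. The claim is essentially a technical lemma setting up the triangle inequality estimate needed to push the limit through in the proof of Equation \ref{eq:weak-isom-1}; the only subtle point is to keep the two coordinates straight and to remember that the first countability assumption in Definition \ref{defn:nets} is what guarantees that the topology is adequately described by sequences.
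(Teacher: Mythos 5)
Your proof is correct and is essentially the paper's argument: both come down to rearranging the hypotheses into convergence of $(\tilde{x}_n,\tilde{x}'_n)\to(x,x')$ in $X\times X$ and of $(\tilde{y}_n,\tilde{y}'_n)\to(y,y')$ in $Y\times Y$ and then invoking continuity of $\w_X$ and $\w_Y$ — the paper unwinds this by exhibiting explicit open rectangles $A\times A'\subseteq \w_X\inv[B(\w_X(x,x'),\e/4)]$ and $B\times B'\subseteq \w_Y\inv[B(\w_Y(y,y'),\e/4)]$ and noting the sequences eventually enter $A\times B$ and $A'\times B'$, while you cite the standard facts that product convergence is componentwise and that continuity preserves sequential limits. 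One small correction to your closing remark: first countability is not actually needed for this step, since continuity implies sequential continuity in any topological space; first countability matters only for the converse implication (and is used elsewhere in the surrounding proof to extract convergent subsequences).
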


\begin{subproof}[Proof of Claim \ref{cl:wisom-eq2}]

Write $a:=\w_X(x,x'), b:=\w_Y(y,y')$. Since $\w_X,\ \w_Y$ are continuous, we know that $\w_X\inv[B(a,\e/4)]$ and $\w_Y\inv[B(b,\e/4)]$ are open neighborhoods of $(x,x')$ and $(y,y')$. Since each open set in the product space $X\times X$ is a union of open rectangles of the form $A\times A'$ for $A,A'$ open subsets of $X$, we choose an open set $A\times A' \subseteq \w_X\inv[B(a,\e/4)]$ such that $(x,x')\in A\times A'$. Similarly, we choose an open set $B\times B' \subseteq \w_Y\inv[B(b,\e/4)]$ such that $(y,y')\in B\times B'$. Then $A\times B,\ A'\times B'$ are open neighborhoods of $(x,y),\ (x',y')$ respectively. Since $(\tilde{x}_n,\tilde{y}_n)_{n\in \N}$ and $(\tilde{x}'_n,\tilde{y}'_n)_{n\in \N}$ converge to $(x,y)$ and $(x',y')$, respectively, we choose $N\in \N$ such that for all $n\geq N$, we have $(\tilde{x}_n,\tilde{y}_n)\in A\times B$ and $(\tilde{x}'_n,\tilde{y}'_n)\in A'\times B'$. The claim now follows.\qedhere
\end{subproof}

Now choose $N\in \N$ such that the property in Claim \ref{cl:wisom-eq2} is satisfied, as well as the additional property that $8/N < \e/4$. Then for any $n \geq N$, we have:
\[|\w_X(x,x') - \w_Y(y,y')| \leq \e/4 + |\w_X(\tilde{x}_n,\tilde{x}_n') -\w_Y(\tilde{y}_n,\tilde{y}_n')| + \e/4.\]

Separately note that for each $n\in \N$, having $(\tilde{x}_n,\tilde{y}_n), (\tilde{x}'_n,\tilde{y}'_n) \in S_n$ implies that there exist $(x_n,y_n)$ and $ (x'_n,y'_n) \in R_n$ such that $(x_n,\tilde{x}_n), (x'_n,\tilde{x}'_n) \in T_n$ and $(\tilde{y}_n, y_n), (\tilde{y}'_n, y'_n) \in P_n$. Thus we can bound the middle term above as follows:
\begin{align*}
&\hspace{0.3\textwidth} |\w_X(\tilde{x}_n,\tilde{x}_n') -\w_Y(\tilde{y}_n,\tilde{y}_n')|\\
&=|\w_X(\tilde{x}_n,\tilde{x}_n')- \w_X(x_n,x_n') + \w_X(x_n,x_n') - \w_Y(y_n,y_n') + \w_Y(y_n,y_n')-\w_Y(\tilde{y}_n,\tilde{y}_n')|\\
&\leq |\w_X(\tilde{x}_n,\tilde{x}_n')- \w_X(x_n,x_n')| +| \w_X(x_n,x_n') - \w_Y(y_n,y_n')| + |\w_Y(y_n,y_n')-\w_Y(\tilde{y}_n,\tilde{y}_n')|\\
&\leq \dis(T_n) + \dis(R_n) + \dis(P_n) < 8/n \leq 8/N < \e/4.
\end{align*}
The preceding calculations show that:
\[|\w_X(x,x') - \w_Y(y,y')| < \e.\]
Since $\e> 0$ was arbitrary, it follows that $\w_X(x,x')=\w_Y(y,y')$. This proves Equation \ref{eq:weak-isom-1}.

It remains to define surjective maps $\psi_X:X \r \pi_X(Z), \psi_Y:Y \r \pi_Y(Z)$ and to verify Equations \ref{eq:weak-isom-2} and \ref{eq:weak-isom-3}. Both cases are similar, so we only show the details of constructing $\psi_X$ and verifying Equation \ref{eq:weak-isom-2}. 

\medskip
\noindent
\textbf{Construction of $\psi_X$.} Let $x \in X$. Suppose first that $x \in \pi_X(Z)$. Then we simply define $\psi_X(x) = x$. We also make the following observation, to be used later: for each $n \in \N$, letting $y \in Y$ be such that $(x,y) \in S_n$, there exists $x_n \in X_n$ and $y_n \in Y_n$ such that $(x_n,x) \in T_n$ and $(y,y_n) \in P_n$. 

Next suppose $x \in X \setminus \pi_X(Z)$. For each $n\in \N$, let $x_n\in X_n$ be such that $(x_n,x)\in T_n$, and let $\tilde{x}_n \in X$ be such that $(x_n,\tilde{x}_n) \in T_n$. Also for each $n \in \N$, let $\tilde{y}_n\in Y$ be such that $(\tilde{x}_n,\tilde{y}_n) \in S_n$. Then for each $n\in \N$, let $y_n\in Y_n$ be such that $(\tilde{y}_n, y_n)\in P_n$. Then by sequential compactness of $X\times Y$, the sequence $(\tilde{x}_n,\tilde{y}_n)_{n\in \N}$ has a convergent subsequence which belongs to $\mc{S}$ and converges to a point $(\tilde{x},\tilde{y})\in Z$. In particular, we obtain a sequence $(\tilde{x}_n)_{n\in \N}$ converging to a point $\tilde{x}$, such that $(x_n,x)$ and $(x_n,\tilde{x}_n) \in T_n$ for each $n\in \N$. 
Define $\psi_X(x)=\tilde{x}$. 

Since $x\in X$ was arbitrary, this construction defines $\psi_X:X \r \pi_X(Z)$. Note that $\psi_X$ is simply the identity on $\pi_X(Z)$, hence is surjective. 

\medskip
\noindent
\textbf{Proof of Equation \ref{eq:weak-isom-2}.} Now we verify Equation \ref{eq:weak-isom-2}. Let $\e > 0$. There are three cases to check:
\begin{description}
\item[Case 1: $x,x' \in \pi_X(Z)$] In this case, we have:
\[|\w_X(x,x') - \w_X(\psi_X(x),\psi_X(x'))| = \w_X(x,x') - \w_X(x,x') = 0.\]
\item[Case 2: $x,x' \in X \setminus \pi_X(Z)$] By continuity of $\w_X$, we obtain an open neighborhood $U:=\w_X\inv[B(\w_X(\psi_X(x),\psi_X(x')),\e/2)]$ of $(x,x')$. By the definition of $\psi_X$ on $X\setminus \pi_X(Z)$, we obtain sequences $(\tilde{x}_n,\tilde{y}_n)_{n\in \N}$ and $(\tilde{x}'_n,\tilde{y}'_n)_{n\in \N}$ in $\mc{S}$ converging to $(\psi_X(x),\tilde{y})$ and $(\psi_X(x'),\tilde{y}')$ for some $\tilde{y}, \tilde{y}' \in Y$. By applying Claim \ref{cl:wisom-eq2}, we obtain $N \in \N$ such that for all $n \geq N$, we have $(\tilde{x}_n,\tilde{x}'_n) \in U$. Note that we also obtain sequences $(x_n)_{n\in \N}$ and $(x'_n)_{n\in \N}$ such that $(x_n,x), (x_n,\tilde{x}_n) \in T_n$ and $(x'_n,x'), (x'_n,\tilde{x}'_n) \in T_n$. Choose $N$ large enough so that it satisfies the property above and also that $4/N < \e/2$. Then for any $n \geq N$, 
\begin{align*}
&\hspace{0.3\textwidth} |\w_X(x,x') - \w_X(\psi_X(x),\psi_X(x'))|\\
&=|\w_X(x,x') -\w_X(x_n,x_n')+\w_X(x_n,x_n') - \w_X(\tilde{x}_n,\tilde{x}_n') +\w_X(\tilde{x}_n,\tilde{x}_n') - \w_X(\psi_X(x),\psi_X(x'))|\\
&\leq \dis(T_n) + \dis(T_n) + \e/2 < 4/n + \e/2 \leq 4/N + \e/2 < \e.
\end{align*}

\item[Case 3: $x\in \pi_X(Z), x' \in X\setminus \pi_X(Z)$] By the definition of $\psi_X$ on $X\setminus \pi_X(Z)$, we obtain: (1) a sequence $(\tilde{x}'_n)_{n\in \N}$ converging to $\psi_X(x')$, and (2) another sequence $(x'_n)_{n\in \N}$ such that $(x'_n,x')$ and $(x'_n,\tilde{x}'_n)$ both belong to $T_n$, for each $n\in \N$. By the definition of $\psi_X$ on $\pi_X(Z)$, we obtain a sequence $(x_n)_{n\in \N}$ such that $(x_n,x) \in T_n$ for each $n \in \N$. 

Let $U:=\w_X\inv[B(\w_X(x,\psi_X(x')),\e/2)]$. Since $(\tilde{x}'_n)_{n\in \N}$ converges to $\psi_X(x')$, we know that all but finitely many terms of the sequence $(x,\tilde{x}'_n)_{n\in \N}$ belong to $U$. So we choose $N$ large enough so that for each $n\geq N$, we have:
\begin{align*}
&\hspace{0.3\textwidth} |\w_X(x,x') - \w_X(x,\psi_X(x'))|\\
&=|\w_X(x,x') -\w_X(x_n,x_n')+\w_X(x_n,x_n') - \w_X(x,\tilde{x}_n') +\w_X(x,\tilde{x}_n') - \w_X(x,\psi_X(x'))|\\
&\leq \dis(T_n) + \dis(T_n) + \e/2 < 4/n + \e/2 \leq 4/N + \e/2 < \e.
\end{align*}
\end{description}

Since $\e >0$ was arbitrary, Equation \ref{eq:weak-isom-2} follows. The construction of $\psi_Y$ and proof for Equation \ref{eq:weak-isom-3} are similar. This concludes the proof of the theorem. \end{proof}

\subsubsection{Proofs related to Theorem \ref{thm:cpt-sisom}}
\label{sec:pf-motifs}

Our goal in this section is to prove that weak isomorphism, equality of motif sets, and strong isomorphism between skeleta are equivalent in the setting of compact networks with coherent topologies. 
Because the main result of this section will assume equality of motif sets, we will specifically only use the weak form of coherence (cf. Definition \ref{defn:coherence-weak}).

\begin{proposition}\label{prop:wp-motifs} Let $(X,\w_X), (Y,\w_Y)$ be compact networks such that $\M_n(X) = \M_n(Y)$ for all $n \in \N$. Suppose $X$ contains a countable subset $S_X$. Then there exists a weight-preserving map $f: S_X \r Y$.
\end{proposition}

\begin{proof}[Proof of Proposition \ref{prop:wp-motifs}] We proceed via a  diagonal argument. 
Write $S_X= \set{x_1,x_2,\ldots, x_n,\ldots}$. For each $n \in \N$, let $f_n : S_X \r Y$ be a map that preserves weights on $\set{x_1,\ldots, x_n}$. Such a map exists by the assumption that $\M_n(X)=\M_n(Y)$. 

Since $Y$ is first countable and compact, hence sequentially compact, the sequence $(f_n(x_1))_n$ has a convergent subsequence; we write this as $(f_{1,n}(x_1))_n$. Since $f_k$ is weight-preserving on $\set{x_1,x_2}$ for $k \geq 2$, we know that $f_{1,n}$ is weight-preserving on $\set{x_1,x_2}$ for $n \geq 2$. Using sequential compactness again, we have that $(f_{1,n}(x_2))_n$ has a convergent subsequence $(f_{2,n}(x_2))_n$. This sequence converges at both $x_1$ and $x_2$, and $f_{2,n}$ is weight-preserving on $\set{x_1,x_2}$ for $n \geq 2$. Proceeding in this way, we obtain the diagonal sequence $(f_{n,n})_n$ which converges pointwise on $S_X$. Furthermore, for any $n \in \N$, $f_{k,k}$ is weight-preserving on $\set{x_1,\ldots, x_n}$ for $k \geq n$.

Next define $f:S_X \r Y$ by setting $f(x):=\lim_n f_{n,n}(x)$ for each $x \in S_X$. It remains to show that $f$ is weight-preserving. Let $x_n,x_m \in S_X$, and let $k \geq \max(m,n)$. Then $\w_X(x_n,x_m) = \w_Y(f_{k,k}(x_n),f_{k,k}(x_m))$. Using (sequential) continuity of $\w_Y$, we then have:
\[\w_Y(f(x_n),f(x_m)) = \w_Y(\lim_kf_{k,k}(x_n),\lim_kf_{k,k}(x_m)) = \lim_k \w_Y(f_{k,k}(x_n),f_{k,k}(x_m)) =  \w_X(x_n,x_m).\]
In the second equality above, we used the fact that a sequence converges in the product topology iff the components converge.
Since $x_n,x_m \in S_X$ were arbitrary, this concludes the proof. \end{proof}

\begin{proposition}\label{prop:wp-extn} Let $(X,\w_X),(Y,\w_Y)$ be compact networks. Suppose $f: S_X \r Y$ is a weight-preserving function defined on a countable dense subset $S_X \subseteq X$. Then $f$ extends to a weight-preserving map on $X$. 
\end{proposition}

\begin{proof}[Proof of Proposition \ref{prop:wp-extn}]
Let $x\in X\setminus S_X$. By first countability, we take a countable neighborhood base $\set{U_n : n \in \N}$ of $x$ such that $U_1 \supseteq U_2 \supseteq U_3 \ldots $ (if necessary, we replace $U_n$ by $\cap_{i=1}^nU_i$). For each $n \in \N$, let $x_n \in U_n \cap S_X$. Then $x_n \r x$. To see this, let $U$ be any open set containing $x$. Then $U_n \subseteq U$ for some $n \in \N$, and so $x_k \in U_n \subseteq U$ for all $k \geq n$. 

Because $Y$ is compact and first countable, hence sequentially compact, the sequence $(f(x_n))_n$ has a convergent subsequence; let $y$ be its limit. Define $f(x) = y$. Extend $f$ to all of $X$ this way.

We need to verify that $f$ is weight-preserving. Let $x,x' \in X$. Invoking the definition of $f$, let $(x_n)_n, (x'_n)_n$ be sequences in $S_X$ converging to $x,x'$ such that $f(x_n) \r f(x)$ and $f(x'_n) \r f(x')$. 
By sequential continuity and the standard result that a sequence converges in the product topology iff the components converge, we have 
\[\lim_n \w_Y(f(x_n),f(x'_n)) = \w_Y(f(x),f(x')); \qquad
\lim_n \w_X(x_n,x'_n) = \w_X(x,x').\]

Let $\e > 0$. By the previous observation, fix $N \in \N$ such that for all $n\geq N$, we have $|\w_Y(f(x_n),f(x'_n)) - \w_Y(f(x),f(x'))| < \e$ and $|\w_X(x_n,x'_n) - \w_X(x,x')| < \e$. Then, 
\begin{align*}
|\w_X(x,x') - \w_Y(f(x),f(x'))| &= |\w_X(x,x') - \w_X(x_n,x'_n) + \w_X(x_n,x'_n) - \w_Y(f(x),f(x'))|\\
& \hspace{-4em} \leq |\w_X(x,x') - \w_X(x_n,x'_n)| + |\w_X(f(x_n),f(x'_n)) - \w_Y(f(x),f(x'))| < 2\e.
\end{align*}
Thus $\w_X(x,x') = \w_Y(f(x),f(x'))$. Since $x,x' \in X$ were arbitrary, this concludes the proof.\end{proof}

\begin{definition} Let $(X,\w_X), (Y,\w_Y)$ be networks. We say that a map $f:X \to Y$ is \emph{$l^\infty$-intrinsic} if we have $\delta_Y(f(x),f(x'))= \delta_X(x,x')$ for all $x,x' \in X$, i.e. 
\begin{align*}
&\max \left( \| \w_X(x,\cdot) - \w_X(x',\cdot)\|_\infty,
\| \w_X(\cdot,x) - \w_X(\cdot,x')\|_\infty \right) \\
=
&\max \left( \| \w_Y(f(x),\cdot) - \w_Y(f(x'),\cdot)\|_\infty,
\| \w_Y(\cdot,f(x)) - \w_Y(\cdot,f(x'))\|_\infty \right). \end{align*}

\end{definition}

Recall that a topological space is \emph{separable} if it contains a countable dense subset. 

\begin{lemma} 
\label{lem:intrinsic}
Let $(X,\w_X), (Y,\w_Y)$ be separable, compact, weakly coherent networks, and suppose $\M_n(X) = \M_n(Y)$ for all $n \in \N$. Let $f:X \to Y$ be a weight-preserving map. Then $f$ is $l^\infty$-intrinsic.
\end{lemma}

The interpretation of this lemma is that when two networks have the same motif sets, they contain the same information, and thus a mapping between them does not add any extrinsic data.
\begin{proof}[Proof of Lemma \ref{lem:intrinsic}]

Let $x,x' \in X$. Because $f$ is weight-preserving, we know 
\[ \| \w_Y(f(x),\cdot) - \w_Y(f(x'),\cdot)\|_\infty \geq \|\w_X(x,\cdot) - \w_X(x',\cdot) \|_\infty.\]
First we wish to show that this is actually an equality. Towards a contradiction, suppose the inequality above is strict. Let $y \in Y$ be such that 
\[ |\w_Y(f(x),y) - \w_Y(f(x'),y)| > \|\w_X(x,\cdot) - \w_X(x',\cdot)\|_\infty.\]
Using separability, fix a countable dense subset $\{x_1,x_2,\ldots \} \subseteq X$. Next, for each $n \in \N$, define $T_n$ to be the weight matrix obtained from $(y, f(x),f(x'),f(x_1),f(x_2),\ldots, f(x_n))$, i.e. the corresponding motif matrix. Because all the motif sets are equal, for each $T_n \in \M_{n+3}(Y)$ we have an equivalent copy in $\M_{n+3}(X)$. 

Because we do not maintain node labels when creating motif sets, it may be the case that even though $T_n$ is a submatrix of $T_{n+1}$, the points chosen to realize $T_n$ in $X$ are different from the points in $X$ used to realize $T_{n+1}$. Specifically, let $(z^1,a^1,b^1,p^1_1)$ denote the points in $X$ realizing $T_1$, let $(z^2,a^2,b^2,p^2_1,p^2_2)$ denote the points in $X$ realizing $T_2$, and more generally, let $(z^n,a^n,b^n,p^n_1,p^n_2,\ldots, p^n_n)$ denote the points in $X$ realizing $T_n$ for each $n \in \N$.

By sequential compactness and a diagonal argument as used in Proposition \ref{prop:wp-motifs}, we pass to a subsequence such that $z^n, a^n,b^n$, and $p_i^n$ for $i \leq n$ all converge as $n \to \infty$. Denote the limits by $z,a,b$, and $\{p_i\}_{i\in \N}$, respectively. Crucially we have $\w_X(a^n,z^n) = \w_Y(f(x),y)$, $\w_X(b^n,z^n) = \w_Y(f(x'),y)$, and $\w_X(a^n,b^n) = \w_Y(f(x),f(x')) = \w_X(x,x')$ for all $n \in \N$. We also have $\w_X(a^n,p^n_i) = \w_Y(f(x),f(x_i)) = \w_X(x,x_i)$ for all $n \geq i$, and likewise for other combinations of $z^n,a^n,b^n$, and $\{p^n_i\}_{i \in \N, n \geq i}$. Consequently we have the same equalities for the limits of these sequences of points.

Next define a map $g: \{a,b,p_1,p_2,\ldots\} \to X$ as $a \mapsto x, b \mapsto x'$, and $p_i \mapsto x_i$ for each $i \in \N$. Then $g$ is a weight-preserving map defined on a countable dense subset of $X$, so by Proposition \ref{prop:wp-extn} it extends to a weight-preserving map $g:X \to X$. In particular we have $\w_X(x,g(z)) = \w_X(a,z) = \w_Y(f(x),y)$ and $\w_X(x',g(z)) = \w_X(b,z) = \w_Y(f(x'),y)$. Thus we have 
\[ \| \w_X(x,\cdot) - \w_X(x',\cdot)\|_\infty \geq |\w_X(x,g(z)) - \w_X(x',g(z))| = | \w_Y(f(x),y) - \w_Y(f(x'), y) |,\]
which is a contradiction. Similarly one proves 
\[ \| \w_Y(\cdot,f(x)) - \w_Y(\cdot,f(x'))\|_\infty = \|\w_X(\cdot,x) - \w_X(\cdot,x') \|_\infty.\]
Because $x,x'\in X$ were arbitrary, this concludes the proof. \end{proof}

The following proposition mirrors the familiar notion that isometric maps between metric spaces are continuous. As an immediate application, we will obtain \emph{uniqueness} of the coherent topology, thus justifying its construction.

\begin{proposition}\label{prop:cohrt-cts} Let $(X,\w_X), (Y,\w_Y)$ be separable, compact, weakly coherent networks, and suppose $\M_n(X) = \M_n(Y)$ for all $n\in \N$. Let $f:X \r Y$ be a weight-preserving map. Then $f$ is continuous. 
\end{proposition}

\begin{proof} Let $V\subseteq Y$ be open. We need to show that $U:= f\inv(V)$ is open. By first-countability, it suffices to show that any sequence converging to a point of $U$ is eventually inside $U$. Let $(x_n)_n$ be a sequence in $X$ converging to $x \in U$. Because $f$ is $l^\infty$-intrinsic (Lemma \ref{lem:intrinsic}), we have
\[
\norm{\w_Y(f(x_n),\cdot) - \w_Y(f(x),\cdot)}_\infty = 
\norm{\w_X(x_n,\cdot) - \w_X(x,\cdot)}_\infty \to 0.\]
Here the limit holds because $X$ is coherent. Thus we have $f(x_n) \to f(x)$. But then there must exist $N \in \N$ such that $f(x_n) \in V$ for all $n\geq N$. Then $x_n \in U$ for all $n \geq N$. Since $(x_n)_n$ was arbitrary, it follows that $U$ is open. This concludes the proof. \end{proof}

\begin{corollary}[Uniqueness] 
\label{cor:cohrt-unique}
Let $(X,\w_X)$ be a network with a separable, compact, weakly coherent topology $\t_X$. Let $\t'$ be another separable, compact, coherent topology on $X$. Then $\t' = \t_X$. 
\end{corollary}
\begin{proof}
The identity map $\id :(X,\w_X,\t_X) \to (X,\w_X,\t')$ is weight-preserving, and since both the domain and codomain have the same motif set, we know by Proposition \ref{prop:cohrt-cts} that $\id$ is continuous. Thus $\t_X$ is finer than $\t'$. By the same argument, we have that $\t'$ is finer than $\t_X$. Thus the two topologies are equal. \end{proof}

The next result generalizes the result that an isometric embedding of a compact metric space into itself is automatically surjective \cite[Theorem 1.6.14]{burago}. However, before presenting the theorem we first discuss an auxiliary construction that is used in its proof.

\begin{definition}[A variant of $\d_X$ (cf. Equation (\ref{eq:delta}))] 
\label{defn:delta-witness}
Let $(X,\w_X)$ be any network. For any subset $A\subseteq X$, define $\text{\gls{deltaxa}}:X\times X \r \R_+$ by 
\[\d_X^A(x,x'):=\max\left(\sup_{a\in A}|\w_X(x,a) - \w_X(x',a)|,\sup_{a\in A}|\w_X(a,x) - \w_X(a,x')|\right).\]
Then $\d_X$ is a special case of $\d_X^A$ where $A=X$. More generally, $A$ acts as a set of ``observers", and $\d_X^A$ compares two points $x,x'$ with respect to this observation set.

Next, for any $E\subseteq X$ and any $y\in X$, define $\d_X^A(y,E):=\inf_{y'\in E}\d_X^A(y,y')$. Then $\d^A_X(\cdot,E)$ behaves as a proxy for the ``distance to a set" function, where the set is fixed to be $E$ and the ``distance" is measured relative to the observation set $A$.
\end{definition}

\begin{theorem}\label{thm:chrt-surj}
Let $(X,\w_X)$ be a compact network with a weakly coherent, Hausdorff topology. Suppose $f:X \r X$ is a weight-preserving map. Then $f$ is surjective.
\end{theorem}

\begin{proof}[Proof of Theorem \ref{thm:chrt-surj}]
Towards a contradiction, suppose $f(X) \neq X$. By Proposition \ref{prop:cohrt-cts}, $f$ is continuous. Define $X_0:=X$, and $X_n:=f(X_{n-1})$ for each $n \in \N$. The continuous image of a compact space is compact, and compact subspaces of a Hausdorff space are closed. Thus we obtain a decreasing sequence of nonempty compact sets $X_0 \supseteq X_1 \supseteq X_2 \supseteq \ldots$. Then $Z:= \cap_{n\in \N} X_n$ is nonempty and compact, hence closed. 

We now break up the proof up into several claims.

\begin{claim}
$f(Z) = Z.$
\end{claim}
To see this, first note that $f(\cap_{n\in \N}X_n) \subseteq \cap_{n\in \N}f(X_n) \subseteq Z$. Next let $v \in Z$. For each $n\in \N$, let $u_n \in X_n$ be such that $f(u_n) = v$. Since singletons in a Hausdorff space are closed, we know that $\set{v}$ is closed. By continuity, it follows that $f\inv(\set{v})$ is closed. 

By sequential compactness, the sequence $(u_n)_n$ has a convergent subsequence that converges to some limit $u$. Since each $u_n \in f\inv(\set{v})$ and a closed set contains its limit points, we then have $u \in f\inv(\set{v})$. Thus $f(u) = v$, and $v \in f(Z)$. Hence $Z = f(Z)$. This proves the claim. 

Let $x \in X_0 \setminus X_1$. Define $x_0:=x$, and for each $n \in \N$, define $x_n:=f(x_{n-1})$. Then $(x_n)_n$ is a sequence in the sequentially compact space $X$, and so it has a convergent subsequence $(x_{n_k})_k$. Let $z$ be the limit of this subsequence. 

\begin{claim}
$z \in Z$.
\end{claim}
To see this, suppose towards a contradiction that $z \not\in Z$. Then there exists $N\in \N$ such that $z \not\in X_N$. Since $X_N$ is closed, we have that $X\setminus X_N$ is open. By the definition of convergence, $X\setminus X_N$ contains all but finitely many terms of the sequence $(x_{n_k})_k$. But each $x_{n_k}$ belongs to $X_{n_k}$, which is a subset of $X_N$ for sufficiently large $k$. Thus infinitely many terms of the sequence $(x_{n_k})_k$ belong to $X_N$, a contradiction. Hence $z\in Z$.

Now we invoke the $\d_X^\bullet$ construction as in Definition \ref{defn:delta-witness}.

\begin{claim}
For any $E\subseteq X$ and any $y \in E$, 
\[\d_X^E(y,Z) = \d_X^{f(E)}(f(y),f(Z)).\]
\end{claim}

To see this claim, fix $y\in E$. 
Let $v\in f(Z)$. Then $v =f(y') $ for some $y' \in Z$, and $\d_X^{f(E)}(f(y),v) = \d_X^E(y,y')$. To see the latter assertion, let $u\in f(E)$; then $u =f(y'') $ for some $y'' \in E$. Because $f$ is weight-preserving, we then have:
\begin{align*}
|\w_X(f(y),u)-\w_X(v,u)| = |\w_X(f(y),f(y''))-\w_X(f(y'),f(y''))| &= |\w_X(y,y'') - \w_X(y',y'')|,\\
|\w_X(u,f(y))-\w_X(u,v)| = |\w_X(f(y''),f(y))-\w_X(f(y''),f(y'))| &= |\w_X(y'',y) - \w_X(y'',y')|.
\end{align*}
The preceding equalities show that for each $v \in f(Z)$, there exists $y'\in Z$ such that $\d_X^{f(E)}(f(y),v) = \d_X^E(y,y')$. Conversely, for any $y'\in Z$, we have $\d_X^{f(E)}(f(y),f(y')) = \d_X^E(y,y')$. It follows that $\d_X^{f(E)}(f(y),f(Z)) = \d_X^E(y,Z)$.

\begin{claim}
$\d_X^{X}(x,Z) = 0$.
\end{claim}
To see this, assume towards a contradiction that $\d_X^X(x,Z) = \e > 0$ ($\d_X^X$ is nonnegative by definition). 
Since $f(Z) = Z$, we have by the preceding claim that $\d_X^X(x,Z) = \d_X^{f(X)}(f(x),Z) = \ldots = \d_X^{f^n(X)}(f^n(x),Z)$ for each $n \in \N$. In particular, for any $k \in \N$,
\[
\e = \d_X^{f^{n_k}(X)}(f^{n_k}(x),Z) \leq  \d_X^{f^{n_k}(X)}(f^{n_k}(x),z) \leq \d_X^X(f^{n_k}(x),z). 
\]
Here the first inequality follows because the left hand side includes an infimum over $z \in Z$, and the second inequality holds because the right hand side includes a supremum over a larger set.

Since $x_{n_k} \r z$, we have that $\d_X^X(f^{n_k}(x),z) = \d_X^X(x^{n_k},z) = \d_X(x_{n_k},z) \to 0$, where the limit holds by the definition of weak coherence (cf. Definition \ref{defn:coherence-weak}). This contradiction proves the claim.

Recall that by assumption, $x\not\in Z$. For each $n \in \N$, let $z_n \in Z$ be such that $\d_X^X(x,z_n) < 1/n$. Then the sequence $(z_n)_n$ converges to $x$ by weak coherence. Hence any open set containing $x$ also contains infinitely many points of $Z$ that are distinct from $x$. Thus $x$ is a limit point of the closed set $Z$, and so $x \in Z$. This is a contradiction.\end{proof}

\cptsisom*

\begin{proof}[Proof of Theorem \ref{thm:cpt-sisom}]

(\ref{item:cpt-motifs}) follows from (\ref{item:cpt-wisom}) by the stability of motif sets (Theorem \ref{thm:motif-stab}). (\ref{item:cpt-wisom}) follows from (\ref{item:cpt-skel}) by the triangle inequality of $\dn$. We need to show that (\ref{item:cpt-motifs}) implies (\ref{item:cpt-skel}). 

First observe that $\sk(X)$, being a continuous image of the separable space $X$, is separable, and likewise for $\sk(Y)$. Let $S_X, S_Y$ denote countable dense subsets of $\sk(X)$ and $\sk(Y)$. Next, because $\dn(X,\sk(X)) = 0$, an application of Theorem \ref{thm:motif-stab} shows that $\M_n(X) = \M_n(\sk(X))$ for each $n\in \N$. The analogous result holds for $\sk(Y)$. Thus $\M_n(\sk(X)) = \M_n(\sk(Y))$ for each $n \in \N$. Since $X$ and $Y$ have weakly coherent topologies, so do $\sk(X)$ and $\sk(Y)$, by Proposition \ref{prop:skel-cohrt}. By Propositions \ref{prop:wp-motifs} and \ref{prop:wp-extn}, there exist weight-preserving maps $\ph: \sk(X) \r \sk(Y)$ and $\psi:\sk(Y) \r \sk(X)$. Define $X^{(1)}:=\psi(\sk(Y))$ and $Y^{(1)}:= \ph(\sk(X))$. Also define $\ph_1$ and $\psi_1$ to be the restrictions of $\ph$ and $\psi$ to $X^{(1)}$ and $Y^{(1)}$, respectively. Finally define $X^{(2)}:= \psi_1(Y^{(1)})$ and $Y^{(2)}:= \ph_1(X^{(1)})$. Then we have the following diagram.

\begin{center}
\begin{tikzpicture}
\node (11) at (0,2) {$\sk(X)$};
\node (12) at (0,0) {$\sk(Y)$};
\node (21) at (3,2) {$X^{(1)}$};
\node (22) at (3,0) {$Y^{(1)}$};
\node (31) at (6,2) {$X^{(2)}$};
\node (32) at (6,0) {$Y^{(2)}$};

\node at (1.5,0) {$\supseteq$};
\node at (1.5,2) {$\supseteq$};
\node at (4.5,0) {$\supseteq$};
\node at (4.5,2) {$\supseteq$};

\path[->] (11) edge node[above, pos=0.25]{$\ph$}(22);
\path[->,dashed] (12) edge node[pos=0.15, above]{$\psi$}(21);
\path[->] (21) edge node[above, pos=0.25]{$\ph_1$}(32);
\path[->,dashed] (22) edge node[pos=0.15, above]{$\psi_1$}(31);

\end{tikzpicture}

\end{center} 

Now $\psi \circ \ph$ is a weight-preserving map from $\sk(X)$ into itself. Furthermore, it is continuous by Proposition \ref{prop:cohrt-cts}. Since $\sk(X)$ is Hausdorff (Proposition \ref{prop:skel-hdrf}), an application of Theorem \ref{thm:chrt-surj} now shows that $\psi\circ \ph : \sk(X) \r \sk(X)$ is surjective. It follows from Definition \ref{defn:terminal} that $\psi\circ \ph$ is an automorphism of $\sk(X)$, hence a bijection. It follows that $\ph$ is injective. The dual argument for $\ph\circ \psi$ shows that $\psi$ is also injective. 

Since $\psi\circ\ph(\sk(X)) = X^{(2)} = \sk(X)$ and $X^{(2)} \subseteq X^{(1)} \subseteq \sk(X)$, we must have $X^{(1)} = \sk(X)$. Similarly, $Y^{(1)} = \sk(Y)$. Thus $\ph:\sk(X) \r \sk(Y)$ and $\psi:\sk(Y) \r \sk(X)$ are a weight-preserving bijections. In particular, we have $\sk(X) \cong^s \sk(Y)$. This concludes the proof.\end{proof}

\subsubsection{Proofs related to Theorem \ref{thm:complete}}
\label{sec:pf-metrics-complete}

\begin{lemma}
\label{lem:chain-corr}
Let $X_1,\ldots, X_n \in \Ncal$, and for each $i = 1,\ldots, n-1$, let $R_i \in \Rsc(X_i,X_{i+1})$. Define 
\begin{align*}
R&:=R_1\circ R_2 \circ \cdots \circ R_n\\
&=\set{(x_1,x_n)\in X_1\times X_n \mid \exists (x_i)_{i=2}^{n-1}, (x_i,x_{i+1})\in R_i \text{ for all } i}.
\end{align*}
Then $\dis(R) \leq \sum_{i=1}^n\dis(R_i).$
\end{lemma}
\begin{proof}
We proceed by induction, beginning with the base case $n=2$. For convenience, write $X:= X_1$, $Y:=X_2$, and $Z:= X_3$. Let $(x,z), (x',z') \in R_1 \circ R_2$. Let $y\in Y$ be such that $(x,y) \in R_1$ and $(y,z) \in R_2$. Let $y' \in Y$ be such that $(x',y') \in R_1, (y',z')\in R_2$. Then we have:
\begin{align*}
|\w_X(x,x') - \w_Z(z,z')| &= |\w_X(x,x') - \w_Y(y,y') + \w_Y(y,y') - \w_Z(z,z')|\\
&\leq |\w_X(x,x') - \w_Y(y,y')| + |\w_Y(y,y') - \w_Z(z,z')|\\
&\leq \dis(R) + \dis(S).
\end{align*} This holds for any $(x,z), (x',z') \in R\circ S$, and proves the claim.

Suppose that the result holds for $n=N \in \N$. Write $R'=R_1\circ \cdots \circ R_N$ and $R=R' \circ R_{N+1}.$ Since $R'$ is itself a correspondence, applying the base case yields:
\begin{align*}
\dis(R) &\leq \dis(R') + \dis(R_{N+1})\\
&\leq \sum_{i=1}^N\dis(R_i) + \dis(R_{N+1}) \quad \text{by induction}\\
&=\sum_{i=1}^{N+1}\dis(R_i).
\end{align*}
This proves the lemma. \end{proof}

\complete*

\begin{proof}[Proof of Theorem \ref{thm:complete}]

Let $([X_i])_{i\in \N}$ be a Cauchy sequence in $\Ncal/{\cong^w}$. First we wish to show this sequence converges in $\Ncom/{\cong^w}$. Note that $(X_i)_{i\in \N}$ is a Cauchy sequence in $\Ncal$, since the distance between two equivalence classes is given by the distance between any representatives. To show $(X_i)_i$ converges, it suffices to show that a subsequence of $(X_i)_i$ converges, so without loss of generality, suppose $\dn(X_i,X_{i+1})<2^{-i}$ for each $i$. Then for each $i$, there exists $R_i\in \Rsc(X_i,X_{i+1})$ such that $\dis(R_i)\leq 2^{-i+1}$. Fix such a sequence $(R_i)_{i\in \N}$. For $j> i$, define $$R_{ij}:=R_i\circ R_{i+1}\circ R_{i+2}\circ \cdots \circ R_{j-1}.$$ 
By Lemma \ref{lem:chain-corr}, $\dis(R_{ij})\leq \dis(R_i) + \dis(R_{i+1}) + \ldots +\dis(R_{j-1})\leq 2^{-i+2}$. Next define:
\[overline{X}:=\set{(x_j) : (x_j,x_{j+1})\in R_j \text{ for all } j\in \N} \subseteq \ds\prod_{i \in \N}X_i.\]
To see $\overline{X} \neq \emptyset$, let $x_1\in X_1$, and use the (nonempty) correspondences to pick a sequence $(x_1,x_2,x_3,\ldots)$. By construction, $(x_i)\in \overline{X}$. 

Define $\w_{\overline{X}}((x_j),(x_j')) = \limsup_{j\r \infty}\w_{X_j}(x_j,x_j')$. We claim that $\w_{\overline{X}}$ is bounded, and thus is a real-valued weight function. To see this, let $(x_j),(x_j') \in {\overline{X}}$. Let $j\in \N$. Then we have:
\begin{align*}
|\w_{X_j}(x_j,x_j')|&=|\w_{X_j}(x_j,x_j')-\w_{X_{j-1}}(x_{j-1},x'_{j-1})+\w_{X_{j-1}}(x_{j-1},x'_{j-1})-\ldots\\
&\quad -\w_{X_{1}}(x_{1},x'_{1})+\w_{X_{1}}(x_{1},x'_{1})|\\
&\leq |\w_{X_1}(x_1,x'_1)|+ \dis(R_1)+\dis(R_2)+\ldots+\dis(R_{j-1})\\
&\leq |\w_{X_1}(x_1,x'_1)|+ 2
\intertext{But $j$ was arbitrary. Thus we obtain:}
|\w_{\overline{X}}((x_j),(x_j'))| &= \limsup_{j\r \infty}|\w_{X_j}(x_j,x_j')| \leq |\w_{X_1}(x_1,x'_1)|+ 2< \infty.
\end{align*}

{
\newcommand{\Xbar}{\overline{X}}
\newcommand{\wbar}{\w_{\Xbar}}

\begin{claim}
\label{cl:X-cpt}
$(\Xbar,\wbar) \in \Ncom$. More specifically, $\Xbar$ is a first countable compact topological space, and $\wbar$ is continuous with respect to the product topology on $\Xbar \times \Xbar$.
\end{claim}
\begin{proof}[Proof of Claim \ref{cl:X-cpt}] We equip $\prod_{i\in \N}X_i$ with the product topology. First note that the countable product $\prod_{i\in \N} X_i$ of first countable spaces is first countable. Any subspace of a first countable space is first countable, so $\Xbar \subseteq \prod_{i\in \N} X_i$ is first countable. By Tychonoff's theorem, $\prod_{i\in \N} X_i$ is compact. So to show that $\Xbar$ is compact, we only need to show that it is closed. 

If $\Xbar = \prod_{i\in \N}X_i$, we would automatically know that $\Xbar$ is compact. Suppose not, and let $(x_i)_{i\in\N} \in \left(\prod_{i\in \N}X_i \right)\setminus \Xbar$. Then there exists $N\in \N$ such that $(x_N,x_{N+1}) \not\in R_N$. Define:
\[U:= X_1 \times X_2 \times \ldots \times \set{x_N} \times \set{x_{N+1}} \times X_{N+2} \times \ldots .\]
Since $X_i$ has the discrete topology for each $i\in \N$, it follows that $\set{x_N}$ and $\set{x_{N+1}}$ are open. Hence $U$ is an open neighborhood of $(x_i)_{i\in \N}$ and is disjoint from $\prod_{i\in \N}X_i$. It follows that $\left(\prod_{i\in \N}X_i \right)\setminus \Xbar$ is open, hence $\Xbar$ is closed and thus compact. 

It remains to show that $\wbar$ is continuous. We will show that preimages of open sets in $\R$ under $\wbar$ are open. Let $(a,b) \subseteq \R$, and suppose $\wbar\inv[(a,b)]$ is nonempty (otherwise, there is nothing to show). Let $(x_i)_{i\in \N}, (x'_i)_{i\in \N} \in \Xbar \times \Xbar$ be such that 
\[\a:= \wbar((x_i)_i,(x'_i)_i) \in (a,b).\] 
Write $r':= \min(|\a-a|,|b-\a|)$, and define $r:=\tfrac{1}{2}r'$. 

Let $N\in \N$ be such that $2^{-N+3} < r$. Consider the following open sets:
\begin{align*}
U&:=\set{x_1}\times \set{x_2} \times \ldots \times \set{x_N} \times X_{N+1} \times X_{N+2} \times \ldots \subseteq \prod_{i\in \N}X_i,\\
V&:=\set{x'_1}\times \set{x'_2} \times \ldots \times \set{x'_N} \times X_{N+1} \times X_{N+2} \times \ldots \subseteq \prod_{i\in \N}X_i.
\end{align*}

Next write $A:=\Xbar \cap U$ and $B:=\Xbar \cap V$. Then $A$ and $B$ are open with respect to the subspace topology on $\Xbar$. Thus $A\times B$ is open in $\Xbar \times \Xbar$. Note that $(x_i)_{i\in \N} \in A$ and $(x'_i)_{i\in \N} \in B$. We wish to show that $A\times B \subseteq \wbar\inv[(a,b)]$, so it suffices to show that $\wbar(A,B) \subseteq (a,b)$. 

Let $(z_i)_{i\in \N} \in A$ and $(z'_i)_{i\in \N} \in B$. Notice that $z_i= x_i$ and $z'_i = x'_i$ for each $i \leq N$. So for $n\leq N$, we have $|\w_{X_n}(z_n,z'_n) - \w_{X_n}(x_n,x'_n)| = 0.$ 

Next let $n\in \N$, and note that:
\begin{align*}
&|\w_{X_{N+n}}(z_{N+n},z'_{N+n}) - \w_{X_{N+n}}(x_{N+n},x'_{N+n})|\\
&=|\w_{X_{N+n}}(z_{N+n},z'_{N+n}) - \w_{X_N}(z_N,z'_N) + \w_{X_N}(z_N,z'_N) - \w_{X_{N+n}}(x_{N+n},x'_{N+n})|\\
&=|\w_{X_{N+n}}(z_{N+n},z'_{N+n}) - \w_{X_N}(z_N,z'_N) + \w_{X_N}(x_N,x'_N) - \w_{X_{N+n}(x_{N+n}},x'_{N+n})|\\
&\leq \dis(R_{N,N+n}) + \dis(R_{N,N+n}) \leq 2^{-N+2} + 2^{-N+2} = 2^{-N+3} < r.
\end{align*}
Here the second to last inequality follows from Lemma \ref{lem:chain-corr}. The preceding calculation holds for arbitrary $n\in \N$. It follows that:
\[\limsup_{i \r \infty}\w_{X_i}(x_i,x'_i) - 
\limsup_{i \r \infty}\w_{X_i}(z_i,z'_i) \leq 
\limsup_{i \r \infty}(\w_{X_i}(x_i,x'_i)-\w_{X_i}(z_i,z'_i)) < r,\]
and similarly $\limsup_{i \r \infty}\w_{X_i}(z_i,z'_i) - 
\limsup_{i \r \infty}\w_{X_i}(x_i,x'_i) < r.$ 
Thus we have $\wbar((z_i)_i,(z'_i)_i) \in (a,b)$. This proves continuity of $\wbar$. \end{proof}

}

Next we claim that $X_i \xrightarrow{\dn} \overline{X}$ as $i\r \infty$. Fix $i\in \N$. We wish to construct a correspondence $S\in \Rsc(X_i,\overline{X})$. Let $y\in X_i$. We write $x_i=y$ and pick $x_1,x_2,\ldots, x_{i-1},x_{i+1},\ldots$ such that $(x_j,_{j+1}) \in R_j$ for each $j\in \N$. We denote this sequence by $(x_j)^{x_i=y}$, and note that by construction, it lies in $\overline{X}$. Conversely, for any $(x_j) \in \overline{X}$, we simply pick its $i$th coordinate $x_i$ as a corresponding element in $X_i$. We define:
\begin{align*}
S &:=A \cup B, \text{ where }\\
A &:= \set{(y,(x_j)^{x_i=y}) : y\in X_i}\\
B &:= \set{(x_i,(x_k)) : (x_k) \in \overline{X}}
\end{align*}
Then $S \in \Rsc(X_i,\overline{X})$. We claim that $\dis(S) \leq 2^{-i+2}$. Let $z=(y,(x_k)),z'=(y',(x'_k)) \in B$. Let $n\in \N, n \geq i$. Then we have:
\begin{align*}
|\w_{X_i}(y,y') - \w_{X_n}(x_n,x'_n)|
&=|\w_{X_i}(y,y') - \w_{X_{i+1}}(x_{i+1},x'_{i+1})+
\w_{X_{i+1}}(x_{i+1},x'_{i+1})-\ldots\\
&\quad + \w_{X_{n-1}}(x_{n-1},x'_{n-1})+
\w_{X_{n}}(x_{n},x'_{n})|\\
&\leq \dis(R_i) + \dis(R_{i+1})+\ldots+\dis(R_{n-1})\\
&\leq 2^{-i+1} + 2^{-i} + \ldots + 2^{-n+2}\\
&\leq 2^{-i+2}.
\intertext{This holds for arbitrary $n\geq i$. It follows that we have:}
|\w_{X_i}(y,y') - \w_{\overline{X}}((x_k),(x'_k))|&\leq 2^{-i+2}.
\end{align*}
Similar inequalities hold for $z,z' \in A$, and for $z\in A,z'\in B$. Thus $\dis(S) \leq 2^{-i+2}$. It follows that $\dn(X_i,\overline{X})\leq 2^{-i+1}$. Thus the sequence $([X_i])_i$ converges to $[\overline{X}]\in \Ncom/{\cong^w}$. 

Finally, we need to check that $(\Ncom/{\cong^w},\dn)$ is complete. Let $([Y_n])_n$ be a Cauchy sequence in $\Ncom/{\cong^w}$. By invoking Theorem \ref{thm:sampling}, for each $n$ let $[X_n] \in \Ncal/{\cong^w}$ be such that $\dn([X_n],[Y_n])<\tfrac{1}{n}$. Let $\e > 0$. Then for sufficiently large $m$ and $n$, we have: 
\[\dn([X_n],[X_m])\leq \dn([X_n],[Y_n])+\dn([Y_n],[Y_m])+\dn([Y_m],[X_m])<\e.\]
Thus $([X_n])_n$ is a Cauchy sequence in $\Ncal/{\cong^w}$. By applying what we have shown above, this sequence converges to some $[\overline{X}] \in \Ncom/{\cong^w}$. By applying the triangle inequality, we see that the sequence $([Y_n])_n$ also converges to $[\overline{X}]$. This shows completeness, and concludes the proof. \end{proof}

\begin{remark} In the proof of Theorem \ref{thm:complete}, note that the construction of the limit is dependent upon the initial choice of optimal correspondences. However, all such limits obtained from different choices of optimal correspondences belong to the same weak isomorphism class. 
\end{remark}

\subsubsection{Proofs related to Theorem \ref{thm:geod-inf}}
\label{sec:pf-metrics-geod-inf}

\geodesic*

\begin{proof}[Proof of Theorem \ref{thm:geodesic}]
Let $[X],[Y] \in \Ncalsim$. We will show the existence of a curve $\g: [0,1] \r \Ncal$ such that $\g(0)=(X,\w_X)$, $\g(1)=(Y,\w_Y)$, and for all $s, t \in [0,1]$, 
\[\dn(\g(s),\g(t)) = |t-s| \cdot \dn(X,Y).\]
Note that this yields $\dn([\g(s)],[\g(t)]) = |t-s| \cdot  \dn([X],[Y])$ for all $s,t \in [0,1]$, which is what we need to show. 

Let $R\in\Rsc^{\opt}(X,Y)$, i.e. let $R$ be a correspondence such that $\dis(R)=2\dn(X,Y)$. For each $t\in(0,1)$ define $\g(t):=\big(R,\w_{\gamma(t)}\big)$, where 
\[\w_{\gamma(t)}\big((x,y),(x',y')\big):=(1-t) \cdot \w_X(x,x')+t \cdot \w_Y(y,y') \quad\text{for all $(x,y),(x',y')\in R$}.\]
Also define $\g(0)=(X,\w_X)$ and $\g(1)=(Y,\w_Y)$. 

\begin{claim}\label{cl:ineq} For any $s,t \in [0,1]$, 
\[\dn(\g(s),\g(t)) \leq |t-s| \cdot \dn(X,Y).\]
\end{claim}
Suppose for now that Claim \ref{cl:ineq} holds. We further claim that this implies, for all $s, t \in [0,1]$, 
\[\dn(\g(s),\g(t)) = |t-s| \cdot \dn(X,Y).\] 
To see this, assume towards a contradiction that there exist $s_0 < t_0$ such that :
\begin{align*}
\dn(\g(s_0),\g(t_0)) &< |t_0 - s_0| \cdot \dn(X,Y).\\
\text{Then }\dn(X,Y) &\leq \dn(X,\g(s_0)) + \dn(\g(s_0),\g(t_0)) + \dn(\g(t_0),Y)\\
&< |s_0-0| \cdot \dn(X,Y) + |t_0-s_0|\cdot \dn(X,Y) + |1-t_0|\cdot \dn(X,Y)\\
&=\dn(X,Y), \text{ a contradiction}.
\end{align*}

Thus it suffices to show Claim \ref{cl:ineq}. There are three cases: (i) $s,t \in (0,1)$, (ii) $s=0, t \in (0,1)$, and (iii) $s\in (0,1), t=1$. The latter two cases are similar, so we just prove (i) and (ii).
For (i), fix $s,t \in (0,1)$. Notice that $\Delta := \textrm{diag}(R\times R) := \set{(r,r):r \in R}$ is a correspondence in $\Rsc(R,R).$ Then we obtain:
\begin{align*}
\dis(\Delta) &= \max_{(a,a),(b,b) \in \Delta}\vert \w_{\g(t)}(a,b) - \w_{\g(s)}(a,b)\vert\\
&= \max_{(x,y),(x',y') \in R}\vert \w_{\g(t)}((x,y),(x',y')) - \w_{\g(s)}((x,y),(x',y'))\vert\\
&= \max_{(x,y),(x',y') \in R}\vert (1-t)\w_X(x,x') + t\cdot  \w_Y(y,y') - (1-s)\w_X(x,x') -s\cdot \w_Y(y,y') \vert\\
&= \max_{(x,y),(x',y') \in R}\vert (s-t)\w_X(x,x') - (s-t)\w_Y(y,y')\vert\\
&= |t-s| \cdot \max_{(x,y),(x',y') \in R}\vert \w_X(x,x') - \w_Y(y,y')\vert\\
&\leq 2|t-s|\cdot \dn(X,Y).
\end{align*}
Finally $\dn(\g(t),\g(s))\leq \frac{1}{2}\dis(\Delta) \leq |t-s|\cdot \dn(X,Y)$.

For (ii), fix $s=0, t\in (0,1)$. Define $R_X = \set{(x,(x,y)) : (x,y) \in R}$. Then $R_X$ is a correspondence in $\Rsc(X, R)$.
\begin{align*}
\dis(R_X) &= \max_{(x,(x,y)),(x',(x',y')) \in R_X}|\w_X(x,x') - (1-t)\cdot \w_X(x,x') - t\cdot \w_Y(y,y')|\\
&=\max_{(x,(x,y)),(x',(x',y')) \in R_X} t\cdot  |\w_X(x,x') - \w_Y(y,y')|  \\
&=t  \dis(R) = 2t\cdot  \dn(X,Y).
\end{align*}
Thus $\dn(X,\g(t)) \leq t \cdot \dn(X,Y)$. The proof for case (iii), i.e. that $\dn(\g(s),Y) \leq |1-s| \cdot \dn(X,Y)$, is similar. This proves Claim \ref{cl:ineq}, and the result follows. \end{proof}

\geodinf*

\begin{proof}[Proof of Theorem \ref{thm:geod-inf}] Let $[X],[Y] \in \Ncomsim$. It suffices to find a geodesic between $X$ and $Y$, because the distance between any two equivalence classes is given by the distance between any two representatives, and hence we will obtain a geodesic between $[X]$ and $[Y]$. 

Let $(X_n)_n, (Y_n)_n$ be sequences in $\Ncal$ such that $\dn(X_n,X)<\tfrac{1}{n}$ and $\dn(Y_n,Y)<\tfrac{1}{n}$ for each $n$. For each $n$, let $R_n$ be an optimal correspondence between $X_n$ and $Y_n$, endowed with the weight function
\[\w_n((x,y),(a,b))=\tfrac{1}{2}\w_{X_n}(x,a)+\tfrac{1}{2}\w_{X_n}(y,b).\]
By the proof of Theorem \ref{thm:geodesic}, the network $(R_n,\w_n)$ is a midpoint of $X_n$ and $Y_n$. 
\begin{claim}\label{cl:mid} The collection $\set{R_n : n \in \N}$ is precompact.
\end{claim}

Assume for now that Claim \ref{cl:mid} is true. Then we can pick a sequence $(R_n)$ that converges to some $R \in \Ncom$. Then we obtain:
\begin{align*}
\dn(X,R) &\leq \dn(X,X_n)+\dn(X_n,R_n)+\dn(R_n,R)\\
&=\dn(X,X_n) + \tfrac{1}{2}\dn(X_n,Y_n)+\dn(R_n,R) \r \tfrac{1}{2}\dn(X,Y).
\end{align*}
Similarly $\dn(R,Y) \leq \tfrac{1}{2}\dn(X,Y)$. Furthermore, equality holds in both inequalities, because we would get a contradiction otherwise. Thus $R$ is a midpoint of $X$ and $Y$, and moreover, $[R]$ is a midpoint of $[X]$ and $[Y]$. The result now follows by an application of Theorem \ref{thm:mid}. 

It remains to prove Claim \ref{cl:mid}. By Theorem \ref{thm:precompactness}, it suffices to show that $\set{R_n}$ is uniformly approximable. 

Since $\dn(X_n,X) \r 0$ and $\dn(Y_n,Y) \r 0$, we can choose $D>0$ large enough so that $\diam(X_n) \leq \tfrac{D}{2}$ and $\diam(Y_n) \leq \tfrac{D}{2}$ for all $n$. Then $\diam(R_n) \leq D$ for all $n$. 

Let $\e > 0$. Fix $N$ large enough so that $\tfrac{1}{N}<\tfrac{\e}{2}$, and write $N(\e) = \max_{n \leq N}|R_n|$. We wish to show that every $R_n$ is $\e$-approximable by a finite network with cardinality up to $N(\e)$. For any $n\leq N$, we know $R_n$ approximates itself, and $|R_n| \leq N(\e)$. Next let $n > N$. It will suffice to show that $R_n$ is $\e$-approximable by $R_N$. 

Let $S, T$ be optimal correspondences between $X_n,X_N$ and $Y_n,Y_N$ respectively. Note that $\dn(X_N,X_n)\leq \dn(X_N,X) + \dn(X,X_n) \leq \tfrac{1}{N} + \tfrac{1}{N} = \tfrac{2}{N}$, and similarly $\dn(Y_N,Y_n) \leq \tfrac{2}{N}$. Thus $\dis(S) \leq \tfrac{4}{N}$ and $\dis(T)\leq \tfrac{4}{N}$. Next write
\[Q=\set{(x,y,x',y') \in R_N\times R_n : (x,x')\in S, (y,y')\in T}.\]
Observe that since $S$ and $T$ are correspondences, $Q$ is  a correspondence between $R_N$ and $R_n$. Next we calculate $\dis(Q)$:
\begin{align*}
\dis(Q)&= \max_{\substack{(x,y,x',y'),\\(a,b,a',b')\in Q}}|\w_N((x,y),(a,b)) - \w_n((x',y'),(a',b'))|\\
&= \max_{\substack{(x,y,x',y'),\\(a,b,a',b')\in Q}}|\tfrac{1}{2}\w_{X_N}(x,a)+\tfrac{1}{2}\w_{Y_N}(y,b) - \tfrac{1}{2}\w_{X_n}(x',a') - \tfrac{1}{2}\w_{Y_n}(y',b'))|\\
&\leq \tfrac{1}{2}\max_{(x,x'),(a,a')\in S}|\w_{X_N}(x,a)-\w_{X_n}(x',a')| + \tfrac{1}{2}\max_{(y,y'),(b,b')\in S}|\w_{Y_N}(y,b)-\w_{Y_n}(y',b')|\\
&=\tfrac{1}{2}\dis(S) + \tfrac{1}{2}\dis(T) \leq \tfrac{4}{N}.
\end{align*} 
Thus $\dn(R_N,R_n) \leq \tfrac{2}{N} < \e$. This shows that any $R_n$ can be $\e$-approximated by a network having up to $N(\e)$ points. Thus $\set{R_n}$ is uniformly approximable, hence precompact. Thus Claim \ref{cl:mid} and the result follow. \end{proof}

\section{Invariants, stability, and convergence}
\label{sec:inv-conv}

At this point, we have computed $\dn$ between several examples of networks, as in Example \ref{prop:perm} and Remark \ref{remark:bijection}. We also asserted in Remark \ref{rem:compute-hard} that $\dn$ is in general difficult to compute. 
A standard approach that is used in practice to circumvent the problem of computing $\dn$ is to compute {lower bounds} instead. We frame this approach in terms of defining \emph{quantitatively stable invariants} of networks, and then comparing the invariants instead of comparing the networks directly. 
In this section, we set up the framework of stable network invariants, explain how this incorporates existing stability results on persistent homology and hierarchical clustering, and go on to prove convergence of these methods in the infinite setting. In particular, the setting of compact networks gives a suitable family of infinite networks with well-defined persistence diagrams.

Throughout this section, we will reserve the notation $X$ for random variables, and switch to using letters $Y,Z$ to denote networks.

\subsection{Preliminary setup}

Throughout this section, all the underlying topological spaces will be assumed to be compact, unless specified otherwise.
We use the following convention for an open cover of a topological space: An \emph{open cover} of a topological space $Z$ is a collection of open sets $\{U_i \subseteq Z : i \in I\}$ indexed by some set $I$ such that each $U_i$ is \emph{nonempty} and $\bigcup_{i\in I}U_i = Z$.
Given a topological space $Z$, we will write $\borel(Z)$ to denote the Borel $\s$-field on $Z$. We often write $(\Om, \mc{F}, \P)$ to denote a probability space. The support \gls{supp} of a measure $\mu_Z$ on a topological space $Z$ is defined as:
\[\supp(\mu_Z):= \set{ z\in Z : \text{ for each open neighborhood } N_z \ni z,\text{ we have } \mu_Z(N_z) > 0}.\]
The complement of $\supp(\mu_Z)$ is the union of open sets of measure zero. It follows that $\supp(\mu_Z)$ is closed, hence compact.

Given a probability space $(\Om, \mc{F},\P)$ and a measurable space $(Z,\mc{G})$, a \emph{random variable} (defined on $\Om$ with values in $Z$) is a measurable function $\mathbf{z}: \Om \r Z$. The \emph{pushforward} or \emph{image measure} of $\mathbf{z}$ is defined to be the measure $\mathbf{z}_\#\P$ on $\mc{G}$ given by writing $\mathbf{z}_\#\P(A):= \P(\mathbf{z}\inv[A])$ for all $A\in \mc{G}$. The pushforward is often called the \emph{distribution} of $\mathbf{z}$. 

We recall an important corollary of the existence of infinite products of probability measures. For any probability space $(Z,\mc{E},\mu_Z)$, there exists a probability space $(\Om,\mc{F},\P)$ on which there are independent random variables $\mathbf{z}_1,\mathbf{z}_2,\ldots $ taking values in $Z$ with distribution $\mu_Z$ \cite[\S8.2]{dudley}. This is done by letting $\Om:= \prod_{n\in \N} Z$ and taking each $\mathbf{z}_i$ to be the canonical projection map $(z_i)_{i\in \N} \mapsto z_i$.

\begin{definition}[Measure network, \cite{gwnets}]
\label{defn:measure-net}
A \emph{measure network} is a triple $(Z,\w_Z,\mu_Z)$ where $Z$ is a compact Polish space, $\mu_Z$ is a Borel probability measure on $Z$, and $\w_Z:Z\times Z \to \R$ is a bounded measurable function.
\end{definition}

We now specify what it means to sample from a measure network.

\begin{definition}[Sampling from a measure network]
\label{defn:measure-net-sampling}
Let $(Z,\w_Z,\mu_Z)$ be a measure network, and let $(\Om, \mc{F},\P)$ be a probability space. A measurable function $\mathbf{z}:\Om \to Z$ is said to be a random variable taking values in $Z$. For each $i\in \N$, let $\mathbf{z}_i: \Om \r Z$ be an independent random variable with distribution $\mu_Z$. For each $n\in \N$, let $\mc{Z}_n=\set{\mathbf{z}_1,\mathbf{z}_2,\ldots, \mathbf{z}_n}$. To say that we sample $n$ points randomly from $(Z,\w_Z,\mu_Z)$ means that we take a selection $\mc{Z}_n(\omega)=\{ z_1:= \mathbf{z}_1(\omega),z_2:=\mathbf{z}_2(\omega),\ldots, z_n:=\mathbf{z}_n(\omega)\}$ and equip the collection $Z_n:=\{z_1,\ldots,z_n\}$ with the restriction of $\w_Z$.

A related concept is that of \emph{almost sure convergence w.r.t. $\dn$}. To say that $\mc{Z}_n$ converges almost surely to $Z$ with respect to $\dn$ means that for each $\e > 0$,
\[\P\left( \limsup_{n\to \infty} \{ \omega \in \Omega : \dn(\mc{Z}_n(\omega),Z) \geq \e \} \right) = 0.\]
\end{definition}

This notion of sampling is the form used when sampling from metric spaces, and it assumes that selecting the points $Z_n=\{z_1,\ldots, z_n\}$ is enough to obtain the restriction of $\w_Z$ to $Z_n\times Z_n$. This is often the case, such as when $\w_Z$ is given by a Gaussian kernel. Note that this is quite different from sampling from a combinatorial graph, where one has to use techniques such as Markov chain Monte Carlo so as to not sample very sparse graphs \cite{lyu2019sampling}.

The presence of $\mu_Z$ allows us to ask questions such as the following: Given a measure network $(Z,\w_Z,\mu_Z)$ and some $\e>0$, what should it mean to take an \emph{optimal} $\e$-system (cf. Definition \ref{def:systems}) on $Z$? The next definition sheds some light on this question.

\begin{definition}[Minimal mass function \gls{mfm}]
\label{defn:fraktur-m} Let $(Z,\w_Z,\mu_Z)$ be a measure network. Let $\mc{U}$ be any $\e$-system on $Z$ for $\e>0$. We define the \emph{minimal mass function}
$\mf{m}(\mc{U}):=\min\set{\mu_Z(U) : U \in \mc{U}, \;\mu_Z(U) > 0}.$ Note that $\mf{m}$ returns the minimal non-zero mass of an element in $\mc{U}$.

Next let $\e > 0$. Define an \emph{optimal minimal mass function} $\text{\gls{mfM}}_{\e}: \Ncom \r (0,1]$ as follows: 
\[\mf{M}_{\e}(Z):=\sup\set{\mf{m}(\mc{U}) : \mc{U} \text{ a refined $\e$-system on $Z$}}.\]
Recall that a \emph{refined} $\e$-system was defined in Definition \ref{def:systems}.
Since $\mc{U}$ covers $Z$, we know that the total mass of $\mc{U}$ is $1$. Thus the set of elements in $\mc{U}$ with positive mass is nonempty, and so $\mf{m}(\mc{U})$ is strictly positive. It follows that $\mf{M}_{\e}(Z)$ is strictly positive. More is true when $\mu_Z$ is fully supported on $Z$: given any $\e$-system $\mc{U}$ on $Z$ and any $U \in \mc{U}$, we automatically have $\mu_Z(U) > 0$ (recall our convention above that empty elements are excluded from an open cover).
\end{definition}

In the preceding definition, for a given $\e> 0$, the function $\mf{M}_\e(Z)$ considers the collection of all \emph{refined} $\e$-systems on $Z$, and then maximizes the minimal mass of any element in such an $\e$-system. 
The functions in Definition \ref{defn:fraktur-m} are crucial to the next result, which shows that as we sample points from a distribution on a network, the sampled subnetwork converges almost surely to the support of the distribution.

\begin{theorem}[Probabilistic network approximation]\label{thm:prob-net-approx}
Let $(Z,\w_Z,\mu_Z)$ be a measure network. For each $i\in \N$, let $\mathbf{z}_i: \Om \r Z$ be an independent random variable defined on some probability space $(\Om, \mc{F},\P)$ with distribution $\mu_Z$. For each $n\in \N$, let $\mc{Z}_n=\set{\mathbf{z}_1,\mathbf{z}_2,\ldots, \mathbf{z}_n}$. Let $\e > 0$.  Then we have: 
\[\P\big(\set{\omega \in \Om : \dn(\supp(\mu_Z),Z_n)\geq \e } \big) \leq \frac{\left(1-\mf{M}_{\e/2}(\supp(\mu_Z))\right)^n}{\mf{M}_{\e/2}(\supp(\mu_Z))},\]
where $Z_n=\mc{Z}_n(\omega)$ is an $n$-point sample as in Definition \ref{defn:measure-net-sampling}. 
In particular, the subnetwork $\mc{Z}_n$ converges almost surely to $Z$ w.r.t. $\dn$.
\end{theorem}

\begin{remark}[Relation to the metric setting] 
The invariants in Definition \ref{defn:fraktur-m} are abstractions of invariants that are standard in the metric setting. For a concrete example, note that
\textcolor{black}{the mass of an $\e$-ball in $d$-dimensional Euclidean space scales as $\e^d$. Thus in the setting of Euclidean space $\R^d$, the quantity on the right} of Theorem \ref{thm:prob-net-approx} \textcolor{black}{would scale as $\e^{-d}(1-\e^d)^n$.} For stronger results in the metric setting, see \cite[Theorem 30]{clust-um}.  
\end{remark}

Before proving the theorem, we prove the following useful lemma:

\begin{lemma}
\label{lem:sampling-e-system}
Assume the setup of $(Z,\w_Z,\mu_Z)$, $(\Om, \mc{F},\P)$, and $\mc{Z}_n$ for each $n\in \N$ as in Theorem \ref{thm:prob-net-approx}. 
Fix $\e > 0$, and let $\mc{U} = \set{U_1,\ldots, U_m}$ be a refined $\e$-system on $\supp(\mu_Z)$. For each $1\leq i \leq m$ and each $n\in \N$, define the following event:
\[A_i:=\bigcap_{k=1}^n\set{\omega\in \Om: \mathbf{z}_k(\omega) \not\in U_i} \subseteq \Om.\]
Then we have $\P\left(\bigcup_{k=1}^mA_k\right) \leq \frac{1}{\mf{m}(\mc{U})}(1-\mf{m}(\mc{U}))^n.$
\end{lemma}
\begin{proof}[Proof of Lemma \ref{lem:sampling-e-system}]
Here we are considering the probability that at least one of the $U_i$ has empty intersection with $\mc{Z}_n$. As the points are sampled independently, we can invoke independence and write $\P(A_i) = (1- \mu_Z(U_i))^n$. Then we have:
\begin{align*}
\P\left(\bigcup_{k=1}^mA_k\right) &\leq \sum_{k=1}^m\P(A_k) 
= \sum_{k=1}^m(1-\mu_Z(U_k))^n
\leq m\cdot \max_{1\leq k \leq m } (1-\mu_Z(U_k))^n
\leq \frac{(1-\mf{m}(\mc{U}))^n}{\mf{m}(\mc{U})}.
\end{align*}
Here the first inequality follows by subadditivity of measure, and the last inequality follows because the total mass $\mu_Z(\supp(\mu_Z))=1$ is an upper bound for $m\cdot \mf{m}(\mc{U})$. Note also that each $U \in \mc{U}$ has nonzero mass, by the observation in Definition \ref{defn:fraktur-m}. \end{proof}

\begin{proof}[Proof of Theorem \ref{thm:prob-net-approx}]
By endowing $\supp(\mu_Z)$ with the restriction of $\w_Z$, it may itself be viewed as a network with full support. So for notational convenience, we assume $Z=\supp(\mu_Z)$. 

First observe that $\mf{M}_{\e/2}(Z) \in (0,1]$. Let $ r \in (0,\mf{M}_{\e/2}(Z))$, and let $\mc{U}_r$ be an $\e/2$-system on $Z$ such that $\mf{m}(\mc{U}_r) \in (r, \mf{M}_{\e/2}(Z)]$. For convenience, write $m:= |\mc{U}_r|$, and also write $\mc{U}_r= \set{U_1,\ldots, U_m}$. For each $1\leq i \leq m$, define $A_i$ as in the statement of Lemma \ref{lem:sampling-e-system}. 
Then by Lemma \ref{lem:sampling-e-system}, the probability that at least one $U_i$ has empty intersection with $Z_n$ is bounded as $\P\left(\bigcup_{k=1}^mA_k\right) \leq \frac{1}{\mf{m}(\mc{U}_r)}(1-\mf{m}(\mc{U}_r))^n$.  On the other hand, if $U_i$ has nonempty intersection with $Z_n$ for each $1\leq i \leq m$, then by Theorem \ref{thm:sampling} we obtain $\dn(Z,Z_n) < \e$. For each $n\in \N$, define: $B_n:=\set{\omega\in \Om : \dn(Z,\mc{Z}_n(\omega)) \geq \e}.$ Then we have:
\begin{align*}
\P(B_n) \leq \P\left(\bigcup_{k=1}^mA_k\right) \leq  \frac{(1-\mf{m}(\mc{U}_r))^n}{\mf{m}(\mc{U}_r)}.
\end{align*}
Since $ r \in (0,\mf{M}_{\e/2}(Z))$ was arbitrary, letting $r$ approach $\mf{M}_{\e/2}(Z)$ shows that
$\P(B_n) \leq \frac{(1-\mf{M}_{\e/2}(Z))^n}{\mf{M}_{\e/2}(Z)}.$
We have by Definition \ref{defn:fraktur-m} that $\mf{M}_{\e/2}(Z)$ is strictly positive. Thus the term on the right side of the inequality is an element of a convergent geometric series, so
\[\sum_{n=1}^\infty\P(B_n) \leq \frac{1}{\mf{M}_{\e/2}(Z)}\sum_{n=1}^\infty (1-\mf{M}_{\e/2}(Z))^n < \infty.\]
By the Borel-Cantelli lemma, we have $\P(\limsup_{n\r \infty} B_n) = 0$. The result follows.\end{proof}

In the upcoming sections, we will see applications of these results to two different fields of exploratory data analysis: \emph{persistent homology} and \emph{hierarchical clustering}. These methods are classically defined for metric spaces, but generalizations have recently been proposed to allow for meaningful constructions on generalized datasets such as the networks in this work.

\subsection{Network invariants and basic examples}

An $\R$-invariant of a network is a map $\iota:\Ncom \rightarrow \R$ such that for any $Y,Z \in \Ncom$, if $Y\cong^w Z$ then $\iota(Y) = \iota(Z)$. Any $\R$-invariant is an example of a \emph{pseudometric} (and in particular, a \emph{metric}) \emph{space valued invariant}, which we define next. Recall that a pseudometric space $(V,d_V)$ is a metric space where we allow $d_V(v,v')=0$ even if $v\neq v'$.

\begin{definition} Let $(V,d_V)$ be any metric or pseudometric space. A \emph{$V$-valued invariant} is any map $\iota:\Ncom\rightarrow V$ such that $\iota(X,\w_X) = \iota(Y,\w_Y)$ whenever $X\cong^w Y$.
\end{definition}

\begin{figure}
\begin{center}
\begin{tikzpicture}
\node[](dgm) at (5.5,1.5){$(\R^2_{\geq 0},\db)$};
\node[](U) at (5,0.5){$(\Ncom^{\operatorname{ult}},\dn)$};
\node[](N) at (0,0){$\Ncom$};
\node[](Rn2) at (3,0){$(\R^{n\times n}, \dhdf^{l^\infty})$};
\node[](Rdh) at (4.3,-1){$(\R,\dhdf)$};
\node[](Rlp) at (6,-1.2){$(\R,|\cdot |)$};
\draw[->] (N) to [bend left=60] node[above] {$\dgm$} (dgm);
\draw[->] (N) to [bend left=30] node[above] {$\H$} (U);
\draw[->] (N) to [bend left=0] node[above] {$\M_n$} (Rn2);
\draw[->] (N) to [bend left=-30] node[above] {$\spec$} (Rdh);
\draw[->] (N) to [bend left=-60] node[below] {$\diam$} node[above]{$\vdots$} (Rlp);
\end{tikzpicture}
\end{center}
\caption{The framework of quantitatively stable invariants allows us to produce easily computable proxies for $\dn$. Apart from the hierarchical clustering map $\H$, all the other maps embed network features into simple spaces with metrics that can be computed in polynomial time. The $\H$ map embeds an input compact network into $\Ncomult$ (cf. Definition \ref{def:ultrametric}), which can then be displayed as a dendrogram-like object \cite{nets-asilo}.}
\label{fig:invariants}
\end{figure}

In what follows, we will construct several maps that give rise to pseudometric space valued invariants (Proposition \ref{prop:n0invariants}). We will eventually prove that our proposed invariants are \emph{quantitatively stable}. This notion is made precise in \S\ref{sec:quant-stab}.

\begin{example} Define the \emph{diameter} map to be the map 
\[\diam:\Ncom\r \R \text{ given by }(Z,\w_Z)\mapsto \max_{z,z'\in Z} |\w_Z(z,z')|.\] 
Then $\diam$ is an $\R$-invariant. Observe that the maximum is achieved for $(Z,\w_Z) \in \Ncom$ because $Z$ (hence $Z\times Z$) is compact and $\w_: Z\times Z \r \R$ is continuous. An application of $\diam$ to Example \ref{ex:comparison} gives an upper bound on $\dn(Z,Y)$ for $Z,Y \in \Ncom$ in the following way: 
\[\dn(Z,Y) \leq \dn(Z,N_1(0)) + \dn(N_1(0),Y)
= \tfrac{1}{2}(\diam(Z)+\diam(Y)) \text{ for any } Z,Y\in \Ncom.\]
\end{example}

Recall that $\pow(\R)$, the nonempty elements of the power set of $\R$, is a pseudometric space when endowed with the Hausdorff distance \cite[Proposition 7.3.3]{burago}.

\begin{example} 
Define the \emph{spectrum} map 
\[\spec: \Ncom \r \pow(\R) \text{ by }(Z,\w_Z)\mapsto \{\w_Z(z,z') : z,z'\in Z\}.\] 
The spectrum also has two local variants. Define the $\mathrm{out}$-local spectrum of $Z$ by $z \mapsto \spec^{\cout}_Z(z):=\{\w_Z(z,z'),\,z' \in Z\}.$ 
Notice that $\spec(Z) = \bigcup_{z\in Z} \spec^{\cout}_Z(z)$ for any network $Z$, thus justifying the claim that this construction localizes $\spec$. Similarly, we define the $\mathrm{in}$-spectrum of $Z$ as the map $z\mapsto \spec^{\cin}_Z(z) := \set{\w_Z(z',z) : z'\in Z}.$ Notice that one still has $\spec(Z) = \bigcup_{z\in Z} \spec^{\cin}_Z(z)$ for any network $Z$. Finally, we observe that the two local versions of $\spec$ do not necessarily coincide in an asymmetric network.

The spectrum is closely related to the \textit{multisets} used by Boutin and Kemper \cite{boutin2007lossless} to produce invariants of weighted undirected graphs. For an undirected graph $G$, they considered the collection of all subgraphs with three nodes, along with the edge weights for each subgraph (compare to our notion of spectrum). Then they proved that the distribution of edge weights of these subgraphs is an invariant when $G$ belongs to a certain class of graphs.
\end{example}

\begin{example} Define the \emph{trace} map $\tr:\Ncom \r \pow(\R)$ by $(Z,\w_Z) \mapsto \tr(Z):= {\set{\w_Z(z,z) : z \in Z}}.$ This also defines an associated map $z \mapsto \tr_Z(z):= \w_Z(z,z).$ An example is provided in Figure \ref{fig:tr}: in this case, we have $(Z,\tr_Z) = (\{p,q\},(\alpha,\beta))$. 

\end{example}

\begin{figure}
\begin{center}
\begin{tikzpicture}[every node/.style={font=\footnotesize}]
\node[circle,draw,above](2) at (0,0){$p$};
\node[circle,draw](3) at (1.5,0){$q$};

\path[->] (2) edge [loop left, min distance = 5mm] node[above]{$\a$}(2);
\path[->] (3) edge [loop right, min distance = 5mm] node[below]{$\beta$}(3);
\path[->] (2) edge [bend left] node[above]{$\d$} (3);
\path[->] (3) edge [bend left] node[below]{$\g$} (2);

\node[above] at (3,0){$\Longrightarrow$};

\node[circle,draw,above,fill = orange!20](4) at (4.5,0){$p$};
\node[circle,draw,fill=purple!20](5) at (6,0){$q$};
\path[->] (4) edge [loop left, min distance = 5mm,dashed] node[above]{$\a$}(4);
\path[->] (5) edge [loop right, min distance = 5mm,dashed] node[below]{$\beta$}(5);

\begin{scope}[on background layer]
\node[draw,dashed,rounded corners, fill=orange!20, fill opacity = 0.5, fit = (2)]{};
\node[draw, dashed,rounded corners, fill=purple!20, fill opacity = 0.5, fit = (3)]{};
\end{scope}

\node at (0.5,-1) {$(Z,\w_Z)$};
\node at (5,-1) {$(Z,\tr_Z)$};
\end{tikzpicture}
\end{center}
\caption{The trace map erases data between pairs of nodes.}
\label{fig:tr}
\end{figure}

\begin{example}[The $\cout$ and $\cin$ maps]\label{ex:cin-cout}
Let $(Z,\w_Z) \in \Ncom$, and let $z \in Z$. 
Now define $\cout: \Ncom \r \pow(\R)$ and $\cin:\Ncom \rightarrow \pow(\R)$ by 
\begin{align*}\cout(Z)&= \set{\max_{z'\in Z} |\w_Z(z,z')| : z \in Z}\qquad \text{for all $(Z,\w_Z)\in \Ncom$}\\
\cin(Z) &=\set{\max_{z'\in Z} |\w_Z(z',z)| : z \in Z} \qquad\text{for all $(Z,\w_Z) \in \Ncom$}.
\end{align*}
For each $z \in Z$, $\max_{z'\in Z}|\w_Z(z,z')|$ and $\max_{z'\in Z}|\w_Z(z',z)|$ are achieved because $\set{z}\times Z$ and $Z\times \set{z}$ are compact. We also define the associated maps $\cout_Z$ and $\cin_Z$ by writing, for any $(Z,\w_Z)\in \Ncom$ and any $z\in Z$,
\[\cout_Z(z) = \max_{z'\in Z}|\w_Z(z,z')| \qquad
\cin_Z(z) = \max_{z' \in Z}|\w_Z(z',z)|.\]
To see how these maps operate on a network, let $Z = \set{p,q,r}$ and consider the weight matrix $\Si = \left(
\begin{smallmatrix}
1&2&3\\
0&0&4\\
0&0&5
\end{smallmatrix}
\right)
.$
The network corresponding to this matrix is shown in Figure \ref{fig:outin}. We ascertain the following directly from the matrix:
\begin{align*}
\cout_Z(p) = 3 & \qquad \cin_Z(p) = 1\\
\cout_Z(q) = 4 & \qquad \cin_Z(q) = 2\\
\cout_Z(r) = 5 & \qquad \cin_Z(r) = 5.
\end{align*}
So the $\cout$ map returns the maximum (absolute) value in each row, and the $\cin$ map pulls out the maximum (absolute) value in each column of the weight matrix. As in the preceding example, we may use the Hausdorff distance to compare the images of networks under the $\cout$ and $\cin$ maps.

\end{example}

\begin{figure}
\begin{center}
\begin{tikzpicture}[every node/.style={font=\footnotesize}]
\node[circle,draw,fill=orange!20](1) at (0,1.5){$p$};
\node[circle,draw,fill=purple!20](2) at (-1.5,0){$q$};
\node[circle,draw,fill=violet!30](3) at (1.5,0){$r$};

\path[->] (1) edge [loop above, min distance = 5mm] node[right]{$1$}(1);
\path[->] (2) edge [dashed,loop, out=240, in = 210, min distance = 5mm] node[below]{$0$}(2);
\path[->] (3) edge [loop, out=300,in=330, min distance = 5mm] node[below]{$5$}(3);
\path[->] (1) edge [bend left] node[above,pos=0.5]{$2$} (2);
\path[->] (2) edge [dashed,bend left] node[above,pos=0.5]{$0$} (1);
\path[->] (1) edge [bend left] node[above,pos=0.5]{$3$} (3);
\path[->] (3) edge [dashed,bend left] node[above,pos=0.5]{$0$} (1);
\path[->] (2) edge [ bend left] node[below,pos=0.5]{$4$} (3);
\path[->] (3) edge [dashed,bend left] node[above,pos=0.5]{$0$} (2);

\node at (0,-1){$Z$};
\end{tikzpicture}
\caption{The $\cout$ map applied to each node yields the greatest weight of an arrow \emph{leaving} the node, and the $\cin$ map returns the greatest weight \emph{entering} the node. }
\label{fig:outin}
\end{center}
\end{figure}

\begin{example}[$\min$-$\cout$ and $\min$-$\cin$]
Define the maps $\mincout:\Ncom\rightarrow \R$ and $\mincin:\Ncom\rightarrow \R$ by 
\begin{align*}
\mincout((Z,\w_Z)) &= \min_{z\in Z}\cout_Z(z)\qquad\text{for all $(Z,\w_Z) \in \Ncom$}\\
\mincin((Z,\w_Z)) &= \min_{z\in Z}\cin_Z(z) \qquad\text{for all $(Z,\w_Z) \in \Ncom$}.
\end{align*}
Then both $\mincin$ and $\mincout$ are $\R$-invariants. We take the minimum when defining $\mincout, \mincin$ because for any network $(Z,\w_Z)$, we have $\max_{z\in Z}\cout_Z(z) = \max_{z\in Z}\cin_Z(z) = \diam(Z).$ Also observe that the minima are achieved above because $Z$ is compact.
\end{example}

\begin{proposition}\label{prop:n0invariants}
The maps $\cout, \cin, \tr, \spec,$ and $\spec^\bullet$ are $\pow(\R)$-invariants. Similarly, $\diam, \mincout,$ and $\mincin$ are $\R$-invariants.
\end{proposition}

This follows from the stronger statement of Proposition \ref{prop:lips}.

\begin{definition}[Motif sets are metric space valued invariants]\label{dfn:motif-inv} 
Recall the motif sets described in Definition \ref{defn:motif}.
Our use of motif sets is motivated by the following observation, which appeared in \cite[Section 5]{dgh-props}. 
For any $n\in \N$, let $\mc{C}(\R^{n\times n})$ denote the set of closed subsets of $\R^{n\times n}$. Under the Hausdorff distance induced by the $\ell^\infty$ metric on $\R^{n\times n}$, this set becomes a valid metric space \cite[Proposition 7.3.3]{burago}. The motif sets defined in Definition \ref{defn:motif} define a metric space valued invariant as follows: for each $n \in \N$, let $\M_n : \Ncom \r \mc{C}(\R^{n \times n})$ be the map $Z \mapsto \M_n(Z)$. We call this the \emph{motif set invariant}. So for $(Z,\w_Z), (Y,\w_Y) \in \Ncom$, for each $n \in \N$, we let $(U,d_U) = (\R^{n \times n},\ell^\infty)$ and consider the following distance between the $n$-motif sets of $Z$ and $Y$: 
\[d_n(\M_n(Z),\M_n(Y)) := \dhdf^U(\M_n(Z),\M_n(Z)).\] 
Since $\dhdf$ is a proper distance between closed subsets, $d_n(\M_n(Z),\M_n(Y))=0$ if and only if $\M_n(Z)=\M_n(Y).$
\end{definition}

\subsection{Quantitative stability of basic invariants}
\label{sec:quant-stab}

Let $(V,d_V)$ be a given pseudometric space. The $V$-valued invariant $\iota:\Ncom\rightarrow V$ is said to be \emph{quantitatively stable} if there exists a constant $L> 0$ such that 
\[d_{V}\big(\iota(Z),\iota(Y)\big)\leq L\cdot \dn(Z,Y)\] 
for all networks $Z$ and $Y$. The least constant $L$ such that the above holds for all $Z,Y\in\Ncom$ is the Lipschitz constant of $\iota$ and is denoted $\mathbf{L}(\iota).$

Note that by identifying a non-constant quantitatively stable $V$-valued invariant $\iota$, we immediately obtain a lower bound for the $\dn$ distance between any two compact networks $(Z,\omega_Z)$ and $(Y,\omega_Y)$. Furthermore, given a finite family $\iota_\alpha:\Ncom\rightarrow V$, $\alpha\in A$, of non-constant quantitatively stable invariants, we may obtain the following lower bound for the distance between compact networks $Z$ and $Y$:
\[\dn(Z,Y)\geq \max_{\alpha\in A}\mathbf{L}(\iota_\alpha)^{-1}d_V(\iota_\alpha(Z),\iota_\alpha(Y)).\]

It is often the case that computing $d_V(\iota(Z),\iota(Y))$ is substantially simpler than computing the $\dn$ distance between $Z$ and $Y$ (which leads to a possibly NP-hard problem). The invariants described in the previous section are quantitatively stable. The proofs of the next results are in Section \ref{sec:pf-stab}.

\begin{lemma}\label{lem:dn0-exp} Let $(Z,\w_Z), (Y,\w_Y) \in \Ncom$. Let $f$ represent any of the maps $\tr, \cout,$ and $\cin$, and let $f_Z$ (resp. $f_Y$) represent the corresponding map $\tr_Z, \cout_Z, \cin_X$ (resp. $\tr_Y, \cout_Y, \cin_Y$). Then we obtain:
\[\dhdf^\R(f(Z),f(Y)) = \inf_{R\in\Rsc(Z,Y)}\sup_{(z,y)\in R}\big|f_Z(z)-f_Y(y)\big|.\]
\end{lemma}

\begin{proposition}\label{prop:lips} The invariants $\diam, \tr, \cout,\cin, \mincout$, and $\mincin$ are quantitatively stable, with Lipschitz constant $\mathbf{L} = 2$.

\end{proposition}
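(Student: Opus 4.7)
The plan is to treat all six invariants by a common template: for each one $\iota$, I will fix $\e > 0$, choose a correspondence $R\in\Rsc(X,Y)$ with $\dis(R) < 2\dno(X,Y) + \e$ (this is possible by the definition of $\dno$, even though an optimal correspondence may not exist in the compact setting, cf. Example \ref{ex:non-opt-corr}), and then bound $d_V(\iota(X),\iota(Y))$ by $\dis(R)$. Letting $\e\downarrow 0$ then yields the Lipschitz constant $\mathbf{L}(\iota)\leq 2$. Throughout I will repeatedly use the elementary inequality $\big||a|-|b|\big|\leq |a-b|$.

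For $\diam$, let $x_1,x_2\in X$ attain $\diam(X)=|\w_X(x_1,x_2)|$ (possible by compactness of $X\times X$ and continuity of $\w_X$). Choose $y_i\in Y$ with $(x_i,y_i)\in R$ for $i=1,2$. Then
\[
\diam(X) = |\w_X(x_1,x_2)| \leq |\w_Y(y_1,y_2)| + \dis(R) \leq \diam(Y) + \dis(R),
\]
and the reverse inequality follows by swapping the roles of $X$ and $Y$. For $\tr$, use the same strategy restricted to diagonal pairs: for each $x\in X$, picking any $y\in Y$ with $(x,y)\in R$ gives $|\tr_X(x)-\tr_Y(y)|\leq \dis(R)$, which controls one direction of the Hausdorff distance; symmetry finishes the argument.

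For $\cout$ (the case of $\cin$ is identical, just using pairs of the form $(x',x)$ instead of $(x,x')$), fix $x\in X$ and pick any $y\in Y$ with $(x,y)\in R$. To bound $\cout_X(x) - \cout_Y(y)$, choose $x'\in X$ attaining $\cout_X(x) = |\w_X(x,x')|$ and $y'\in Y$ with $(x',y')\in R$; then
\[
\cout_X(x) = |\w_X(x,x')| \leq |\w_Y(y,y')| + \dis(R) \leq \cout_Y(y) + \dis(R).
\]
The reverse bound uses a maximizer $y''\in Y$ for $\cout_Y(y)$ together with some $x''\in X$ such that $(x'',y'')\in R$. This gives $|\cout_X(x) - \cout_Y(y)| \leq \dis(R)$, which bounds one side of the Hausdorff distance between $\cout(X)$ and $\cout(Y)$; swapping the roles of $X$ and $Y$ handles the other side. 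For $\mincout$ (analogously $\mincin$), this is now immediate: letting $x^*\in X$ minimize $\cout_X$ (which exists since $\cout_X$ is continuous on the compact space $X$—this is the only delicate point, and it follows from continuity of $\w_X$ together with compactness) and choosing $y\in Y$ with $(x^*,y)\in R$, we get $\mincout(Y)\leq \cout_Y(y) \leq \cout_X(x^*) + \dis(R) = \mincout(X)+\dis(R)$, and symmetrically.

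The main obstacle, though a minor one, is ensuring that all extrema invoked (maxima defining $\diam$, $\cout_X(x)$, $\cin_X(x)$ and the minima defining $\mincout$, $\mincin$) are actually attained; this is where compactness of the underlying space and continuity of $\w_X$ become essential, and it is also why each argument must be stated for $\Ncom$ rather than for arbitrary networks. Everything else is a mechanical application of the correspondence inequality, and letting $\e\downarrow 0$ in each of the six bounds yields $\mathbf{L}(\iota)\leq 2$ in every case.
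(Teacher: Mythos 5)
Your argument for the inequality $d_V(\iota(X),\iota(Y))\leq 2\dno(X,Y)$ is correct and is essentially the paper's proof: the paper uses the same template of picking a near-optimal correspondence, transporting maximizers/minimizers across it, and invoking compactness to guarantee the extrema are attained (the paper routes the $\tr$, $\cout$, $\cin$ cases through a small lemma identifying $d_{\H}^\R(f(X),f(Y))$ with $\inf_R\sup_{(x,y)\in R}|f_X(x)-f_Y(y)|$, but the underlying computation is the one you wrote out). The one thing you have not proved is the full statement: the proposition asserts $\mathbf{L}=2$, and $\mathbf{L}(\iota)$ is defined as the \emph{least} constant for which the stability inequality holds over all of $\Ncom$, so after establishing $\mathbf{L}(\iota)\leq 2$ you still owe an example showing the constant $2$ cannot be improved. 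The paper supplies this for every case with the pair of one-point networks $X=N_1(1)$, $Y=N_1(2)$, for which $\dno(X,Y)=\tfrac12$ while each of the six invariants differs by exactly $1$. Adding that one line closes the gap.
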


\begin{example}
Proposition \ref{prop:lips} provides simple lower bounds for the $\dn$ distance between compact networks. One application is the following: for all networks $Z$ and $Y$, we have $\dn(Z,Y)\geq \frac{1}{2} \big|\diam(Z)-\diam(Y)\big|.$ For example, for the networks $Z=N_2(\mattwo{1}{5}{2}{4})$ and $Y=N_k(\mathbbm{1}_{k\times k})$ (the all-ones matrix---also see Example \ref{ex:simple-networks} to recall the $N_2$ and $N_k$ notation) we have $\dn(Z,Y)\geq \frac{1}{2}|5-1|= 2,$ for all $k\in\N.$ For another example, consider the weight matrices 
\[
\Sigma:=
\left(
\begin{smallmatrix}
0&5&2\\
3&1&4\\
1&4&3
\end{smallmatrix}
\right)
\,\,\mbox{and}
\,\,
\Sigma':=
\left(
\begin{smallmatrix}
3&4&2\\
3&1&5\\
3&3&4
\end{smallmatrix}
\right).
\]
Let $Z=N_3(\Sigma)$ and $Y=N_3(\Sigma')$. By comparing the diagonals, we can easily see that $Z\not\cong^s Y$, but let us see how the invariants we proposed can help. Note that $\diam(Z)=\diam(Y)=5$, so the lower bound provided by diameter ($\frac{1}{2}|5-5| = 0$) does not help in telling the networks apart. However, $\tr(Z)=\set{0,1,3}$ and $\tr(Y)=\set{3,1,4}$, and Proposition \ref{prop:lips} then yields 
\[\dn(Z,Y)\geq  \frac{1}{2} \dhdf^\R(\{0,1,3\},\{1,3,4\}) = \frac{1}{2}. \]

Consider now the $\cout$ and $\cin$ maps. Note that one has $\cout(Z) = \set{5,4}$, $\cout(Y)=\set{4,5}$,  $\cin(Z) = \set{3,5,4}$, and $\cin(Y) = \set{3,4,5}$. Then $\dhdf^\R(\cout(Z),\cout(Y)) = 0$, and $\dhdf^\R(\cin(Z),\cin(Y)) = 0$. Thus in both cases, we obtain $\dn(Z,Y) \geq 0$. So in this particular example, the $\cout$ and $\cin$ maps are not useful for obtaining a lower bound to $\dn(X,Y)$ via Proposition \ref{prop:lips}.
\end{example}

Now we state a proposition regarding the stability of global and local spectrum invariants. These will be of particular interest for computational purposes as we explain in \S\ref{sec:computations}.

\begin{proposition}\label{prop:spec}
Let $\spec^\bullet$ refer to either the $\cout$ or $\cin$ version of local spectrum. Then, 
for all $(Z,\w_Z),(Y,\w_Y) \in\Ncom$ we have 
\begin{align*}
\dn(Z,Y)&\geq \frac{1}{2}\inf_{R\in\Rsc}\sup_{(z,y)\in R}\dhdf^\R(\spec^\bullet_Z(z),\spec^\bullet_Y(y))\\
&\geq \frac{1}{2}\dhdf^\R(\spec(Z),\spec(Y)).
\end{align*}
As a corollary, we get $\mathbf{L}(\spec^\bullet)=\mathbf{L}(\spec)=2$.
\end{proposition}

\begin{figure}
\begin{center}
\begin{tikzpicture}[every node/.style={font=\footnotesize}]
\node[circle,draw, above, fill = orange!20](2) at (-1,0){$p$};
\node[circle,draw, fill = purple!20](3) at (1,0){$q$};
\path[->] (2) edge [loop left, min distance = 5mm] node[above]{$2$}(2);
\path[->] (3) edge [loop right, min distance = 5mm] node[below]{$1$}(3);
\path[->] (2) edge [bend left] node[above]{$1$} (3);
\path[->] (3) edge [bend left] node[below]{$2$} (2);

\node[circle,draw, above, fill = orange!60](2) at (3,0){$r$};
\node[circle,draw,fill = violet!30](3) at (5,0){$s$};
\path[->] (2) edge [loop left, min distance = 5mm] node[above]{$2$}(2);
\path[->] (3) edge [loop right, min distance = 5mm] node[below]{$3$}(3);
\path[->] (2) edge [bend left] node[above]{$3$} (3);
\path[->] (3) edge [bend left] node[below]{$1$} (2);

\node at (0,-1){$Z$};
\node at (4,-1){$Y$};
\end{tikzpicture}
\caption{Lower-bounding $\dn$ by using global spectra (cf. Example \ref{ex:prop-spec}).}
\label{fig:spec}
\end{center}
\end{figure}

\begin{example}[An application of Proposition \ref{prop:spec}]
\label{ex:prop-spec}
Consider the networks in Figure \ref{fig:spec}. By Proposition \ref{prop:spec}, we may calculate a lower bound for $\dn(Z,Y)$ by simply computing the Hausdorff distance between $\spec(Z)$ and $\spec(Y)$, and dividing by 2. In this example, $\spec(Z) = \set{1,2}$ and $\spec(Y) = \set{1,2,3}$. Thus $\dhdf^\R(\spec(Z),\spec(Y)) = 1$, and $\dn(Z,Y) \geq \frac{1}{2}$.  
\end{example}

Computing the lower bound involving local spectra requires solving a  bottleneck linear assignment problem over the set of all correspondences between $Z$ and $Y$. This can be solved in polynomial time; details are provided in \S\ref{sec:computations}. The second lower bound stipulates computing the Hausdorff distance on $\R$ between the (global) spectra of $Z$ and $Y$ -- a computation which can be carried out in  (smaller) polynomial time as well. 

To conclude this section, we state a theorem asserting that motif sets form a family of quantitatively stable invariants. Recall that we used this result in the proof of Theorem \ref{thm:cpt-sisom}.

\begin{theorem}\label{thm:motif-stab}
For each $n\in\N$, $\M_n$ is a stable invariant with $\mathbf{L}(\M_n)=2$.
\end{theorem}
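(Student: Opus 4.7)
The plan is to prove the two-sided bound by a direct correspondence argument for the upper bound and a one-point network example for the matching lower bound on the Lipschitz constant.

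For the upper bound $\mathbf{L}(\M_n) \leq 2$, fix $X,Y \in \Ncom$ and $\e > 0$. By Definition~\ref{defn:dno} there exists $R \in \Rsc(X,Y)$ with $\dis(R) < 2\dno(X,Y) + \e$. I claim that
\[
d_{\H}^{(\R^{n \times n},\, \ell^\infty)}\bigl(\M_n(X), \M_n(Y)\bigr) \leq \dis(R).
\]
Indeed, let $A = \Psi^n_X(x_1, \ldots, x_n) \in \M_n(X)$. Since $\pi_X(R) = X$, for each $i \in \{1,\ldots,n\}$ we may choose $y_i \in Y$ with $(x_i, y_i) \in R$. The matrix $B := \Psi^n_Y(y_1,\ldots,y_n)$ belongs to $\M_n(Y)$, and
\[
\|A - B\|_{\ell^\infty} \;=\; \max_{1 \leq i,j \leq n} \bigl|\w_X(x_i,x_j) - \w_Y(y_i,y_j)\bigr| \;\leq\; \dis(R),
\]
since each pair $(x_i,y_i), (x_j,y_j) \in R$. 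Thus $\sup_{A \in \M_n(X)} \inf_{B \in \M_n(Y)} \|A - B\|_{\ell^\infty} \leq \dis(R)$, and symmetrically using $\pi_Y(R)=Y$ the analogous bound holds with the roles of $X$ and $Y$ swapped. Combining both gives the claim, hence $d_{\H}^{(\R^{n\times n},\ell^\infty)}(\M_n(X),\M_n(Y)) < 2\dno(X,Y) + \e$. Letting $\e \downarrow 0$ yields $\mathbf{L}(\M_n) \leq 2$.

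For sharpness, I would exhibit a pair of networks for which equality (or its supremum) is achieved. Take $\alpha, \beta \in \R$ with $\alpha \neq \beta$ and consider the one-point networks $N_1(\alpha)$ and $N_1(\beta)$ from Example~\ref{ex:simple-networks}. By the computation in Example~\ref{ex:comparison}, $\dno(N_1(\alpha), N_1(\beta)) = \tfrac{1}{2}|\alpha - \beta|$. On the other hand, $\M_n(N_1(\alpha)) = \{\alpha \cdot \mathbbm{1}_{n\times n}\}$ and $\M_n(N_1(\beta)) = \{\beta \cdot \mathbbm{1}_{n \times n}\}$, so the Hausdorff distance between them under $\ell^\infty$ equals $|\alpha - \beta| = 2\dno(N_1(\alpha), N_1(\beta))$. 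This forces $\mathbf{L}(\M_n) \geq 2$, and together with the upper bound the theorem follows.

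There is no substantial obstacle here: the only mild care needed is that for compact (as opposed to finite) networks the infimum defining $\dno$ may not be achieved (see Example~\ref{ex:non-opt-corr}), which is why the argument passes through an $\e$-optimal correspondence and then takes $\e \downarrow 0$. It is also worth verifying, once, that the Hausdorff distance in the statement is well-defined, which follows from the observation in Definition~\ref{defn:motif} that $\M_n(X)$ is a closed (in fact compact) subset of $\R^{n \times n}$ whenever $X \in \Ncom$, so that one is indeed working inside the metric space $(\mc{C}(\R^{n\times n}), d_{\H}^{\ell^\infty})$ described in Definition~\ref{dfn:motif-inv}.
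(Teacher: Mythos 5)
Your proposal is correct and follows essentially the same route as the paper's proof: bound the Hausdorff distance between $\M_n(X)$ and $\M_n(Y)$ by $\dis(R)$ by transporting $n$-tuples through a correspondence, then infimize over $R$ (your $\e$-optimal-correspondence phrasing is an equivalent way of taking that infimum), and establish tightness with the one-point networks $N_1(\alpha)$, $N_1(\beta)$ exactly as the paper does with $N_1(1)$, $N_1(2)$.
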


\subsection{Persistent homology invariants and their convergence}
\label{sec:ph-methods}

\emph{Persistent homology} is a technique for studying multiscale features of data by creating geometric complexes, associating algebraic invariants to these complexes, and studying the structure of these invariants \cite{robins1999towards,carlsson2009topology, edelsbrunner2010computational, edelsbrunner2014persistent,frosini1992measuring}. In the conventional setting, persistent homology takes Euclidean or metric data as input and produces topological summaries called \emph{persistence diagrams} or \emph{barcodes} as output, one in each dimension $k \in \Z_+$. In an intermediate step, this technique first constructs a collection of vector spaces with linear maps, and these vector spaces are implicitly built on top of geometric complexes. The structure of these linear maps is then studied via matrix reductions and summarized in persistence barcodes. Persistent homology research has seen rapid progress in recent years, and in particular, the notion of persistent homology for general, possibly asymmetric settings such as that of networks has been studied in \cite{turner2019rips, dowker-jact, pph, phmlp,pinto2020motivic,huntsman2020generalizing,dey2020efficient}. In these works, persistent homology methods are defined for finite networks.

Here and in the next section, we use our machinery of $\e$-systems (cf. Definition \ref{def:systems} and subsequent constructions) to prove that persistence diagrams of infinite, compact networks are well-defined. This answers a question posed by Turner in \cite{turner2019rips}.

We begin with the definition of a \emph{persistent vector space}. Throughout this section, all our vector spaces are assumed to be over a fixed ground field $\mb{F}$.

\begin{definition} A \emph{persistent vector space} $\mc{V}$ (also denoted \gls{pvec}) is a family $\{V^\d \xr{\nu_{\d,\d'}} V^{\d'}\}_{\d \leq \d' \in \R}$ of vector spaces and linear maps such that: (1) $\nu_{\d,\d}$ is the identity map for any $\d \in \R$, and (2) $\nu_{\d,\d''} = \nu_{\d',\d''}\circ\nu_{\d,\d'}$ whenever $\d \leq \d'\leq \d''$. 
\end{definition}

Conventional hierarchical clustering methods take in metric data as input and produce ultrametrics as output that are in turn faithfully visualized as dendrograms \cite{clust-um}. A conventional persistent homology method (e.g. Vietoris-Rips, defined below) yields a higher dimensional analogue of this process: it takes a metric dataset as input, and outputs a persistent vector space $\mc{V}$ that is faithfully represented as a \emph{persistence diagram} $\dgm(\mc{V})$. A classification result in \cite[\S5.2]{carlsson2005persistence} shows that the persistence diagram is a full invariant of a persistent vector space. This completes the analogy with the setting of hierarchical clustering. 

Persistence diagrams can be compared using the \emph{bottleneck distance}, which we denote by \gls{db}. We point the reader to \cite{pers-mod-book} and references therein for details.

While the persistence diagram and bottleneck distance are the primary tools in practical applications, theoretical proofs are often made simpler through the language of \emph{interleavings} and \emph{interleaving distance}. We present this next.

\begin{definition}[$\e$-interleaving,  \cite{chazal2009proximity}]
\label{defn:alg-interleaving}
Let $\mc{U} = \{U^\d \xr{s_{\d,\d'}} U^{\d'}\}_{\d \leq \d' \in \R}$ and $\mc{V} = \{V^\d \xr{t_{\d,\d'}} V^{\d'}\}_{\d \leq \d' \in \R}$ be two persistent vector spaces. Given $\e \geq 0$, $\mc{U}$ and $\mc{V}$ are said to be \emph{$\e$-interleaved} if there exist two families of linear maps $\{\ph_{\d}:U^\d \r V^{\d + \e}\}_{\d \in \R}$ and $\{\psi_{\d}:V^\d \r U^{\d + \e}\}_{\d \in \R}$ such that: (1) $\ph_{\d'} \circ s_{\d,\d'} = t_{\d+\e,\d'+\e}\circ \ph_\d$, (2) $\psi_{\d'} \circ t_{\d,\d'} = s_{\d+\e,\d'+\e}\circ \psi_\d$, (3) $s_{\d,\d+2\e} = \psi_{\d + \e} \circ \ph_{\d}$, and (4) $t_{\d,\d+2\e} = \ph_{\d + \e} \circ \psi_{\d}$ for each $\d \leq \d' \in \R$. 
\end{definition}

The \emph{interleaving distance \gls{di}} between $\mc{U}$ and $\mc{V}$ is then defined as:
\[\di(\mc{U},\mc{V}):=\inf \{\e\geq 0 : \text{$\mc{U}$ and $\mc{V}$ are $\e$-interleaved}\}.\]
The interleaving and bottleneck distances are connected by the \emph{Isometry Theorem}, which states that the two distances are in fact equivalent. Various forms of this theorem have appeared in the literature; we will end this section with a statement of this result that appears in \cite{pers-mod-book}.

Our aim in this work is to describe the \emph{convergence} of persistent homology methods applied to network data. When dealing with finite networks, the vector spaces resulting from applying a persistent homology method will necessarily be finite dimensional. However, our setting is that of infinite (more specifically, \emph{compact}) networks, and so we need additional machinery to ensure that our methods output well-defined persistent vector spaces. The following definition and theorem are provided in full detail in \cite{pers-mod-book}.

\begin{definition}[\S2.1, \cite{pers-mod-book}] A persistent vector space $\mc{V}=\{V^\d \xr{\nu_{\d,\d'}} V^{\d'}\}_{\d \leq \d' \in \R}$ is \emph{q-tame} if $\nu_{\d,\d'}$ has finite rank whenever $\d < \d'$. 
\end{definition}

\begin{theorem}[\cite{pers-mod-book}, also \cite{chazal2014persistence} Theorem 2.3]
\label{thm:tame-well-dfn-diag} 
Any q-tame persistent vector space $\mc{V}$ has a well-defined persistence diagram $\dgm(\mc{V})$. If $\mc{U}, \mc{V}$ are $\e$-interleaved q-tame persistent vector spaces, then $\db(\dgm(\mc{U}),\dgm(\mc{V})) \leq \e.$
\end{theorem}

We conclude this section with a statement of the isometry theorem.

\begin{theorem}[Theorem 5.14, \cite{pers-mod-book}]
Let $\mc{U}, \mc{V}$ be $q$-tame persistent vector spaces. Then, \[\di(\mc{U},\mc{V}) =  \db(\mc{U},\mc{V}).\]
\end{theorem}

\subsection{Vietoris-Rips, Dowker, Path, and Ordered Tuple persistent homology methods on networks}
\label{sec:ph-method-details}

We now present a collection of methods for producing persistent vector spaces from network data. For finite networks, these methods have appeared in \cite{dowker-jact, turner2019rips}. Here our goal is to define these methods for compact networks and to establish their convergence properties. First we will focus on defining the different types of \emph{filtered complexes}---i.e. complexes nested by inclusion---from which persistent vector spaces can be derived. Once the complexes are defined, the persistent vector spaces are obtained by applying the \emph{homology functor} (cf. \cite{munkres-book} for details on homology). Recall that an \emph{abstract simplicial complex} on a set $Z$ is a collection $\Sigma$ of non-empty finite subsets of $Z$ such that whenever $\sigma \in \Sigma$ and $\tau \subseteq \sigma$, we also have $\tau \in \Sigma$. A \emph{filtered simplicial complex} is a nested collection of simplicial complexes $\{\Sigma_\d \subseteq \Sigma_{\d'}\}_{\d \leq \d'}$.

\begin{definition}[Vietoris-Rips complexes (\gls{vr})] Given a compact network $(Z,\w_Z)$ and $\d \in \R$, the \emph{Vietoris-Rips complex at resolution $\d$} is defined as:
\[{\vr}_\d(Z):=\{\s \in \pow(Z) : \s \text{ finite, } \max_{z,z'\in \s}\w_Z(z,z') \leq \d \}.\] 
\end{definition}

Practitioners of persistent homology will recognize that the Vietoris-Rips complex construction given above is a direct generalization of the Vietoris-Rips complex of a metric space. This definition yields a simplicial filtration $\{{\vr}_\d(Z) \hookrightarrow {\vr}_{\d'}(Z)\}_{\d \leq \d' \in \R}$. Applying the simplicial homology functor in dimension $k$ (for $k\in \Z_+$) to this filtration yields the \emph{Vietoris-Rips persistent vector space} $\pvec_k^{\vr}(Z)$.

Next we describe two constructions---the Dowker source and sink complexes---that practitioners will recognize as asymmetric generalizations of the \v{C}ech complex of a metric space \cite{chazal2014persistence,dowker-jact}. 

\begin{definition}[\v{C}ech complex (\gls{cech}) of a metric space] Given a metric space $(Z,d_Z)$ and $\d \in \R$, the \emph{\v{C}ech complex at resolution $\d$} is defined as:
\[ \cech_\d(Z) := \{ \s \in \pow(Z) : \s \text{ finite, } \min_{p \in Z} \max_{z \in \s} d_Z(z,p) \leq \d\}.\]
\end{definition}

\begin{definition}[Dowker complexes (\gls{dow})] Given a compact network $(Z,\w_Z)$ and $\d \in \R$, the \emph{Dowker sink-complex at resolution $\d$} is defined as:
\[\dow^{\si}_\d(Z):=\{\s \in \pow(Z) : \s \text{ finite, } \min_{p\in Z}\max_{z\in \s}\w_Z(z,p) \leq \d\}. \]
Similarly, the \emph{Dowker source-complex at resolution $\d$} is defined as:
\[\dow^{\so}_\d(Z):=\{\s \in \pow(Z) : \s \text{ finite, } \min_{p\in Z}\max_{z\in \s}\w_Z(p,z) \leq \d \}. \]
\end{definition}

The Dowker sink and source complexes are different in general when $X$ is asymmetric. Surprisingly, the persistent vector spaces obtained from the sink and source filtrations \emph{are equivalent}. This result was established in \cite{dowker-jact} in the setting of finite networks. For compact networks, the statement is as follows. Here we momentarily denote the Dowker source and sink persistent vector spaces as $\pvec^{\so}$ and $\pvec^{\si}$, respectively.

\begin{theorem}[Dowker duality] Let $(Z,\w_Z)$ be a compact network, and let $k \in \Z_+$. Then,
\[ \pvec_k^{\si}(Z) = \pvec_k^{\so}(Z).\] 
\end{theorem}

The proof is via a functorial generalization of Dowker's Theorem \cite{dowker1952homology}, which holds in the case of infinite sets. Alternatively, a functorial generalization of the Nerve Lemma can also be used to prove this result, as suggested in \cite{chazal2014persistence}. Hence we denote the resulting persistent vector space (in dimension $k \in \Z_+$) as $\pvec_k^{\dow}(X)$, without distinguishing between sink and source constructions.

Whereas the Vietoris-Rips and Dowker complexes generate conventional simplicial complexes, it is possible to build more general complexes that better capture asymmetry in the function $\w_Z$. One example is via \emph{ordered-tuple $(OT)$} complexes \cite{turner2019rips}. An $\ot$-complex $\Sigma$ over a set $Z$ is a collection of nonempty, finite tuples $(z_1,\ldots, z_p)$ such that whenever $(z_1,\ldots, z_p) \in \Sigma,$ we also have $(z_1,\ldots, \hat{z_i},\ldots, z_p) \in \Sigma$ for all $1\leq i \leq p$, where the hat denotes omission from the sequence. Notions related to homology can be defined for OT complexes without difficulty, and thus a persistent vector space can be immediately obtained after defining a filtered OT complex. The following construction was phrased in \cite{turner2019rips} in terms of set-function pairs, which are just finite networks in our setting.

\begin{definition}[Vietoris-Rips OT complexes (\gls{ot})] 
Given a compact network $(Z,\w_Z)$ and $\d \in \R$, the \emph{Vietoris-Rips OT complex} at resolution $\delta$ is defined as:
\[ \ot^{\vr}_\d(Z) := \{ (z_1,z_2,\ldots,z_p) : \w_Z(z_i,z_j) \leq \d \text{ for all } i \leq j\}.\]
We denote the corresponding persistent vector space by $\pvec^{\ot}_k(Z)$.
\end{definition}

Yet another approach is via the notion of \emph{path homology} \cite{grigor2014homotopy,pph}. A preliminary notion that we will need is that of a \emph{boundary operator}. Suppose we are given a set $Z$ and the free vector space of finite-length tuples in $Z$, i.e. the vector space $\mb{F}[\{(z_0,z_1,\ldots, z_k) : k \in \Z_+, \, z_i \in Z \text{ for all } i \}]$. Then one defines a boundary operator via the following alternating sum:
\[ \p_k((z_0,z_1,\ldots, z_k)) = \sum_{i=0}^{k} (-1)^k (z_0,z_1,\ldots,\hat{z_i},\ldots, z_k).\]

\begin{definition}[Persistent path homology (\gls{path})] Given a compact network $(Z,\w_Z)$ and $\d \in \R$, first construct the digraph $G^\d$ with $V(G^\d):=Z$ and $E(G^\d):= \{(v,v') : \w_Z(v,v') \leq \d\}$. Next, for each $k \in \Z_+$, define the free vector space of \emph{allowed $k$-paths} $\mc{A}_k(G^\d) := \mb{F}[ (v_0,v_1,\ldots, v_k) : (v_i, v_{i+1}) \in E(G^\d) \text{ for each } 0 \leq i \leq k-1\}]$. Finally define $\Om_k(G^\d):= \mb{F}[ \{u \in \mc{A}_k(G^\d) : \p_k(u) \in \mc{A}_{k-1}(G^\d)\}]$. Then the degree-$k$ path homology vector space of $G^\d$ is given by $H_k^{\Xi}(G^\d):= \ker(\p_k)/\im(\p_{k+1})$, and the inclusions $\{G^\d \subseteq G^{\d'}\}_{\d\leq \d'}$ are transformed into a persistent vector space by functoriality of homology. We denote the corresponding persistent vector space by $\pvec^{\path}_k(Z)$.

\end{definition}

The following lemma collects results from \cite{dowker-jact, pph, turner2019rips} on the \emph{stability of persistent homology invariants} of networks.

\begin{lemma}[Quantitative stability between $\dn$ and $\di$] 
\label{lem:interleaving}
Let $(Z,\w_Z)$ and $(Y,\w_Y)$ be two networks. Let $\e > 2\dn(X,Y)$. Then $\pvec_k^{*}(X)$ and $\pvec_k^{*}(Y)$ are $\e$-interleaved, where $*$ denotes any of the methods in $\{\vr,\dow,\path,\ot\}$. 
\end{lemma}

\begin{theorem}\label{thm:tame-cpt-net} Let $(Z,\w_Z) \in \Ncom,$ $k \in \Z_+$. Then $\pvec_k^{*}(Z)$ is q-tame, where $*$ denotes any of the methods in $\{\vr,\dow,\path,\ot\}$.
\end{theorem}

The metric space analogue of Theorem \ref{thm:tame-cpt-net} appeared in \cite[Proposition 5.1]{chazal2014persistence}; the same proof structure works in the setting of networks after applying our results on approximation via $\e$-systems.

\begin{proof}[Proof of Theorem \ref{thm:tame-cpt-net}]
For convenience, write $\pvec_k^{*}(Z) = \{V^{\d}\xr{\nu_{\d,\d'}} V^{\d'}\}_{\d \leq \d' \in \R}$. Let $\d < \d'$. We need to show $\nu_{\d,\d'}$ has finite rank. Write $\e:=(\d'-\d)/2$. Let $\mc{U}$ be an $\e/4$-system on $Z$ (this requires Theorem \ref{thm:sampling}). Then by Theorem \ref{thm:sampling} we pick a finite subset $Z'\subseteq Z$ such that $\dn(Z,Z') < \e/2$. Then $\pvec_k^{*}(Z')$ and $\pvec_k^{*}(Z)$ are $\e$-interleaved by Lemma \ref{lem:interleaving}. For convenience, write $\pvec_k^{*}(Z') = \{U^{\d}\xr{\mu_{\d,\d'}} U^{\d'}\}_{\d \leq \d' \in \R}$. Then the map $\nu_{\d,\d'}:V^{\d} \r V^{\d'}$ factorizes through $U^{\d+\e}$ via interleaving maps $V^{\d} \r U^{\d+\e} \r V^{\d+2\e} = V^{\d'}$. Since $U^{\d+\e}$ is finite dimensional, it follows that $\nu_{\d,\d'}$ has finite rank. This concludes the proof.\end{proof}

\begin{corollary}[Stability]
\label{cor:stab}
Let $(Z,\w_Z),(Y,\w_Y) \in \Ncom$, $k \in \Z_+$. Then for any $*$ in $\{\vr,\dow,\path,\ot\}$,
\[\db(\dgm_k^{*}(Z),\dgm_k^{*}(Y)) \leq 2\dn(X,Y).\]
\end{corollary}
\begin{proof}
By Theorem \ref{thm:tame-cpt-net}, $\pvec^*(Z), \pvec^*(Y)$ are both tame. Thus they have well-defined persistence diagrams (Theorem \ref{thm:tame-well-dfn-diag}). The result follows by Lemma \ref{lem:interleaving} and Theorem \ref{thm:tame-well-dfn-diag}.
\end{proof}

The following result presents a probabilistic version of the prior result. Probabilistic approaches have practical value due to the inevitable measurement error found in network data \cite{newman2018network, newman2018estimating}. 

\begin{theorem}[Convergence]\label{thm:consistency-rips-cech}
Let $(Z,\w_Z,\mu_Z)$ be a measure network. For each $i\in \N$, let $\mathbf{z}_i: \Om \r Z$ be an independent random variable defined on some probability space $(\Om, \mc{F},\P)$ with distribution $\mu_Z$. For each $n\in \N$, let $\mc{Z}_n=\set{\mathbf{z}_1,\mathbf{z}_2,\ldots, \mathbf{z}_n}$. Let $\e > 0$.  Then we have: 
\[\P\big(\set{\omega \in \Om : \db(\dgm^{*}(\supp(\mu_Z)),\dgm^{*}(\mc{Z}_n(\omega)))\geq \e } \big) \leq \frac{\left(1-\mf{M}_{\e/4}(\supp(\mu_Z))\right)^n}{\mf{M}_{\e/4}(\supp(\mu_Z))},\]
where $\mc{Z}_n(\omega)$ is the subnetwork induced by $\set{\mathbf{z}_1(\omega),\ldots, \mathbf{z}_n(\omega)}$ and $*$ belongs to $\{\vr,\dow,\path,\ot\}$. In particular, $\pvec^*(\mc{Z}_n)$ converges almost surely to $\pvec^*(\supp(\mu_Z))$ w.r.t. bottleneck distance. 
\end{theorem}

\begin{proof}[Proof of Theorem \ref{thm:consistency-rips-cech}] 
We can consider $\supp(\mu_Z)$ as a network with full support by endowing it with the restriction of $\w_Z$ to $\supp(\mu_Z)\times \supp(\mu_Z)$. So for convenience, assume $Z= \supp(\mu_Z)$. 
For any $\omega \in \Omega$ such that $\dn(Z,\mc{Z}_n(\omega)) < \e/2$, we have by Corollary \ref{cor:stab} that $\db(\dgm^{*}(Z),\dgm^{*}(\mc{Z}_n)) < \e$. Thus by applying Theorem \ref{thm:prob-net-approx}, we have:
\begin{align*}
\P\big(\set{\omega \in \Om : \db(\dgm^{*}(Z),\dgm^{*}(\mc{Z}_n(\omega)))\geq \e } \big) 
&\leq \P\big(\set{\omega \in \Om : \dn(Z,\mc{Z}_n(\omega))\geq \e/2 } \big)\\
&\leq \frac{\left(1-\mf{M}_{\e/4}(Z)\right)^n}{\mf{M}_{\e/4}(Z)}.
\end{align*}
We conclude the proof with an application of the Borel-Cantelli lemma, as in the proof of Theorem \ref{thm:prob-net-approx}.\end{proof}

\subsection{Hierarchical clustering methods on asymmetric networks}
We now provide similar results in the special case of hierarchical clustering for asymmetric networks. First we state some definitions relating to connectivity and induce networks from preexisting partitions of a network. Then we will be ready to define hierarchical clustering methods on asymmetric networks and to prove related convergence results. The reciprocal and nonreciprocal clustering methods described below were introduced in a slightly restricted setting (finite networks $(Z,\w_Z)$ with zero diagonal and nonnegative $\w_Z$) in \cite{carlsson2013axiomatic}. The stability of these methods, given in the form $\dn(\H(Y,\w_Y),\H(Z,\w_Z)) \leq \dn(Y,Z)$ where $\H$ denotes either one of these hierarchical clustering methods, was also shown in \cite{carlsson2013axiomatic}.

Applying either of the hierarchical clustering methods we describe next to a network $(Z,\w_Z)$ produces a new weight function $u_Z$ that satisfies the ultrametric inequality (cf. Definition \ref{def:ultrametric}). Thus we get maps of the form $\H:\Ncom \to \Ncomult$ as shown in Figure \ref{fig:invariants}.

Before proceeding, we make some historical remarks and explain the connection to the next few sections. The methods described next will be reminiscent of single linkage hierarchical clustering (SLHC). In the Euclidean case, it has been known for some time that SLHC \cite{hartigan1981consistency,hartigan1985statistical} is not sensitive to modes of the data distribution. This observation was refined in \cite{clust-um} where it was shown that in the case of metric measure spaces, SLHC depended on only the support of the measure. Despite this drawback, SLHC has long been of mathematical interest \cite{hartigan1985statistical}. In more recent years, it has been shown that under certain axioms that characterize the well-behavedness of an HC method for metric spaces, every well-behaved HC method factors through SLHC \cite{carlsson2013classifying}. This thread of axiomatic characterization has been further developed in the setting of finite, asymmetric networks in \cite{carlsson2013axiomatic, carlsson2018hierarchical} and most recently in \cite{carlsson2021robust}. The reciprocal and nonreciprocal clustering methods we describe next are key players in these axiomatic treatments of directed networks, as they constitute extremal examples that bound an infinite family of axiomatic HC methods on directed networks. Over the next few sections, we treat the reciprocal and nonreciprocal methods in the infinite network setting and connect back to results in \cite{clust-um} by showing that such methods remain sensitive only to the support of the data distribution.

\subsubsection{Chain cost and path-connectedness}
\begin{definition}[The modified $\w$-weights and $\w$-path-connectedness]
\label{defn:path-conn}
Given a network $(Z,\w_Z)$, one defines new weight functions $\widetilde{\w}_Z:Z\times Z \to \pow(\R)$ and $\overline{\w}_Z, \underline{\w}_Z: Z\times Z \r \R$ by writing the following for $z,z'\in Z$:
\begin{align*}
\widetilde{\w}_Z(z,z') &:= \{\w_Z(z,z),\w_Z(z,z'),\w_Z(z',z')\},\\
\overline{\w}_Z(z,z') &:= \max\widetilde{\w}_Z(z,z'), \text{ and }\\
\underline{\w}_Z(z,z') &:= \min\widetilde{\w}_Z(z,z').
\end{align*}

To say that $(Z,\w_Z)$ is \emph{$\w$-path-connected} means that given any $x,x' \in X$, there exists $r_{x,x'} \in X$ and a continuous function $\g:[0,1] \r X$ such that $\g(0)=x$, $\g(1) = x'$, and for any $\e > 0$, there exist $0=t_0 \leq t_1 \leq t_2 \leq \ldots \leq t_n = 1$ for $n\in \Z_+$ such that: 
\begin{equation}\label{eq:conn}\widetilde{\w}_X\big(\g(t_i),\g(t_{i+1})\big) \subseteq (r_{x,x'}-\e,r_{x,x'}+\e) \text{ for each } 0\leq i \leq n-1.\end{equation}
\end{definition}

\begin{remark}
When $(Z,\w_Z)$ is a metric space, all the self weights $\w_Z(z,z)$ are zero. So one would have $\overline{\w}_Z = \w_Z$ and $\underline{\w}_Z = 0$.
The notion of $\w$-path-connectedness agrees with the standard notion of path-connectedness in a metric space, as we elaborate in Lemma \ref{lem:path-conn} below.
In general, however, if $\gamma(t)$ connecting $x$ to $x'$ satisfies (\ref{eq:conn}), because of the asymmetry of ${\w}_X$, it does not follow that the reverse curve $\gamma(1-t)$ connecting $x'$ to $x$ will satisfy (\ref{eq:conn}).
\end{remark}

\begin{example}
Similarity kernels, e.g. the cosine similarity or the Gaussian kernel, constitute simple examples where $\w_Z(z,z)$ is nonzero. Such examples abound in machine learning applications. The unit circle equipped with a Gaussian kernel is an example of a $\w$-path-connected space with nonzero self weights.
\end{example}

\begin{lemma}\label{lem:path-conn} Let $(Z,\w_Z)$ be a $\w$-path-connected network. Then there exists a unique $r_Z \in \R$ such that $\w_Z(z,z) = r_Z$ for all $z \in Z$. In the case of metric spaces, one has $r_Z = 0$. 
\end{lemma}

\begin{proof}[Proof of Lemma \ref{lem:path-conn}] Let $z,z'\in Z$ and let $r_{z,z'}\in \R$ be as in Definition \ref{defn:path-conn}. Let $(\e_n)_{n\in \N}$ be a sequence decreasing to 0. Fix $n\in \N$, and let $\g:[0,1] \r X$ be a continuous function such that $\g(0)= z$, $\g(1) = z'$, and there exist $t^n_0 = 0 \leq t^n_1 \leq t^n_2 \leq \ldots, t^n_{k_n} = 1$ such that: 
\[\widetilde{\w}_Z(\g(t^n_i),\g(t^n_{i+1})) \subseteq B(r_{z,z'},\e_n) \text{ for each } 0\leq i \leq k_n-1.\]
In particular, because $\widetilde{\w}_Z(z,\g(t^n_1)) \subseteq B(r_{z,z'},\e_n)$ and $\widetilde{\w}_Z(\g(t^n_{k_n-1}),z') \subseteq B(r_{z,z'},\e_n)$, we have:
\[\set{{\w}_Z(z,z),{\w}_Z(z',z')}  \subseteq B(r_{z,z'},\e_n).\]
Thus $|\w_Z(z,z) - \w_Z(z',z')| \leq 2\e_n$. Letting $n \r \infty$, we obtain $\w_Z(z,z) = \w_Z(z',z')$. Since $z' \in Z$ was arbitrary, we get that $\w_Z(z',z') = \w_Z(z,z)$ for all $z' \in Z$. The result now follows.\end{proof}

\begin{definition}[Path-connectivity constant] Let $(Z,\w_Z)$ be a $\w$-path-connected network. Then we define its \emph{path-connectivity constant} $\mf{pc}_Z$ to be the real number $r_Z$ obtained via Lemma \ref{lem:path-conn}.
\end{definition}

\begin{definition}[Networks arising from disconnected networks]\label{defn:induced-net} 
Let $(Z,\w_Z)$ be a network such that $Z$ is a finite, disjoint union of $\w$-path-connected components $\set{U_a : a\in A}$, where $A$ is a (finite) indexing set and each $U_a$ is compact. Let $\nu_A:A \times A \r \R$ be the map given by writing, for each $a,a' \in A$,
\[\nu_A(a,a'):= \min\set{\overline{\w}_Z(z,z') : z \in U_a, z' \in U_a'}.\]
Then $(A,\nu_A)$ is a network. 
Analogously, one induces a \emph{symmetric} network by defining $\lambda_A$ as follows:
\begin{align*}
\lambda_A(a,a')&:=\min\set{\max(\overline{\w}_Z(z,z'),\overline{\w}_Z(z',z)) : z\in U_a,\; z'\in U_{a'}}.
\end{align*}
\end{definition}

The following definition will be useful in the next section.

\begin{definition}[Chains and directed cost]
A \emph{chain} $c$ from $z$ to $z'$ is defined to be a finite ordered set of points starting at $z$ and reaching $z'$:
\[c=\set{z_0,z_1,z_2,\ldots, z_n : z_0=z, z_n=z', z_i \in Z \text{ for all $i$}}.\]
The collection of all chains from $z$ to $z'$ will be denoted $C_Z(z,z')$. 
The \emph{(directed) cost} of a chain $c \in C_Z(z,z')$ is defined as follows:
$\cost_Z(c) := \max_{z_i,z_{i+1} \in c}\overline{\w}_z(z_i,z_{i+1}).$
\end{definition}

\begin{remark}[Equivalence of ultrametrics and dendrograms]
\label{rem:ultra-dendro}
Before proceeding to the next section, we remind the reader that any ultrametric has a lossless representation as a dendrogram, and conversely, any dendrogram has a lossless representation as an ultrametric \cite{jardine1971mathematical}. By virtue of this result, we write the outputs of hierarchical clustering methods as ultrametrics. As shown in \cite{nets-asilo}, a similar duality holds even in the setting of (asymmetric) networks, up to a small modification of definitions. In particular, the output of an HC method on a network is a network in itself, along with some special structure that allows it to be visualized as a (generalized) dendrogram.
\end{remark}

\subsubsection{The nonreciprocal clustering method: definition and convergence}

We now present the nonreciprocal hierarchical clustering method for directed networks.

\begin{definition}[Nonreciprocal clustering] The \emph{nonrecriprocal clustering method (\gls{hnr})} is a map $\hnr: \Ncom \r \Ncom$ given by $(Z,\w_Z) \mapsto (Z,\unr_Z)$, where $\unr_Z: Z \times Z \r \R$ is defined by writing, for each $z,z' \in Z$,
\[ \hspace{-1in} \unr_Z(z,z'):=\max\left(\inf_{c\in C_Z(z,z')}\cost_Z(c),\inf_{c\in C_Z(z',z)}\cost_Z(c) \right).\]

\end{definition}

The output $\unr_X$ is symmetric and satisfies the \emph{ultrametric inequality}, so it can be represented as a tree \cite[\S 7.2]{phylo}. Compare this to the \emph{cluster trees} discussed by \cite{hartigan1975clustering}. The idea behind this definition is easily summarized: two points $z$ and $z'$ belong to the same cluster at resolution $\d$ if there are directed paths $z\r z'$ and $z'\r z$, each with cost $ \leq \d$. 

The next result shows that nonreciprocal clustering essentially recovers only the structure of the support of the data distribution, and no information about the data density. Such a result is also known for metric spaces \cite{clust-um}; the proof for the current result uses the machinery of $\e$-systems.
Proofs related to the next result are provided in Section \ref{sec:pf-hnr-consistent}.

\begin{restatable}[Convergence of nonreciprocal clustering]{theorem}{hnrconsistent}
\label{thm:hnr-consistent}
Let $(Z,\w_Z,\mu_Z)$ be a measure network. Suppose $\supp(\mu_Z)$ is a finite, disjoint union of compact, $\w$-path-connected components $\set{Z_a : a\in A}$. Let $(A,\nu_A)$ be as in Definition \ref{defn:induced-net}, and let $(A,\unr_A)=\hnr(A,\nu_A)$. For each $i\in \N$, let $\mathbf{z}_i: \Om \r Z$ be an independent random variable defined on some probability space $(\Om, \mc{F},\P)$ with distribution $\mu_Z$. For each $n\in \N$, let $\mc{Z}_n=\set{\mathbf{z}_1,\mathbf{z}_2,\ldots, \mathbf{z}_n}$, and for each $\omega \in \Om$, let $\mc{Z}_n(\omega)$ denote the subnetwork induced by $\set{\mathbf{z}_1(\omega),\ldots, \mathbf{z}_n(\omega)}$. Let $\e > 0$. Then,
\[\P \left(\set{\omega \in \Om : \dn((A,\unr_A),\hnr(\mc{Z}_n(\omega)))\geq \e }\right) \leq 
\frac{\left(1-\mf{M}_{\e/2}(\supp(\mu_Z))\right)^n}{\mf{M}_{\e/2}(\supp(\mu_Z))}.\] 
In particular, the output of the nonreciprocal clustering method applied to the sampled network $\mc{Z}_n$ converges almost surely to $(A,\unr_A)$ in the sense of $\dn$ as the sample size increases.
\end{restatable}

We end this section with an application of nonreciprocal clustering to Finsler manifolds. 

\begin{proposition}[Nonreciprocal clustering on Finsler manifolds]
\label{prop:finsler-nr} 
Let $(M,F_M, \w_M)$ be a compact, connected Finsler manifold without boundary, where $\w_M$ is the asymmetric weight function induced by the Finsler function $F_M$. Then $\unr_M(z,z') = 0 $ for all $z,z' \in M$.
\end{proposition}

\begin{proof}[Proof of Proposition \ref{prop:finsler-nr}] Let $z,z' \in M$. Let $\e > 0$, and let $\g, \g':[0,1] \r M$ be curves from $z$ to $z'$ and from $z'$ to $z$, respectively. By choosing $n$ uniformly separated points on $\g([0,1])$ and $\g'([0,1])$ for sufficiently large $n$, we obtain finite chains $c$ and $c'$ on $\g([0,1])$ and $\g'([0,1])$ such that $\max( \cost_M(c), \cost_M(c')) < \e$. Since $\e > 0$ was arbitrary, we obtain $\unr_M(z,z') = 0$.\end{proof}

\subsubsection{The reciprocal clustering method: definition and convergence}

\begin{definition}[Reciprocal clustering] The \emph{recriprocal clustering method (\gls{hr})} is a map $\hr: \Ncom \r \Ncom$ given by $(Z,\w_Z) \mapsto (Z,\ur_Z)$, where $\ur_Z: Z \times Z \r \R$ is defined by writing, for each $z,z' \in Z$,
\begin{equation}\label{eq:hr}
\ur_Z(z,z'):=\inf_{c\in C_Z(z,z')}\max_{z_i,z_{i+1} \in c} \bigg(\max \big(\overline{\w}_Z(z_i,z_{i+1}), \overline{\w}_Z(z_{i+1},z_{i})\big)\bigg).
\end{equation}
The function $\ur_Z$ satisfies the ultrametric inequality, so it can be represented as a tree \cite[\S 7.2]{phylo}.
\end{definition}

Our convergence result for reciprocal clustering requires two additional assumption on the underlying network: (1) the weight function is a dissimilarity measure (i.e. self-weights are 0), and (2) the asymmetry is bounded, i.e. the network has finite reversibility (cf. Section \ref{sec:dissim-rev}). 

Under these assumptions, we are able to show that with increasing sample size, reciprocal clustering converges to an object that is insensitive to the shape of the data distribution. However, the limiting network is slightly different from that of the nonreciprocal clustering case. The proof of the next result is provided in Section \ref{sec:pf-hr-consistent}. 

\begin{restatable}[Convergence of reciprocal clustering]{theorem}{hrconsistent}
\label{thm:hr-consistent}
Let $(Z,\w_Z,\mu_Z)$ be a measure network with dissimilarity weights and finite reversibility. Suppose $\supp(\mu_Z)$ is a finite, disjoint union of compact, $\w$-path-connected components $\set{Z_a : a\in A}$. Let $(A,\lambda_A)$ be as in Definition \ref{defn:induced-net}, and let $(A,\ur_A)=\hr(A,\lambda_A)$. For each $i\in \N$, let $\mathbf{z}_i: \Om \r Z$ be an independent random variable defined on some probability space $(\Om, \mc{F},\P)$ with distribution $\mu_Z$. For each $n\in \N$, let $\mc{Z}_n=\set{\mathbf{z}_1,\mathbf{z}_2,\ldots, \mathbf{z}_n}$, and for each $\omega \in \Om$, let $\mc{Z}_n(\omega)$ denote the subnetwork induced by $\set{\mathbf{z}_1(\omega),\ldots, \mathbf{z}_n(\omega)}$. 
Let $\e > 0$. Then,
\[\P \left(\set{\omega \in \Om : \dn((A,\ur_A),\hr(\mc{Z}_n(\omega)))\geq \e }\right) \leq 
\frac{\left(1-\mf{M}_{\e/2}(\supp(\mu_Z),A)\right)^n}{\mf{M}_{\e/2}(\supp(\mu_Z),A)}.\]
In particular, the output of the reciprocal clustering method applied to the sampled network $\mc{Z}_n$ converges almost surely to $(A,\ur_A)$ in the sense of $\dn$ as the sample size increases.
\end{restatable}

In the case of Finsler manifolds with finite reversibility, we can also recover the result of Proposition \ref{prop:finsler-nr}.

\begin{proposition}[Reciprocal clustering on Finsler manifolds with finite reversibility]
\label{prop:finsler-r} 
Let $(M,F_M, \w_m)$ be a compact, connected finitely-reversible Finsler manifold without boundary. Here $\w_M$ is the asymmetric weight function induced by the Finsler function $F_M$. Then $\ur_M(z,z') = 0 $ for all $z,z' \in M$.
\end{proposition}

\begin{proof}[Proof of Proposition \ref{prop:finsler-r}] Let $z,z' \in M$. Let $\e > 0$, and let $\g:[0,1] \r M$ be a curve from $z$ to $z'$. By invoking the finite reversibility of $M$, choose $n$ uniformly separated points $\{z_1,\ldots, z_n\}$ on $\g([0,1])$ for sufficiently large $n$ such that $\max(\overline{\w}_M(z_i,z_{i+1}), \overline{\w}_M(z_{i+1},z_{i})) < \e$ for each $i = 1,\ldots, n-1$. Here $z_1= z$ and $z_{n} = z'$. Then $\ur_M(z,z') < \e$. Since $\e > 0 $ was arbitrary, the result follows.\end{proof}

\subsection{Clustering and persistence on the directed circle with finite reversibility}
\label{sec:dir-s1-pers}

Recall the directed circle with finite reversibility $(\us^1,\rus)$ presented in Section \ref{sec:dir-s1}. Also consider the \emph{directed circle with finite reversibility on $n$ nodes $(\us^1_n,{\rusn})$} obtained by writing
\[\us^1_n := \set{ e^{2\pi i \frac{k}{n}} \in \C : k\in \set{0,1,\ldots, n-1}}\]
and defining $\rusn$ to be the restriction of $\rus$ on this set.

Directed circles on $n$ nodes without the finite reversibility condition were introduced in \cite{dowker-jact}, and their Dowker persistent homology was fully characterized by drawing on a connection to the \v{C}ech complex of a standard unit circle. The persistent homology of this complex had in turn been fully characterized in \cite{adamaszek2016nerve,adamaszek2017vietoris}. In this section we further elaborate on this connection by showing that the Dowker complexes of directed circles on $n$ nodes with reversibility $\rho$ actually \emph{interpolate} between the \v{C}ech complex of the standard circle and the Dowker complex of the directed, irreversible circle. We will utilize the language of interleavings of simplicial complexes, and we describe this construction briefly. For more details we direct the reader to \cite{dowker-jact}.

\begin{definition}[Interleavings of simplicial complexes]
\label{defn:top-interleaving}
Simplicial maps between simplicial complexes $f,g: U \to V$ are said to be \emph{contiguous} if $f(\s)\cup g(\s)$ is a simplex in $V$ whenever $\s$ is a simplex in $U$ \cite{munkres-book}. 
Two filtered simplicial complexes $\{ U_\d \xrightarrow{\iota_{\d,\d'}} U_{\d'} \}_{\d \in \R}, \{V_\d \xrightarrow{\iota_{\d,\d'}} V_{\d'} \}_{\d \in \R}$ are said to be $\e$-interleaved for $\e \geq 0$ if there exist two families of simplicial maps $\{\phi_\d:  U_\d \to V_{\d+\e}\}_{\d \in \R}$ and $\{\psi_\d:  V_\d \to U_{\d+\e}\}_{\d \in \R}$ such that the following pairs of maps are contiguous for any $\d \leq \d' \in \R$:
\begin{itemize}
\item $\ph_{\d'} \circ \iota_{\d,\d'}$ and $\iota_{\d+\e,\d'+\e} \circ \ph_{\d}$
\item $\psi_{\d'} \circ \iota_{\d,\d'}$ and $\iota_{\d+\e,\d'+\e} \circ \psi_{\d}$
\item $\psi_{\d+\e}\circ \ph_{\d}$ and $\iota_{\d,\d+2\e}$
\item $\ph_{\d+\e}\circ \psi_{\d}$ and $\iota_{\d,\d+2\e}$.
\end{itemize}
Importantly, if two filtered simplicial complexes are $\e$-interleaved, then their corresponding persistent vector spaces are $\e$-interleaved in the sense of Definition \ref{defn:alg-interleaving} (cf. \cite{dey2016multiscale} and also \cite{dowker-jact} for details on these concepts using a similar setup).

\end{definition}

Now we set up some notation. To use the results of \cite{adamaszek2016nerve,adamaszek2017vietoris}, we need the circle parametrized by the unit interval $[0,1)$, which we denote as \gls{S}. Let $(S^1,d)$ denote the $[0,1)$-circle with geodesic distance. Then define:
\[S^1_n := \{0, \tfrac{1}{n}, \tfrac{2}{n},\ldots, \tfrac{n-1}{n}\}.\]
Also define:
\begin{center}
\begin{minipage}{0.4\textwidth}
\begin{align*}
\ph:S^1_n &\to \mb{S}^1_n \\
\tfrac{k}{n} &\mapsto \exp(2\pi i \tfrac{k}{n}) 
\end{align*}
\end{minipage}
\begin{minipage}{0.4\textwidth}
\begin{align*}
\psi:\mb{S}^1_n & \to S^1_n\\
\exp(2\pi i \tfrac{k}{n})  &\mapsto \tfrac{k}{n} 
\end{align*}
\end{minipage}
\end{center}

In what follows we replace $\cech_\d(-),\dow_\d(-)$ notation with $\cech(-;\d),\dow(-;\d)$ for readability. For each $\d \in \R$, define:
\[U_\d := \cech(S^1_n; \tfrac{1}{2}(\tfrac{\d}{2\pi} + \tfrac{\d}{2\pi\rho}) ), \qquad V_\d := \dow^{\so}(\mb{S}^1_n; \d).\]
To motivate this definition, first we consider different cases of $\d$. For $\d<0$, both $U_\d$ and $V_\d$ are empty. For $\d = 0$, $U_\d$ and $V_\d$ comprise $n$ 0-simplices. 
For $\d \geq 2\pi$, both $U_\d$ and $V_\d$ contain the full simplex on $n$ points. 
Let $\d \in (0,2\pi)$, and let $\tau \in \dow^{\so}(\mb{S}^1_n;\d)$. Then $\tau$ has the form 
\[ 
\tau = [ \exp(2\pi i (\tfrac{j}{n}), \exp(2\pi i (\tfrac{j}{n} + \tfrac{1}{n})),\ldots, \exp(2\pi i (\tfrac{j}{n} + \tfrac{k}{n})), \ldots, \exp(2\pi i (\tfrac{j}{n} + \tfrac{m}{n}))],\]
where $\exp(2\pi i (\tfrac{j}{n} + \tfrac{k}{n}))$ acts as the source. This means that we have $2\pi\tfrac{m-k}{n} < \d$ and $2\pi\tfrac{k-0}{n} < \tfrac{\d}{\rho}$. In particular, the total geodesic distance on the unit circle from $\exp(2\pi i (\tfrac{j}{n})$ to $\exp(2\pi i (\tfrac{j}{n} + \tfrac{m}{n}))$ is $2\pi \tfrac{m-0}{n} < \d + \tfrac{\d}{\rho}$. Rescaling to the $[0,1)$-circle, we find the total geodesic distance from $\tfrac{j}{n}$ to $\tfrac{j+m}{n}$ to be $\tfrac{\d}{2\pi} + \tfrac{\d}{2\pi\rho}$. Finally consider $\s := \psi(\tau) = [\tfrac{j}{n}, \tfrac{j}{n} + \tfrac{1}{n},\ldots, \tfrac{j}{n} + \tfrac{\floor{m/2}}{n},\ldots, \tfrac{j}{n}+\tfrac{m}{n}]$. If $m$ is even, then each $\tfrac{l}{n} \in \s$ is contained in the closed ball 
\[\overline{B}(\tfrac{j}{n}+ \tfrac{m/2}{n}, \tfrac{\d}{4\pi} + \tfrac{\d}{4\pi\rho}  ).  \]
In the odd $m$ case, we may need to extend the radius of the closed ball by $1/n$. This can be done via the update $\d \gets \d + 4\pi/n$, as we have $U_{\d+4\pi/n} = \cech(S^1_n; \tfrac{\d}{4\pi} + \tfrac{1}{n} + \tfrac{\d}{4\pi\rho} + \tfrac{1}{n\rho}) \supseteq \cech(S^1_n; \tfrac{\d}{4\pi} + \tfrac{\d}{4\pi\rho}  + \tfrac{1}{n})$. Thus for any simplex $\tau \in V_\d$, we have that $\psi(\tau)$ is a simplex in $U_{\d+4\pi/n}$.

Conversely, suppose $\s = [\tfrac{j}{n}, \tfrac{j}{n} + \tfrac{1}{n},\ldots, \tfrac{j}{n}+\tfrac{m}{n}]$ is a simplex in $U_\d$. Then the geodesic distance from $\tfrac{j}{n}$ to $\tfrac{j+m}{n}$ is bounded above by $\tfrac{\d}{2\pi} + \tfrac{\d}{2\pi\rho}$. Rescaling to the $[0,2\pi)$-circle, the geodesic distance from $\exp(2\pi i (\tfrac{j}{n})$ to $\exp(2\pi i (\tfrac{j}{n} + \tfrac{m}{n}))$ is bounded above by $\d + \d/\rho$. Thus there is a point $\xi \in \mb{S}^1$ such that the geodesic distance from $\exp(2\pi i (\tfrac{j}{n})$ to $\xi$ is $\d/\rho$ and the distance from $\xi$ to $\exp(2\pi i (\tfrac{j}{n} + \tfrac{m}{n}))$ is $\d$. We also know that $\xi$ is in a $\tfrac{1}{2}\cdot\tfrac{2\pi}{n}$-neighborhood of $\mb{S}^1_n$. Let $k$ be such that $\exp(2\pi i (\tfrac{j}{n} + \tfrac{k}{n}))$ is the closest element of $\mb{S}^1_n$ clockwise of $\xi$. Then we have:
\[\rusn \left( e^{2\pi i \tfrac{j}{n}} \cdot e^{2\pi i \tfrac{k}{n}}, e^{2\pi i \tfrac{j}{n}} \cdot e^{2\pi i \tfrac{m}{n}} \right)< \d + 2\pi/n, \qquad
\rusn \left( e^{2\pi i \tfrac{j}{n}} , e^{2\pi i \tfrac{j}{n}} \cdot e^{2\pi i \tfrac{k}{n}} \right)< \d/\rho.\]
In particular, $\tau= \ph(\s)$ is a simplex in $V_{\d+4\pi/n}$.

It follows from the preceding work that the rescaling maps $\ph,\psi$, which are essentially identity maps on simplices, induce interleavings. We record this below:

\begin{claim}
\label{cl:cech-dow}
Let $U_\d$ and $V_\d$ be the simplicial complexes on $n$ points as defined above. Then they are $4\pi/n$ interleaved via the simplicial maps induced by $\ph$ and $\psi$.
\end{claim}

Finally we invoke a characterization of the \v{C}ech complex of $S^1$. 

\begin{theorem}[\cite{adamaszek2016nerve, adamaszek2017vietoris}]
\label{thm:cech}
Let $n \in \N$, and let $0 \leq k \leq n-2$ be an integer. Then,
\[ \cech(S^1_n; \tfrac{k}{2n}) = 
\begin{cases}
\bigvee^{n-k-1} S^{2l} &: \frac{k}{n} = \frac{l}{l+1},\\
S^{2l+1} &: \frac{l}{l+1} < \frac{k}{n} < \frac{l+1}{l+2},
\end{cases}\]
where $l \in \Z_+$.
Moreover, if $\d < \d' < \tfrac{1}{2}$ and $\cech(S^1_n,\d), \cech(S^1_n,\d')$ have the same homotopy type, then the induced inclusion map is a homotopy equivalence. Consequently the persistent homology of $\cech(S^1_n;-)$ is fully characterized. 
\end{theorem}
The latter part of the preceding theorem is not contained in \cite{adamaszek2016nerve}, but follows from work in \cite{adamaszek2017vietoris}. An explicit verification of this portion, due to Henry Adams, can be found in \cite[Theorem 49]{dowker-jact}. 

\begin{theorem}
\label{thm:dowker-circle} 
Let $1\leq \rho < \infty$ be a finite reversibility parameter. The directed circle $(\mb{S}^1,\rus)$ with reversibility $\rho$ has the following Dowker persistence barcodes in each dimension $2l + 1$ for $l \in \Z_+$:
\[ \dgm_{2l+1}^{\dow}(\mb{S}^1,\rus) = \left\{\left( \frac{l (4 \pi \rho)}{2(l+1)(1+\rho)} , \frac{(l+1)(4\pi\rho)}{2(l+2)(1+\rho)} \right)\right\}.\]
In even dimensions, the persistence diagram is trivial. Moreover, in the fully reversible case $\rho=1$, rescaling from the $[0,2\pi)$-circle to the $[0,1)$-circle recovers the persistent homology of the \v{C}ech complex of the circle \cite[Section 9]{adamaszek2017vietoris}. An illustration of this result is provided in Figure \ref{fig:dowker-circle}.
\end{theorem}

\begin{proof}
By Theorems \ref{thm:sampling} and \ref{thm:complete}, we know that it suffices to characterize the Dowker persistence for $\mb{S}^1_n$ and then take the limit as $n \to \infty$. First we observe from Theorem \ref{thm:cech} that $\cech(S^1;\e) \simeq S^{2l+1}$ on $(\tfrac{l}{2(l+1)}, \tfrac{l+1}{2(l+2)})$. By the work preceding Claim \ref{cl:cech-dow}, we know that $\cech(S^1_n;\e)$ is related to $\dow(\mb{S}^1_n;\d)$ via the affine transformation $\e =  \tfrac{\d}{4\pi} + \tfrac{\d}{4\pi\rho}$. Thus letting $\e = \tfrac{\d}{4\pi} + \tfrac{\d}{4\pi\rho}$ and solving for $\d$ then yields:
\[ \d \in \left( \frac{l (4 \pi \rho)}{2(l+1)(1+\rho)} , \frac{(l+1)(4\pi\rho)}{2(l+2)(1+\rho)} \right).\]
From the interleaving result in Claim \ref{cl:cech-dow} and Definition \ref{defn:top-interleaving}, we know that as $n \to \infty$, $\dgm^{\dow}(\mb{S}^1_n) \to \dgm^{\cech}(S^1_n)$. Taking the affine transformation of scale parameters into account then yields the result.  \end{proof}

We now specialize to the case of hierarchical clustering on $(\mb{S}^1,\rus)$. 

\begin{theorem}[Nonreciprocal clustering on $(\us^1,\rus)$]
\label{thm:dir-s1-nr}
\[\unr_{\sr}(x,x') = 0 \text{ for all } x, x' \in \us^1.\]
\end{theorem}

\begin{proof}[Proof of Theorem \ref{thm:dir-s1-nr}]
We claim that $\us^1$ is $\w$-path-connected, with path-connectivity constant $\mf{pc}_{\us^1} = 0$ (invoking Lemma \ref{lem:path-conn}). The result then follows by Lemma \ref{lem:hnr-path-conn}. 
Let $x,y \in \us^1.$ Without loss of generality, suppose $\rus(x,y) = \w_{\us^1}(x,y)$ (otherwise, switch $x$ and $y$). Let $\e > 0$. Pick $x_0=x,x_1,x_2,\ldots, x_n=y$ such that $\rus(x_i,x_{i+1}) = \w_{\us^1}(x_i,x_{i+1}) < \e$ for each $0\leq i \leq n-1$. We automatically have $\rus(x_i,x_i) =0$ for all $0\leq i \leq n$, and hence $\widetilde{\w}_{\sr}(x_i,x_{i+1}) \subseteq B(0,\e)$ for all $i$. Since $x,y \in \us^1$ were arbitrary, it follows by Definition \ref{defn:path-conn} that $\us^1$ is $\w$-path-connected. The preceding work shows that $\mf{pc}_{\us^1} = 0$. The result follows.\end{proof}

The case of nonreciprocal clustering essentially followed from an application of Theorem \ref{thm:hnr-consistent} (i.e. the special case of Lemma \ref{lem:hnr-path-conn}. Next we consider the application of reciprocal clustering to $(\us^1,\rus)$ via Theorem \ref{thm:hr-consistent}.

\begin{theorem}[Reciprocal clustering on $(\us^1,\rus)$]
\label{thm:dir-s1-r}
\[\ur_{\sr}(x,x') = 0 \text{ for all } x, x' \in \us^1.\]
\end{theorem}

\begin{proof}[Proof of Theorem \ref{thm:dir-s1-r}]
The proof is exactly analogous to that of Theorem \ref{thm:dir-s1-nr}, except that we utilize the finite reversibility to obtain reciprocal connections.
\end{proof}

\subsection{Proofs from Section \ref{sec:inv-conv}}
\label{sec:pf-inv-conv}

\subsubsection{Proofs related to stability of basic invariants}
\label{sec:pf-stab}

\begin{proof}[Proof of Lemma \ref{lem:dn0-exp}] Observe that $f(Z) = \set{f_Z(z) : z\in Z} = f_Z(Z)$, so we need to show 
\[\dhdf^{\R}(f_Z(Z),f_Y(Y)) = \inf_{R\in \Rsc(Z,Y)}\sup_{(z,y)\in R} \big|f_Z(z) - f_Y(y)\big|.\] Recall that by the definition of Hausdorff distance on $\R$, we have 
\[\dhdf^{\R}(f_Z(Z),f_Y(Y))
= \max\big\{\sup_{z\in Z}\inf_{y\in Y}|f_Z(z) - f_Y(y)|,
\sup_{y\in Y}\inf_{z\in Z}|f_Z(z) - f_Y(y)|\big\}.\]
Let $a\in Z$ and let $R \in \Rsc(Z,Y)$. Then there exists $b \in Y$ such that $(a,b) \in R$. Then we have:
\begin{align*}
|f_Z(a) - f_Y(b)| &\leq \sup_{(z,y) \in R} |f_Z(z) - f_Y(y)|, \text{ and so }\\
\inf_{b\in Y} |f_Z(a) - f_Y(b)| &\leq \sup_{(z,y) \in R} |f_Z(z) - f_Y(y)|.
\intertext{This holds for all $a \in Z$. Then,} 
\sup_{a\in Z}\inf_{b \in Y} |f_Z(a) -f_Y(b)| &\leq  \sup_{(z,y) \in R} |f_Z(z) - f_Y(y)|.
\intertext{This holds for all $R \in \Rsc(Z,Y)$. So we have }
\sup_{a\in Z}\inf_{b \in Y} |f_Z(a) -f_Y(b)| &\leq  \inf_{R \in \Rsc}\sup_{(z,y) \in R} |f_Z(z) - f_Y(y)|.
\intertext{By a similar argument, we also have}
\sup_{b\in Y}\inf_{a \in Z} |f_Z(a) -f_Y(b)| &\leq  \inf_{R \in \Rsc}\sup_{(z,y) \in R} |f_Z(z) - f_Y(y)|.\\
\text{Thus } \dhdf^{\R}(f_Z(Z),f_Y(Y)) &\leq \inf_{R \in \Rsc}\sup_{(z,y) \in R} |f_Z(z) - f_Y(y)|.
\end{align*}

Now we show the reverse inequality. Let $z \in Z$, and let $\eta > \dhdf^{\R}(f_Z(Z),f_Y(Y))$. Then there exists $y \in Y$ such that $|f_Z(z) - f_Y(y)| < \eta $. Define $\ph(z) = y$, and extend $\ph$ to all of $Z$ in this way. Let $y \in Y$. Then there exists $z \in Z$ such that $|f_Z(z) - f_Y(y)| < \eta$. Define $\psi(y) = z$, and extend $\psi$ to all of $Y$ in this way. Let $R = \set{(z,\ph(z)) : z \in Z} \cup \set{(\psi(y),y) : y \in Y}$. 
Then for each $(a,b) \in R$, we have $|f_Z(a) - f_Y(b)| < \eta.$ 
Thus we have $\inf_{R \in \Rsc}\sup_{(z,y) \in R} |f_Z(z) - f_Y(y)| < \eta$. Since $\eta > \dhdf^{\R}(f_Z(Z),f_Y(Y))$ was arbitrary, it follows that
\[\inf_{R \in \Rsc(Z,Y)}\sup_{(z,y) \in R} |f_Z(z) - f_Y(y)| \leq \dhdf^{\R}(f_Z(Z),f_Y(Y)).\qedhere \]
\end{proof}

\begin{proof}[Proof of Proposition \ref{prop:lips}] Let $\eta > \dn(Z,Y)$. We break this proof into three parts.

\noindent
\textbf{The $\diam$ case.}
Recall that $\diam$ is an $\R$-valued invariant, so we wish to show $|\diam(Z) - \diam(Y)|\leq 2\dn(Z,Y)$. Let $R \in \Rsc(Z,Y)$ be such that for any $(a,b), (a',b') \in R$, we have $|\w_Z(a,a') - \w_Y(b,b')| < 2\eta. $

Let $z,z' \in Z$ such that $|\w_Z(z,z')| = \diam(Z)$, and let $y, y'$ be such that $(z,y), (z',y') \in R$. Then we have:
\begin{align*}
|\w_Z(z,z') - \w_Y(y,y')| &< 2\eta\\
|\w_Z(z,z') - \w_Y(y,y')| + |\w_Y(y,y')| &<
2\eta + |\w_Y(y,y')|\\
|\w_Z(z,z')| &< \diam(Y) + 2\eta.\\
\text{Thus } \diam(Z) &< \diam(Y) + 2\eta
\end{align*}
Similarly, we get $\diam(Y) < \diam(Z) + 2\eta$. It follows that $|\diam(Z) - \diam(Y)| < 2\eta.$ Since $\eta > \dn(Z,Y)$ was arbitrary, it follows that:
\[|\diam(Z) - \diam(Y)| \leq 2\dn(Z,Y).\]

For tightness, consider the networks $Z=N_1(1)$ and $Y=N_1(2)$. By Example \ref{ex:comparison}, we have that $\dn(Z,Y)=\frac{1}{2}$. On the other hand, $\diam(Z)=1$ and $\diam(Y)=2$ so that $|\diam(Z)-\diam(Y)|=1 = 2\dn(Z,Y).$\\
\noindent
\textbf{The cases $\tr$, $\cout$, and $\cin$.}
First we show $\mathbf{L}(\tr) = 2$. By Lemma \ref{lem:dn0-exp}, it suffices to show
\begin{align*}
\inf_{R\in \Rsc(Z,Y)}\sup_{(z,y)\in R}|\tr_Z(z) - \tr_Y(y)| < 2\eta. 
\end{align*}
Let $R \in \Rsc(Z,Y)$ be such that for any $(a,b), (a',b') \in R$, we have $|\w_Z(a,a') - \w_Y(b,b')| < 2\eta.$ Then we obtain $|\w_Z(a,a) - \w_Y(b,b)| < 2\eta$. Thus $|\tr_Z(a) - \tr_Y(b)| < 2\eta.$ Since $(a,b) \in R$ was arbitrary, it follows that $\sup_{(a,b) \in R}|\tr_Z(a) - \tr_Z(b)| < 2\eta.$ It follows that $\inf_{R\in \Rsc} \sup_{(a,b) \in R}|\tr_Z(a) - \tr_Z(b)| < 2\eta.$
The result now follows because $\eta > \dn(Z,Y)$ was arbitrary. The proofs for $\cout$ and $\cin$ are similar, so we just show the former. By Lemma \ref{lem:dn0-exp}, it suffices to show
\begin{align*}
\inf_{R\in \Rsc(Z,Y)}\sup_{(z,y)\in R}|\cout_Z(z) - \cout_Y(y)| < 2\eta. 
\end{align*}

Recall that $\cout_Z(z) = \max_{z'\in Z}|\w_Z(z,z')|$. Let $R \in \Rsc(Z,Y)$ be such that $|\w_Z(z,z') - \w_Y(y,y')| < 2\eta$ for any $(z,y),(z',y') \in R$. By triangle inequality, it follows that $|\w_Z(z,z')| < |\w_Y(y,y')| + 2\eta.$ In particular, for $(z',y') \in R$ such that $|\w_Z(z,z')| = \cout_Z(z)$, we have 
$\cout_Z(z) < |\w_Y(y,y')| + 2\eta.$ Hence 
$\cout_Z(z) < \cout_Y(y) + 2\eta.$ Similarly, $\cout_Y(y) < \cout_Z(z) + 2\eta.$ Thus we have $|\cout_Z(z) - \cout_Y(y)| < 2\eta$. This holds for all $(z,y) \in R$, so we have: 
\[\sup_{(z,y) \in R}|\cout_Z(z) - \cout_Y(y)| < 2\eta.\]
Minimizing over all correspondences, we get:
\[\inf_{R \in \Rsc}\sup_{(a,b)\in R}|\cout_Z(a) - \cout_Y(b)| < 2\eta.\] 
The result follows because $\eta > \dn(Z,Y)$ was arbitrary.

Finally, we need to show that our bounds for the Lipschitz constant are tight. Let $Z = N_1(1)$ and let $Y = N_1(2)$. Then $\dn(Z,Y) = \frac{1}{2}$. We also have $\dhdf^{\R}(\tr(Z),\tr(Y)) = |1-2| = 1$, and similarly $\dhdf^{\R}(\cout(Z),\cout(Y)) = \dhdf^{\R}(\cin(Z),\cin(Y)) = 1$.\\

\noindent
\textbf{The cases $\mincout$ and $\mincin$.}
The two cases are similar, so let's just prove $\mathbf{L}(\mincout) = 2$. Since $\mincout$ is an $\R$-invariant, we wish to show $|\mincout(Z) - \mincout(Y)| < 2\eta$. It suffices to show:
\[|\mincout(Z) - \mincout(Y)| \leq \dhdf^{\R}(\cout(Z),\cout(Y)),\]
because we have already shown 
\[\dhdf^{\R}(\cout(Z),\cout(Y)) = \inf_{R\in \Rsc(Z,Y)}\sup_{(z,y)\in R}|\cout_Z(z) - \cout_Y(y)| < 2\eta.\]
Here we have used Lemma \ref{lem:dn0-exp} for the first equality above.

Let $\e > \dhdf^{\R}(\cout(Z),\cout(Y))$. Then for any $z \in Z$, there exists $y\in Y$ such that:
\[|\cout_Z(z) - \cout_Y(y)| < \e.\]

Let $a \in Z$ be such that $\mincout(Z) = \cout_Z(a)$. Then we have:
\[|\cout_Z(a) - \cout_Y(y)| < \e,\]
for some $y\in Y$. In particular, we have:
\[\mincout(Y) \leq \cout_Y(y) < \e + \cout_Z(a) = \e + \mincout(Z).\]
Similarly, we obtain:
\[\mincout(Z) < \e + \mincout(Y).\]

Thus we have $|\mincout(Z) - \mincout(Y)| < \e.$ Since $\e > \dhdf^{\R}(\cout(Z),\cout(Y))$ was arbitrary, we have:
\[|\mincout(Z) - \mincout(Y)| \leq \dhdf^{\R}(\cout(Z),\cout(Y)).\]

The inequality now follows by Lemma \ref{lem:dn0-exp} and our proof in the case of the $\cout$ map.

For tightness, note that $|\mincout(N_1(1)) - \mincout(N_1(2))| = |1-2| = 1 = 2\cdot \frac{1}{2} = 2\dn(N_1(1),N_1(2)).$ The same example works for the $\mincin$ case. \end{proof}

\begin{proof}[Proof of Proposition \ref{prop:spec}]
(First inequality.) Let $Z,Y \in \Ncom$ and let $\eta > \dn(Z,Y)$. Let $R \in \Rsc(Z,Y)$ be such that $\sup_{(z,y),(z',y') \in R}|\w_Z(z,z') - \w_Y(y,y')| < 2\eta.$
Let $(z,y) \in R$, and let $\a \in \spec_Z(z)$. Then there exists $z' \in Z$ such that $\w_Z(z,z') = \a$. Let $y' \in Y$ be such that $(z',y') \in R$. Let $\b = \w_Y(y,y')$. Note $\b \in \spec_Y(y)$. Also note that $|\a - \b| < 2\eta$. By a symmetric argument, for each $\b \in \spec_Y(y)$, there exists $\a \in \spec_Z(z)$ such that $|\a-\b| < 2\eta$. So $\dhdf^\R(\spec_Z(z),\spec_Y(y)) < 2\eta$. This is true for any $(z,y) \in R$, and so we have $\sup_{(z,y)\in R}\dhdf^\R(\spec_Z(z),\spec_Y(y)) \leq 2\eta$. Then we have: 
\[\inf_{R \in \Rsc}\sup_{(z,y)\in R}\dhdf^\R(\spec_Z(z),\spec_Y(y)) \leq 2\eta.\] 
Since $\eta > \dn(Z,Y)$ was arbitrary, the first inequality follows.

(Second inequality.) Let $R \in \Rsc(Z,Y)$. Let $\eta(R) = \sup_{(z,y) \in R}\dhdf^\R(\spec_Z(z),\spec_Y(y))$. Let $\a \in \spec(Z)$. Then $\a \in \spec_Z(z)$ for some $z \in Z$. Let $y \in Y$ such that $(z,y) \in R$. Then there exists $\b \in \spec_Y(y)$ such that $|\a - \b|\leq \dhdf^\R(\spec_Z(z),\spec_Y(y))$, and in particular, $|\a - \b| \leq \eta(R)$. In other words, for each $\a \in \spec(Z)$, there exists $\b \in \spec(Y)$ such that $|\a-\b| \leq \eta(R)$. By a symmetric argument, for each $\b \in \spec(Y)$, there exists $\a \in \spec(Z)$ such that $|\a-\b| \leq \eta(R)$. Thus $\dhdf^\R(\spec(Z),\spec(Y)) \leq \eta(R)$. This holds for any $R \in \Rsc$. Thus we have 
\begin{align*}
\dhdf^\R(\spec(Z),\spec(Y)) \leq \inf_{R \in \Rsc}\sup_{(z,y)\in R}\dhdf^\R(\spec_Z(z),\spec_Y(y)).
\end{align*} 
This proves the second inequality. \end{proof}

\begin{proof}[Proof of Theorem \ref{thm:motif-stab}] Let $n \in \N$. We wish to show $d_n(\M_n(Z),\M_n(Y)) \leq 2\dn(Z,Y)$. Let $R \in \Rsc(Z,Y)$. Let $(z_i) \in Z^n$, and let $(y_i) \in Y^n$ be such that for each $i$, we have $(z_i,y_i) \in R$. Then for all $j, k \in \set{1, \ldots, n}$, $|\w_Z(z_i,z_j) - \w_Y(y_i,y_j)| \leq \dis(R)$. 

Thus $\inf_{(y_i) \in Y^n}|\w_Z(z_i,z_j) - \w_Y(y_i,y_j)| \leq \dis(R)$. This is true for any $(z_i) \in X^n$. Thus we get:
\[\sup_{(z_i) \in Z^n}\inf_{(y_i) \in Y^n}|\w_Z(z_i,z_j) - \w_Y(y_i,y_j)| \leq \dis(R).\]
By a symmetric argument, we get $\sup_{(y_i) \in Y^n}\inf_{(z_i) \in Z^n}|\w_Z(z_i,z_j) - \w_Y(y_i,y_j)| \leq \dis(R).$ Thus $d_n(\M_n(Z),\M_n(Y)) \leq \dis(R)$. This holds for any $R \in \Rsc(Z,Y)$. Thus $d_n(\M_n(Z),\M_n(Y)) \leq \inf_{R \in \Rsc(Z,Y)}\dis(R) = 2\dn(Z,Y)$.

For tightness, let $Z = N_1(1)$ and let $Y = N_1(2)$. Then $\dn(Z,Y) = \frac{1}{2}$, so we wish to show $d_n(\M_n(Z),\M_n(Y)) = 1$ for each $n \in \N$. Let $n \in \N$. Let $\mathbbm{1}_{n \times n}$ denote the $n\times n$ matrix with $1$ in each entry. Then $\M_n(Z) = \set{\mathbbm{1}_{n\times n}}$ and $\M_n(Y) = \set{2\cdot\mathbbm{1}_{n \times n}}$. Thus $d_n(\M_n(Z),\M_n(Y)) = 1$. Since $n$ was arbitrary, we conclude that equality holds for each $n \in \N$.  
\end{proof}

\subsubsection{Proofs related to Theorem \ref{thm:hnr-consistent}}
\label{sec:pf-hnr-consistent}

\begin{lemma}[Nonreciprocal clustering on an $\w$-path-connected network] 
\label{lem:hnr-path-conn}
Let $(Z,\w_Z)$ be a $\w$-path-connected network with path-connectivity constant $\mf{pc}_Z$ (cf. Definition \ref{defn:path-conn}). Then $(Z,\unr_Z) = \hnr(Z,\w_Z)$ is given by writing $\unr_Z(z,z') = \mf{pc}_Z$ for all $z,z'\in Z$.
\end{lemma}

\begin{proof}[Proof of Lemma \ref{lem:hnr-path-conn}] Let $z,z'\in Z$, and let $\e > 0$. By Definition \ref{defn:path-conn}, there exist chains $c\in C_Z(z,z')$ and $c'\in C_Z(z',z)$ such that $\max(\cost_Z(c),\cost_Z(c')) < \mf{pc}_Z + \e$. Thus $\unr_Z(z,z') < \mf{pc}_Z+ \e$. This holds for each $\e > 0$, and for any $z,z'\in Z$. Thus $\unr_Z \leq \mf{pc}_Z$. We also have $\unr_Z \geq \mf{pc}_Z$ by the definition of chain cost and $\overline{\w}_Z$. This concludes the proof.\end{proof}

\begin{remark}[Distortion and $\overline{\w}_Z$] 
\label{rem:wbar-dist}
Let $(Z,\w_Z), (Y,\w_Y) \in \Ngen$ and let $R\in \Rsc(Z,Y)$. Then,
\[\sup_{(z,y),(z',y') \in R}|\overline{\w}_Z(z,z') - \overline{\w}_Y(y,y')| \leq \dis(R).\]

To see this, fix $(z,y),(z',y') \in R$. Suppose $u,u' \in \set{z,z'}$ are such that $\overline{\w}_Z(z,z') = \w_Z(u,u')$. Let $v,v' \in \set{y,y'}$ be such that $(u,v),(u',v') \in R$. Then we have 
\[\overline{\w}_Z(z,z') = \w_Z(u,u') \leq \dis(R) + \w_Y(v,v') \leq \dis(R) + \overline{\w}_Y(y,y').\]
Now let $v,v' \in \set{y,y'}$ be such that $\overline{\w}_Y(y,y') = \w_Y(v,v')$. Let $u,u' \in \set{z,z'}$ be such that $(u,v),(u',v') \in R$. Then,
\[\overline{\w}_Y(y,y') = \w_Y(v,v') \leq \dis(R) + \w_Z(u,u') \leq \dis(R) + \overline{\w}_Z(z,z').\]
It follows that $|\overline{\w}_Z(z,z') - \overline{\w}_Y(y,y')| \leq \dis(R)$.
\end{remark}

\begin{lemma}[Nonreciprocal clustering collapses $\w$-path-connected subsets] 
\label{lem:hnr-main}
Let $(Z,\w_Z)$ be a network such that $Z$ can be written as a finite, disjoint union of compact, $\w$-path-connected components $\set{Z_a : a\in A}$. Let $(A,\nu_A)$ be as in Definition \ref{defn:induced-net}, and let $(A,\unr_A)=\hnr(A,\nu_A)$. Also let $\mc{U}=\set{U_1,\ldots, U_m}$ be a refined $\e/2$-system on $Z$. 

Suppose that $S \subseteq Z$ is a finite subset equipped with the restriction $\w_S:= \w_Z|_{S \times S}$ such that $S$ has nonempty intersection with $Z_a$ for each $a\in A$, and with $U_i$ for each $1\leq i \leq m$. Then,
\[\dn\left((S,\unr_S),(A,\unr_A)\right) < \e.\]
\end{lemma}

\begin{proof}[Proof of Lemma \ref{lem:hnr-main}]
For each $z \in Z$, let $a(z) \in A$ denote the index such that $z \in Z_{a(z)}$. Here the index is unique because $Z$ is a disjoint union. Then define:
\[R:=\set{(s,a(s)) : s \in S}.\]
Then $R\in \Rsc(S,A)$. 
We wish to show $\dis(R) < 2\e$, where the distortion is calculated with respect to $\unr_S$ and $\unr_A$. Let $(s,a(s)), (s',a(s')) \in R$. 

\begin{claim} We have $\unr_A(a(s),a(s')) \leq \unr_S(s,s')$.
\end{claim}
\begin{proof} Pick chains 
\begin{align*}
c_1 &:= \set{r_0=s,r_1,r_2,\ldots, r_k = s'}  \in C_S(s,s') \text{ and}\\
c_2 &:= \set{t_0=s',t_1,t_2,\ldots, t_j=s} \in C_S(s',s)
\end{align*} 
such that $\unr_S(s,s') = \max(\cost(c_1),\cost(c_2))$. 
Then for each $0 \leq i \leq k-1$, we have $\overline{\w}_S(r_i,r_{i+1}) \leq \unr_S(s,s')$. 
Similarly for each $0 \leq i \leq j-1$ we have $\overline{\w}_S(t_i,t_{i+1}) \leq \unr_S(s,s')$.
Now observe that for each $z,z' \in Z$, we have:
\begin{equation}
\nu_A(a(z),a(z')) \leq \overline{\w}_Z(z,z').
\end{equation}
Then for each $0\leq i \leq k-1$, we have: 
\[\nu_A(a(r_i),a(r_{i+1})) \leq \overline{\w}_S(r_i,r_{i+1})\leq \unr_S(s,s').\]
Similarly for each $0\leq i \leq j-1$, we have:
\[\nu_A(a(t_i),a(t_{i+1})) \leq \overline{\w}_S(t_i,t_{i+1})\leq \unr_S(s,s').\]
It follows that $\unr_A(a(s),a(s')) \leq \unr_S(s,s')$. 
\end{proof}

By Theorem \ref{thm:sampling}, we obtain a correspondence $R' \in \Rsc(S,Z)$ with $\dis(R') < 2\e$ such that for each $(s,z) \in R'$, we have $\set{s,z} \subseteq U$ for some $U\in \mc{U}$. Here the distortion is measured with respect to $\w_Z$ and $\w_S$. We will use this correspondence $R'$ as follows: for each $z\in Z$, there exists $s\in S$ such that $(s,z) \in R'$. In other words, there exists $s \in S$ such that $\set{s,z} \subseteq U$ for some $U \in \mc{U}$. Since $\mc{U}$ is a refined $\e/2$-system, we know also that $s,z$ belong to the same connected component $Z_a$, for some $a \in A$. 

For each $z\in Z$, we will write $s(z)$ to denote the element of $S$ obtained by the preceding construction.

\begin{claim}\label{cl:hnr-same-component}
Let $s,s'\in S$ be such that $a(s)=a(s')$, i.e. $s,s'$ belong to the same $\w$-path-connected component of $Z$. Then $\unr_S(s,s') < \nu_A(a(s),a(s)) + 2\e\leq \unr_A(a(s),a(s')) + 2\e$. 
\end{claim} 
\begin{proof}[Proof of Claim \ref{cl:hnr-same-component}] Since $Z_{a(s)}$ is $\w$-path-connected, there exists a unique $r \in \R$ such that $\w_Z(z,z) = r$ for all $z\in Z_{a(s)}$ by Lemma \ref{lem:path-conn}. By Definition \ref{defn:induced-net}, we have $r \leq \nu_A(a(s),a(s))$. Let $\eta > 0$, and let $z,z' \in Z$ be such that $(s,z),(s',z') \in R'$. Then by the definition of $\w$-path connectivity, we can take a chain $c= \set{z_0 = z,z_1,z_2,\ldots, z_n = z'}$ joining $z$ to $z'$ such that $\cost_Z(c)\leq r + \eta$. We can now convert this to a chain in $S$ by using the correspondence $R'$. Define:
\[c_S:=\set{s,s(z_1),s(z_2),\ldots, s(z_{n-1}),s'}.\]
By construction, $(s(z_i),z_i)\in R'$ for each $1\leq i \leq n-1$. Furthermore we have $(s,z),(s',z')\in R'$ by our choice of $z,z'$. Now by using Remark \ref{rem:wbar-dist} and the fact that $\dis(R') < 2\e$, we have $\cost_S(c_S) <  r+ \eta + 2\e \leq \nu_A(a(s),a(s)) + \eta + 2\e$. 

By a similar process, we can obtain a chain $c'_S \in C_S(s',s)$ such that $\cost_S(c'_S) < \nu_A(a(s),a(s)) + \eta + 2\e$. Thus $\unr_S(s,s') < \nu_A(a(s),a(s)) + \eta + 2\e$. Since $\eta > 0$ was arbitrary, the result follows. \end{proof}

\begin{claim} We have $\unr_S(s,s') < \unr_A(a(s),a(s')) + 2\e$.
\end{claim}
\begin{proof} Let $\overrightarrow{c}:=\set{r_0,\ldots, r_k}$ be a chain in $A$ such that $r_0 = a(s)$, $r_k=a(s')$, and for each $0\leq i \leq k-1$, we have $\overline{\w}_A(r_i,r_{i+1}) \leq \unr_A(a(s),a(s'))$. Similarly let $\overleftarrow{c}:=\set{t_0,\ldots, t_j}$ be a chain in $A$ such that $t_0 = a(s')$, $t_j=a(s)$, and for each $0\leq i \leq j-1$, we have $\overline{\w}_A(t_i,t_{i+1}) \leq \unr_A(a(s),a(s'))$.

By construction, we have $\nu_A(r_i,r_{i+1}) \leq \unr_A(a(s),a(s'))$ for all $0\leq i \leq k-1$. Similarly we have $\nu_A(t_i,t_{i+1}) \leq \unr(a(s),a(s'))$ for all $0\leq i \leq j-1$. 

Next observe that by compactness of $Z$, for each $a,a'\in A$ we can obtain $z(a) \in Z_a$, $z(a') \in Z_{a'}$ such that $\overline{\w}_Z(z(a),z(a')) = \nu_A(a,a')$. Applying this construction to consecutive elements in the chains $\overrightarrow{c}$ and $\overleftarrow{c}$, we obtain the following chains in $Z$: 
\begin{align*}
\overrightarrow{c_Z}&=\set{z(r_0),\ldots, z(r_k)},\text{ joining } z(a(s))=z(r_0) \text{ to } z(a(s')) = z(r_k),\\
\overleftarrow{c_Z}&=\set{z(t_0),\ldots, x(t_j)},\text{ joining } z(a(s'))=z(t_0) \text{ to } z(a(s)) = z(t_j).
\end{align*}

In particular, for each $0\leq i \leq k-1$, we have $\overline{\w}_Z(z(r_i),z(r_{i+1})) = \nu_A(r_i,r_{i+1}).$ Similarly for each $0\leq i \leq j-1$, we have $\overline{\w}_Z(z(t_i),z(t_{i+1})) = \nu_A(t_i,t_{i+1}).$ Furthermore, we have $z(r_i) \in Z_{r_i}$ for each $0\leq i \leq k$, and $z(t_i) \in Z_{t_i}$ for each $0 \leq i \leq j$. 

Now we can use the correspondence $R' \in \Rsc(S,Z)$ that we had fixed earlier. Recall the use of the notation $s(z) \in S$ for $z\in Z$ from the discussion preceding Claim \ref{cl:hnr-same-component}. Now we obtain the following chains in $S$:

\begin{align*}
\overrightarrow{c_S}&=\set{s(z(r_0)),\ldots, s(z(r_k))},\text{ joining } s(z(a(s))) \in Z_{a(s)} \text{ to } s(z(a(s'))) \in Z_{a(s')}, \text{ and }\\
\overleftarrow{c_S}&=\set{s(z(t_0)),\ldots, s(z(t_j))},\text{ joining } s(z(a(s'))) \in Z_{a(s')} \text{ to } s(z(a(s))) \in Z_{a(s)},\text{ such that }
\end{align*}
\begin{align*}
\overline{\w}_S(s(z(r_i)),s(z(r_{i+1}))) &< \overline{\w}_Z(z(r_i),z(r_{i+1})) + 2\e \text{ for all } 0\leq i \leq k-1,\text{ and }\\
\overline{\w}_S(s(z(t_i)),s(z(t_{i+1}))) &< \overline{\w}_Z(z(t_i),z(t_{i+1})) + 2\e \text{ for all } 0 \leq i \leq j-1.
\end{align*}

Here we have applied Remark \ref{rem:wbar-dist} on consecutive points in the chains to obtain the inequalities.

We know that $s$ and $s(z(r_0))=s(z(t_j))$ belong to the same $\w$-path-connected component $Z_{a(s)}$, and similarly $s'$ and $s(z(r_k))=s(z(t_0))$ belong to the same $\w$-path-connected component $Z_{a(s')}$. By Claim \ref{cl:hnr-same-component}, we have:
\begin{align*}
\unr_S(s,s(z(\rho_0))) &= \unr_S(s,s(z(\t_v))) < \unr_A(a(s),a(s')) + 2\e,\\
\unr_S(s',s(z(\rho_u))) &= \unr_S(s',s(z(\t_0))) < \unr_A(a(s),a(s')) + 2\e.
\end{align*}
Finally it follows that: 
\[\unr_S(s,s') < \unr_A(a(s),a(s')) + 2\e.\] 
\end{proof}

Thus we have $|\unr_S(s,s') - \unr_A(a(s),a(s'))| < 2\e$. Since $(s,a(s)), (s',a(s')) \in R$ were arbitrary, it now follows that $\dn\left((S,\unr_S),(A,\unr_A)\right) < \e$. \end{proof}

\hnrconsistent*

\begin{proof}[Proof of Theorem \ref{thm:hnr-consistent}]
The proof is analogous to that of Theorem \ref{thm:prob-net-approx}, but we repeat the argument here to facilitate the assessment of details. 
First observe that $\mf{M}_{\e/2}(\supp(\mu_Z)) \in (0,1]$. Let $ r \in (0,\mf{M}_{\e/2}(\supp(\mu_Z)))$, and let $\mc{U}_r$ be a refined $\e/2$-system on $\supp(\mu_Z)$ such that $\mf{m}(\mc{U}_r) \in (r, \mf{M}_{\e/2}(\supp(\mu_Z))]$. 
For convenience, write $m:= |\mc{U}_r|$, and also write $\mc{U}_r= \set{U_1,\ldots, U_m}$. 

For each $1\leq i \leq m$, define $A_i$ as in the statement of Lemma \ref{lem:sampling-e-system}. 
Then by Lemma \ref{lem:sampling-e-system}, the probability that at least one $U_i$ has empty intersection with $\mc{Z}_n(\omega)$ is bounded as $\P\left(\bigcup_{k=1}^mA_k\right) \leq \frac{1}{\mf{m}(\mc{U}_r)}(1-\mf{m}(\mc{U}_r))^n$. 

On the other hand, if $U_i$ has nonempty intersection with $\mc{Z}_n(\omega)$ for each $1\leq i \leq m$, then by Lemma \ref{lem:hnr-main}, we obtain $\dn((A,\unr_A),\hnr(\mc{Z}_n(\omega))) < \e$. Now define:
\[B_n:=\set{\omega \in \Om : \dn((A,\unr_A),\hnr(\mc{Z}_n(\omega)))\geq \e }.\]
Then we have:
\begin{align*}
\P \left(B_n\right) \leq \P\left(\bigcup_{k=1}^mA_k\right) \leq  \frac{1}{\mf{m}(\mc{U}_r)} \left(1-{\mf{m}(\mc{U}_r)}\right)^n.
\end{align*}
Since $ r \in (0,\mf{M}_{\e/2}(Z))$ was arbitrary, it follows that:
\[\P \left(B_n\right) \leq 
\frac{1}{\mf{M}_{\e/2}(\supp(\mu_Z))} \left(1-{\mf{M}_{\e/2}(\supp(\mu_Z))}\right)^n.\]
By an application of the Borel-Cantelli lemma as in Theorem \ref{thm:prob-net-approx}, we have $\P(\limsup_{n\r \infty} B_n) = 0$. The result now follows. \end{proof}

\subsubsection{Proofs related to Theorem \ref{thm:hr-consistent}}
\label{sec:pf-hr-consistent}

\begin{lemma}[Reciprocal clustering on a $\w$-path-connected network] 
\label{lem:hr-path-conn}
Let $(Z,\w_Z)$ be an $\w$-path-connected network with dissimilarity weights and finite reversibility $\rho_Z$. Then $(Z,\ur_Z) = \hr(Z,\w_Z)$ is given by writing $\ur_Z(z,z') = 0$ for all $z,z'\in Z$.
\end{lemma}
\begin{proof} Let $z,z'\in Z$, and let $\e > 0$. By Definition \ref{defn:path-conn} and the assumption that $\w_Z(z,z) = 0$ for all $z\in Z$, there exists a continuous function $\g:[0,1] \r Z$ such that $\g(0)=z$, $\g(1) = z'$, and there exist $t_0=0 \leq t_1 \leq t_2 \leq \ldots \leq t_n = 1$ such that: 
\[{\w}_Z(\g(t_i),\g(t_{i+1})) \in [0,\tfrac{\e}{\rho_Z}) \text{ for each } 0\leq i \leq n-1.\] 
Since $\w_Z$ is a dissimilarity, we have $\w_Z(\g(t_{i+1}),\g(t_i)) \geq 0$. By finite reversibility, we also have 
\[\w_Z(\g(t_{i+1}),\g(t_i)) \leq \rho_Z\cdot \w_Z(\g(t_i),\g(t_{i+1})) < \e.\]
Thus by using the chain $\set{\g(t_0),\g(t_1),\ldots, \g(t_n)}$, we have $\ur_Z(z,z') < \e$. Since $z,z'\in Z$ and $\e > 0$ were arbitrary, the result now follows. \end{proof}

\begin{lemma}[Reciprocal clustering collapses $\w$-path-connected subsets] 
\label{lem:hr-main}
Let $(Z,\w_Z)$ be a network with dissimilarity weights and finite reversibility such that $Z$ is a disjoint collection $\set{Z_a : a\in A}$, where $A$ is a finite indexing set and each $Z_a$ is compact and $\w$-path-connected. Let $(A,\lambda_A)$ be as in Definition \ref{defn:induced-net}, and let $(A,\ur_A)=\hr(A,\lambda_A)$. Also let $\mc{U}=\set{U_1,\ldots, U_m}$ be a refined $\e/2$-system on $Z$. 

Suppose that $S \subseteq Z$ is a finite subset equipped with the restriction $\w_S:= \w_Z|_{S \times S}$ such that $S$ has nonempty intersection with $U_i$ for each $1\leq i \leq m$. Then we have:
\[\dn\left((S,\ur_S),(A,\ur_A)\right) < \e.\]
\end{lemma}

\begin{proof}[Proof of Lemma \ref{lem:hr-main}]
For each $z \in Z$, let $a(z) \in A$ denote the index such that $z \in Z_{a(z)}$. Then define:
\[R:=\set{(s,a(s)) : s \in S}.\]
Then $R\in \Rsc(S,A)$. 
We wish to show $\dis(R) < 2\e$, where the distortion is calculated with respect to $\ur_S$ and $\ur_A$. Let $(s,a(s)), (s',a(s')) \in R$. 

\begin{claim} We have $\ur_A(a(s),a(s')) \leq \ur_S(s,s')$.
\end{claim}
\begin{proof} Pick a chain $c := \set{r_0=s,r_1,r_2,\ldots, r_k = s'}  \in C_S(s,s')$ such that 
\[\ur_S(s,s') = \max_{0\leq i \leq k-1}(\max(\overline{\w}_S(r_i,r_{i+1}),\overline{\w}_S(r_{i+1},r_i))).\]
Next consider the chain $c_A:=\set{a(r_0),\ldots, a(r_k)}$. By Definition \ref{defn:induced-net}, we have:
\[\max_{0\leq i \leq k-1}\lambda_A(a(r_i),a(r_{i+1})) \leq \ur_S(s,s').\]
It follows that $\ur_A(a(s),a(s')) \leq \ur_S(s,s')$. 
\end{proof}

By Theorem \ref{thm:sampling}, we obtain a correspondence $R' \in \Rsc(S,Z)$ with $\dis(R') < 2\e$ such that for each $(s,z) \in R'$, we have $\set{s,z} \subseteq U$ for some $U\in \mc{U}$. Here the distortion is measured with respect to $\w_Z$ and $\w_S$. We will use this correspondence $R'$ as follows: for each $z\in Z$, there exists $s\in S$ such that $(s,z) \in R'$. In other words, there exists $s \in S$ such that $\set{s,z} \subseteq U$ for some $U \in \mc{U}$. Since $\mc{U}$ is a refined $\e/2$-system, we know also that $s,z$ belong to the same connected component $Z_a$, for some $a \in A$. 

For each $z\in Z$, we will write $s(z)$ to denote the element of $S$ obtained by the preceding construction.

\begin{claim}\label{cl:hr-same-component}
Let $s,s'\in S$ be such that $a(s)=a(s')$, i.e. $s,s'$ belong to the same $\w$-path-connected component of $Z$. Then $\ur_S(s,s') <  2\e$. 
\end{claim} 
\begin{proof}[Proof of Claim \ref{cl:hr-same-component}] Let $\eta > 0$, and let $z,z' \in Z$ be such that $(s,z),(s',z') \in R'$. By Lemma \ref{lem:hr-path-conn}, we can take a chain $c= \set{z_0 = z,z_1,z_2,\ldots, z_n = z'}$ from $z$ to $z'$ such that 
\[\max_{0\leq i \leq n-1}\max(\overline{\w}_Z(z_i,z_{i+1}),\overline{\w}_Z(z_{i+1},z_i)) <  \eta.\] 
We can now convert this to a chain in $S$ by using the correspondence $R'$. Define:
\[c_S:=\set{s,s(z_1),s(z_2),\ldots, s(z_{n-1}),s'}.\]
By construction, $(s(z_i),z_i)\in R'$ for each $1\leq i \leq n-1$. Furthermore we have $(s,z),(s',z')\in R'$ by our choice of $z,z'$. Now by using Remark \ref{rem:wbar-dist} and the fact that $\dis(R') < 2\e$, we have $\ur_S(s,s') < \eta + 2\e$. Since $\eta > 0$ was arbitrary, the result follows. \end{proof}

\begin{claim} We have $\ur_S(s,s') < \ur_A(a(s),a(s')) + 2\e$.
\end{claim}
\begin{proof} Let $c:=\set{r_0,\ldots, r_k}$ be a chain in $A$ such that $r_0 = a(s)$, $r_k=a(s')$, and 
\[\max_{0\leq i \leq k-1}\lambda_A(r_i,r_{i+1}) \leq \ur_A(a(s),a(s')).\] 

Next observe that by compactness of $Z$, for each $a,a'\in A$ we can obtain $z(a) \in Z_a$, $z(a') \in Z_{a'}$ such that: 
\[\max(\overline{\w}_Z(z(a),z(a')), \overline{\w}_Z(z(a'),z(a)))  = \lambda_A(a,a').\] 
Applying this construction to consecutive elements in the chain $c$, we obtain a chain in $Z$: 
\begin{align*}
{c_Z}&=\set{z(r_0),\ldots, z(r_k)},\text{ joining } z(a(s)) \text{ to } z(a(s')).
\end{align*}
In particular, for each $0\leq i \leq k-1$, we have 
\[\max(\overline{\w}_Z(z(r_i),z(r_{i+1})), \overline{\w}_Z(z(r_{i+1}),z(r_{i}))) = \lambda_A(r_i,r_{i+1}).\]

Now we can use the correspondence $R' \in \Rsc(S,Z)$ that we had fixed earlier. Recall the use of the notation $s(z) \in S$ for $z\in Z$ from the discussion preceding Claim \ref{cl:hr-same-component}. Now we obtain the following chain in $S$:

\begin{align*}
{c_S}&=\set{s(z(r_0)),\ldots, s(z(r_k))},\text{ joining } s(z(a(s))) \in Z_{a(s)} \text{ to } s(z(a(s'))) \in Z_{a(s')},\text{ such that }
\end{align*}
\begin{align*}
\overline{\w}_S(s(z(r_i)),s(z(r_{i+1}))) &< \overline{\w}_Z(z(r_i),z(r_{i+1})) + 2\e \text{ for all } 0\leq i \leq k-1,\text{ and }\\
\overline{\w}_S(s(z(r_{i+1})),s(z(r_{i}))) &< \overline{\w}_Z(z(r_{i+1}),z(r_{i})) + 2\e \text{ for all } 0 \leq i \leq k-1.
\end{align*}

Here we have applied Remark \ref{rem:wbar-dist} on consecutive points in the chains to obtain the inequalities.

We know that $s$ and $s(z(r_0))$ belong to the same $\w$-path-connected component $Z_{a(s)}$, and similarly $s'$ and $s(z(r_k))$ belong to the same $\w$-path-connected component $Z_{a(s')}$. By Claim \ref{cl:hr-same-component}, we have:
\begin{align*}
\ur_S(s,s(z(r_0))) & < 2\e,\\
\ur_S(s',s(z(r_k))) & <  2\e.
\end{align*}
Finally it follows that: 
\[\ur_S(s,s') < \ur_A(a(s),a(s')) + 2\e.\] 
\end{proof}

Thus we have $|\ur_S(s,s') - \ur_A(a(s),a(s'))| < 2\e$. Since $(s,a(s)), (s',a(s')) \in R$ were arbitrary, it now follows that $\dn\left((S,\ur_S),(A,\ur_A)\right) < \e$. \end{proof}

\hrconsistent*

\begin{proof}[Proof of Theorem \ref{thm:hr-consistent}]
The proof is analogous to that of Theorem \ref{thm:hnr-consistent}, but we repeat the argument here to facilitate the assessment of details. 
First observe that $\mf{M}_{\e/2}(\supp(\mu_Z)) \in (0,1]$. Let $ r \in (0,\mf{M}_{\e/2}(\supp(\mu_Z)))$, and let $\mc{U}_r$ be a refined $\e/2$-system on $\supp(\mu_Z)$ such that $\mf{m}(\mc{U}_r) \in (r, \mf{M}_{\e/2}(\supp(\mu_Z))]$. 
For convenience, write $m:= |\mc{U}_r|$, and also write $\mc{U}_r= \set{U_1,\ldots, U_m}$. 

For each $1\leq i \leq m$, define $A_i$ as in the statement of Lemma \ref{lem:sampling-e-system}. Then by Lemma \ref{lem:sampling-e-system}, the probability that at least one $U_i$ has empty intersection with $\mc{Z}_n$ is bounded as $\P\left(\bigcup_{k=1}^mA_k\right) \leq \frac{1}{\mf{m}(\mc{U}_r)}(1-\mf{m}(\mc{U}_r))^n$. 

On the other hand, if $U_i$ has nonempty intersection with $\mc{Z}_n(\omega)$ for each $1\leq i \leq m$, then by Lemma \ref{lem:hr-main}, we obtain $\dn((A,\ur_A),\hr(\mc{Z}_n(\omega))) < \e$. Now define:
\[B_n:=\set{\omega \in \Om : \dn((A,\ur_A),\hr(\mc{Z}_n(\omega)))\geq \e }.\]
Then we have:
\begin{align*}
\P \left(B_n\right) \leq \P\left(\bigcup_{k=1}^mA_k\right) \leq  \frac{1}{\mf{m}(\mc{U}_r)} \left(1-{\mf{m}(\mc{U}_r)}\right)^n.
\end{align*}
Since $ r \in (0,\mf{M}_{\e/2}(X))$ was arbitrary, it follows that:
\[\P \left(B_n\right) \leq 
\frac{1}{\mf{M}_{\e/2}(\supp(\mu_Z))} \left(1-{\mf{M}_{\e/2}(\supp(\mu_Z))}\right)^n.\]
By an application of the Borel-Cantelli lemma as in Theorem \ref{thm:prob-net-approx}, we have $\P(\limsup_{n\r \infty} B_n) = 0$. The result now follows.\end{proof}

\section{Discussion}
\label{sec:discussion}

The original motivation for this work came from a scientific problem: that of having a correspondence framework for graph-theoretic structures arising in systems neuroscience \cite{sporns2012discovering} in which undirected (\emph{functional connectivity}) and directed (\emph{effective connectivity}) graphs would be compatible with one another. The $\dn$ metric that had recently been introduced in \cite{carlsson2013axiomatic,clust-net} for extending hierarchical clustering to asymmetric networks was a natural proposal for addressing this limitation. After developing the $\dn$ theory in the finite case \cite{nets-allerton,nets-icassp} and observing concurrent uses for developing stable persistent homology methods for finite networks \cite{dowker-jact,turner2019rips}, it was apparent that one needed to understand the convergence of sequences of finite networks to continuous objects, as had been done for the cut metric in \cite{lovasz2012large} and related earlier works \cite{borgs2008convergent,borgs2012convergent}. In particular, this question was made explicit in \cite{turner2019rips}. Studying this convergence led us to realize that even though $\dn$ metrizes the collection $\{(X,\w_X)\}$ where $X$ is a set and $\w_X:X\times X \to \R$ is any function, such a collection was not amenable to meaningful convergence results or even a satisfactory characterization of isomorphism. This first set of obstacles led to the subsequent introduction of topologies and the intermediate collection $\Ncom$. This turned out to be a suitable ambient space which was sufficiently general while also having the required topological regularity. 

Although our study of the consequences of relaxing metric assumptions was driven by empirical and applied considerations which stem from the data-driven scientific outlook of the recent decades, ultimately we were led to the literature developed by Fr\'{e}chet and his contemporaries at the beginning of the 20th century \cite{frechet1906quelques,pitcher1918foundations,niemytzki1927third}. These authors had been motivated to understand different constructions so they could settle on an appropriate notion of a metric space. Later works by Busemann \cite{busemann1950geometry} and others on quasimetric spaces also turned out to be useful for understanding the subtleties that arise with asymmetry. Along these lines, a possible latent connection that we have not yet explored is to the field of directed topology \cite{fajstrup2016directed}.

In this work, we provided a comprehensive theoretical study of $\dn$, its isomorphism structure, and the metric structure that it induces on $\Ngen$. In particular, we introduced $\Ncom$ as a subfamily that is nested between $\Ncal$ and $\Ngen$, proved the sampling and convergence results on $\Ncom$ that finally enabled one to obtain well-defined persistence diagrams on \emph{infinite} asymmetric networks, and characterized the Dowker persistence diagrams of the directed circles with finite reversibility. These results now suggest numerous natural lines of continuation, some of which are as follows:

\begin{itemize}
    \item \textbf{Developing new embeddings:} As described in the introduction, graph structured data may be embedded into $\Ncom$ in a multitude of ways. By virtue of this work, different embeddings $G\mapsto \Ncom$ are now seen to share a common ambient metric space. Development of new and informative embeddings, perhaps by aggregating existing embeddings, is a broad-scale challenge. Some previous works \cite{dowker-jact,cosyne} have shown that building asymmetric Markov transition matrices can give better accuracy on certain classification tasks than their symmetric counterparts, so the systematic development of ``asymmetric embeddings" seems to hold promise in extracting additional signal from data. 
    With increasing availability of benchmark data sets for machine learning purposes, using data-driven supervision to generate ``learned" embeddings is a natural next step \cite{litman2013learning}.

    \item \textbf{New inputs to persistent homology methods:} Along these lines, it would be interesting to see which embeddings could serve as good preprocessing steps for the hierarchical clustering and persistent homology methods described earlier in this work. These would automatically benefit from $\dn$-stability with respect to the bottleneck distance. 
    From a different perspective, methods such as the Dowker complex have natural connections to \emph{hypergraphs} that are used to model multi-way relations \cite{zhou2006learning}. Elaborating on this latent connection by extending our constructions to accept hypergraph data as input would open new directions in hypergraph data analysis.

    \item \textbf{More directed models with geometric regularity:} 
    Another interesting thread for future development would be to construct directed models of shapes beyond the circles with finite reversibility. This would likely provide an interesting connection to the theoretical development for Randers manifolds in \cite{bao2004zermelo} and help towards characterizing how such spaces are embedded in the $(\Ngen,\dn)$ hierarchy. Implicit in this construction should be some notion of geometric regularity such as the finite reversibility assumption.

\end{itemize}

Finally we remark that embeddings of the form $G\mapsto \Ngen$ can be further extended via maps $\Ngen \to \Ngen^m$ where the latter is the collection of measure networks equipped with a generalized Gromov-Wasserstein distance \cite{gwnets}. This approach follows the formulation of \cite{dgh-sm} in taking a relaxation of the $\dn$ problem, and benefits from the deep study of Riemannian structure developed by Sturm \cite{sturm2012space}. 
This particular context has seen extensive growth in the machine learning community in recent years \cite{pcs16,s16,gwa}, and as has already been shown in \cite{xu2019scalable, chowdhury2020generalized}, in various cases the results demonstrated in these works arise from utilizing different embeddings $G\mapsto \Ngen$ and studying the particular structure that each embedding imposes upon the problem.   

\subsubsection*{Acknowledgements.}
This work is partially supported by the National Science Foundation under grants DMS-1723003, CCF-1740761,  CCF-1526513, IIS-1422400, and DMS-1547357. Facundo M\'emoli acknowledges support from the Mathematical Biosciences Institute at The Ohio State University.

\subsubsection*{Conflict of interest statement.}
The authors declare no conflict of interest.

\newpage
\bibliographystyle{alpha}
\bibliography{bibliography/biblio}

\appendix
\section{Experiments}
\label{sec:experiments}

\subsection{Computational aspects and an algorithm for computing minimum matchings}
\label{sec:computations}
In this section we first discuss some algorithmic details on how to compute the lower bounds for $\dn$ involving local spectra, and then present computational examples. All networks in this section are assumed to be finite. Our software and datasets are available on \url{https://github.com/fmemoli/PersNet} as part of the \texttt{PersNet} software package.

Lower bounds for $\dn$ involving the comparison of local spectra of two networks such as those in Proposition \ref{prop:spec} require  computing the minimum of a functional $J(R):=\max_{(z,y)\in R} C(z,y)$ where $C:Z\times Y\rightarrow \R_+$ is a given \emph{cost} function and $R$ ranges in $\Rsc(Z,Y)$. This is an instance of a bottleneck linear assignment problem (or LBAP) \cite{burkard2009assignment}. We remark that the current instance differs from the standard formulation in that one is now optimizing over correspondences and not over permutations. Hence the standard algorithms need to be modified.

Assume $n=|Z|$ and $m=|Y|$. In this section we adopt matrix notation and regard $R$ as a matrix $\matele{r_{i,j}}\in\{0,1\}^{n\times m}$. The condition $R\in\Rsc(Z,Y)$ then requires that $\sum_i r_{i,j}\geq 1$ for all $j$ and $\sum_{j}r_{i,j}\geq 1$ for all $i$. 
We denote by $C=\matele{c_{i,j}}\in\R_+^{n\times m}$ the matrix representation of the cost function $C$ described above. With the goal of identifying a suitable algorithm, the key observation is that the optimal value $\min_{R\in \Rsc}J(R)$ must coincide with a value realized in the matrix $C$.

An algorithm with complexity $O(n^2\times m^2)$ is the one in Algorithm \ref{algo:match} (we give it in Matlab pseudo-code). The algorithm belongs to the family of \emph{thresholding algorithms} for solving matching problems over permutations, see \cite{burkard2009assignment}.  Notice that $R$ is a binary matrix and that procedure \textbf{TestCorrespondence} has complexity $O(n\times m)$. In the worst case, the matrix $C$ has $n\times m$ distinct entries, and the \textbf{while} loop will need to exhaustively test them all, hence the claimed complexity of $O(n^2\times m^2).$ Even though a more efficient version (with complexity $O((n\times m) \log(n\times m))$ can be obtained by using a bisection strategy on the range of possible values contained in the matrix $C$ (in a manner similar to what is described for the case of permutations in \cite{burkard2009assignment}), here for clarity we limit our presentation to the version detailed above. 

\begin{algorithm}
\caption{MinMax matching}\label{matching}
\begin{algorithmic}[1]
\Procedure{MinMaxMatch}{C}
\State $v = \textbf{sort}(\textbf{unique}(C(:)));$
\State $k=1;$
\While {$\sim\!\text{done}$} 
\State $c=v(k);$
\State $R = (C<=c)$;
\State $\text{done} = \Call{TestCorrespondence}{R};$
\State $k=k+1;$
\EndWhile
\State\Return $c$
\EndProcedure \label{nada}
\Procedure{TestCorrespondence}{R}
\State done = \textbf{prod}(\textbf{sum}(R))*\textbf{prod}(\textbf{sum}(R')) $>$ 0;
\State \Return done
\EndProcedure
\end{algorithmic}
\label{algo:match}
\end{algorithm}

\subsection{Computational example: randomly generated networks}

\begin{figure}
    \centering
    \includegraphics[width=0.45\textwidth]{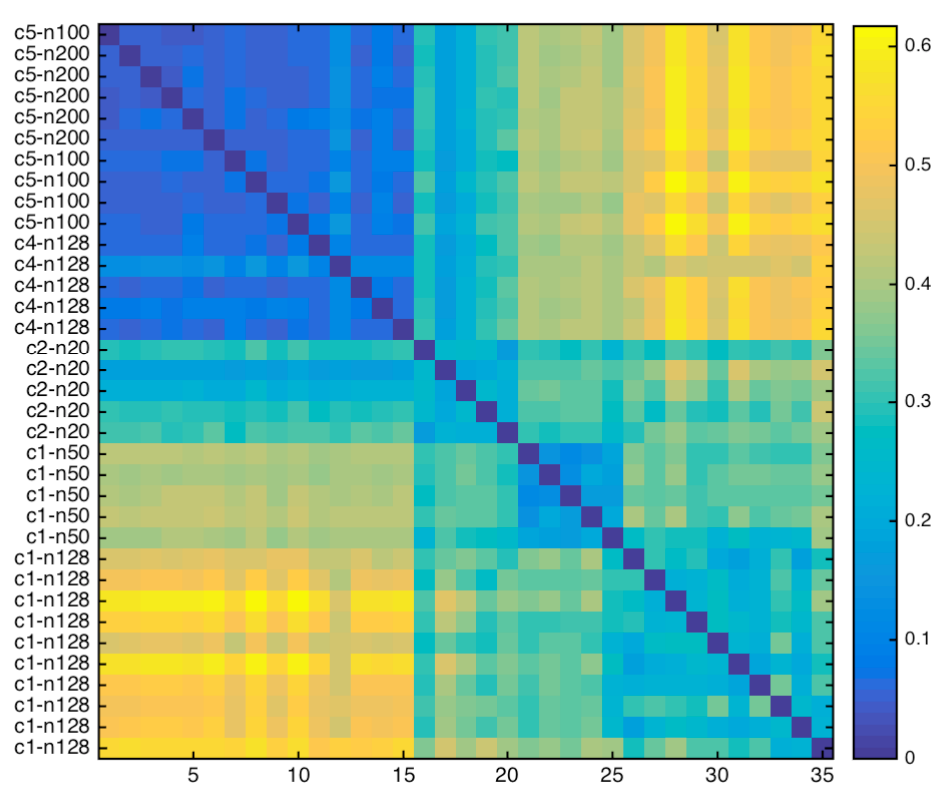}
    \includegraphics[width=0.45\textwidth]{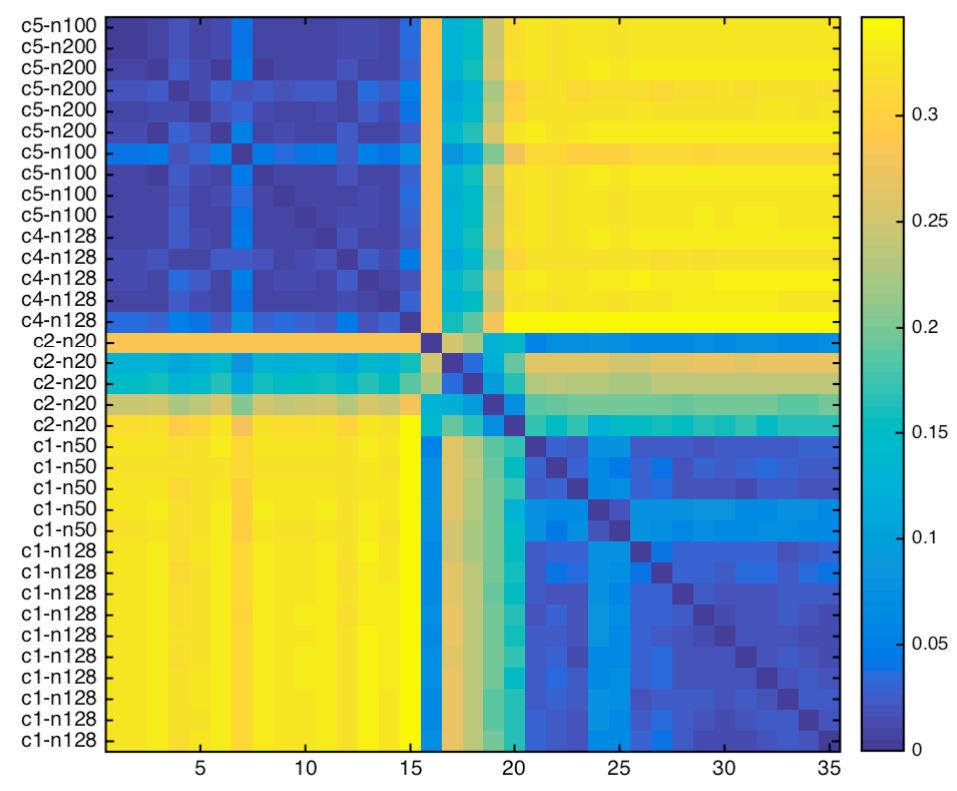}
    \includegraphics[width=0.45\textwidth]{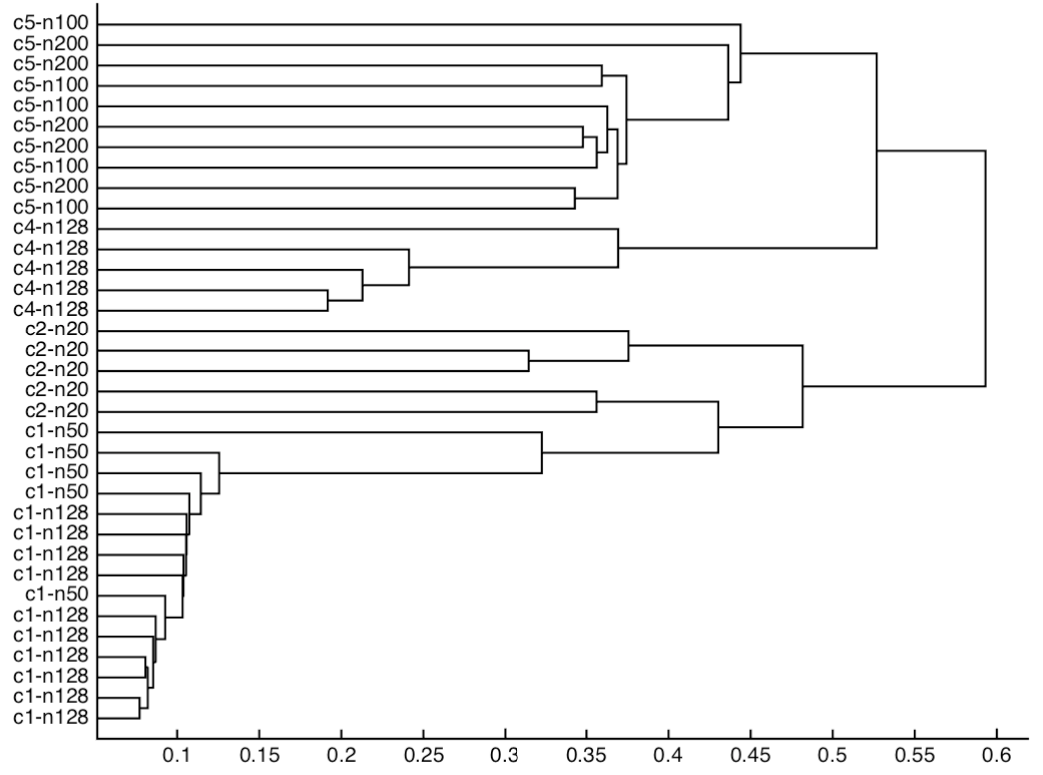}
    \includegraphics[width=0.45\textwidth]{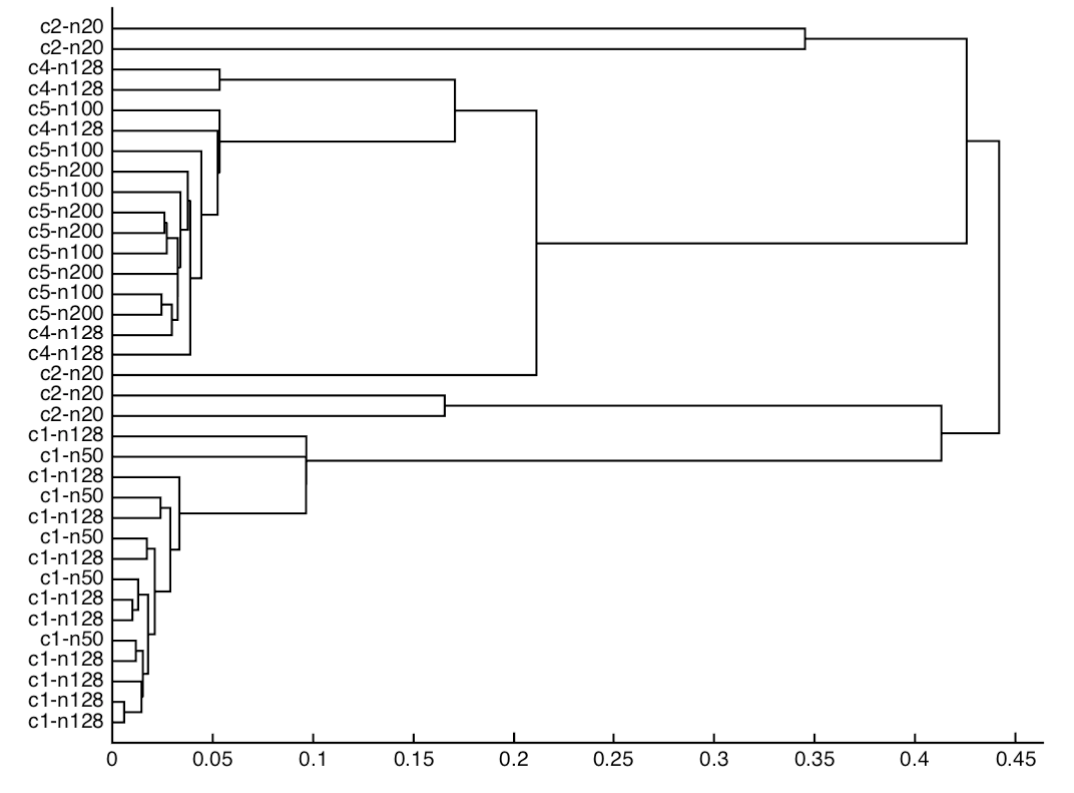}
    \caption{Performance of local (\textbf{left}) and global (\textbf{right}) spectrum lower bounds on the random community graph classification task. The global spectrum can reasonably tell apart classes of graphs with 4-5 communities from those with a single community, but is confounded by graphs with two communities. In comparison, the local spectrum is able to better distinguish the two-community graphs. }
    \label{fig:exp-comm}
\end{figure}

As a first application of our ideas we generated a database of weighted directed networks with different numbers of ``communities'' and different total cardinalities using the software provided by \cite{santo}. Using this software, we generated 35 random networks as follows: 5 networks with 5 communities and 200 nodes each (class \texttt{c5-n200}), 5 networks with 5 communities and 100 nodes each (class \texttt{c5-n100}), 5 networks with 4 communities and 128 nodes each (class \texttt{c4-n128}), 5 networks with 2 communities and 20 nodes each (class \texttt{c2-n20}), 5 networks with 1 community and 50 nodes each (class \texttt{c1-n50}), and 10 networks with  1 community and 128 nodes each (class \texttt{c1-n128}). 
In order to make the comparison more realistic, as a preprocessing step we divided all the weights in each network by the diameter of the network. In this manner, discriminating between networks requires differentiating their structure and not just the scale of the weights. Note that the (random) weights produced by the software \cite{santo} are all non-negative.

Using a Matlab implementation of Algorithm \ref{algo:match} we computed a 35 $\times $ 35 matrix of values corresponding to a lower bound based simultaneously on both the $\cin$ and $\cout$ local spectra. This strengthening of Proposition \ref{prop:spec} is stated below.

\begin{proposition}\label{prop:loc-spec-strong} For all $X,Y\in\Ncal$,
\begin{align*}
d_{\Ngen}(X,Y) &\geq \tfrac{1}{2}\min_{R\in\Rsc}\max_{(x,y)\in R}C(x,y), \text{ where }\\
C(x,y) &=\max\big(\dhdf^\R(\spec_X^{\cin}(x),\spec_Y^{\cin}(y)), \dhdf^\R(\spec_X^{\cout}(x),\spec_Y^{\cout}(y)) \big).
\end{align*}

\end{proposition}
This bound follows from Proposition \ref{prop:spec} by the discussion at the beginning of \S\ref{sec:quant-stab}.

The results are shown in the form of the lower bound matrix and its single linkage dendrogram in Figure \ref{fig:exp-comm}. Notice that the labels in the dendrogram permit ascertaining the quality of the classification provided by the local spectra bound.  With only very few exceptions, networks with similar structure (same number of communities) were clustered together regardless of their cardinality. Notice furthermore how networks with 4 and 5 communities merge together before merging with networks with 1 and 2 communities, and vice versa. For comparison, we also provide details about the performance of the global spectra lower bound on the same database in Figure \ref{fig:exp-comm}. The results are clearly inferior to those produced by the local version, as predicted by the inequality in Proposition \ref{prop:spec}.

\subsection{Computational example: simulated hippocampal networks}

A natural observation about humans is that as they navigate an environment, they produce ``mental maps" which enable them to recall the features of the environment at a later time. This is also true for other animals with higher cognitive function. In the neuroscience literature, it is accepted that the hippocampus in an animal's brain is responsible for producing a mental map of its physical environment \cite{best2001spatial, brown1998statistical}. More specifically, it has been shown that as a rat explores a given environment, specific physical regions (``place fields") become linked to specific ``place cells" in the hippocampus \cite{o1971hippocampus, o1978hippocampus}. Each place cell shows a spike in neuronal activity when the rat enters its place field, accompanied by a drop in activity as the rat goes elsewhere. In order to understand how the brain processes this data, a natural question to ask is the following: Is the time series data of the place cell activity, often referred to as ``spike trains", enough to recover the structure of the environment? 

Approaches based on homology \cite{curto2008cell} and persistent homology \cite{dabaghian2012topological} have shown that the preceding question admits a positive answer. We were interested in determining if, instead of computing homology groups, we could represent the time series data as networks, and then apply our invariants to distinguish between different environments. Our preliminary results on simulated hippocampal data indicate that such may be the case.  

In our experiment, there were two environments: (1) a square of side length $L$, and (2) a square of side length $L$, with a disk of radius $0.33L$ removed from the center. In what follows, we refer to the environments of the second type as \emph{1-hole environments}, and those of the first type as \emph{0-hole environments}. For each environment, a random-walk trajectory of 5000 steps was generated, where the agent could move above, below, left, or right with equal probability. If one or more of these moves took the agent outside the environment (a disallowed move), then the probabilities were redistributed uniformly among the allowed moves. The length of each step in the trajectory was $0.1L$. 

In the first set of 20 trials for each environment, 200 place fields of radius $0.1L$ were scattered uniformly at random. In the next two sets, the place field radii were changed to $0.2L$ and $0.05L$. This produced a total of 60 trials for each environment. For each trial, the corresponding network $(X,\w_X)$ was constructed as follows: $X$ consisted of 200 place cells, and for each $1\leq i,j\leq 200$, the weight $\w_X(x_i,x_j)$ was given by:
\[\w_X(x_i,x_j) =1-\frac{\text{\# times cell $x_j$ spiked in a window of five time units after cell $x_i$ spiked}}{\text{\# times cell $x_j$ spiked}}.\]

The results of applying the local spectra lower bound are shown in the inset of Figure \ref{fig:dendro-all}. The labels \texttt{env-0, env-1} correspond to $0$ and $1$-hole environments, respectively. Note that with some exceptions, networks corresponding to the same environment are clustered together, regardless of place field radius. In Figure \ref{fig:dendro-all} we also present the single linkage dendrogram obtained from comparing all 120 networks together. In light of these results, we are interested in seeing how these methods can applied to other time series data arising from biology.

As a final remark, we note that it is possible to obtain better clustering on the hippocampal network dataset by using $\dn$-invariants that arise from \emph{persistent homology}. We refer the reader to \cite{dowker-jact} for details.

\begin{figure}
    \centering
    \includegraphics[width=\textwidth]{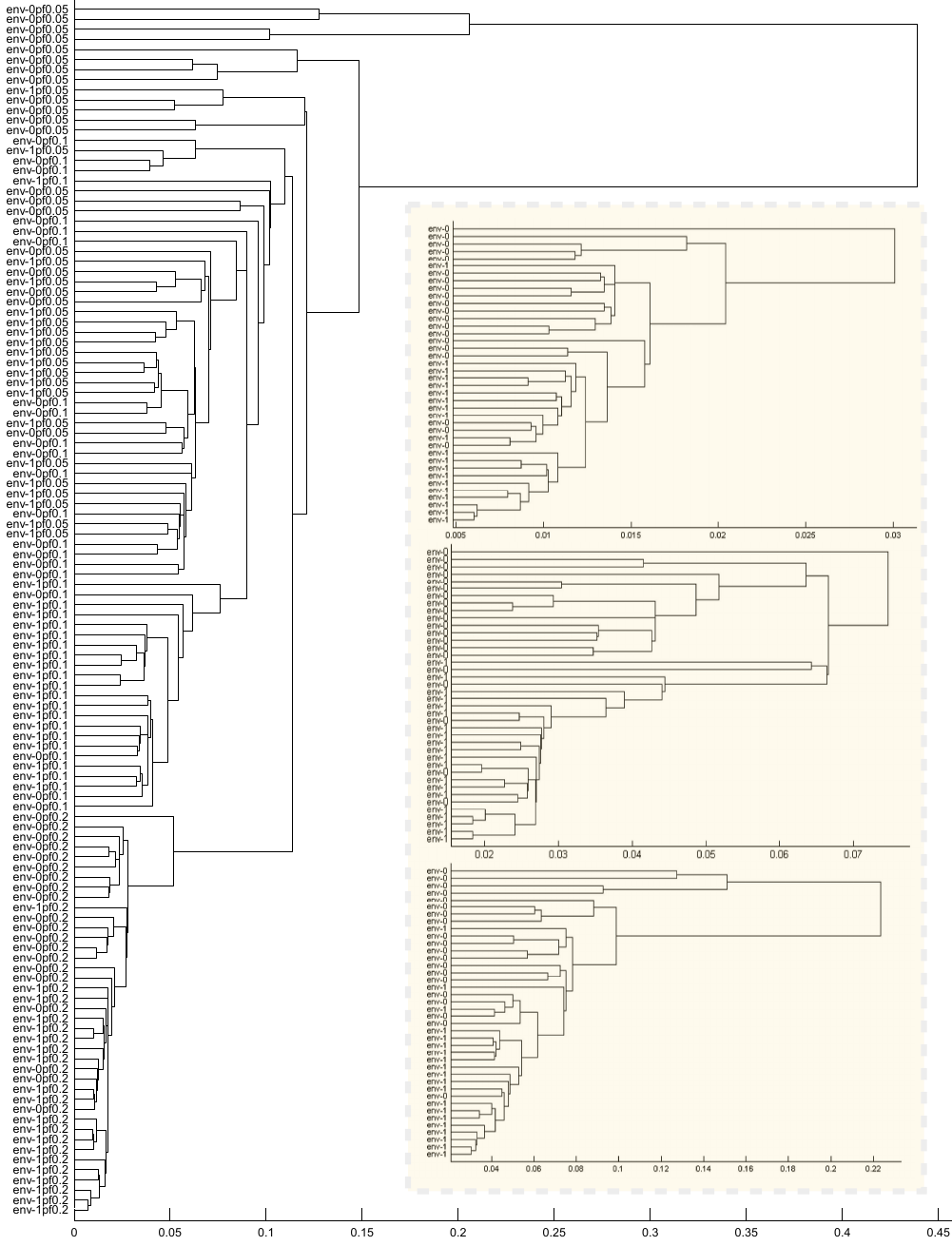}
    \caption{Single linkage dendrogram corresponding to 120 hippocampal networks of place field radii $0.05L, 0.1L,$ and $0.2L$. Results are based on the local spectrum lower bound of Proposition \ref{prop:loc-spec-strong}. \textbf{Inset, top to bottom:} Individual dendrograms for place field radii of $0.2L, 0.1L, 0.05L$, respectively.}
    \label{fig:dendro-all}
\end{figure}

\glsaddallunused

\end{document}